\def\@envspa{\hspace{0.3em}}
\def\@sa{\hspace{-0.2em}}
\def\@sb{\hspace{0.5em}}
\def\@sc{\hspace{-0.1em}}
\newtheorem{construction}{Construction}{\bfseries\upshape}{\itshape}
\newtheorem{assumption}{Assumption}{\bfseries\upshape}{\itshape}
\def\mynote#1{\textcolor{red}{\sf $\clubsuit$ #1$\clubsuit$}}
\def\@comment{\let\do\@makeother \dospecials\catcode`\^^M=10\def\par{}}
\def\begincomment{\@comment\@xcomment}
\newenvironment{comment}{\begincomment}{}
\def\set#1{{\{ #1 \}}}
\def\tuple#1{{\langle #1 \rangle }}
\def\multi#1{{\llbracket #1 \rrbracket}}
\def\nats{{\mathbb{N}}}
\def\parikh{{\mathsf{Parikh}}}
\def\mmap{\mathbf{m}}
\def\card#1{\lvert {#1} \rvert}
\def\tmenc#1{\lVert {#1} \rVert}
\def\cmap{\mathbf{c}}
\newcommand{\dotcup}{\ensuremath{\mathaccent\cdot\cup}}
\newcommand{\proj}{\mathit{Proj}}
\newcommand{\dcl}[1]{\left\downarrow {#1}\right.}
\newcommand{\ucl}[1]{\left\uparrow {#1}\right.}
\newcommand{\fire}[1]{\left[ {#1}\right\rangle}
\newcommand{\multiset}[1]{{\mathbb{M}[ #1 ]}}
\def\loc{\mathit{loc}}
\def\instr{\mathsf{Instr}}
\def\inc{\mathsf{inc}}
\def\dec{\mathsf{dec}}
\def\zerotest{\mathsf{zerotest}}
\def\typeinst{\mathsf{TypeInst}}
\def\twocm{\mathsf{2CM}}
\def\threecm{\mathsf{3CM}}
\def\ncm{n\mathsf{CM}}
\def\assign{\mathrel{\mathop:}=}
\def\ap{\mathfrak{P}}
\def\pn{\mathsf{PN}}
\def\pnfamily{\mathcal{N}}
\def\pnr{\mathsf{PN+R}}
\def\pnri{\mathsf{PN+\!R+\,!\,}}
\def\cfl{\mathsf{CFL}}
\def\cfg{\mathsf{CFG}}
\def\prod{\mathcal{P}}
\def\cnt{\mathit{cnt}}
\def\tocov{\mathit{2cover}}
\def\Sigmabar{\overline{\Sigma}}
\def\ttn{{\tt n}}
\def\ttw{{\tt w}}
\def\ttsent{{\tt sent}}
\def\ttrecv{{\tt recv}}
\def\ttrpccall{{\tt rpccall}}
\def\ttwrpc{{\tt wrpc}}
\def\stmts{\mathsf{stmts}}
\begin{document}

\title{Algorithmic Verification of Asynchronous Programs}

\date{\today}
\author{PIERRE GANTY
\affil{\textsc{imdea} Software, Madrid, Spain}
{RUPAK MAJUMDAR}
\affil{\textsc{mpi-sws}, Kaiserslautern, Germany}
}

\begin{abstract}
Asynchronous programming is a ubiquitous systems programming idiom to manage
concurrent interactions with the environment. In this style, instead of waiting
for time-consuming operations to complete, the programmer makes a non-blocking
call to the operation and posts a callback task to a task buffer that is
executed later when the time-consuming operation completes. A co-operative
scheduler mediates the interaction by picking and executing callback tasks from
the task buffer to completion (and these callbacks can post further callbacks
to be executed later). Writing correct asynchronous programs is hard because
the use of callbacks, while efficient, obscures program control flow.

We provide a formal model underlying asynchronous programs and study
verification problems for this model.  We show that the safety verification
problem for finite-data asynchronous programs is \textsc{expspace}-complete.
We show that liveness verification for finite-data asynchronous programs is
decidable and polynomial-time equivalent to Petri Net reachability.
Decidability is not obvious, since even if the data is finite-state,
asynchronous programs constitute infinite-state transition systems: both the
program stack and the task buffer of pending asynchronous calls can be
potentially unbounded. 

Our main technical construction is a polynomial-time semantics-preserving
reduction from asynchronous programs to Petri Nets and conversely. The
reduction allows the use of algorithmic techniques on Petri Nets to the
verification of asynchronous programs.

We also study several extensions to the basic models of asynchronous programs
that are inspired by additional capabilities provided by implementations of
asynchronous libraries, and classify the decidability and undecidability of
verification questions on these extensions.

\end{abstract}

\category{D.2.4}{Software Engineering}{Software/Program Verification}

\terms{Languages, Verification, Reliability}

\keywords{Asynchronous (event-driven) programming, liveness,
fair termination, Petri nets}

\acmformat{Ganty, P. and Majumdar, R. 2011.
Algorithmic verification of asynchronous programs.}

\begin{bottomstuff}
This work is supported by the National Science Foundation, under grants
CCF-0546170, CCF-0702743, and CNS-0720881.  Pierre Ganty was sponsored by the
Comunidad de Madrid's Program {\sc prometidos-cm}  (S2009TIC-1465), by the {\sc
people-cofund}'s program {\sc amarout} (PCOFUND-2008-229599), and by the
Spanish Ministry of Science and Innovation (TIN2010-20639).
A preliminary version of this paper appeared in the ACM-SIGPLAN Symposium
on the Principles of Programming Languages, 2009.
Author's addresses: 
Pierre Ganty,
\textsc{imdea} Software, Madrid, Spain, pierre.ganty@imdea.org;
and
Rupak Majumdar, MPI-SWS, Kaiserslautern, Germany,
rupak@mpi-sws.org.
\end{bottomstuff}

\maketitle

\section{Introduction}
\label{sec:intro}

\emph{Asynchronous programming} is a ubiquitous idiom to manage concurrent
interactions with the environment with low overhead.  In this style of
programming, rather than waiting for a time-consuming operation to complete,
the programmer can make \emph{asynchronous} procedure calls which are stored in
a \emph{task buffer} pending for later execution, instead of being executed
right away. We call \emph{handlers} those procedures that are asynchronously
called by the program.  In addition, the programmer can also make the usual
\emph{synchronous} procedure calls where the caller blocks until the callee
finishes. A co-operative \emph{scheduler} repeatedly picks \emph{pending
handler instances} from the task buffer and executes them atomically to
completion. Execution of the handler instance can lead to further handler being
\emph{posted}. We say that handler \(p\) is posted whenever an instance of
\(p\) is added to the task buffer. The posting of a handler is done using the
asynchronous call mechanism.  The interleaving of different picks-and-executes
of pending handler instances (a pick-and-execute is often referred to as a
\emph{dispatch}) hides latency in the system.  Asynchronous programming has
been used to build fast servers and routers \cite{Flash,Click}, embedded
systems and sensor networks \cite{TinyOS}, and forms the basis of web
programming using Ajax.

Writing correct asynchronous programs is hard.  The loose coupling between
asynchronous calls obscures the control and data flow, and makes it harder to
reason about them.  The programmer must keep track of concurrent interactions,
manage data flow between posted handlers (including saving and passing
appropriate state between dispatches), and ensure progress.  Since the
scheduling and resource management is co-operative and performed by the
programmer, one mis-behaving procedure (e.g., one that does not terminate, or
takes up too many system resources) can bring down the entire system.  

We study the problem of algorithmic verification of {\em safety} and {\em
liveness} properties of asynchronous programs.  Informally, safety properties
specify that ``something bad never happens,'' and liveness properties specify
that ``something good eventually happens.'' For example, a safety property can
state that a web server does not crash while handling a request, and a liveness
property can state that (under suitable fairness constraints) every request to
a server is eventually served.

For our results, we focus on {\em finite-data}
asynchronous programs in which data variables range over a finite domain of values.
Our main results show that the safety verification for finite-data asynchronous
programs is \textsc{expspace}-complete, and the liveness verification problem
is decidable and polynomial-time equivalent to Petri net reachability.
The finiteness assumption on the data is necessary
for decidability results, since all verification questions are already
undecidable for 2-counter machines \cite{Min67}.  However, since the depth of
the stack or the size of the task buffer could both be unbounded, even with
finitely many data values, asynchronous programs define transition systems
with possibly {\em infinitely many states}.

Specifically, we develop algorithms to check that an asynchronous program (1)
reaches a particular data value ({\em global state reachability}, to which
safety questions can be reduced) and (2) terminates under certain fairness
constraints on the scheduler and external events (fair termination, to which
liveness questions can be reduced \cite{Vardi91}).  For fair termination, the
fairness conditions on the scheduler rule out certain undesired paths, in which
for example the scheduler postpones some pending handler forever.  

For sequential programs with {\em synchronous} calls, both safety and liveness
verification problems have been studied extensively, and decidability results
are well known \cite{sp81,BurkartS94,RHS95,BEM97,wal01}.  One simple attempt is to
reduce reasoning about asynchronous programs to reasoning about synchronous
programs by explicitly modeling the task buffer and the scheduling. A way to
model an asynchronous program as a sequential one, is to add a counter
representing the number of pending instances for each handler, increment the
appropriate counter each time a handler is posted, and model the scheduler as a {\em
dispatch loop} which picks a non-zero counter, decrements it, and executes the
corresponding handler code. While the reduction is sound, the resulting system
is infinite state, as the counters modeling the pending handler instances can
be unbounded, and it is not immediate that existing safety and liveness
checkers will be complete in this case (indeed, checking safety and liveness
for recursive counter programs is undecidable in general).

Instead, our decidability proofs rely on a connection between asynchronous
programs and Petri nets \cite{Rei86}, an infinite state concurrency model with
many decidable properties.  In particular, we show an encoding of asynchronous
programs into Petri nets and vice versa.  This enables the reduction of
decision problems on asynchronous programs to problems on Petri nets.  As noted
in \cite{ChaV07,jmpopl07,SenV06}, the connection to Petri nets uses the fact
that the two sources of unboundedness ---unbounded program stack from recursive
synchronous calls and unbounded counters from pending asynchronous calls--- can
be decoupled: while a (possibly recursive) procedure is executing, the number
of pending handler instances can only increase, and the number of pending
handler instances decreases precisely when the program stack is empty.
Accordingly, our proof of decidability proceeds as follows.

First, we note that the change to the state of the task buffer before and after
the dispatch of a handler depends only on the number of times each handler is
posted. Therefore the ordering in which handler have been posted can be simply
ignored.  Thus, while the execution of the handler (in general) defines a
context-free language over the alphabet of handlers, what is important from the
analysis perspective is the Parikh image \cite{Parikh66} of this language.
(Recall that the Parikh image of a word counts the number of occurrences of
each letter in the word, and the Parikh image of a language is the set of
Parikh images of each of its words.) We show that the effect of each handler
can be encoded by a Petri net which is linear in the size of the grammar
representation of the handler. Our Petri net construction builts upon
\cite{Esp97} but extends it so as to satisfy one additional property of crucial
importance for correctness. Given the Petri net encoding of individual
handlers, we can then construct a Petri net that strings together the handlers
according to the semantics of asynchronous programs.  This Petri net is linear
in the size of the asynchronous program and captures in a precise sense the
computations of the asynchronous system.  Moreover, given a Petri net, we can
conversely construct an asynchronous program polynomial in the size of the
Petri net that captures in a precise sense the behaviors of the net, a result
that is useful to prove lower bounds on asynchronous programs.

Safety verification then reduces to checking coverability of the Petri net for
which we can use known decidability results \cite{KarpM69,Rackoff78}.
Together, this gives a tight {\sc expspace}-complete decision procedure for
safety verification of asynchronous programs. (The lower bound follows from
known {\sc expspace}-hardness of Petri net  coverability \cite{Lipton} and an
encoding of an arbitrary Petri net as an asynchronous program that is linear in
the size of the Petri net.) Previous decidability proofs for safety
verification \cite{SenV06,jmpopl07} used backward reachability of
well-structured transition systems \cite{ACJT96} to argue decidability, and did
not yield any upper bound on the complexity of the problem.

An alternate route to safety verification \cite{SenV06} explicitly invokes
Parikh's theorem \cite{Parikh66} to construct, for each handler, a regular
language which has the same Parikh image.  Coupled with our construction of
Petri nets, this gives another algorithm for safety verification.
Unfortunately, this construction does not give a tight complexity bound.  It is
known that the automaton representation of a regular set with the same Parikh
image as a context-free grammar can be at least exponential in the size of the
grammar.  Thus, the Petri net obtained using the methods of \cite{SenV06} can
be exponential in the size of the original asynchronous program.  This only
gives a {\sc 2expspace} upper bound on safety verification (using the {\sc
expspace} upper bound for Petri net coverability \cite{Rackoff78}).

For fair termination, we proceed in two steps.  An asynchronous program can
fail to terminate in two ways.  First, a particular handler execution can loop
forever. Second, each dispatch can terminate, but there can be infinite
sequence of posted handler and dispatches.

For infinite runs of the first kind, the task buffer can be abstracted away (as
no dispatches occur from within a dispatched handler) and we can use a
combination of safety verification (checking that a particular handler can ever
be dispatched) and techniques for liveness checking for finite-state pushdown
systems \cite{BurkartS94,wal01} (checking that a handler loops forever).

The second case above is more interesting, and we focus on this problem.  For
infinite runs of the second form, we note that the Petri net constructed from
an asynchronous program preserves all infinite behaviors, and we can reduce
fair termination of the asynchronous program (assuming each individual
dispatched handler terminates) to an analogous property on the Petri net.  We show
that this property can be encoded in a logic on Petri nets \cite{yen:paths},
which can be reduced to checking certain reachability properties of Petri nets
\cite{RP09}.  Conversely, we show that the Petri net reachability problem can
be reduced in polynomial time to a fair termination question on asynchronous
programs.  Together, we show that the fair termination problem for asynchronous
programs is polynomial-time equivalent to the Petri net reachability problem.
Again, this gives an {\sc expspace}-hard lower bound on the problem
\cite{Lipton}.  On the other hand, the best known upper bounds for Petri net
reachability take non-primitive recursive space
\cite{kosaraju82,Lambert92,Mayr81,Meyer}.  (In the absence of fairness, i.e., for the
termination problem, we get an {\sc expspace}-complete algorithm. Previously,
\cite{cv-tcs09} gave a decision procedure for this problem, but the complexity
of their procedure is not apparent.)
 
The reduction to Petri nets also enables us to provide decision procedures for
related verification questions on asynchronous programs.  First, we show a
decision procedure for {\em boundedness}, a safety property that asserts there
exists some finite $N$ such that the maximum possible size of the task buffer
at any point in any execution is at most $N$.  For the boundedness property we
again use a known result on Petri nets which allows to decide the existence of
an upper bound \(D\) on the size of the task buffer at any point in any
execution (or return infinity, if the task buffer is unbounded).  Since the
task buffer is often implemented as a finite buffer, let us say of size $d$, if
$D>d$ holds then there is an execution of the system that leads to an overflow
of the buffer, and to a possible crash.  Our decision procedure for the
boundedness problem uses the above reduction to Petri nets, and checks
boundedness of Petri nets using standard algorithms in {\sc expspace}.  Second,
the {\em fair non-starvation} question asks, given an asynchronous program and
a fairness condition on executions, whether every pending handler instance is
eventually dispatched (i.e., no pending handler instance waits forever). Fair
non-starvation is practically relevant to ensure that an asynchronous program
(such as a server) is responsive.  We show fair non-starvation is decidable by
showing a reduction to Petri nets. 

We also study safety and liveness verification for natural extensions to
asynchronous programs inspired by features supported in common asynchronous
programming languages and libraries.  For the model of asynchronous programs
where a handler can cancel all pending instances of a handler, we show that
safety is decidable, but boundedness and termination are not.  If in addition,
a handler can test (at most once in every execution) the absence of pending
instances for a specific handler, safety becomes undecidable as well. The
decidability result uses decidability of coverability of Petri nets extended
with reset arcs \cite{ACJT96}. The undecidability results are based
on undecidability of boundedness or reachability of Petri nets with reset
arcs, or the undecidability of reachability of two-counter machines.

%


\section{Informal examples}

We start by giving informal examples of asynchronous programs using, for
readability, a simple imperative language.  We use C-like syntax with an
additional construct {\tt post} $f(e)$ which is the syntax for an
\emph{asynchronous call} to procedure $f$ with arguments $e$.  Operationally,
the execution of {\tt post} $f(e)$ posts handler \(f(e)\): an \emph{instance}
of handler \(f(e)\) is added to the task buffer. 

In the initial state of an asynchronous program, the task buffer is specified
by the programmer and the program stack is empty. Whenever the program stack is
empty, the scheduler dispatches a pending handler instance, if any. The program
\emph{stops} when the scheduler has no pending handler instances to dispatch.

In our formal development, we use a more abstract language acceptor based
model. Compiling our imperative programs to the formal model (assuming all data
variables range over finite types) is straightforward although laborious. 

\subsection{Safety Properties}

\begin{figure}[t]
\centering
\begin{minipage}[t]{.55\textwidth}
\begin{verbatim}
server() {
1: client *c = alloc_client();
2: if (c != 0) {
3:   c->state = TO_READ;
4:   post process_client(c);
   }
5: post server();
}

process_client(c) {
1: if (c->state == TO_READ) {
2:   post read(c); return;
   }
3: if (c->state == DONE_READ) {
4:   post send(c); return;
   }
E: assert(false);
}

read(c) {
1: if (*) {
2:   disconnect(c);   //ERROR: should return here
3: } else {
4:   if (*) { c->state = TO_READ;   }
5:   else   { c->state = DONE_READ; }
6: }
7: process_client(c);
}

send(c) {
1: assert(c->state == DONE_READ);
2: disconnect(c);  //done processing
}

disconnect(c) { // close connection
1: c->state = CLOSED;
2: return;
}

Initially:  server();
\end{verbatim}
\end{minipage}
\caption{Server example with bug}
\label{fig:server}
\end{figure}

Figure~\ref{fig:server} shows an abstracted example of a server that runs in a
loop (procedure {\tt server}) responding to external events to connect.
When a client connects to the server, the server loop allocates a data structure
for the connection, reads data asynchronously, sends data back to the client, and
disconnects. If there is an error reading data, the connection is disconnected.

The implementation uses asynchronous calls to procedures {\tt read} and {\tt
send}.  The server allocates data specific to a connection ({\tt
alloc\_client}), sets the state of the connection to {\tt TO\_READ} and posts
handler {\tt process\_client} to process the connection and posts itself to
wait for the next connection.

The handler {\tt process\_client} performs data read and data send.  It looks
at the state of the connection and posts {\tt read} or {\tt
send} based on the state.  It is an error to execute {\tt process\_client} if
the connection is in any other state (and the code is expected never to reach
the label {\tt E}).

The handler {\tt read} can disconnect a connection based on some error (lines
1,2), or read data.
If the data has not been read completely (modeled by the then-branch of the
non-deterministic conditional on line 4), the state is kept at {\tt TO\_READ}.
If the data has been read completely (modeled by the else-branch of the
non-deterministic conditional on line 4), the state is changed to {\tt DONE\_READ}.
In both cases, the procedure {\tt process\_client} is called (synchronously)
which, in turn, posts {\tt read} or {\tt send}.

The handler {\tt send} closes the connection by calling disconnect.  It expects
a connection whose state {\tt DONE\_READ} denotes data has been read (the
assertion on line 1), and the state is marked {\tt CLOSED}.

The example is representative of many server implementations, and demonstrates
the difficulty of writing asynchronous programs.  The sequential flow of
control, in which a connection is accepted, data is read, data is sent to the
client, and the connection is closed, gets broken into individual handlers and
the control flow is obscured.  Moreover, the state space can be unbounded as an
arbitrary number of connections can be in flight at the same time.

For correct behavior of the server, the programmer expects the connection is in
specific states at various stages of processing.  These are demonstrated by the
assertions in the code.

In this example, the assertion in {\tt send} holds for all program executions,
but the assertion in {\tt process\_client} does {\em not}.  The assertion in
{\tt send} holds because the condition is checked in {\tt process\_client}
(line 3) before {\tt send} is posted.  However, there can be an arbitrary delay
between the check and the execution of {\tt send} for this connection, with any
number of other connections executing in the middle.

The assertion in {\tt process\_client} can be violated in an execution which
{\tt read} terminates a connection on line 2 by calling {\tt disconnect} (which
sets the state to {\tt CLOSED}), and subsequently {\tt process\_client} is
called on line 7.  The bug occurs because the author forgot a {\tt return} on
line 2 after the {\tt disconnect}.

Our first goal is to get a sound and complete algorithm which can automatically
check an asynchronous program for safety properties such as assertions.

\begin{figure}[t]
\centering
\begin{minipage}[t]{.55\textwidth}
\begin{verbatim}
global int sent = 0, recv = 0;
global int n, w;
wrpc() {
  if (recv < n) {
    if (sent < n && sent - recv < w) {
      post rpccall();
      sent++;
    }
    post wrpc();
  } else {
    return;
  }
}
rpccall() { recv ++; }
Initially: wrpc();
\end{verbatim}
\end{minipage}
\caption{Windowed RPC implementation \label{fig-windowed-rpc} }%
\end{figure}

\subsection{Liveness Properties}

Figure~\ref{fig-windowed-rpc}  shows a simplified asynchronous implementation
of {\em windowed RPC}, in which a client makes $\ttn$ asynchronous procedure
calls in all, of which at most $\ttw\leq \ttn$ are pending at any one time.
(Assume that $\ttn$ and $\ttw$ are fixed constants.)
Windowed RPC is a common systems programming idiom which enables concurrent
interaction with a server without overloading it.

The windowed RPC client is implemented in the procedure $\ttwrpc$.  Two global
counters, $\ttsent$ and $\ttrecv$, respectively track the number times
\(\ttrpccall\) has been posted and the number pending instances of
\(\ttrpccall\) that have completed. The server is abstracted by the procedure
$\ttrpccall$ which increments \(\ttrecv\).  The procedure $\ttwrpc$ first
checks how many instances of \(\ttrpccall\) have completed.  If the number is
$\ttn$ or more, it terminates.  Otherwise if fewer than \(\ttn\) instances to
\(\ttrpccall\) have been posted and the number of pending instances (equal to
$\ttsent - \ttrecv$) is lower than the window size \(\ttw\) then \(\ttwrpc\)
posts \(\ttrpccall\). Finally, $\ttwrpc$ posts itself (this is done by an
asynchronous recursive call), either to further post handlers or to wait for
pending instances of \(\ttrpccall\)  to complete.

As mentioned in \cite{KKK07}, already in this simple case, asynchronous code
with windowed control flow is quite complex as the control decisions are spread
across multiple pieces of code.

Consider the desirable property that the windowed RPC fairly terminates, which
implies that, at some point in time, every pending instances of \(\ttrpccall\)
completed and the task buffer is empty. Informally, this property is true
because $\ttwrpc$ posts $\ttrpccall$ at most $\ttn$ times, and posts itself
only as long as $\ttrecv$ is less than $\ttn$.  Each execution of $\ttrpccall$
increments $\ttrecv$, so that after $\ttn$ dispatches of $\ttrpccall$, the
value of $\ttrecv$ reaches $\ttn$, and from this point, each dispatch to
$\ttwrpc$ does not post new handler. Thus, eventually, the task buffer becomes
empty.

Notice that we need the assumption that the scheduler \emph{fairly} dispatches
pending handlers: a post to $q$ is followed by a dispatch of $q$.  Without that
assumption the program does not terminate: consider the infinite run where
the scheduler always picks $\ttwrpc$ in preference to $\ttrpccall$.

\smallskip %
\noindent %
{\bf Fair Termination.} An asynchronous program \emph{fairly
terminates} if $(i)$ every time a procedure is called (synchronously or
asynchronously), it eventually returns; and $(ii)$ there is no infinite run
that is fair.  An infinite run is said to be {\em fair} if for every handler
\(q\) and for every step along the run, a pending instance to \(q\)
is followed by a dispatch of \(q\).  The fairness constraint is expressible as
a $\omega$-regular property.

Of course, for most server applications, the asynchronous program implementing
the server should {\em not} terminate (indeed, termination of a server points
to a bug).


\begin{figure}[t]
	\centering
	\begin{minipage}[t]{.2\textwidth}
\begin{verbatim}
global bit = 0;
h1() {
  if (bit == 0) {
    post h1();
    post h2();
  }
}

h2() {
  bit = 1;
}
Initially: h1();
\end{verbatim}
	\end{minipage}
\caption{A fairly terminating asynchronous program\label{fig-ex-unbdd}}%
\end{figure}

\smallskip %
\noindent %
{\bf Fair Non-starvation.}
A second ``progress condition'' is fair non-starvation.
When an asynchronous program does not terminate, we can still require that
$(i)$ every execution of a procedure that is called
(synchronous or asynchronous) eventually returns; and $(ii)$ along every fair infinite
run no handler is starved.  A starving handler corresponds to a particular
pending handler instance which is never dispatched, and hence which waits
forever to be executed. Consider a handler {\tt h} that posts itself twice. A
fair infinite execution dispatches {\tt h} infinitely often, even though a
particular pending instance to {\tt h} may never get to run.

Our second goal is to provide sound and complete algorithms to check fair termination
and fair non-starvation properties of asynchronous programs.

\smallskip
Proving safety and liveness properties for asynchronous
programs is difficult for several reasons.
First, as the server and the windowed RPC example suggests,
reasoning about termination may require reasoning about the dataflow
facts (e.g., the fact that the state is checked to be {\tt DONE\_READ} before
posting {\tt send} in server or that $\ttrecv$ eventually reaches $\ttn$ in RPC).
Second, at each point, there can be an unbounded number of pending handler instances.
This is illustrated by the program in Fig.~\ref{fig-ex-unbdd}, which terminates on
each fair execution, but in which the task buffer contains unboundedly many
pending instances (to {\tt h2}).  Third, each handler can potentially be recursive,
so the program stack can be unbounded as well.

We remark that if the finite dataflow
domain induces a sound abstraction of a concrete asynchronous program in which
data variables range over infinite domains, that is, if the finite abstraction has
more behaviors, then our analysis is sound:
if the analysis with the finite dataflow domains shows the asynchronous program fairly terminates (resp.\ is
fair non-starving) then the original asynchronous program fairly
terminates (resp.\ is fair non-starving).

\section{Preliminaries}

\subsection{Basics}

An \emph{alphabet} is a finite non-empty set of \emph{symbols}.  For an
alphabet $\Sigma$, we write $\Sigma^*$ for the set of finite sequences of
symbols (also called \emph{words}) over $\Sigma$.  A set $L\subseteq\Sigma^*$
of words defines a \emph{language}.  The length of a word $w\in\Sigma^*$,
denoted $\card{w}$, is defined as expected.  An infinite word \(\omega\)
alphabet $\Sigma$ is an infinite sequence of symbols.  For a finite non-empty
word $w\in \Sigma^*\setminus\set{\varepsilon}$, we write $w^{\omega}$ for the
infinite word given by the infinite repetition of $w$, that is, $w\cdot w \cdot
w \cdots $.  The projection of word \(w\) onto some alphabet $\Sigma'$, written
$\proj_{\Sigma'}(w)$, is the word obtained by erasing from \(w\) each symbol
which does not belong to \(\Sigma'\).  For a language $L$, define
\(\proj_{\Sigma'}(L)=\set{\proj_{\Sigma'}(w)\mid w \in L}\).

A \emph{multiset} $\mmap\colon \Sigma\rightarrow\nats$ over $\Sigma$ maps each
symbol of $\Sigma$ to a natural number.
Let $\multiset{\Sigma}$ be the set of all multisets over $\Sigma$.
We treat sets as a special case of multisets 
where each element is mapped onto $0$ or $1$.

We sometimes write
$\mmap=\multi{q_1,q_1,q_3}$ for the multiset
$\mmap\in\multiset{\set{q_1,q_2,q_3,q_4}}$ such that $\mmap(q_1)=2$,
$\mmap(q_2)=\mmap(q_4)=0$, and $\mmap(q_3)=1$. The empty multiset \(\multi{}\) is denoted
\(\varnothing\).  The size of a multiset $\mmap$, denoted $\card{\mmap}$, is
given by $\sum_{\gamma\in\Sigma}\mmap(\gamma)$.
Note that this definition applies to sets as well.

Given two multisets $\mmap,\mmap'\in\multiset{\Sigma}$ we define $\mmap\oplus
\mmap'\in\multiset{\Sigma}$ to be multiset such that $\forall a\in\Sigma\colon
(\mmap\oplus \mmap')(a)=\mmap(a)+\mmap'(a)$, we also define the natural order
$\preceq$ on $\multiset{\Sigma}$ as follows: $\mmap\preceq\mmap'$ if{}f there
exists $\mmap^{\Delta}\in\multiset{\Sigma}$ such that
$\mmap\oplus\mmap^{\Delta}=\mmap'$.

Given $\mmap$, we define $\dcl{\mmap}$ and
$\ucl{\mmap}$ to be the \emph{downward closure} and \emph{upward closure of}
$\mmap$, defined by $\set{\mmap'\in\multiset{\Sigma} \mid \mmap'\preceq \mmap}$
and $\set{\mmap'\in\multiset{\Sigma} \mid \mmap\preceq \mmap'}$, respectively.
The downward and upward closure are naturally extended to sets of multisets.

For $\Sigma\subseteq \Sigma'$ we regard $\mmap\in\multiset{\Sigma}$ as a
multiset of $\multiset{\Sigma'}$ where undefined values are sent to $0$.
We define the projection of
$\mmap'\in\multiset{\Sigma'}$ onto $\Sigma\subseteq\Sigma'$ as the multiset
$\mmap\in\multiset{\Sigma}$ such that $\forall \sigma\in\Sigma\colon
\mmap(\sigma)=\mmap'(\sigma)$.  
We write this as follows \(\proj_{\Sigma}(\mmap')\).

The {\em Parikh image} $\parikh\colon\Sigma^* \rightarrow \multiset{\Sigma}$
maps a word $w\in \Sigma^*$ to a multiset $\parikh(w)$ such that
$\parikh(w)(a)$ is the number of occurrences of $a$ in $w$.  For example,
$\parikh(abbab)(a)=2$, $\parikh(abbab)(b)=3$ and \(\parikh(\varepsilon)=\varnothing\).  For a language $L$, we define
$\parikh(L) = \set{\parikh(w)\mid w\in L}$.  Given an alphabet \(\Sigma'\),
define $\parikh_{\Sigma'}$ to be the function $\parikh\comp\proj_{\Sigma'}$
where $\comp$ denotes the function composition.

\subsection{Formal Languages}\label{sec:_formal_languages}

A \emph{context-free grammar} (\(\cfg\) for short) $G$ is a tuple
$(\mathcal{X},\Sigma,\prod)$ where $\mathcal{X}$ is a finite set of variables
(non-terminal letters), $\Sigma$ is an alphabet of terminal letters and $\prod
\subseteq \mathcal{X}\times (\Sigma \cup \mathcal{X})^*$ 
a finite set of
productions (the production $(X,w)$ may also be noted $X\rightarrow w$).  Given
two strings $u,v \in (\Sigma \cup \mathcal{X})^*$ we define the relation $u
\underset{G}\Rightarrow v$, if there exists a production $(X, w)\in\prod$ and
some words $y,z \in (\Sigma \cup \mathcal{X})^*$ such that $u=yXz$ and $v=ywz$.
We use $\underset{G}\Rightarrow^*$ for the reflexive transitive closure of
$\underset{G}\Rightarrow$.  A word $w\in \Sigma^*$ is {\em recognized} (we also
say \emph{accepted}) from the state $X\in \mathcal{X}$ if
$X\underset{G}\Rightarrow^* w$.  We sometimes simply write \(\Rightarrow\)
instead of \(\underset{G}\Rightarrow\) if \(G\) is clear from the context.

An \emph{initialized context-free grammar} \(G\) is given by a tuple
\((\mathcal{X},\Sigma,\prod,X_0)\) where \((\mathcal{X},\Sigma,\prod)\) is a
\(\cfg\) and \(X_0\in\mathcal{X}\) is the \emph{initial variable}. 
When the initial variable is clear from the context, we simply say context-free grammar.

We define the language of an initialized \(\cfg\) \(G\), denoted $L(G)$, as
$\set{w\in\Sigma^*\mid X_0\Rightarrow^* w}$. A language $L$ is
\emph{context-free} (written \(\cfl\)) if there exists an initialized \(\cfg\)
$G$ such that $L=L(G)$.

A \emph{regular grammar} $R$ is a context-free grammar such that each production is in
$\mathcal{X}\times \bigl((\Sigma\cdot\mathcal{X})\cup \set{\varepsilon}\bigr)$.  It is
known that a language $L$ is \emph{regular} if{}f $L=L(R)$ for some initialized
regular grammar $R$.

We usually use the letters $G$ and $R$ to denote grammars and regular grammars,
respectively.  Given a \(\cfg\) $G=(\mathcal{X},\Sigma,\prod)$ its \emph{size},
denoted $\tmenc{G}$, is given by $\card{\mathcal{X}}+\card{\Sigma}+\sum\set{
\card{Xw} \mid (X,w)\in\prod}$.

We will use the following result from language theory in our proofs.
\begin{lemma}{(Parikh's Lemma \cite{Parikh66})}
For any context free language $L$ there is an effectively computable regular
language $L'$ such that $\parikh(L) = \parikh(L')$.
\label{lem:parikh}
\end{lemma}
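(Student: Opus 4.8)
The plan is to prove the classical fact underlying this statement: the Parikh image of every context-free language is a \emph{semilinear} subset of $\multiset{\Sigma}$, and every semilinear set is the Parikh image of some regular language. Composing the two gives the effectively computable $L'$. Recall that a set is \emph{linear} if it has the form $\set{c \oplus n_1 p_1 \oplus \cdots \oplus n_k p_k \mid n_1,\dots,n_k\in\nats}$ for fixed multisets $c$ (the \emph{constant}) and $p_1,\dots,p_k$ (the \emph{periods}), where $n_i p_i$ abbreviates the $n_i$-fold sum $p_i\oplus\cdots\oplus p_i$; a set is \emph{semilinear} if it is a finite union of linear sets.

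First I would dispatch the easy direction: every semilinear set is the Parikh image of a regular language, effectively. For a linear set with constant $c$ and periods $p_1,\dots,p_k$, choose words $w_0,\dots,w_k$ with $\parikh(w_0)=c$ and $\parikh(w_i)=p_i$; then the regular language $\set{w_0}\cdot\set{w_1,\dots,w_k}^{*}$ has exactly that Parikh image. Since regular languages are closed under finite union and $\parikh$ commutes with union, a finite union of linear sets is again realized by a regular language, and everything is constructive.

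The substance is to show $\parikh(L(G))$ is semilinear for an initialized $\cfg$ $G=(\mathcal{X},\Sigma,\prod,X_0)$, which I would do by analyzing derivation trees and indexing by the nonterminal-support $S\subseteq\mathcal{X}$ (the set of nonterminals labelling internal nodes, always containing $X_0$). A \emph{pump for $S$} is a minimal derivation $X\Rightarrow^* u X v$ with $X\in S$ using only nonterminals of $S$; its \emph{period} is $\parikh(uv)$. A tree $t$ with support $S$ is a \emph{base tree for $S$} if no pump can be excised from it without removing some nonterminal from $S$; base trees have bounded size, so for each $S$ there are finitely many of them and finitely many periods. I claim $\parikh(L(G))$ is the union over all $S$ of the linear sets whose constants are the Parikh images of the base trees for $S$ and whose periods are the periods for $S$. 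For the inclusion $\subseteq$, take any derivation tree for a terminal word, with support $S$, and repeatedly excise \emph{support-preserving} pumps until reaching a base tree for $S$; each excised pump is a pump for $S$, so the original Parikh image is the base image augmented by a sum of periods for $S$. For $\supseteq$, start from a base tree for $S$ and graft the pumps one at a time: since grafting never deletes a nonterminal, the support stays $S$ throughout, so the head $X\in S$ of each pump still occurs in the current tree and the graft is legal; the resulting tree witnesses the target Parikh image.

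The main obstacle is the bookkeeping that keeps each derivation inside a \emph{single} linear set. The crucial point is that one must excise only support-preserving pumps in the $\subseteq$ direction --- excising a pump that kills a nonterminal would drop into the component of a strictly smaller $S$ --- and dually that grafting in the $\supseteq$ direction remains legal precisely because the support is invariant, guaranteeing an occurrence of each pump's head. Verifying that support-preserving excision can always be iterated down to a base tree, and bounding the size of base trees so that finitely many of them (together with finitely many periods) exactly generate each component, is the delicate combinatorial heart; effectiveness then follows since base trees and periods for each $S$ are found by a bounded search in $G$.
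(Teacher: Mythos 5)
The paper gives no proof of this lemma: it is quoted as a classical result with a citation to Parikh's original paper, so there is no in-paper argument to compare yours against. What you have written is the standard proof of Parikh's theorem --- semilinearity of $\parikh(L(G))$ via support-indexed pump excision and grafting, together with the easy realization of each linear set by a regular language of the form $\set{w_0}\cdot\set{w_1,\dots,w_k}^{*}$ --- and the two inclusions are set up correctly. In particular you put your finger on the right invariant: excision must be support-preserving so that a word never drops out of the component indexed by its own support $S$, and grafting is always legal precisely because the support (hence an occurrence of each pump's head) is preserved throughout. The one deferred point that genuinely needs an argument is that whenever a non-base tree admits a support-preserving excision, it admits a support-preserving excision of a \emph{minimal} pump --- otherwise the excised Parikh vectors need not lie in your finite period set. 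This does hold: if $P'$ is a sub-pump of an excisable pump $P$ along the same spine, the nodes removed by excising $P'$ form a subset of those removed by excising $P$, so the surviving nonterminal labels after excising $P'$ contain those surviving after excising $P$, and support preservation is inherited. With that observation, the termination of iterated excision, and the size bound on base trees and minimal pumps (no properly excisable sub-pump forces bounded depth under Assumption~\ref{ass:normalform}, hence bounded size and finitely many of each), your sketch is a correct and effective plan, matching the classical argument that the paper cites.
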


Any two languages $L$ and $L'$ such that $\parikh(L)=\parikh(L')$ are said to
be \emph{Parikh-equivalent}.

Throughout the paper, we make the following assumption without loss of generality.
\begin{assumption}
\(\prod\subseteq \bigl(\mathcal{X} \times (\mathcal{X}^2 \cup \Sigma \cup \set{\varepsilon})\bigr)\) for every \(\cfg\) \(G=(\mathcal{X},\Sigma,\prod)\). 
\label{ass:normalform}
\end{assumption}
It has been shown, see for instance in \cite{LL10}, that every \(\cfg\) can be
transformed, in polynomial time, into an equivalent grammar of the above form.

\section{Formal Model}

As noted in the informal example, our formal model consists of three ingredients:
a global store of data values, a set of potentially recursive handlers, and a task buffer that maintains
a multiset of pending handler instances.
We formalize the representation using asynchronous programs.

\subsection{Asynchronous Programs}

An asynchronous program \(\ap = (D, \Sigma, \Sigma_i, G, R, d_0, \mmap_0)\) consists of a
finite set of {\em global states} $D$, 
an alphabet $\Sigma$ of \emph{handler names}, 
an alphabet $\Sigma_i$ of {\em internal actions} disjoint from $\Sigma$, 
a \(\cfg\) $G = (\mathcal{X},\Sigma\cup\Sigma_i,
\prod)$,
a regular grammar \(R=(D,\Sigma\cup\Sigma_i,\delta)\), 
a multiset $\mmap_0\in\multiset{\Sigma}$ of initial pending
handler instances, and an initial state $d_0 \in D$.  
We assume that for each $\sigma\in\Sigma$, there is a non-terminal $X_\sigma\in\mathcal{X}$ of $G$.

A {\em configuration} $(d, \mmap) \in D\times \multiset{\Sigma}$ of \(\ap\)
consists of a global state $d$ and a multiset $\mmap$ of pending handler instances.  For a
configuration $c$, we write $c.d$ and $c.\mmap$ for the global state and the
multiset in the configuration respectively.  The \emph{initial} configuration
\(c_0\) of \(\ap\) is given by \(c_0.d=d_0\) and \(c_0.\mmap=\mmap_0\).

The semantics of an asynchronous program is given as a labeled transition
system over the set of configurations, with a transition relation $\rightarrow
\subseteq (D\times\multiset{\Sigma})\times \Sigma \times
(D\times\multiset{\Sigma})$ defined as follows: let \(\mmap,\mmap'\in\multiset{\Sigma}\), \(d,d'\in D\) and \(\sigma\in\Sigma\)
\begin{gather*}
	(d,\mmap\oplus\multi{\sigma})\overset{\sigma}\rightarrow (d',\mmap\oplus\mmap')\\
\text{ if{}f }\\
\exists w\in (\Sigma\cup\Sigma_i)^* \colon d \underset{R}\Rightarrow^* w \cdot d' \land X_{\sigma} \underset{G}\Rightarrow^* w \land \mmap' = \parikh_{\Sigma}(w)\enspace .
\end{gather*}

Intuitively, we model the (potentially recursive) code of a handler using a
context-free grammar.  The code of a handler does two things: first, it can
change the global state (through $R$), and second, it can add new pending handler instances
(through derivation of a word in $\Sigma^*$).  Together, the transition
relation $\rightarrow$ states that there is a transition from configuration
$(d, \mmap\oplus\multi{\sigma})$ to $(d',\mmap\oplus\mmap')$ if there is an execution of
handler $\sigma$ that changes the global state from $d$ to $d'$ and adds to the task buffer
the handler instances given by $\mmap'$. Note that the multiset
$\mmap$ (the current content of the task buffer minus the pending handler
instance \(\sigma\)) is unchanged while \(\sigma\) executes, and that the order
in which the handler instances are added to the task buffer is immaterial
(hence, in our definition, we take the Parikh image of $w$).

Finally, we conclude from the definition of their semantics that asynchronous
programs satisfy the following form of \emph{monotonicity}. Let us first define
the ordering \(\sqsubseteq\subseteq
(D\times\multiset{\Sigma})\times(D\times\multiset{\Sigma})\) such that
\(c\sqsubseteq c'\) if{}f \(c.d=c'.d \land c.\mmap\preceq c'.\mmap\). Also we have:
\[\forall \sigma\in\Sigma\,\forall c_1\,\forall c_2\,\forall c_3\,\exists
c_4\colon c_1\overset{\sigma}\rightarrow c_2 \land c_1\sqsubseteq c_3 \text{ implies }
c_3\overset{\sigma}\rightarrow c_4 \land c_2\sqsubseteq c_4 \enspace .\]
Therefore, as already pointed in \cite{SenV06,cv-tcs09}, the transitions system
\(\bigl((D\times\multiset{\Sigma},\sqsubseteq),\rightarrow,c_0\bigr)\) defined
by asynchronous programs are \emph{well-structured transition systems} as given
in \cite{ACJT96,FS01}.

A {\em run} of an asynchronous program is a finite or infinite sequence
\[
c_0 \overset{\sigma_0}\rightarrow c_1 \cdots c_k\overset{\sigma_{k}}\rightarrow c_{k+1}  \cdots
\]
of configurations $c_i$ starting from the initial configuration \(c_0\).  A
configuration $c$ is {\em reachable} if there is a finite run \( c_0
\overset{\sigma_0}\rightarrow \cdots \overset{\sigma_{k-1}}\rightarrow c_k \)
with $c_k=c$.

A handler $\sigma\in\Sigma$ is {\em pending} at a configuration \(c\) if $c.\mmap(\sigma) > 0$.
The handler $\sigma$ is said to be {\em dispatched} in the transition $c \overset{\sigma}\rightarrow c'$.

An infinite run
\( c_0 \overset{\sigma_0}\rightarrow \cdots c_k\overset{\sigma_{k}}\rightarrow  \cdots \)
is {\em fair} if for every $\sigma\in\Sigma$, if $\sigma$ is dispatched only
finitely many times along the run, then $\sigma$ is not pending at $c_j$ for
infinitely many $j$'s.  Intuitively, an infinite run is unfair if at some point
some handler is pending and is never dispatched.

For complexity considerations, we encode an asynchronous program as follows.
The grammar \(G\) and \(R\) are encoded as given in
Sect.~\ref{sec:_formal_languages}.  The initial multiset is encoded as a list
of pairs $(\sigma, \mmap_0(\sigma))$, and using a binary representation for
$\mmap_0(\sigma)$.  The {\em size} of an asynchronous program $A$ encoded as
above is denoted $\tmenc{A}$.

\subsection{From Program Flow Graphs to Asynchronous Programs}

We briefly describe how program flow graphs can be represented formally as
asynchronous programs.

We represent programs using control flow graphs \cite{ASU86}, one for each
procedure. The set of procedure names is denoted \(\Sigma\).  The \emph{control flow
graph} for a procedure $\sigma\in\Sigma$ consists of a labeled, directed graph
$(V_\sigma,E_\sigma)$, together with a unique entry node $v^e_\sigma\in
V_\sigma$, a unique exit node $v^x_\sigma\in V_\sigma$, and an edge labeling
which labels each edge with either a statement (such as assignments or
conditionals) taken from a set $\stmts$, or a {\em synchronous} procedure call
(that gets executed immediately) or an {\em asynchronous} procedure call (that
gets added to the task buffer). The nodes of the control flow graph correspond
to control points in the procedure, the entry and exit nodes represent the
point where execution begins and ends, respectively.  Moreover, control flow
graphs are well-formed: every node of \(V_{\sigma}\) is reachable from
\(v^e_{\sigma}\) and co-reachable from \(v^x_{\sigma}\).  We allow arbitrary
recursion.

Let $D$ be a fixed finite set of dataflow values.  We assume that there is an
abstract transfer function $M \colon D \times (\Sigma\cup\stmts) \rightarrow D$
which maps dataflow values and statements to a dataflow value, and captures the
abstract semantics of the program.

Let us now define an asynchronous program \(\ap=(D,\Sigma,\stmts, G,R,d_0,\mmap_0)\).
The reasoning underlying the definition of \(\ap\) is to map the control flow
graphs to \(G\) and the abstract transfer function to \(R\).

We define the \(\cfg\) $G=(\mathcal{X},\Sigma\cup\stmts,\prod)$ where the set
of nonterminals $\mathcal{X}$ is the set of all nodes in all
control flow graphs.

The set of productions $\prod$ is defined as the smallest set such that:
\begin{itemize}
	\item \((X \rightarrow \sigma \cdot Y)\in\prod\) if the edge $(X,Y)$ in the
		control flow graph is labeled with an asynchronous call to procedure
		$\sigma\in\Sigma$;
	\item \((X \rightarrow \mathit{st}\cdot Y)\in\prod\) if the edge $(X,Y)$ is
		labeled with a statement $\mathit{st}\in\stmts$;
	\item \( (X \rightarrow v^e_\sigma \cdot Y)\in\prod\) if the edge $(X,Y)$ is
		labeled with a synchronous call to procedure $\sigma\in\Sigma$;
   \item \((v^x_\sigma\rightarrow \varepsilon)\in\prod\) for each procedure $\sigma\in\Sigma$.
\end{itemize}

Assumption~\ref{ass:normalform} does not hold on \(G\). However it can be
enforced easily (in this case in linear time) by replacing productions of the
form \(X\rightarrow \gamma\cdot Y\) (\(\gamma\in(\Sigma\cup\stmts)\)) by
\(X\rightarrow G \cdot Y\) and \(G\rightarrow \gamma\)) where \(G\) is a 
fresh variable.

We define the regular grammar \(R=(D,\Sigma\cup\stmts,\delta)\) where
\(\delta=\set{ d\rightarrow \mathit{st}\cdot d'\mid d,d'\in D\land
\mathit{st}\in \Sigma\cup\stmts \land M(d,\mathit{st})=d'}\).

Let \(\sigma_0\in\Sigma\) be the main procedure. Intuitively, a leftmost
derivation in the grammar \(G\) starting from \(v^e_{\sigma_0}\) corresponds to
an interprocedurally valid path in the program. The derived word is the
sequence of asynchronous calls to procedures of \(\Sigma\) and statements of
\(\stmts\) made along that path. The global state is given by executing the
program along the path with the abstract semantics specified by $M$ on the
domain $D$ starting from an initial dataflow value \(d_{\imath}\).  Therefore,
\(\ap\) is such that \(\mmap_0=\multi{\sigma_0}\) and \(d_0=d_{\imath}\).

\begin{remark}
Observe that by modelling handlers using language acceptors we are
abstracting away the non terminating executions within a handler.
\label{rmk:finiterunsonly}
\end{remark}

\subsection{A Technical Construction}\label{sec:technical}

Given an asynchronous program $\ap=(D,\Sigma,\Sigma_i, G,R,d_0,\mmap_0)$, we define a
``product grammar'' $G^R$ which synchronizes derivations in $G$ and $R$.
The \(\cfg\) $G^R$ simplifies some subsequent constructions on asynchronous programs.



\begin{definition}
Given a \(\cfg\) $G=(\mathcal{X},\Sigma\cup\Sigma_i,\prod)$ 
and a regular grammar $R=(D,\Sigma\cup\Sigma_i,\delta)$,
define the \(\cfg\) $G^R=(\mathcal{X}^R, \Sigma, \prod^R)$
where $\mathcal{X}^R = \set{ [dXd'] \mid d,d'\in D, X\in\mathcal{X}}$,
and $\prod^R$ is the least set such that each of the following holds:
\begin{itemize}
	\item if \( (X\rightarrow \varepsilon)\in\prod\) and \(d\in D\) then \( ([d X d]\rightarrow \varepsilon)\in\prod^R\).
	\item if \( (X\rightarrow a)\in\prod\) and \( (d\rightarrow a\cdot d')\in\delta\) then
		\( ([dXd']\rightarrow \proj_{\Sigma}(a))\in\prod^R\).
	\item if \([d_0 A d_1],[d_1 B d_2]\in\mathcal{X}^R\) and \( (X\rightarrow AB)\in\prod\)
		then \( ([d_0 X d_2]\rightarrow [d_0 Ad_1][d_1 Bd_2])\in\prod^R \).
\end{itemize}
\label{def:GR}
\end{definition}

\begin{lemma}
Let \(G\), \(R\) and \(G^R\) as in Def.~\ref{def:GR}.
For every \(d,d'\in D\), \(X\in\mathcal{X}\), \(w_1\in\Sigma^*\) and \(w\in (\Sigma\dotcup \Sigma_i)^{*}\) we have:

\begin{gather}
	[dXd']{\underset{G^R}\Rightarrow^*}w_1
\text{ implies } \exists w_2\in (\Sigma\dotcup\Sigma_i)^*
\colon \proj_{\Sigma}(w_2)=w_1
\land d{\underset{R}\Rightarrow^*}w_2\cdot d' \land
X{\underset{G}\Rightarrow^*} w_2 \\
d{\underset{R}\Rightarrow^*}w\cdot d' \land  
X{\underset{G}\Rightarrow^*} w \text{ implies }
[dXd']{\underset{G^R}\Rightarrow^*}\proj_{\Sigma}(w)\enspace .\label{eq:correctnessofconstruction}
\end{gather}
Moreover, \(G^{R}\) can be computed in time polynomial in the size of \(G\) and \(R\).
	\label{lem:gr_correctness}
\end{lemma}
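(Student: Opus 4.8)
The plan is to prove the two implications by induction on derivation length and then read off the complexity bound from the construction. Both implications rest on a single structural property of right-linear (regular) derivations, which I would isolate first: for the regular grammar $R$, states $d,d',d''\in D$ and words $u,v\in(\Sigma\dotcup\Sigma_i)^*$, we have $d\underset{R}\Rightarrow^* u\cdot d'$ together with $d'\underset{R}\Rightarrow^* v\cdot d''$ if and only if $d\underset{R}\Rightarrow^* uv\cdot d''$ via an intermediate sentential form $u\cdot d'$. This holds because in a regular grammar every sentential form is a terminal prefix followed by a single nonterminal, so the derivation is determined step-by-step by the current nonterminal; concatenating two such derivations (\emph{compose}) and splitting one at the point where the prefix $u$ has been fully emitted (\emph{decompose}) are then immediate, as is the fact that reaching a sentential form $w\cdot d'$ requires emitting exactly $\card{w}$ terminals. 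I would use the compose direction for soundness and the decompose direction for completeness.

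For soundness (the first implication), I would induct on the number of productions in $[dXd']\underset{G^R}\Rightarrow^* w_1$, casing on the first production applied, which by Def.~\ref{def:GR} has one of three forms. If it is $[dXd]\to\varepsilon$ (so $X\to\varepsilon\in\prod$ and $d=d'$) take $w_2=\varepsilon$; if it is $[dXd']\to\proj_\Sigma(a)$ (so $X\to a\in\prod$ and $d\to a\cdot d'\in\delta$) take $w_2=a$. In both base cases the three conclusions $\proj_\Sigma(w_2)=w_1$, $d\underset{R}\Rightarrow^* w_2\cdot d'$, and $X\underset{G}\Rightarrow^* w_2$ are immediate. For the binary production $[d_0Xd_2]\to[d_0Ad_1][d_1Bd_2]$, I split the remaining $G^R$-derivation into strictly shorter derivations $[d_0Ad_1]\underset{G^R}\Rightarrow^* u_1$ and $[d_1Bd_2]\underset{G^R}\Rightarrow^* v_1$ with $w_1=u_1v_1$, apply the induction hypothesis to obtain witnesses $u_2,v_2$, and set $w_2=u_2v_2$. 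Then $\proj_\Sigma(w_2)=w_1$ since $\proj_\Sigma$ is a homomorphism, the $G$-side follows from $X\to AB$ chained with the two sub-derivations, and the $R$-side follows by the compose direction applied to $d_0\underset{R}\Rightarrow^* u_2\cdot d_1$ and $d_1\underset{R}\Rightarrow^* v_2\cdot d_2$.

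For completeness (\eqref{eq:correctnessofconstruction}), I would induct on the number of productions in $X\underset{G}\Rightarrow^* w$, casing on the first production applied to $X$; Assumption~\ref{ass:normalform} guarantees it is $X\to\varepsilon$, $X\to a$, or $X\to AB$. In the first case $w=\varepsilon$ and $d\underset{R}\Rightarrow^* d'$ emits no terminal, forcing $d=d'$, so $[dXd]\to\varepsilon\in\prod^R$ closes it; in the second case $w=a$ and $d\underset{R}\Rightarrow^* a\cdot d'$ uses exactly one production $d\to a\cdot d'\in\delta$, so $[dXd']\to\proj_\Sigma(a)\in\prod^R$. The interesting case is $X\to AB$ with $w=w_Aw_B$, $A\underset{G}\Rightarrow^* w_A$ and $B\underset{G}\Rightarrow^* w_B$: here I apply the decompose direction to $d\underset{R}\Rightarrow^* w_Aw_B\cdot d'$ to recover an intermediate state $d''$ with $d\underset{R}\Rightarrow^* w_A\cdot d''$ and $d''\underset{R}\Rightarrow^* w_B\cdot d'$, invoke the induction hypothesis on the two strictly shorter $G$-derivations to get $[dAd'']\underset{G^R}\Rightarrow^* \proj_\Sigma(w_A)$ and $[d''Bd']\underset{G^R}\Rightarrow^* \proj_\Sigma(w_B)$, and combine them through $[dXd']\to[dAd''][d''Bd']\in\prod^R$. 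I expect this recovery of $d''$, matching the binary split of the $G$-derivation against the unique factorization point of the $R$-derivation, to be the main obstacle, and it is exactly what the right-linear property above is designed to supply.

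Finally, the complexity claim follows by inspecting Def.~\ref{def:GR}: $\mathcal{X}^R$ has $\card{D}^2\card{\mathcal{X}}$ nonterminals, each $\varepsilon$- or terminal-production of $G$ spawns at most $\card{D}^2$ productions of $G^R$ (also bounded by $\card{\delta}$ per symbol), and each binary production $X\to AB$ spawns at most $\card{D}^3$ productions; summing over $\prod$ yields a bound polynomial in $\tmenc{G}$ and $\tmenc{R}$, and all productions can be enumerated within that time.
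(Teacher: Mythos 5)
Your proof is correct, and the soundness direction is essentially the paper's argument (induction on the length of the $G^R$-derivation, casing on the first production). The completeness direction, however, takes a genuinely different and arguably cleaner route. The paper proves \eqref{eq:correctnessofconstruction} by induction on the \emph{length of the word} $w$, which breaks down when a binary production $X\rightarrow AB$ has one child deriving $\varepsilon$ (the other child's word has the same length as $w$, so the measure does not decrease). To repair this, the paper needs two auxiliary lemmas: one that locates, inside the derivation tree, a descendant $X'$ of $X$ at which the word genuinely splits into two \emph{nonempty} factors, and one that propagates the resulting $G^R$-derivation back up through the chain of ancestors whose sibling subtrees all derive $\varepsilon$. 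Your induction on the \emph{number of production applications} sidesteps this entirely: even when $w_A=\varepsilon$, the sub-derivation $A\Rightarrow^*\varepsilon$ is strictly shorter, so the degenerate and non-degenerate splits are handled uniformly and no auxiliary lemmas are needed. The one point you should make explicit is the standard decomposition fact that $X\Rightarrow AB\Rightarrow^{n-1}w$ yields $w=w_Aw_B$ with $A\Rightarrow^{n_1}w_A$, $B\Rightarrow^{n_2}w_B$ and $n_1+n_2=n-1$, together with your right-linear decompose/compose observation for $R$ (which the paper uses implicitly as well). A further cosmetic difference: the paper first builds an intermediate grammar over the full alphabet $\Sigma\dotcup\Sigma_i$, proves the exact correspondence there, and applies $\proj_{\Sigma}$ only at the very end, whereas you thread the projection through the induction; both are sound, and your base cases handle the projection correctly. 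The complexity count matches the paper's (routine) claim.
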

\begin{proof}
See Sect.~\ref{sec:synsemproduct} for a proof of (1) and (2). 
Given def.~\ref{def:GR}, it is routine to check that the time complexity bound holds.
\end{proof}


Lem.~\ref{lem:clear} below makes clear the purpose of this section:
it gives an equivalent but simpler definition for
the semantics of an asynchronous program.

\begin{definition}
Let \(\ap=(D,\Sigma,\Sigma_i,G,R,d_0,\mmap_0)\) be an asynchronous program.  We define a
\emph{context} to be  an element of $D\times \Sigma\times D$.
We also introduce the abbreviation 
$\mathfrak{C}=D\times\Sigma\times D$ for the set of all contexts.  
Let $c=(d_i,\sigma,d_f)\in\mathfrak{C}$, define \(G^c\) 
to be an initialized \(\cfg\) which is given by \(G^R\) with the
initial symbol \([d_iX_{\sigma}d_f]\), that is
\(G^c=(\mathcal{X}^R,\Sigma,\prod^R,[d_iX_{\sigma}d_f])\).
\label{def:gc}
\end{definition}

\begin{lemma}
	Let $c=(d_1,\sigma,d_2)\in\mathfrak{C}$ and \(\mmap\in\multiset{\Sigma}\), we have:
\[
(d_1,\multi{\sigma})\overset{\sigma}\rightarrow (d_2,\mmap)
\quad\text{ if{}f }\quad \mmap\in\parikh(L(G^c))\enspace. \]
\label{lem:clear}
\end{lemma}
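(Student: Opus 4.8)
The plan is to unfold both sides of the claimed equivalence to their defining conditions and then read off the result directly from Lem.~\ref{lem:gr_correctness}, which already relates derivations in $G^R$ to synchronized derivations in $G$ and $R$. First I would simplify the left-hand side. Since the source configuration is $(d_1,\multi{\sigma})$ and the rule defining $\rightarrow$ has the shape $(d,\mmap''\oplus\multi{\sigma})\overset{\sigma}\rightarrow(d',\mmap''\oplus\mmap')$, the surrounding context $\mmap''$ must satisfy $\mmap''\oplus\multi{\sigma}=\multi{\sigma}$, forcing $\mmap''=\varnothing$; hence the target is necessarily $(d_2,\mmap')$ with $\mmap=\mmap'$. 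Unfolding the definition of $\rightarrow$ then shows that $(d_1,\multi{\sigma})\overset{\sigma}\rightarrow(d_2,\mmap)$ holds if{}f there is a word $w\in(\Sigma\dotcup\Sigma_i)^*$ with $d_1\underset{R}\Rightarrow^* w\cdot d_2$, $X_\sigma\underset{G}\Rightarrow^* w$, and $\mmap=\parikh_\Sigma(w)$. On the right-hand side, unfolding Def.~\ref{def:gc} gives $L(G^c)=\set{w_1\in\Sigma^*\mid [d_1X_\sigma d_2]\underset{G^R}\Rightarrow^* w_1}$, so $\mmap\in\parikh(L(G^c))$ means there is $w_1\in\Sigma^*$ with $[d_1X_\sigma d_2]\underset{G^R}\Rightarrow^* w_1$ and $\parikh(w_1)=\mmap$.

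For the forward direction I would take the witness $w$ obtained above and apply implication \eqref{eq:correctnessofconstruction} of Lem.~\ref{lem:gr_correctness} with $d=d_1$, $d'=d_2$, $X=X_\sigma$, obtaining $[d_1X_\sigma d_2]\underset{G^R}\Rightarrow^*\proj_\Sigma(w)$. Setting $w_1=\proj_\Sigma(w)\in\Sigma^*$ gives $w_1\in L(G^c)$, and since $\parikh_\Sigma=\parikh\circ\proj_\Sigma$ we have $\parikh(w_1)=\parikh(\proj_\Sigma(w))=\parikh_\Sigma(w)=\mmap$, whence $\mmap\in\parikh(L(G^c))$. For the backward direction I would start from some $w_1\in L(G^c)$ with $\parikh(w_1)=\mmap$ and apply the first implication of Lem.~\ref{lem:gr_correctness} to the derivation $[d_1X_\sigma d_2]\underset{G^R}\Rightarrow^* w_1$; this yields $w_2\in(\Sigma\dotcup\Sigma_i)^*$ with $\proj_\Sigma(w_2)=w_1$, $d_1\underset{R}\Rightarrow^* w_2\cdot d_2$, and $X_\sigma\underset{G}\Rightarrow^* w_2$. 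Then $\parikh_\Sigma(w_2)=\parikh(\proj_\Sigma(w_2))=\parikh(w_1)=\mmap$, so $w_2$ is exactly the witness demanded by the definition of $\rightarrow$, establishing the transition $(d_1,\multi{\sigma})\overset{\sigma}\rightarrow(d_2,\mmap)$.

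Since the deep content---the correspondence between the product grammar $G^R$ and the pair $(G,R)$---is entirely carried by Lem.~\ref{lem:gr_correctness}, no step here is genuinely hard; the proof is a matter of bookkeeping. The two points I would be most careful about are, first, checking that the surrounding multiset in the transition rule is forced to be empty so that the target multiset equals $\mmap'$ exactly, and second, the identity $\parikh(w_1)=\parikh_\Sigma(w)$ when $w_1=\proj_\Sigma(w)$, which hinges on $w_1$ already lying in $\Sigma^*$ so that projecting onto $\Sigma$ acts as the identity. Both are immediate once the definitions are written out, so the lemma follows by combining the two implications of Lem.~\ref{lem:gr_correctness} with these observations.
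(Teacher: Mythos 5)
Your proof is correct and takes essentially the same route as the paper's: both unfold the definition of $\rightarrow$ and of $L(G^c)$ and then discharge the equivalence via the two implications of Lem.~\ref{lem:gr_correctness} together with the identity $\parikh_{\Sigma}=\parikh\comp\proj_{\Sigma}$. The paper simply packages the argument as a single chain of if{}f steps rather than two separate directions, and leaves implicit the observation (which you rightly make explicit) that the surrounding multiset in the transition rule is forced to be empty.
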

\begin{proof}
	The definition of \(\rightarrow\) shows that 
\begin{align*}
	& (d_1,\multi{\sigma})\overset{\sigma}\rightarrow (d_2,\mmap)\\
	\text{if{}f } &\exists w\in (\Sigma\cup\Sigma_i)^* \colon d_1 \underset{R}\Rightarrow^* w \cdot d_2 \land X_{\sigma} \underset{G}\Rightarrow^* w \land \mmap = \parikh_{\Sigma}(w) &\text{def.\ of }\rightarrow\\
	\text{if{}f } &\exists w\in (\Sigma\cup\Sigma_i)^* \colon [d_1 X_{\sigma}d_2]{\underset{G^R}\Rightarrow^*}\proj_{\Sigma}(w) \land \mmap = \parikh_{\Sigma}(w) &\text{Lem.~\ref{lem:gr_correctness}}\\
	\text{if{}f } &\exists w\in (\Sigma\cup\Sigma_i)^* \colon [d_1 X_{\sigma}d_2]{\underset{G^R}\Rightarrow^*}\proj_{\Sigma}(w) \land \mmap = \parikh\comp\proj_{\Sigma}(w) &\text{def.\ of }\parikh_{\Sigma}\\
	\text{if{}f } &\exists w'\in \Sigma^* \colon [d_1 X_{\sigma}d_2]{\underset{G^R}\Rightarrow^*} w' \land \mmap = \parikh(w') &\text{elim.\ }\proj_{\Sigma}\\
	\text{if{}f } & \mmap\in\parikh(L(G^c)) &\text{def.\ of }G^c,\parikh
\end{align*}
	
\end{proof}

Observe that this equivalent semantics completely ignores the ordering in which
handlers are posted.
Using the above constructions, we have eliminated the need to explicitly carry around the
internal actions $\Sigma_i$.
Consequently, in what follows, we shall omit the internal actions from our description of
asynchronous programs.

\subsection{Properties of Asynchronous Programs}

In this paper, we study the following decision problems for asynchronous programs.
The first set of problems relate to properties of finite runs.
\begin{definition}
	\hspace{0pt}
\begin{itemize}
	\item {\bf Safety (Global state reachability)}:\\
		{Instance:} An asynchronous program $\ap$ and a global state $d_f\in D$\\
		{Question:} Is there a reachable configuration \(c\) such that \(c.d=d_f\) ?\\
		If so $d_f$ is said to be \emph{reachable} (in \(\ap\)); otherwise \emph{unreachable}.
	\item {\bf Boundedness (of the task buffer)}:\\
		{Instance:} An asynchronous program $\ap$\\
		{Question:} Is there an $N\in\mathbb{N}$ such that
		for every reachable configuration \(c\) we have $\vert c.\mmap \vert \leq N$?\\
		If so the asynchronous program $\ap$ is \emph{bounded}; otherwise
		\emph{unbounded}.
        \item {\bf Configuration reachability}:\\
		{Instance:} An asynchronous program $\ap$ and a configuration \(c\)\\
		{Question:} Is \(c\) reachable?
\end{itemize}%
\label{def:finite-aaruns}
\end{definition}

The next set of problems relate to properties of infinite runs.

\begin{definition}
All the following problems have a common input given by an asynchronous
program $\ap$
\begin{itemize}
	\item {\bf Non Termination}:
		Is there an infinite run?
	\item {\bf Fair Non Termination}: Is there a fair infinite run?
   \item {\bf Fair Starvation}:
				 Is there a fair infinite run \(c_0,c_1,\ldots,c_i,\ldots\), a handler
				 $\sigma\in\Sigma$ and some index \(J\geq 0\) such that for each
				 \(j\geq J\) we have \linebreak (i) \(c_j.\mmap(\sigma)\geq 1\), and
				 (ii) if $c_j \overset{\sigma}\rightarrow c_{j+1}$ then
				 $c_j.\mmap(\sigma) \geq 2$?
\end{itemize}
\label{def:infinite-aaruns}
\end{definition}

We provide some intuition on the fair starvation property.
A  run could be fair, but a specific pending handler instance may never get chosen in the run.
We say that the handler instance is {\em starved} in the run.
Of course, the desired property for a program is the complement: that
no handler is starved on any run (i.e., that every infinite fair run
does not starve any handler).

\section{Petri net semantics}

In this section we show how asynchronous programs can be modelled by Petri nets.
We review a reduction from asynchronous programs to Petri nets
and sharpen the reduction to get optimal complexity bounds.

\subsection{Petri nets}

A {\em Petri net} ($\pn$ for short) $N=(S,T,F=\tuple{I,O})$ consists of a
finite non-empty set $S$ of \emph{places}, a finite set $T$ of \emph{transitions}
disjoint from $S$, and a pair $F=\tuple{I,O}$ of functions
$I\colon T\rightarrow\multiset{S}$ and $O\colon T\rightarrow\multiset{S}$.

To define the semantics of a $\pn$ we introduce the definition of
\emph{marking}.  Given a $\pn$ $N=(S,T,F)$, a marking $\mmap\in\multiset{S}$ is
a multiset which maps each $p\in S$ to a non-negative integer. 
For a marking $\mmap$,
we say that $\mmap(p)$ gives the number of \emph{tokens} contained in place $p$.

A transition $t\in T$ is \emph{enabled at} marking $\mmap$, written
$\mmap\fire{t}$, if $I(t)\preceq\mmap$. A transition $t$ that is enabled at
$\mmap$ can \emph{fire}, yielding a marking $\mmap'$ such that $\mmap'\oplus
I(t)=\mmap\oplus O(t)$. We write this fact as follows: $\mmap\fire{t}\mmap'$.


We extend enabledness and firing inductively to finite sequences of transitions
as follows.  Let $w\in T^*$.  If $w=\varepsilon$ we define
$\mmap\fire{w}\mmap'$ if{}f $\mmap'=\mmap$; else if $w=u\cdot v$ we have
$\mmap\fire{w}\mmap'$ if{}f there exists $\mmap_{1}$ such that
$\mmap\fire{u}\mmap_{1}$ and $\mmap_1\fire{v}\mmap'$.  

Let $w_{\infty}=t_0,t_1,\ldots$ be an infinite sequence of transitions.
We write \(\mmap\fire{w_{\infty}}\) if{}f there exist
markings \(\mmap_0,\mmap_1,\ldots\) such that \(\mmap_0=\mmap\) and
\(\mmap_i\fire{t_i}\mmap_{i+1}\).

An \emph{initialized} \(\pn\) is given by a pair $(N,\mmap_{\imath})$ where
$N=(S,T,F)$ is a Petri net and $\mmap_{\imath}\in\multiset{S}$ is called the
\emph{initial marking} of $N$.

A marking $\mmap$ is reachable from
$\mmap_{0}$ if{}f there exists $w\in T^*$ such that
$\mmap_{0}\fire{w}\mmap$.
The \emph{set of reachable states from $\mmap_{0}$}, 
written $\fire{\mmap_{0}}$, is thus
$\set{\mmap\mid\exists w\in T^*\colon \mmap_{0}\fire{w}\mmap}$.
When the starting marking is omitted, it is assumed to be $\mmap_{\imath}$.

We now define the size of the encoding of a \(\pn\) and of
their markings.  First, let us recall the encoding of a multiset
$\mmap\in\multiset{S}$.  It is encoded as a list of pairs $(p,\mmap(p))$
symbol/value for each symbol $p\in S$.
The size of the encoding, noted $\tmenc{\mmap}$, is given by the number of bits
needed to write down the list of pairs, where we assume $\mmap(p)$ is encoded
in binary.  The encoding of a $\pn$ $N$ is given by a list of lists.  Each
transition $t\in T$ is encoded by two lists corresponding to $I(t)$ and $O(t)$.
The size of $N$, written $\tmenc{N}$, is thus defined as $\sum_{t\in
T}\tmenc{I(t)}+\sum_{t\in T}\tmenc{O(t)}$.

We now define the boundedness, the reachability and the coverability problem
for Petri nets.  Let $(N,\mmap_{\imath})$ be a initialized $\pn$.  The
\emph{boundedness problem} asks if $\fire{\mmap_{\imath}}$ is finite set. Let
$\mmap\in\multiset{S}$, the \emph{reachability problem} (resp.
\emph{coverability problem}) asks if $\mmap\in\fire{\mmap_{\imath}}$ (resp.
$\ucl{\mmap}\cap\fire{\mmap_{\imath}}\neq\emptyset$) and if so $\mmap$ is said
to be \emph{reachable} (resp. \emph{coverable}).  In each of the above problem,
the \emph{size} of an instance is given by the
\(\tmenc{N}+\tmenc{\mmap_{\imath}}\) plus \(\tmenc{\mmap}\), if
any.

A marking $\mmap$ is Boolean if for each place $p \in S$, we have $\mmap(p) \in\set{0,1}$.
An initialized Petri net is {\em Boolean} if $\mmap_{\imath}$ is Boolean and  
for each $t\in T$, both $I(t)$ and $O(t)$ are Boolean.
The following technical lemma shows that for any initalized Petri net, one
can compute in polynomial time a
Boolean initialized Petri net that is equivalent w.r.t.\ 
the boundedness problem (i.e., the original Petri net is bounded iff the Boolean Petri net is).
Similarly, for an initialized Petri net and a marking, one can compute a Boolean
initialized Petri net and a Boolean marking that is equivalent
w.r.t.\ the coverability and reachability problems.

\begin{lemma}
	(1) Let \((N,\mmap_{\imath})\) be an initialized \(\pn\).  There exists
        a Boolean initialized \(\pn\) 
	\((N',\mmap'_{\imath})\) computable in polynomial time in the size of $(N,\mmap_{\imath})$
        such that $(N,\mmap_{\imath})$ is bounded
        iff $(N',\mmap'_{\imath})$ is bounded. 

	(2) Let \((N,\mmap_{\imath},\mmap_f)\) be an instance of the
	reachability (respectively, coverability) problem.  There exists a Boolean initialized Petri net 
	\((N',\mmap'_{\imath})\) and a Boolean marking \(\mmap'_f\) computable in polynomial
	time such that $\mmap_f$ is reachable (respectively, coverable) in \((N,\mmap_{\imath})\) iff 
        $\mmap'_f$ is reachable (respectively, coverable) in \((N',\mmap'_{\imath})\).
	\label{lem:binaryisnotaproblem}
\end{lemma}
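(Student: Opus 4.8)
First I would observe that the only obstruction to being Boolean is that $I(t)$, $O(t)$, $\mmap_{\imath}$ (and the query marking $\mmap_f$) may take values $>1$, written in binary. My plan is to give a \emph{single} simulation scheme that replaces every weighted transition and every large marking entry by a gadget using only $0/1$ arcs, so that the three problems reduce uniformly and differ only in a small terminal gadget. Concretely, I would build $N'$ so that its reachable \emph{clean} configurations --- those in which all auxiliary places are empty and a single free ``lock'' token is present --- are in exact, sequence-by-sequence correspondence with the markings reachable in $N$, and then read each of the three properties off this correspondence.

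The heart of the construction is a core gadget that consumes (or produces) a binary weight $w$ of bit-length $\ell$ using only $O(\ell)$ places and transitions and \emph{no} zero test. For a single power $2^i$ I would use a \emph{doubling ladder} with control places $a_0,\dots,a_\ell$, $b_0,\dots,b_\ell$ and marker places $s_i,s'_i$, where a token in $a_i$ means ``consume $2^i$ tokens from $p$ and then signal on $b_i$''. The ladder is realized by the Boolean transitions
\[
a_0\oplus p\to b_0,\qquad a_i\to a_{i-1}\oplus s_i,\qquad b_{i-1}\oplus s_i\to a_{i-1}\oplus s'_i,\qquad b_{i-1}\oplus s'_i\to b_i,
\]
which request the lower level twice \emph{in sequence}: the second request ($s'_i$) is enabled only after the first has returned on $b_{i-1}$, and the markers are set and cleared so that the lower levels are clean on re-entry. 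A single ``walker'' token thus threads the ladder and triggers exactly $2^i$ base consumptions. No transition tests emptiness; each is enabled by the \emph{presence} of specific tokens, and if $p$ runs out the walker simply stalls at $a_0$, matching a transition that lacks enough tokens. A mirror ladder with base $a_0\to b_0\oplus q$ \emph{produces} exactly $w$ tokens into $q$ and never stalls. Arbitrary $w$ is handled by threading the walker through one ladder per set bit of its binary expansion.

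Next I would assemble whole transitions atomically. A single global lock place holds one token when no simulation runs; selecting $t$ consumes the lock and launches a walker that runs, in sequence, the consuming ladders for all input places of $t$ and then the producing ladders for all output places, before returning the lock. Since the walker reaches the producing phase only after every consuming ladder has signalled, the simulation consumes $I(t)$ in full before producing $O(t)$, and the lock serialises distinct simulations; this yields the claimed clean-configuration correspondence. A startup walker, enabled by the sole Boolean initial token, installs $\mmap_{\imath}$ with producing ladders and then creates the lock, so $\mmap'_{\imath}$ is Boolean. For boundedness I add no terminal gadget: the auxiliary places are bounded (lock in $\set{0,1}$, one walker, bounded markers) and a data place can exceed its clean value only by at most $\max_t O(t)(\cdot)$ during a producing phase, so $N'$ is bounded iff $N$ is. For reachability I add an \emph{accept} walker that runs the consuming ladders for $\mmap_f$ and then marks a fresh place $\mathit{accept}$; the Boolean target $\mmap'_f$ putting one token in $\mathit{accept}$ and none elsewhere is reachable iff, just before accepting, every data place held exactly $\mmap_f$, i.e.\ iff $N$ reaches $\mmap_f$ --- the exactness of $\mmap'_f$ both forces the ladders to complete and certifies emptiness \emph{without} a zero test. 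For coverability I reuse that walker but only require $\mathit{accept}$ to be covered, which succeeds iff some reachable marking has at least $\mmap_f(p)$ tokens in each place $p$. Every gadget is linear in $\tmenc{N}+\tmenc{\mmap_{\imath}}$ (plus $\tmenc{\mmap_f}$), so $N'$ is computable in polynomial time.

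I expect the main obstacle to be precisely this core gadget: consuming a binary-encoded weight \emph{atomically}, in polynomially many Boolean transitions, with no zero test. The doubling ladder resolves it by spending control effort proportional to the \emph{bit-length} while the token movement, of exponential magnitude, is driven purely by token presence and is detected by the $s_i,s'_i$ synchronisation rather than by counting. The accompanying correctness claim --- that a clean-to-clean computation of $N'$ is realisable exactly when the corresponding transition is fireable in $N$ --- is what the lock and the consume-before-produce ordering are designed to guarantee (one readily checks, e.g., that a transition with $I(t)(p)=1$, $O(t)(p)=2$ cannot manufacture tokens from an empty $p$, since its consuming ladder stalls), and verifying this equivalence is the only part of the argument that needs genuine care.
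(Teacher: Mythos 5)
Your proposal is correct, and its overall architecture matches the paper's proof: first Booleanize the initial and target markings with initialization/acceptance gadgets (your exact target $\multi{\mathit{accept}}$, reachable only when every other place is empty, is Hack's trick in the same form the paper uses, with its control place $p_r$ playing the role of your non-returned lock), and then Booleanize the arcs by a gadget of size logarithmic in the weight that transfers tokens one at a time, serialized so that each simulated transition consumes all of $I(t)$ before producing any of $O(t)$. The differences are in the realizations of these two ingredients. For the counting gadget the paper uses a binary decrementer --- $\lceil \log_2 m\rceil$ two-place columns whose $(0,1)$-marking stores the remaining count and which deterministically counts down, emitting one token per step --- whereas your recursive doubling ladder has a single walker reach the base level exactly $2^i$ times by invoking level $i-1$ twice in sequence; the two are equivalent in size and effect, your version being arguably more self-contained, and your choice of one ladder instance per set bit is the right one (with a single shared ladder the chaining transition at an intermediate return level would be ambiguous). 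For serialization the paper twice takes a synchronized product with a regular language of the form $(w_1+\cdots+w_k)^*$, while your global lock place achieves the same atomicity directly inside the net. Your identification of the clean-configuration correspondence --- and of stalled consuming ladders as harmless dead ends that never return the lock --- is precisely the invariant the paper's construction also depends on, so I see no gap.
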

Lem.~\ref{lem:binaryisnotaproblem} which proof is in the appendix shows that
lower bounds for Petri nets already hold for Boolean Petri nets. This will be
useful in the next sections to get lower bounds on asynchronous programs.

The following results are known from the $\pn$ literature.

\begin{theorem}
\hspace{0pt}
\begin{enumerate}
\item \cite{Rackoff78}
	The boundedness and coverability problems for $\pn$ are
        {\sc expspace}-complete.
\item \cite{kosaraju82,Lipton} The reachability problem for $\pn$ is decidable and {\sc
	expspace}-hard.
	\label{prop-cpltxy}
\end{enumerate}
\end{theorem}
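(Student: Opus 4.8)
The plan is to recognize both items as classical facts about Petri nets and to assemble them from the cited sources; nonetheless I would sketch the arguments underlying each bound so the reader sees why the stated complexities hold, since the paper uses these bounds as black boxes.

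First, for the {\sc expspace} upper bound in item (1), I would invoke Rackoff's length bound \cite{Rackoff78}. The core claim is that whenever a marking is coverable (respectively, whenever $(N,\mmap_{\imath})$ is unbounded) there is a witnessing firing sequence whose length is at most doubly exponential in $\tmenc{N}$. This is proved by induction on the number of places: once a place holds more tokens than the largest value any transition tests or consumes, that place can be treated as unbounded and dropped, which reduces the dimension and bounds a shortest witness in dimension $i$ in terms of dimension $i-1$. Given this length bound I would then run a nondeterministic search that guesses the firing sequence transition by transition while storing only the running marking; along a shortest witness the token counts stay doubly-exponentially bounded, so each marking fits in exponential space and the search uses nondeterministic exponential space. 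Since {\sc nexpspace} $=$ {\sc expspace} by Savitch's theorem, membership in {\sc expspace} follows. Boundedness is decided the same way: by Rackoff's bound, if $(N,\mmap_{\imath})$ is unbounded then it admits a self-covering witness $\mmap_{\imath}\fire{w}\mmap\fire{w'}\mmap''$ with $\mmap\prec\mmap''$ whose total length is again at most doubly exponential, so a nondeterministic exponential-space search can guess it; the Karp--Miller covering idea \cite{KarpM69} underlies this but must be truncated using Rackoff's bound rather than explored in full.

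For the matching lower bounds of item (1) I would appeal to Lipton's simulation \cite{Lipton}: a counter machine using exponential space is encoded by a Petri net of polynomial size, with a zero-test gadget arranged so the simulation proceeds correctly only when the tested counter is genuinely empty, and with coverability of a distinguished place corresponding to the machine halting (and unboundedness corresponding to the machine being able to pump without bound). Because halting of exponential-space counter machines is {\sc expspace}-hard, this yields {\sc expspace}-hardness for both coverability and boundedness, and combined with the upper bound gives {\sc expspace}-completeness.

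For item (2), decidability of reachability is the deep theorem of Mayr and Kosaraju \cite{Mayr81,kosaraju82} (via the Karp--Miller tree and the decomposition of firing sequences into ``perfect'' pieces), which I would cite rather than reprove. The {\sc expspace}-hardness is then immediate from item (1): coverability reduces to reachability by adjoining, for each place $p$, a transition $t_p$ with $I(t_p)=\multi{p}$ and $O(t_p)=\varnothing$ that removes one token from $p$, so that a target $\mmap$ is coverable in $N$ exactly when $\mmap$ is reachable in the augmented net $N'$ --- the forward direction reaches some $\mmap'\succeq\mmap$ and drains the surplus, while the converse follows by using monotonicity of firing to postpone every drain to the end of a witnessing run. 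The {\sc expspace}-hardness of coverability therefore transfers to reachability. The genuinely hard content, namely Rackoff's doubly-exponential bound and the Mayr/Kosaraju decidability proof, lives entirely in the external references, so the only obstacle internal to this development is verifying the elementary coverability-to-reachability reduction, which is routine.
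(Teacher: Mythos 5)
The paper offers no internal proof of this theorem --- it is stated purely as a collection of known results with citations to Rackoff, Lipton, and Kosaraju --- and your proposal does exactly the same thing, deferring the load-bearing content to those references. Your supplementary sketches (Rackoff's induction on places with Savitch's theorem for the upper bounds, Lipton's counter-machine simulation for the lower bounds, and the standard token-draining reduction from coverability to reachability for the hardness in item (2)) are all accurate, so there is nothing to correct.
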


While the best known lower bound for Petri net reachability is {\sc
expspace}-hard, the best known upper bounds take non-primitive recursive space \cite{kosaraju82,Meyer,Mayr81,Lambert92}.  Moreover,
Lem.~\ref{lem:binaryisnotaproblem} shows that the lower bounds hold already for
Boolean Petri nets.

\subsection{Petri net semantics of asynchronous programs}

We now show how to model an asynchronous program \(\ap=(D,\Sigma,G,R,d_0,\mmap_0)\) as an initialized
\(\pn\) \((N_{\ap},\mmap_{\imath})\), parameterized by a family
of {\em widgets} $\pnfamily^\clubsuit = \set{N^\clubsuit_c \mid c \in D\times\Sigma\times D}$. 
Each widget $N^\clubsuit_{(d,a,d')}$ 
is a Petri net, intuitively capturing the effect of 
executing a handler $a$ taking the system from global state $d$ 
to global state $d'$. 

Fix an asynchronous program $\ap = (D,\Sigma, G, R, d_0,\mmap_0)$.
Let $\pnfamily^\clubsuit = \set{N^\clubsuit_c \mid c \in \mathfrak{C}}$
be a family of Petri nets, called widgets, one for each context in $\mathfrak{C}$.
We say that the family $\pnfamily^\clubsuit$ is {\em adequate} if the following conditions
hold.
For each $c = (d_1,a,d_2)\in\mathfrak{C}$, the widget 
\(N^{\clubsuit}_c=(S^{\clubsuit}_c,T^{\clubsuit}_c,F^{\clubsuit}_c)\) is a 
\(\pn\) with a distinguished \emph{entry} place
\((\mathit{begin},c)\in S^{\clubsuit}_c\) and a distinct \emph{exit} place
\((\mathit{end},c)\in S^{\clubsuit}_c\). 
Moreover for every \(\mmap\in\multiset{\Sigma}\) we have:
	\begin{equation}
		\exists w\in (T^{\clubsuit}_c)^*\colon
		\multi{(\mathit{begin},c)}\fire{w}(\multi{(\mathit{end},c)}\oplus\mmap)
		\text{ if{}f }(d_1,\multi{a})\overset{a}\rightarrow (d_2,\mmap)\enspace .
		\label{eq:correctmodel}
	\end{equation}

Construction~\ref{constr:pn} below shows how an adequate family of widgets
is ``stitched together'' to give a Petri net model for an asynchronous program.

\begin{construction}
Let \(\ap=(D,\Sigma,G,R,d_0,\mmap_0)\) be an asynchronous program and $\pnfamily^{\clubsuit}$
an adequate family of widgets for $\ap$.  
Define \(
(N_{\ap}(\pnfamily^{\clubsuit}),\mmap_{\imath}) \) to be an initialized \(\pn\) where (1)
\(N_{\ap}(\pnfamily^{\clubsuit})=(S_{\ap},T_{\ap},F_{\ap})\)
is given as follows:
		\begin{itemize}
			\item the set $S_{\ap}$ of places is given by $ D\cup \Sigma\cup\bigcup_{c\in\mathfrak{C}} S_c^{\clubsuit}$;
			\item the set $T_{\ap}$ of transitions is given by
				\( \bigcup_{c\in\mathfrak{C}} \bigl(\set{t_c^{<}}\cup T_c^{\clubsuit}\cup \set{t_c^{>}}\bigr)\);
			\item $F_{\ap}$ is such that for each $c=(d_1,a,d_2)\in\mathfrak{C}$ we have
				\begin{align*}
					F_{\ap}(t_c^{<})&=\tuple{\multi{d_1, a},\multi{(\mathit{begin},c)}}\\
					F_{\ap}(t)&=F_{c}^{\clubsuit}(t) & t\in T^{\clubsuit}_c\\
					F_{\ap}(t_c^{>})&=\tuple{\multi{(\mathit{end},c)}, \multi{d_2}}
				\end{align*}
		\end{itemize}
		and (2) $\mmap_{\imath}=\multi{d_0}\oplus \mmap_0$.
		\label{constr:pn}
\end{construction}

In what follows we use the notation \(N_{\ap}\) to denote
\(N_{\ap}(\pnfamily^{\clubsuit})\), which is parameterized by an 
adequate family \(\pnfamily^{\clubsuit}\). 

\begin{comment}
Let us now define a family of widgets and show it satisfies all the
requirements of Constr.~\ref{constr:pn}.
%
\begin{definition}\label{def:conceptualcontext}
	Let $c=(d_1,a,d_2)\in\mathfrak{C}$, the \(\pn\) $N_c^{\infty}=(S_c^{\infty},T_c^{\infty},F_c^{\infty})$ is given by:
	\begin{itemize}
		\item the set $S_c^{\infty}$ of places is given by
			$\Sigma\cup\set{(\mathit{begin},c),(\mathit{end},c)}$;
		\item \(T_c^{\infty}= \set{t_{\mmap} \mid \mmap\in\parikh(L(G^c)) } \); and
		\item Let $t_{\mmap}\in T_c^{\infty}$,
	\(F_c^{\infty}(t_{\mmap})=\tuple{\multi{(\mathit{begin},c)},\multi{(\mathit{end},c)}\oplus\mmap}\).
	\end{itemize}
	Finally, define \(\pnfamily^{\infty}=\set{N_c^{\infty}}_{c\in\mathfrak{C}}\).
\end{definition}

The definition of \(\pnfamily^{\infty}=\set{N^{\infty}_c}_{c\in\mathfrak{C}}\)
is such that each \(N_c^{\infty}\) trivially satisfies condition
\eqref{eq:correctmodel}. In fact by setting \(w=t_{\mmap}\) and using the
equivalence of Lem.~\ref{lem:clear}, condition \eqref{eq:correctmodel} holds.

Unfortunately, for some \(c\in\mathfrak{C}\) the \(\pn\) \(N_c^{\infty}\) may
have infinitely many transitions.  
Therefore the family 
\((N_{\ap}(\pnfamily^{\infty}),\mmap_{\imath})\), 
while a semantic characterization of the transition system of an $\ap$,
cannot be used algorithmically. 
\end{comment}

We show two constructions of adequate families.
First, we recall a simple definition of an adequate family of widgets,
inspired by a similar construction in \cite{SenV06}, that
leads to a Petri net $N_{\ap}$ which is exponential in the size of $\ap$.
Next, we give a new construction of an adequate family of widgets
that leads to a Petri net $N_{\ap}$ of
size {\em polynomial} in $\ap$. 
As we shall see later our
definition allows to infer the existence of optimal {\sc expspace} algorithms
for checking safety and boundedness properties.

\smallskip
\noindent
{\bf First construction of an adequate family.}
Let us now define the widgets
\(\pnfamily^{\star}=\set{N^{\star}_c}_{c\in\mathfrak{C}}\) using ideas from \cite{SenV06}. 
The central idea is to rely on the effective construction 
of Lem.~\ref{lem:parikh} which, given an initialized \(\cfg\) $G$, returns an
initialized regular grammar $A$ such that the languages $L(G)$ and $L(A)$ are
Parikh-equivalent.

\begin{definition}
Let $c=(d_1,a,d_2)\in\mathfrak{C}$. Let 
$A^c=(Q^c,\Sigma,\delta^c,q_0)$ be a regular grammar such that 
$\parikh(L(G^c)) = \parikh(L(A^c))$.
Define the Petri net $N^{\star}_c=(S^{\star}_c,T^{\star}_c,F^{\star}_c)$ given as follows:
\begin{itemize}
	\item the set $S^{\star}_c$ of places is given by $\set{(\mathit{begin},c),(\mathit{end},c)}\cup Q^c\cup\Sigma$;
        \item $T^{\star}_c = \delta^c \cup \set{t_i}$;
	\item the sets $F^{\star}_c$ are such that
		\begin{align*}
			F^{\star}_c(t_i)= & \tuple{\multi{(\mathit{begin},c)},\multi{q_0}} &\\
			F^{\star}_c(q\rightarrow \sigma\cdot q')= & \tuple{\multi{q},\multi{q',\sigma}}& \\
			F^{\star}_c(q \rightarrow \varepsilon)= & \tuple{\multi{q},\multi{(\mathit{end},c)}}
		\end{align*}
\end{itemize}
	Finally, define \(\pnfamily^{\star}=\set{N_c^{\star}}_{c\in\mathfrak{C}}\).
\label{def:parikh_construction}
\end{definition}

An invariant of $N^{\star}_c$ is that every reachable marking from
\(\multi{(\mathit{begin},c)}\) is such that the tokens in places \(Q^c\) never
exceed \(1\).

Lem.~\ref{lem:construction} shows that $\pnfamily^{\star}$ is an adequate family.
%
\begin{lemma}
	Let $c=(d_1,a,d_2)\in\mathfrak{C}$, and $\mmap\in\multiset{\Sigma}$ all the
	following statements are equivalent:
	\begin{enumerate}
		\item \( (d_1,\multi{a})\overset{a}\rightarrow(d_2,\mmap)\);%
		\item \(\mmap\in\parikh(L(G^c))\);%
		\item \(\mmap\in\parikh(L(A^c))\);%
		\item \(\exists w\in (T^{\star}_c)^* \colon
			\multi{(\mathit{begin},c)}\fire{w}_{N^{\star}_c}\bigl(\mmap\oplus\multi{(\mathit{end},c)}\bigr)\).
	\end{enumerate}
		\label{lem:construction}%
\end{lemma}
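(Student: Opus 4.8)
The plan is to prove a chain of equivalences $(1)\Leftrightarrow(2)\Leftrightarrow(3)\Leftrightarrow(4)$. The first equivalence $(1)\Leftrightarrow(2)$ is exactly Lem.~\ref{lem:clear}, so I would invoke it directly with $c=(d_1,a,d_2)$. The equivalence $(2)\Leftrightarrow(3)$ is immediate from the choice of $A^c$: by Def.~\ref{def:parikh_construction} the regular grammar $A^c$ is selected so that $\parikh(L(G^c))=\parikh(L(A^c))$, and hence $\mmap\in\parikh(L(G^c))$ iff $\mmap\in\parikh(L(A^c))$. The existence of such an $A^c$ is guaranteed by Parikh's Lemma (Lem.~\ref{lem:parikh}). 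Thus the entire burden of the proof falls on $(3)\Leftrightarrow(4)$, which relates the Parikh image of the regular language $L(A^c)$ to the firing behaviour of the widget $N^\star_c$.

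For $(3)\Leftrightarrow(4)$ the plan is to exhibit a tight correspondence between leftmost derivations (equivalently, accepting runs) of the regular grammar $A^c$ and firing sequences of $N^\star_c$ starting from $\multi{(\mathit{begin},c)}$. The key observation is the stated invariant: from $\multi{(\mathit{begin},c)}$ every reachable marking places at most one token among the places $Q^c$, so the token in $Q^c$ acts as a ``program counter'' tracking the current nonterminal of $A^c$. I would argue as follows. The only transition consuming $(\mathit{begin},c)$ is $t_i$, which moves the single token to $q_0$; thereafter each transition $q\rightarrow\sigma\cdot q'$ moves the control token from $q$ to $q'$ while depositing one token on place $\sigma\in\Sigma$; and each transition $q\rightarrow\varepsilon$ consumes the control token from $q$ and deposits the single token on $(\mathit{end},c)$. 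Hence any firing sequence reaching a marking of the form $\mmap\oplus\multi{(\mathit{end},c)}$ must have the shape $t_i$ followed by a sequence of productions $q_0\rightarrow\sigma_1 q_1,\; q_1\rightarrow\sigma_2 q_2,\ldots$ ending in some $q_k\rightarrow\varepsilon$, and this is precisely a derivation $q_0\Rightarrow^* w$ in $A^c$ with $\parikh(w)=\mmap$.

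Concretely, for the forward direction of $(4)\Rightarrow(3)$ I would show that any firing sequence $w$ with $\multi{(\mathit{begin},c)}\fire{w}(\mmap\oplus\multi{(\mathit{end},c)})$ decomposes (using the invariant and the unique producer/consumer structure of each place) into $t_i$, a run of grammar transitions tracing a path $q_0\to q_1\to\cdots\to q_k$, and a terminating $\varepsilon$-transition; the multiset of $\Sigma$-tokens produced along this path equals the Parikh image of the word derived, so $\mmap\in\parikh(L(A^c))$. For the converse $(3)\Rightarrow(4)$, given a word $w'\in L(A^c)$ with $\parikh(w')=\mmap$, I would take the leftmost derivation of $w'$ in $A^c$ and fire the corresponding transitions in order, prefixed by $t_i$ and terminated by the $\varepsilon$-production, checking at each step that the required input tokens are present (which holds because the control token sits exactly on the current nonterminal). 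A short induction on the length of the derivation establishes that the resulting firing yields $\mmap\oplus\multi{(\mathit{end},c)}$.

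The main obstacle I anticipate is making the ``unique control token'' argument fully rigorous in the $(4)\Rightarrow(3)$ direction, namely ruling out firing sequences that interleave or deviate from the clean path-shaped form. The places in $\Sigma$ are write-only (they are never in the input of any transition of $N^\star_c$), so tokens deposited there can never be consumed and cause no interference; the delicate point is the $Q^c$ places, where I must use the invariant that the total $Q^c$-token count stays at most $1$ to conclude that transitions fire in a strictly sequential, non-branching manner, and that reaching $(\mathit{end},c)$ forces the control token to have been consumed exactly once by a final $\varepsilon$-production. Once this structural normal form for firing sequences is established, the bijection with derivations of $A^c$ and the resulting Parikh-image equality follow routinely.
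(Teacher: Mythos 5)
Your proposal is correct and follows exactly the same route as the paper: $(1)\Leftrightarrow(2)$ by Lem.~\ref{lem:clear}, $(2)\Leftrightarrow(3)$ by the Parikh-equivalence assumption on $A^c$, and $(3)\Leftrightarrow(4)$ from the structure of $N^{\star}_c$ in Def.~\ref{def:parikh_construction}. The only difference is that the paper treats the last equivalence as immediate from the definition, whereas you spell out the control-token invariant argument in detail; that elaboration is sound and fills in exactly what the paper leaves as routine.
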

\begin{proof}
	(1) and (2) are equivalent by Lem.~\ref{lem:clear}. 
	(2) and (3) are equivalent by assumption on $A^c$. 
        Finally, (3) and (4) are equivalent
	by Def.~\ref{def:parikh_construction}.
\end{proof}

Note that for some \(c\in\mathfrak{C}\) the set \(S^{\star}_c\)
of places in \(N^{\star}_c\)  may be exponentially larger than the set
\(\mathcal{X}^c\) of variables of $G^c$.
As an example consider the following \(\cfg\)
\(G=(\set{A_n,\dots,A_0},\set{a},\prod, A_n)\) for some \(n\geq 0\) where
\(\prod=\set{A_k \rightarrow A_{k-1} A_{k-1} \colon 1\leq k\leq n}\cup\set{A_0
\rightarrow a}\). Clearly \(L(G)=\set{a^{2^n}}\) and therefore there is no
regular grammar with less than \(2^n\) variables which accepts the same
language.

\smallskip
\noindent
{\bf Second construction of an adequate family.}
We now define a new family 
\(\pnfamily=\set{N_{c}}_{c\in\mathfrak{C}}\)
of widgets 
which improves on \(\pnfamily^{\star}\) by providing more compact widgets,
in particular, widgets polynomial in the size of $G$.
Given  a context $c=(d_1,a,d_2)\in\mathfrak{C}$ and the
associated initialized \(\cfg\) $G^c=(\mathcal{X}^c,\Sigma,\prod^c,[d_1X_ad_2])$, 
the widget $N_{c}=(S_c,T_c,F_c)$ will be such that
$|S_c|=\mathcal{O}(|\mathcal{X}^c|)$ and $|T_c|=\mathcal{O}(|\prod^c|)$. 

Our construction combines two ingredients.

The first ingredient is the following construction of \cite{Esp97} which, given
an initialized \(\cfg\) \(G=(\mathcal{X},\Sigma,\prod,S)\), returns an
initialized \(\pn\) \( (N_G,\mmap_{\imath}) \) where (1) \(N_G=(S_G,T_G,F_G)\)
is given by
\begin{itemize}
	\item \(S_G=\mathcal{X}\cup\Sigma\) and \(T_G = \prod\);
	\item \(F_G(X\rightarrow\alpha)=\tuple{\multi{X},\parikh(\alpha)}\); 
\end{itemize}
and (2) \(\mmap_{\imath}=\multi{S}\).
Let \(\mathfrak{S}\) be the set of transition sequences that are enabled in \(\mmap_{\imath}\).
We conclude from \cite{Esp97} that there is a total surjective function \(f\)
from the set of derivations of \(G\) onto \(\mathfrak{S}\) such that for every
\(\alpha\in(\mathcal{X}\cup\Sigma)^*\) if \(S\underset{G}\Rightarrow^* \alpha\)
then \(\mmap_{\imath}\fire{f(S\Rightarrow^* \alpha)}\parikh(\alpha)\).

Unfortunately, the above construction cannot be used directly to build an adequate
family because of the following problem. 
Recall that for each \(c\in\mathfrak{C}\) the widget
\(N_c=(S_c,T_c,F_c)\) has an exit place \( (\mathit{end}, c)\in S_c\) and
 condition \eqref{eq:correctmodel} must hold.
Using Lem.~\ref{lem:clear} we
obtain that \eqref{eq:correctmodel} is equivalent to: 
\[\exists w\in (T_c)^* \colon \multi{(\mathit{begin},c)}\fire{w}(\multi{(\mathit{end},c)}\oplus\mmap) \text{ if{}f }\mmap\in \parikh(L(G^{c}))\enspace .
 \] 
 This means that widget \(N_c\) should put a token in \( (\mathit{end},c)\) only
 after some \(\mmap \in \parikh(L(G^{c}))\) has been generated, that is,
 it should check that the derivation 
 \(S\underset{G}\Rightarrow^* \alpha\) it is simulating cannot be further
 extended, i.e., \(\alpha\in\Sigma^*\).
 This is equivalent to checking that each place which
 corresponds to a variable in \(\mathcal{X}^{c}\) is empty. However the
 definition of \(\pn\) transitions do not allow for such a test for 0.
 Therefore we need an additional ingredient in the widget in order to ensure
 that \(N_c\) puts a token in \( (\mathit{end},c)\) only after some \(\mmap \in
 \parikh(L(G^{c}))\) has been generated.

The second ingredient in our construction is the observation from \cite{EKL10:JACM,egkl11-ipl}
that as long as we are interested in the Parikh image of
a context-free language, it suffices to only consider derivations of 
{\em bounded index}. 
Let us first introduce a few notions on derivations of \(\cfg\).
	Let \(G=(\mathcal{X},\Sigma,\prod,S)\) be an initialized \(\cfg\).
	Given a word \(w\in(\Sigma\cup\mathcal{X})^*\), we denote by \(\#_{\mathcal{X}}(w)\) the
	number of symbols of \(w\) that belongs to \(\mathcal{X}\).
	Formally, \(\#_{\mathcal{X}}(w)= |\parikh_{\mathcal{X}}(w)|\).
	A derivation $S=\alpha_0 \Rightarrow \cdots \Rightarrow \alpha_m$ of \(G\)
	has index $k$ if $\#_{\mathcal{X}}(\alpha_i)\leq k$ for each
	\(i\in\set{0,\ldots,m}\).  The set of words of \(\Sigma^*\) derivable through
	derivations of index $k$ is denoted by $L^{(k)}(G)$.

\begin{lemma}{(from \cite{egkl11-ipl})}
		Let \(G=(\mathcal{X},\Sigma,\prod,S)\) be an initialized \(\cfg\), and
		let \(k=\card{\mathcal{X}}\), we have:
		\(\parikh(L(G))=\parikh(L^{(k+1)}(G))\).%
	\label{lem:ipl}%
\end{lemma}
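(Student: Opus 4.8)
The inclusion $\parikh(L^{(k+1)}(G)) \subseteq \parikh(L(G))$ is immediate, since an index-bounded derivation is in particular a derivation, so $L^{(k+1)}(G) \subseteq L(G)$. All the content is in the reverse inclusion: given $w \in L(G)$ I must produce $w' \in L^{(k+1)}(G)$ with $\parikh(w') = \parikh(w)$. The plan is to argue on derivation \emph{trees} rather than on linear derivations. By Assumption~\ref{ass:normalform} every production has the form $X \to YZ$, $X \to a$, or $X \to \varepsilon$, so every derivation tree is binary: each internal node either has two children labelled by nonterminals, or a single child labelled by a terminal, or no child (for $\varepsilon$).

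The bridge between trees and index is the \emph{dimension} (Horton--Strahler number) of a tree $t$: the least $k'$ for which $t$ admits a derivation order of index $k'$. It is computed bottom-up by giving every leaf dimension $0$ and, at a binary node with child dimensions $d_1, d_2$, setting the node's dimension to $\max(d_1,d_2)$ when $d_1 \ne d_2$ and to $d_1+1$ when $d_1 = d_2$. With this notion, $w \in L^{(k')}(G)$ holds if{}f $w$ has a derivation tree of dimension at most $k'$, so it suffices to show that every derivation tree $t$ can be rewritten into a tree $t'$ of dimension at most $k+1$ \emph{using the same multiset of productions}. Preserving the multiset of productions preserves $\parikh$ of the yield, because the Parikh image of a yield is the sum, over all production occurrences, of the terminal letters contributed by each production; hence $\parikh$ of the yield of $t'$ equals that of $t$.

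The rewriting is an exchange argument. Suppose $\dim(t) = d \ge k+2$ and fix a \emph{critical path} from the root realizing the dimension: the path descends into a child of equal dimension, and the dimension strictly drops exactly at the \emph{balanced} nodes (those whose two children have equal dimension). Going from the root (dimension $d$) down to a leaf (dimension $0$), the dimension drops $d$ times, and at each such balanced node I collect the \emph{off-path} sibling subtree. This yields subtrees $w_0,\dots,w_{d-1}$ of distinct dimensions $d-1, d-2, \dots, 0$. Crucially they are pairwise \emph{incomparable}: the critical path leaves each balanced node through its on-path child, so a lower off-path sibling never lies inside a higher one, and vice versa. Since $d \ge k+2 > \card{\mathcal{X}}$, two of these subtrees, say $w_i$ and $w_j$ with $i < j$, carry the same nonterminal $B$. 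Being incomparable and sharing a root label, they can be \emph{swapped}, producing a valid derivation tree with exactly the same multiset of productions, hence the same Parikh yield. Locally this helps: the off-path sibling at the $i$-th balanced node now has the smaller dimension $d-1-j < d-1-i$, unbalancing that node and lowering the dimension recorded there from $d-i$ to $d-1-i$.

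The main obstacle is turning this \emph{local} decrease into a \emph{global} termination guarantee, because the same swap raises the dimension at the $j$-th balanced node, so the root dimension need not drop in a single step. I would control this with a well-founded measure that the exchange is guaranteed to decrease strictly — for instance the multiset of node dimensions under the multiset ordering, or the lexicographic pair consisting of $\dim(t)$ together with the number of nodes attaining the maximal dimension — and verify that moving a large subtree downward and a small one upward strictly reduces it. Establishing that invariant is the delicate part; once it is in place, iterating the exchange terminates at a tree of dimension at most $k+1$, whose yield is the desired Parikh-equivalent $w' \in L^{(k+1)}(G)$.
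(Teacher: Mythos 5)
First, note that the paper does not prove this lemma at all: it is imported from the cited reference \cite{egkl11-ipl}, so there is no in-paper argument to compare against. Your reconstruction follows exactly the strategy of that reference --- relate derivation index to the Horton--Strahler dimension of derivation trees, then run a pigeonhole-plus-exchange argument on the off-path siblings of a critical path --- so the approach is the right one. Two things keep it from being a proof. The first is an off-by-one in your bridge between index and dimension: the minimal index over all derivation orders of a tree $t$ is $\dim(t)+1$, not $\dim(t)$ (already the one-step derivation $X \Rightarrow a$ has index $1$ while its tree has dimension $0$). Hence $w \in L^{(k+1)}(G)$ corresponds to a derivation tree of dimension at most $k$, the exchange argument must be launched from $\dim(t) \ge k+1$ (which still yields $k+1 > \card{\mathcal{X}}$ pairwise incomparable subtrees with distinct dimensions, so the pigeonhole survives), and your version as written only establishes $\parikh(L(G)) = \parikh(L^{(k+2)}(G))$, which is weaker than the statement.

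The second and more substantial issue is the one you flag yourself: you never verify that the exchange decreases a well-founded measure, and that verification is the entire content of the termination argument --- without it nothing prevents the swaps from cycling. For what it is worth, the first measure you name does work. Writing $b_p$ and $b_q$ (with $p<q$) for the two balanced nodes whose off-path siblings $s_p$ (dimension $d-p$) and $s_q$ (dimension $d-q$) are exchanged, one checks that after the swap (i) every node on the critical path strictly between $b_p$ and $b_q$, as well as $b_q$ itself, has its dimension raised to exactly $d-p$; (ii) $\dim(b_p)$ falls from $d-p+1$ to $d-p$; (iii) every proper ancestor of $b_p$ weakly decreases; and (iv) no other node changes. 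Every element added to the multiset of node dimensions therefore equals $d-p$ and lies strictly below the removed element $d-p+1$, so the multiset strictly decreases in the (well-founded) multiset ordering and the iteration terminates. Some computation of this kind is indispensable, because --- as you correctly observe --- the root dimension alone need not drop in a single exchange step.
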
%

Our next widget construction is directly based on this result.
In the following, our widget definition only differs from
the construction of \cite{Esp97} 
by our use of an incidental budget place \(\$\).

In the construction of \((N_G,\mmap_{\imath})\) above,
define \(\mathfrak{s}\) to be the subset of \(\mathfrak{S}\) such that every marking
reachable through a sequence in \(\mathfrak{s}\) has no more than \(k\) tokens
in the places \(\mathcal{X}\). It is routine to check that \(\set{f^{-1}(w)\mid
w\in\mathfrak{s}}\) corresponds the set of derivations of index \(k\).

Let us define \(N^k_G\) which adds an extra place \(\$\) to \(N_G\) in order
to allow exactly the sequences of transitions of \(\mathfrak{s}\). 

\begin{definition}
	Let \(G=(\mathcal{X},\Sigma,\prod,S)\) be an initialized \(\cfg\) and let
	\(k>0\), we define \( (N^k_G,\mmap_{\imath}) \) to be an initialized \(\pn\)
	where (1) \(N^k_G=(S_G,T_G,F_G)\) is given by
\begin{itemize}
	\item \(S_G=\mathcal{X}\cup\Sigma\cup\set{\$}\);
        \item \(T_G = \prod\); and
	\item \(F_G\) is such that \\
                \begin{align*}
		F_G(X \rightarrow Z\cdot Y)= \tuple{\multi{X,\$},\multi{Z,Y}}\quad\quad\mbox{and}\quad\quad
		F_G(X \rightarrow \sigma)= \tuple{\multi{X},\parikh(\sigma)\oplus\multi{\$}} 
                \end{align*}
\end{itemize}
and (2) \(\mmap_{\imath}=\multi{S}\oplus\multi{\$^{k-1}}\).
\label{def:indexboundedfiringsequence}
\end{definition}

The set of enabled transition sequences of \(N^k_G\) coincides with the set of
derivation of index \(k\).  In fact, every reachable marking has exactly \(k\)
tokens in places \(\mathcal{X}\cup\set{\$}\). Therefore no reachable marking
puts more than \(k\) tokens in places \(\mathcal{X}\) which coincides with the
condition imposed on derivations of index \(k\).

\begin{lemma}
	Let \(G=(\mathcal{X},\Sigma,\prod,S)\) be an initialized \(\cfg\), let $k>0$,
	and let \( (N^k_G=(S_G,T_G,F_G),\mmap_{\imath}) \). For every \(\mmap\in\multiset{S_G}\)
	\[ (\mmap\oplus\multi{\$^k})\in\fire{\mmap_{\imath}}_{N^k_G} \text{ if{}f } \mmap\in\parikh(L^{(k)}(G))\enspace .\]
	\label{lem:smartencoding}
\end{lemma}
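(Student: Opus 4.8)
The plan is to reduce reachability in $N^k_G$ to index-$k$ derivations of $G$ by exploiting the budget place $\$$ together with the correspondence $f$ of \cite{Esp97} between derivations of $G$ and firing sequences of the basic net $N_G$.

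First I would establish the marking invariant already announced in the surrounding text: every marking reachable from $\mmap_{\imath}$ in $N^k_G$ carries exactly $k$ tokens on the places $\mathcal{X}\cup\set{\$}$. This follows by induction on the firing sequence length, checking that each transition is token-preserving on $\mathcal{X}\cup\set{\$}$: a production $X\rightarrow ZY$ consumes one $\mathcal{X}$ token and one $\$$ while producing two $\mathcal{X}$ tokens (net $0$), whereas $X\rightarrow\sigma$ (and the empty production, read as the terminal case with $\parikh(\varepsilon)=\varnothing$) consumes one $\mathcal{X}$ token and returns one $\$$ (again net $0$). An immediate consequence is that whenever $\mmap\oplus\multi{\$^k}$ is reachable, its $\$$-component already accounts for all $k$ budget tokens, so $\mmap$ places no token on $\mathcal{X}\cup\set{\$}$; in particular $\mmap\in\multiset{\Sigma}$.

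Next I would relate firing in $N^k_G$ to firing in $N_G$. Since $F_G$ agrees with the construction of \cite{Esp97} on every place of $\mathcal{X}\cup\Sigma$, a transition sequence drives the same evolution of the $\mathcal{X}\cup\Sigma$-projection in both nets; the only extra requirement in $N^k_G$ is the availability of a $\$$ token for each $X\rightarrow ZY$ step. By the invariant the number of $\$$ tokens equals $k$ minus the current $\mathcal{X}$-token count, so $X\rightarrow ZY$ is enabled in $N^k_G$ exactly when firing it keeps the $\mathcal{X}$-count at most $k$, while $X\rightarrow\sigma$ and $X\rightarrow\varepsilon$ (which never consume $\$$) are enabled precisely as in $N_G$. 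Hence the sequences firable from $\mmap_{\imath}$ in $N^k_G$ are exactly those firable from $\multi{S}$ in $N_G$ all of whose intermediate markings place at most $k$ tokens on $\mathcal{X}$, i.e.\ the set $\mathfrak{s}$. Since the $\mathcal{X}$-token count of a marking equals the number of nonterminals of the associated sentential form, the set $\set{f^{-1}(w)\mid w\in\mathfrak{s}}$ is precisely the set of index-$k$ derivations of $G$, as already observed.

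Finally I would assemble the equivalence. For the forward direction, reachability of $\mmap\oplus\multi{\$^k}$ yields a sequence $w\in\mathfrak{s}$ with $\multi{S}\fire{w}\mmap$ in $N_G$; by surjectivity of $f$ this is witnessed by an index-$k$ derivation $S\underset{G}\Rightarrow^*\alpha$ with $\parikh(\alpha)=\mmap$, and since $\mmap$ carries no $\mathcal{X}$ token we get $\alpha\in\Sigma^*$, so $\alpha\in L^{(k)}(G)$ and $\mmap\in\parikh(L^{(k)}(G))$. Conversely, a word $u\in L^{(k)}(G)$ with $\parikh(u)=\mmap$ arises from an index-$k$ derivation whose image under $f$ lies in $\mathfrak{s}$, hence is firable in $N^k_G$ and reaches $\parikh(u)=\mmap$ on $\mathcal{X}\cup\Sigma$; as $u\in\Sigma^*$ the $\mathcal{X}$-count is $0$, so the invariant forces $k$ tokens on $\$$ and the reached marking is $\mmap\oplus\multi{\$^k}$. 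The step requiring the most care is the middle one---showing that the budget place enforces the index bound \emph{exactly}, neither blocking any index-$k$ derivation nor permitting a step that would exceed index $k$---which is where the invariant and the enabledness bookkeeping for $X\rightarrow ZY$ must be made precise.
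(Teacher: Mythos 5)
Your proof is correct and follows essentially the same route as the paper's: both rest on the observation that every transition of $N^k_G$ conserves the number of tokens on $\mathcal{X}\cup\set{\$}$, so that the budget place enforces the index bound exactly, and both then transfer bounded-index derivations to firing sequences step by step. The only (organizational) difference is that the paper states the single-step correspondence directly --- $\alpha_1\Rightarrow\alpha_2$ iff $\parikh(\alpha_1)\oplus\multi{\$^{k-\#_{\mathcal{X}}(\alpha_1)}}\fire{t}\parikh(\alpha_2)\oplus\multi{\$^{k-\#_{\mathcal{X}}(\alpha_2)}}$ --- and concludes by induction, whereas you factor through the unbudgeted net $N_G$, the set $\mathfrak{s}$, and the surjection $f$ of Esparza's construction.
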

\begin{proof}
	We prove that for every \(\alpha_1,\alpha_2\in(\mathcal{X}\cup\Sigma)^*\) where
	both \(\#_{\mathcal{X}}(\alpha_1)\leq k\) 
	and \(\#_{\mathcal{X}}(\alpha_2)\leq k\),
	we have \(\alpha_1{\Rightarrow} \alpha_2\) if{}f there exists \(t\in T_G\)
	such that
	\(\parikh(\alpha_{1})\oplus\multi{\$^{k-\#_{\mathcal{X}}(\alpha_1)}}\fire{t}_{N_G^{k}}\parikh(\alpha_{2})\oplus\multi{\$^{k-\#_{\mathcal{X}}(\alpha_2)}}\).
	This holds by definition of \(F_G\).
	Also observe that \(\mmap_{\imath}=\parikh(S)\oplus\multi{\$^{k-\#_{\mathcal{X}}(S)}}\).

	Now note that the right hand side is equivalent to saying that there exist
	\(\alpha_1,\ldots,\alpha_{n+1}\in (\mathcal{X}\cup\Sigma)^*\) where each
	\(\alpha_i\) is such that \(\#_{\mathcal{X}}(\alpha_i)\leq k\), $S = \alpha_1$, 
        $\alpha_{n+1} \in\Sigma^*$, \(
	\alpha_1{\Rightarrow} \alpha_2 \cdots
	\alpha_{n}{\Rightarrow} \alpha_{n+1} \) and \( \mmap=\parikh(\alpha_{n+1})\), and
        use induction on $n$.
\end{proof}

Let us now turn to our widget definition which directly relies on the above results.
\begin{definition}
Let $c=(d_1,a,d_2)\in\mathfrak{C}$, and let
$G^c=(\mathcal{X}^c,\Sigma,\prod^c,[d_1X_ad_2])$ its associated initialized \(\cfg\).
Define $k=|\mathcal{X}^c|$ and $N_c=(S_c,T_c,F_c)$ such that:
\begin{itemize}
	\item the set $S_c$ of places is given by
		$\set{(\mathit{begin},c),(\mathit{end},c)}\cup \mathcal{X}^c\cup \set{(\$,c)}
		\cup\Sigma$;
        \item $T_c = \set{t_i, t_e} \cup \prod^c$; and
	\item the set $F_c$ is such that
        \begin{align*}
	F_c(t_i)= & \tuple{\multi{(\mathit{begin},c)},\multi{[d_1X_ad_2]}\oplus\multi{(\$,c)^{k}}}\\
	F_c(X \rightarrow Z\cdot Y)= & \tuple{\multi{X,(\$,c)},\multi{Z,Y}} \\
	F_c(X \rightarrow \sigma)= & \tuple{\multi{X},\parikh(\sigma)\oplus\multi{(\$,c)}}\\ 
	F_c(t_e)= & \tuple{\multi{(\$,c)^{k+1}},\multi{(\mathit{end},c)}}
        \end{align*}
\end{itemize}
Define \(\pnfamily=\set{N_c}_{c\in\mathfrak{C}}\).
\label{def:the_construction}
\end{definition}

The following lemma shows that the family constructed above is adequate.

\begin{lemma}
	Let $c=(d_1,a,d_2)\in\mathfrak{C}$, and $\mmap\in\multiset{\Sigma}$ we have:
	\[(d_1,\multi{a})\overset{a}\rightarrow(d_2,\mmap)\quad\text{
	if{}f }\quad \exists w\in (T_c)^* \colon
	\multi{(\mathit{begin},c)}\fire{w}_{N_c}\bigl(\multi{(\mathit{end},c)}\oplus\mmap\bigr)\enspace
	.\]
	Moreover, \(N_c\) is computable in time polynomial in the size of \(G^c\).
	\label{lem:construction_final}
	\end{lemma}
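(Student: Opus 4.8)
The plan is to chain together the three results already in hand: Lemma~\ref{lem:clear} rewrites the transition $(d_1,\multi{a})\overset{a}\rightarrow(d_2,\mmap)$ as the condition $\mmap\in\parikh(L(G^c))$; Lemma~\ref{lem:ipl} lets me replace $L(G^c)$ by its index-bounded restriction $L^{(k+1)}(G^c)$, where $k=\card{\mathcal{X}^c}$; and Lemma~\ref{lem:smartencoding} characterizes exactly the markings reachable in the index-bounded net $N^{k+1}_{G^c}$ of Def.~\ref{def:indexboundedfiringsequence}. The geometric observation driving everything is that, under the identification of the budget place $(\$,c)$ with $\$$, the widget $N_c$ of Def.~\ref{def:the_construction} is nothing but $N^{k+1}_{G^c}$ wrapped by an entry transition $t_i$ and an exit transition $t_e$: firing $t_i$ consumes $\multi{(\mathit{begin},c)}$ and produces $\multi{[d_1X_ad_2]}\oplus\multi{(\$,c)^{k}}$, which is precisely the initial marking $\multi{S}\oplus\multi{\$^{(k+1)-1}}$ of $N^{k+1}_{G^c}$, and the production transitions of $N_c$ coincide with those of $N^{k+1}_{G^c}$.

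The key structural fact I would prove is the budget invariant: along any firing of $N_c$ from $\multi{(\mathit{begin},c)}$, once $t_i$ has fired the total number of tokens in $\mathcal{X}^c\cup\set{(\$,c)}$ equals $k+1$ and is preserved by every production transition (a transition $X\rightarrow Z\cdot Y$ moves one token from $(\$,c)$ into $\mathcal{X}^c$, while $X\rightarrow\sigma$ moves one the other way, and neither changes the sum). Hence $t_e$, which consumes $\multi{(\$,c)^{k+1}}$, is enabled exactly when all places of $\mathcal{X}^c$ are empty, i.e.\ when the simulated derivation has produced a terminal word; and since firing $t_e$ leaves $\mathcal{X}^c$ empty with no tokens on $(\$,c)$, the net is dead afterwards. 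This pins down the shape of every successful run to $t_i$, then a block of production transitions, then a single $t_e$.

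With this in place both directions are short. For the forward direction, from $\mmap\in\parikh(L(G^c))=\parikh(L^{(k+1)}(G^c))$ Lemma~\ref{lem:smartencoding} supplies a production firing of $N^{k+1}_{G^c}$ reaching $\mmap\oplus\multi{\$^{k+1}}$; prefixing $t_i$ and appending $t_e$ gives $\multi{(\mathit{begin},c)}\fire{w}_{N_c}(\multi{(\mathit{end},c)}\oplus\mmap)$. For the converse, the structural fact forces any witnessing $w$ to decompose as $t_i$, a production segment reaching $\mmap\oplus\multi{(\$,c)^{k+1}}$ with $\mathcal{X}^c$ empty, and $t_e$; the segment is a firing of $N^{k+1}_{G^c}$, so Lemma~\ref{lem:smartencoding} yields $\mmap\in\parikh(L^{(k+1)}(G^c))$, and Lemmas~\ref{lem:ipl} and~\ref{lem:clear} close the equivalence. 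The polynomial-time bound is then immediate: $N_c$ has $\mathcal{O}(\card{\mathcal{X}^c})$ places and $\mathcal{O}(\card{\prod^c})$ transitions, the only large constants are the budgets $(\$,c)^{k}$ and $(\$,c)^{k+1}$ written in binary, and $G^c$ itself is polynomial in $G$ and $R$ by Lemma~\ref{lem:gr_correctness}.

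I expect the reverse direction to be the main obstacle, concretely the need to rule out ill-formed runs: that $t_e$ cannot fire before a complete derivation and that no stray firings survive it. The budget invariant is exactly the device that excludes these cases and makes the match between $N_c$ (between $t_i$ and $t_e$) and $N^{k+1}_{G^c}$ an identity rather than a mere simulation, so I would state and justify it carefully before invoking Lemma~\ref{lem:smartencoding}.
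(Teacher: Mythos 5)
Your proposal is correct and follows essentially the same route as the paper: both chain Lemma~\ref{lem:clear}, Lemma~\ref{lem:ipl}, and Lemma~\ref{lem:smartencoding} through Definition~\ref{def:the_construction}. The budget invariant you spell out (that the token count in $\mathcal{X}^c\cup\set{(\$,c)}$ stays at $k+1$, so $t_e$ fires exactly when the derivation is complete) is precisely the content the paper leaves implicit in its ``we conclude from Lem.~\ref{lem:smartencoding} and Def.~\ref{def:the_construction}'' step, and you have verified it correctly.
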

	\begin{proof}
		Lem.~\ref{lem:clear} shows that the left hand side of the
		equivalence can be replaced by \(\mmap\in\parikh(L(G^c))\).  Moreover,
		Lem.~\ref{lem:ipl} shows that \(L^{(k+1)}(G^c)\) and \(L(G^c)\) are
		Parikh-equivalent, hence that the left hand side of the equivalence can be
		replaced by \(\mmap\in\parikh(L^{(k+1)}({G^c}))\). Finally,
		we conclude from Lem.~\ref{lem:smartencoding} and Def.~\ref{def:the_construction} that
		\(\mmap\in\parikh(L^{(k+1)}({G^c}))\) if{}f \(\exists w\in (T_c)^* \colon
		\multi{(\mathit{begin},c)}\fire{w}_{N_c}\bigl(\multi{(\mathit{end},c)}\oplus\mmap\bigr)\)
		and we are done. Fianlly, given def.~\ref{def:the_construction}, it is
		routine to check that the polynomial time upper bound holds.
	\end{proof}


\section{Model Checking}

\subsection{Safety and Boundedness}\label{sec:safetyandboundedness}

In this section, we provide algorithms for checking safety (global state
reachability), boundedness, and configuration reachability for asynchronous
programs by reduction to equivalent problems on \(\pn\). Conversely, we show
that any \(\pn\) can be simulated by an asynchronous program with no recursion.

\begin{lemma}
	Let $\ap$ be an asynchronous program and let \((N_{\ap},\mmap_{\imath})\) be
	an initialized $\pn$ as given in Constr.~\ref{constr:pn}. We have
	\begin{itemize}
		\item $(N_{\ap},\mmap_{\imath})$ is bounded if{}f $\ap$ is bounded.
		\item \((d,\mmap)\) is reachable in $\ap$ if{}f
			$\multi{d}\oplus\mmap$ is reachable in $N_{\ap}$ from \(\mmap_{\imath}\).
	\end{itemize}
	Moreover, \( (N_{\ap}(\pnfamily),\mmap_{\imath}) \) can be computed in
	polynomial time from \(\ap\).
	\label{lem:equivpn}
\end{lemma}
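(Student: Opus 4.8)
The plan is to prove the two equivalences (boundedness and configuration reachability) by establishing a tight correspondence between runs of the asynchronous program $\ap$ and firing sequences of $N_{\ap}$, then to note that the polynomial-time computability follows from Construction~\ref{constr:pn} together with the fact (Lem.~\ref{lem:construction_final}) that each widget $N_c$ is computable in polynomial time. The central object is the shape of a marking of $N_{\ap}$. Observe that the places of $N_{\ap}$ split into the ``global state'' places $D$, the ``task buffer'' places $\Sigma$, and the internal widget places $\bigcup_c S_c^{\clubsuit}$. The key structural invariant I would prove is that every marking reachable from $\mmap_{\imath}$ that has no tokens in any internal widget place has exactly one token in some place $d\in D$ and is therefore of the form $\multi{d}\oplus\mmap$ with $\mmap\in\multiset{\Sigma}$; moreover such ``clean'' markings are precisely the ones encoding configurations $(d,\mmap)$ of $\ap$.

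First I would set up the correspondence at the level of a single dispatch. Fix a clean marking $\multi{d_1}\oplus(\mmap\oplus\multi{a})$. Firing the entry transition $t_c^{<}$ for $c=(d_1,a,d_2)$ consumes one token from $d_1$ and one from the place $a$, and places a token on $(\mathit{begin},c)$; by the adequacy condition \eqref{eq:correctmodel} (which holds for $\pnfamily$ by Lem.~\ref{lem:construction_final}) the widget can then fire a sequence $w\in(T_c^{\clubsuit})^*$ taking $\multi{(\mathit{begin},c)}$ to $\multi{(\mathit{end},c)}\oplus\mmap'$ exactly when $(d_1,\multi{a})\overset{a}\rightarrow(d_2,\mmap')$; finally $t_c^{>}$ consumes $(\mathit{end},c)$ and deposits a token on $d_2$. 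Composing these three phases, the net takes the clean marking $\multi{d_1}\oplus\mmap\oplus\multi{a}$ to the clean marking $\multi{d_2}\oplus\mmap\oplus\mmap'$, which is exactly the effect of the transition $(d_1,\mmap\oplus\multi{a})\overset{a}\rightarrow(d_2,\mmap\oplus\mmap')$ in $\ap$. By induction on the length of a run of $\ap$ this gives the forward direction: every reachable configuration $(d,\mmap)$ yields a reachable clean marking $\multi{d}\oplus\mmap$, establishing the ``only if'' of configuration reachability and, since $\card{\multi{d}\oplus\mmap}=1+\card{\mmap}$, one direction of boundedness.

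The converse direction is where the main obstacle lies, and I would handle it by showing that any firing sequence of $N_{\ap}$ can be reorganized, without changing its net effect on the clean places, into a sequence of complete dispatch blocks $t_c^{<}\,w\,t_c^{>}$. The difficulty is that an arbitrary firing sequence may interleave the internal transitions of several widgets, may leave widgets ``half-finished,'' and may fire $t_c^{<}$ without ever reaching $(\mathit{end},c)$. To control this I would exploit two facts: tokens on $(\mathit{begin},c)$, $(\mathit{end},c)$, and the internal places of $N_c$ can only be produced by transitions of $N_c$ (the places are private to the widget), so distinct widget activities do not share intermediate tokens and can be commuted past one another; and to reach a clean target marking every widget must have been driven back to having no internal tokens, which by the adequacy biconditional forces each started dispatch to have completed a legal derivation. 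Thus a firing sequence reaching a clean marking can be permuted into a concatenation of full dispatch blocks, each of which simulates one step of $\ap$, giving the ``if'' directions. For boundedness I would additionally argue that unboundedness of the buffer in $\ap$ transfers to unboundedness of the $\Sigma$-places, and conversely that any unbounded family of reachable markings of $N_{\ap}$ must be unbounded already on the $\Sigma$-places: the $D$-places carry a single token in total, and the widget places are bounded because each widget $N_c$ maintains the invariant that the tokens in $\mathcal{X}^c\cup\set{(\$,c)}$ sum to exactly $k$ while a dispatch is in progress (and are empty otherwise), so the only source of unbounded token count is the buffer.

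Finally, the polynomial-time bound is immediate: $N_{\ap}$ has $\card{D}+\card{\Sigma}+\sum_c\card{S_c}$ places and $\sum_c(2+\card{T_c})$ transitions, and by Lem.~\ref{lem:construction_final} each widget $N_c$ (one per context $c\in\mathfrak{C}$, and $\card{\mathfrak{C}}=\card{D}^2\card{\Sigma}$) is of size polynomial in $\card{G^c}$, which in turn is polynomial in $\tmenc{\ap}$ by Lem.~\ref{lem:gr_correctness}. Hence the whole construction is carried out in time polynomial in $\tmenc{\ap}$, completing the proof.
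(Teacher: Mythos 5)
Your proof is correct and takes essentially the same route as the paper's, whose own argument simply states that both equivalences ``essentially follow from'' the adequacy requirement \eqref{eq:correctmodel} and then does the polynomial-time accounting; you are supplying the details (the per-dispatch correspondence, the induction over runs, and the clean-marking invariant) that the paper leaves implicit. One remark: since each $t_c^{<}$ consumes the single token circulating among the places of $D$ and only a $t_{c'}^{>}$ can restore it, at most one dispatch block is ever open at a time, so the interleaving/commutation argument in your converse direction is not actually needed --- firing sequences of $N_{\ap}$ reaching a clean marking are already serialized into complete dispatch blocks (and note the widget invariant keeps $k+1$, not $k$, tokens in $\mathcal{X}^c\cup\set{(\$,c)}$ during a dispatch, which is exactly why $t_e$ can fire only once all variables have been eliminated).
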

\begin{proof}
	The results for boundedness and reachability essentially follows from requirement
	\eqref{eq:correctmodel} in the definition of adequacy. 
	For the polynomial time algorithm  we first show that polynomial time is
	sufficient to compute \(\pnfamily\) given \(\ap\). In fact, the polynomial
	time upper bounds follows from the fact that
	\(\pnfamily=\set{N_c}_{c\in\mathfrak{C}}\) contains polynomially many
	widgets, that each \(N_c\) is computable in polynomial time given \(G^c\)
	(Lem.~\ref{lem:equivpn}) and that each \(G^c\) is computable in polynomial
	time given \(\ap\) (basically \(G\), \(R\)) and \(c\) (Lem.~\ref{lem:gr_correctness}).
	Then given \(\pnfamily\), it is routine to check from Constr.~\ref{constr:pn} that 
	\((N_{\ap}(\pnfamily),\mmap_{\imath}) \) can be computed in polynomial time from \(\ap\).
\end{proof}

Let us now consider the boundedness, the safety and the configuration
reachability problems for asynchronous programs. 
Lem.~\ref{lem:equivpn} shows that for the boundedness, the safety and 
the configuration reachability problem
for asynchronous programs there is an equivalent instance of, respectively, the
boundedness, the coverability and the reachability problem for \(\pn\).
Moreover each of the reduction can be carried out in polynomial time.  In
\cite{Rackoff78} Rackoff gives {\sc expspace} algorithms to solve the
coverability and boundedness problem for \(\pn\), therefore we obtain an
exponential space upper bound for the safety and boundedness problems for
asynchronous programs. For the reachability problem, the best known upper bounds
take non-primitive recursive space \cite{EN94}.

\begin{figure*}[h]
  \centering
  \hspace{\stretch{1}}
    \begin{minipage}{.5\textwidth}
\begin{tabbing}
\ \ \ \ \=\ \ \ \=\ \ \ \=\\
{\it global} {\tt st} = $(\varepsilon,\varepsilon)$;\\
\\
{\tt runPN} () \{\\
\>	if {\tt st} \(\in (T\cup\set{\varepsilon})\times\set{\varepsilon}\) \{\\
\>\>    pick $t\in T$ non det.;\\
\>\>    {\tt st} {\tt =} $(t,\hat{I}(t))$;\\
\>  \}\\
\>  {\tt post} {\tt runPN}();\\
\}\\
\\
Initially: $\mmap_{\imath} \oplus\multi{\mathtt{runPN}}$
\end{tabbing}
    \end{minipage}%
  \hspace{\stretch{1}}
    \begin{minipage}{.45\textwidth}
\begin{tabbing}
\ \ \ \ \=\ \ \ \=\ \ \ \=\ \ \ \=\ \ \ \=\\
$p$() \{\\
\>  if {\tt st} {\tt ==} $(t, p\cdot w)$ \{\\
\>\>    {\tt st} {\tt =} $(t, w)$;\\
\>\>  if $w$ {\tt ==} $\varepsilon$ \{\\
\>\>\>      for each $p\in S$ do \{ \\
\>\>\>\>       if \(O(t)(p)>0\) \{\\
\>\>\>\>\>        {\tt post} $p$();\\
\>\>\>\>       \}\\
\>\>\>       \}\\
\>\>    \}\\
\>  \} else \{\\
\>\>    {\tt post} $p$();\\
\>  \}\\
\}
\end{tabbing}
    \end{minipage}%
  \hspace{\stretch{1}}
	\caption{Let \( (N=(S,T,F=\tuple{I,O}),\mmap_{\imath})\) be an initialized
	Boolean
	\(\pn\).  We assume that \(N\) is such that \(\forall t\in
	T\colon\card{I(t)}>0\). The encoding of \(N\) is given by an asynchronous
	program with \(|S|+1\) handlers.  }
  \label{fig-pnbound2aabound}
\end{figure*}

We now give the reverse reductions in order to derive lower complexity
bounds.  
In fact, we show how to reduce instances of the boundedness, the
coverability and the reachability problem for Boolean \(\pn\) into equivalent instances
of, respectively, the boundedness, the safety and the configuration
reachability problem for asynchronous programs.  
Each of those reduction is carried out in polynomial time in the size of the given instance.  
From known {\sc expspace} lower bounds
for Petri nets, and the construction in Lem.~\ref{lem:binaryisnotaproblem},
we get {\sc expspace} lower bounds for the boundedness, the safety, and
the configuration reachability problems for asynchronous programs.

Fix a Boolean initialized $\pn$ $(N,\mmap_{\imath})$.  
The encoding of a \(\pn\) as an asynchronous program given in Fig.~\ref{fig-pnbound2aabound} 
is the main ingredient of our reductions. 

For readability, we describe the asynchronous program in pseudocode syntax. It
is easy to convert the pseudocode to a formal asynchronous program.

Let us fix an (arbitrary) linear ordering $<$ on the places in $S$.
For each $t\in T$, let $\hat{I}(t)$ be the sequence obtained by ordering the set $I(t)$
according to the ordering $<$ on $S$, and let $\mathit{suffix}(\hat{I}(t))$ be the set
of suffixes of $\hat{I}(t)$.
Clearly, for any $t\in T$, there are at most $\card{S} + 1$ elements in $\mathit{suffix}(\hat{I}(t))$.

The intuition behind the construction of Fig.~\ref{fig-pnbound2aabound} is the
following.  The asynchronous program has $\card{S} + 1$ procedures, one
procedure $p$ for each place $p \in S$, and a procedure $\mathtt{runPN}$ that
simulates the transitions of $\pn$.  The content of the task buffer (roughly)
corresponds to the marking of the Petri net.

The procedure ${\tt runPN}$ initiates the simulation of $\pn$ by selecting nondeterministically
a transition to be fired.
A global variable ${\tt st}$ keeps track of the
transition $t$ selected and also the preconditions that have yet to be checked
in order for $t$ to be enabled.
The possible values for ${\tt st}$ are
$(\varepsilon,\varepsilon)$ (which holds initially),
and $(t, w)$ for transition $t\in T$ and $w\in\mathit{suffix}(\hat{I}(t))$
(encoding the fact that the current transition being simulated is $t$, and we need to
reduce the number of pending instances of each $p\in w$ by one in order to fire $t$).
Thus, the maximum number of possible values to ${\tt st}$ is $\card{T}\cdot (\card{S} + 1) + 1$.

The code for ${\tt runPN}$ works as follows.  If ${\tt
st}\in (T\cup\set{\varepsilon})\times\set{\varepsilon}$, it nondeterministically selects an arbitrary
transition $t$ of the \(\pn\) (not necessarily an enabled transition) to be
fired, sets ${\tt st}$ to $(t, \hat{I}(t))$, and reposts itself.  If ${\tt
st}\notin (T\cup\set{\varepsilon})\times\set{\varepsilon}$, it simply reposts itself.

We now describe how a transition is fired based on the global state ${\tt st}$.
When ${\tt runPN}$ sets ${\tt st}$ to $(t,\hat{I}(t))$, it means
that we must consume a token from each place in $I(t)$ in order to fire $t$.
Then the intuition is the following.
Each time a handler $p$ is dispatched it will check if it is the first
element in the precondition, i.e., if ${\tt st} = (t, p\cdot w)$ for some $w$.
If $p$ is not the first element in the precondition, it simply reposts itself,
so that the number of pending instances to each $p'\in S$ before and after the
dispatch of $p$ are equal.  However, if ${\tt st} = (t, p\cdot w)$, there are
two possibilities.  If $w\not = \epsilon$, then handler $p$ updates ${\tt st}$
to $(t, w)$, but does not repost itself. This ensures that after the execution
of $p$, the number of pending instances to $p$ is one fewer than before the
execution of $p$ (and thus, we make progress in firing the transition $t$ by
consuming a token from its precondition).  If $w = \epsilon$, then
additionally, handler $p$ posts $p'$ for each $p' \in O(t)$.  This ensures that
the execution of the transition $t$ is complete, and moreover, each place in
$O(t)$ now has a pending handler instance corresponding to the firing of $t$.

The initial task buffer is the multiset $\mmap_{\imath}\oplus\multi{\mathtt{runPN}}$ and the
initial value of ${\tt st}$ is $(\varepsilon,\varepsilon)$.

The following invariant is preserved by the program of
Fig.~\ref{fig-pnbound2aabound}, whenever \({\tt st}=(\mathit{tr},\varepsilon)\)
for \(\mathit{tr}\in T\cup\set{\varepsilon}\) we have that the multiset $\mmap$
given by the number of pending instances to procedure \(p\) for each $p\in S$
is such that $\mmap_{\imath}\fire{w\cdot \mathit{tr}}\mmap$ for $w\in T^*$.

Let us prove the invariant. Initially, we have ${\tt st} =
(\varepsilon,\varepsilon)$ and the task buffer is precisely
$\mmap_{\imath}$, so the invariant holds because we have
$\mmap_{\imath}\fire{\varepsilon}\mmap_{\imath}$.

By induction hypothesis, assume the invariant holds at some configuration of
the program in which \({\tt st}\in
(T\cup\set{\varepsilon})\times\set{\varepsilon}\). We show the invariant holds
the next time \({\tt st}\in(T\cup\set{\varepsilon})\times\set{\varepsilon}\).

Whenever ${\tt st}$ is of the form $(\mathit{tr},\varepsilon)$, each dispatch
to $p$ for $p\in S$ simply reposts itself.  When procedure ${\tt runPN}$ is
dispatched, it picks a transition $t$ to be fired.
Hence ${\tt st}$ is updated $(t,\hat{I}(t))$.
Suppose $\mmap\fire{t}$.
Then, for each $p\in I(t)$, the program configuration has a pending instance of $p$.
A sequence of dispatches corresponding to $\hat{I}(t)$  will
reduce ${\tt st}$ to $(t,p)$ for some $p\in S$, and at this point, the dispatch of $p$
will post as many calls as $O(t)$.
The configuration reached after this dispatch of $p$ sets ${\tt
st}=(t,\varepsilon)$ and the configuration of the program corresponds to a
marking $\mmap'\oplus I(t) = \mmap \oplus O(t)$.

Now suppose $t$ is not enabled in $\mmap$. Then in the simulation, ${\tt st}$ will
get to some value $(t, p\cdot w)$ such that there is no pending instance to $p$.
In this case, the state ${\tt st}$ will never be set to some value in
\((T\cup\set{\varepsilon})\times\set{\varepsilon}\),
and hence the invariant holds vacuously.

We conclude by establishing the {\sc expspace} lower bounds for boundedness,
safety and configuration reachability. 

\noindent %
{\bf Boundedness.}
Consider the reduction given at Fig.~\ref{fig-pnbound2aabound} which given an
initialized Boolean \(\pn\) \( (N,\mmap_{\imath})\) builds an asynchronous program \(\ap\). 
We deduce from above that \(\ap\) is bounded if{}f \( (N,\mmap_{\imath})\) is bounded.
Moreover, it is routine to check that \(\ap\) can be computed in time polynomial in the size
of \( (N,\mmap_{\imath})\).

\noindent %
{\bf Safety.}
Consider an instance of the coverability problem for Boolean \(\pn\). Because of the result of
Lem.~\ref{lem:binaryisnotaproblem} we can assume this instance has the following form: a
\(\pn\) \(N^{\flat}=(S\cup\set{p_i,p_c},T\cup\set{t_i,t_c}, F^{\flat})\), an 
initial marking \(\multi{p_i}\) and a marking to cover \(\multi{p_c}\).  Moreover
the only way to create a token in place \(p_c\) is by firing transition
\(t_c\). Observe that \(\ucl{\multi{p_c}}\cap\fire{\multi{p_i}}_{N^{\flat}}\),
namely \(p_c\) is coverable, if{}f there exists \(\mmap\) such that
\(\mmap\in\fire{\multi{p_i}}_{N^{\flat}}\) and \(\mmap\fire{t_c}\).

Then using the polynomial time construction given at
Fig.~\ref{fig-pnbound2aabound} we obtain a asynchronous program \(\ap\) which
satisfies the property that the global state \({\tt st}=(t_c,\varepsilon)
\) is reachable in \(\ap\) if{}f
\(\ucl{\multi{p_c}}\cap\fire{\multi{p_i}}_{N^{\flat}}\).

\noindent %
{\bf Configuration reachability.}
Consider an instance of the reachability problem for Boolean \(\pn\). Because of the
result of Lem.~\ref{lem:binaryisnotaproblem} we can assume this instance has
the following form: a \(\pn\)
\(N^{\flat}=(S\cup\set{p_i,p_r},T\cup\set{t_i,t_r}, F^{\flat})\) an initial
marking \(\multi{p_i,p_r}\) and \(\varnothing\) a marking to reach.  Moreover,
every transition sequence which reaches \(\varnothing\) ends with the firing of
\(t_r\).  Therefore using the polynomial time construction given at
Fig.~\ref{fig-pnbound2aabound} we obtain a asynchronous program \(\ap\) which
satisfies the property that the configuration \(c\) such that \(c.d\) is given
by \({\tt st}=(t_r,\varepsilon) \) and \(c.\mmap=\varnothing\) is reachable in
\(\ap\) if{}f \(\varnothing\in\fire{\multi{p_i,p_r}}_{N^{\flat}}\).

Hence we finally obtain the following results.
\begin{theorem}{}
\hspace{0pt}
\begin{enumerate}
\item The global state reachability and boundedness problems for asynchronous
	programs are {\sc expspace}-complete.
\item The configuration reachability problem for asynchronous programs is
	polynomial-time equivalent to the \(\pn\) reachability problem. 
      The configuration reachability problem is {\sc expspace}-hard.
\end{enumerate}
\end{theorem}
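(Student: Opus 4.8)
The plan is to assemble the theorem from the two reductions established in this section: the forward reduction of Lem.~\ref{lem:equivpn} (asynchronous program to \(\pn\), computable in polynomial time) together with Rackoff's bounds from Thm.~\ref{prop-cpltxy}, which yield the upper bounds; and the reverse reduction of Fig.~\ref{fig-pnbound2aabound}, combined with the Boolean-\(\pn\) normalisation of Lem.~\ref{lem:binaryisnotaproblem}, which yields the matching lower bounds. Each of the three problems (boundedness, global state reachability, configuration reachability) is handled by pairing it with the corresponding \(\pn\) problem (boundedness, coverability, reachability).

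For the upper bounds I would argue as follows. Boundedness is immediate: by Lem.~\ref{lem:equivpn}, \(\ap\) is bounded iff \((N_{\ap},\mmap_{\imath})\) is bounded, the reduction is polynomial, and \(\pn\) boundedness is in {\sc expspace} by Thm.~\ref{prop-cpltxy}(1). Configuration reachability is likewise immediate from the second bullet of Lem.~\ref{lem:equivpn}, which identifies reachability of \((d,\mmap)\) with reachability of \(\multi{d}\oplus\mmap\) in \(N_{\ap}\); since \(\pn\) reachability is decidable (Thm.~\ref{prop-cpltxy}(2)) so is configuration reachability. The only case needing a small extra observation is global state reachability, which I would reduce to coverability of \(\multi{d_f}\) in \(N_{\ap}\). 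The point is that, by Constr.~\ref{constr:pn}, the single token recording the current global state resides in the places \(D\) exactly at the ``stable'' markings \(\multi{d}\oplus\mmap\), and is consumed into a widget while a handler executes; hence a reachable marking of \(N_{\ap}\) covers \(\multi{d_f}\) iff some reachable stable marking carries its global-state token in \(d_f\), which by Lem.~\ref{lem:equivpn} is precisely the reachability of a configuration \(c\) with \(c.d=d_f\). Coverability is in {\sc expspace} by Thm.~\ref{prop-cpltxy}(1), giving the upper bound.

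For the lower bounds I would invoke the reverse reduction of Fig.~\ref{fig-pnbound2aabound}, whose correctness is established by the buffer/marking invariant proved in the surrounding text. By Lem.~\ref{lem:binaryisnotaproblem} it suffices to start from Boolean \(\pn\). For boundedness, the reduction produces \(\ap\) bounded iff the Boolean net is bounded; for safety, it produces an \(\ap\) whose global state \(\mathtt{st}=(t_c,\varepsilon)\) is reachable iff \(p_c\) is coverable in the Boolean net; and for configuration reachability, it produces an \(\ap\) whose designated configuration (with \(\mathtt{st}=(t_r,\varepsilon)\) and empty buffer) is reachable iff \(\varnothing\) is reachable in the Boolean net. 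Since Boolean-\(\pn\) boundedness and coverability are {\sc expspace}-hard and reachability is {\sc expspace}-hard (Thm.~\ref{prop-cpltxy} with Lem.~\ref{lem:binaryisnotaproblem}), and each reduction is polynomial, we obtain {\sc expspace}-hardness for all three, completing the {\sc expspace}-completeness of boundedness and global state reachability (part 1). For part 2, the polynomial forward reduction of Lem.~\ref{lem:equivpn} (configuration reachability to \(\pn\) reachability) together with the polynomial reverse reduction of Fig.~\ref{fig-pnbound2aabound} (\(\pn\) reachability to configuration reachability) give the claimed polynomial-time equivalence, and the reverse direction also yields the stated {\sc expspace}-hardness.

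I expect the only genuinely non-routine step to be the safety-to-coverability matching in the upper-bound argument, since Lem.~\ref{lem:equivpn} is phrased in terms of exact configuration reachability rather than coverability; the argument must exploit the fact that a token appears in a place of \(D\) only at stable markings and records precisely the global state there, so that \(\multi{d_f}\) cannot be covered spuriously by a marking taken in the middle of a widget run. Everything else is a matter of citing the already-proved lemmas alongside the known \(\pn\) bounds and checking that the compositions of reductions remain polynomial.
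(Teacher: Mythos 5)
Your proposal is correct and follows essentially the same route as the paper: the polynomial forward reduction of Lem.~\ref{lem:equivpn} combined with Rackoff's bounds for the upper bounds, and the Fig.~\ref{fig-pnbound2aabound} simulation of Boolean Petri nets (via Lem.~\ref{lem:binaryisnotaproblem}) for the {\sc expspace}-hardness and the reverse half of the polynomial-time equivalence. The one place where you go beyond the paper's text --- justifying that covering \(\multi{d_f}\) cannot happen spuriously because the global-state token sits in a place of \(D\) only at stable markings --- is a correct and worthwhile elaboration of a step the paper leaves implicit.
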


\subsection{Termination}

Since we now study properties of infinite runs of Petri nets modelling
asynchronous programs, there is a subset of transitions which becomes of
particular interest. This subset allows to distinguish the runs where some
widget enters a non terminating execution from those runs where each time a
widget runs, it eventually terminates.
Since our definition of asynchronous program does not allow for non-terminating
runs of a handler (see Rmk.~\ref{rmk:finiterunsonly}) we need a way to
discriminate non-terminating runs in the corresponding \(\pn\) widget.

\begin{definition}
	Let \(T^{d(a)}_{\ap}=\set{t_c^{>}\in T_{\ap}\mid c\in\mathfrak{C}\cap (D\times\set{a}\times D)}\) 
        for some $a\in\Sigma$ and let
\(T^d_{\ap}=\bigcup_{a\in\Sigma} T^{d(a)}_{\ap}\).
\end{definition}

\begin{definition}
	Let $\ap$ be an asynchronous program, and let \((N_{\ap},\mmap_{\imath})\) be
	an initialized \(\pn\) as given in Constr.~\ref{constr:pn}.
	Let \(\rho=\mmap_0\fire{t_0}\mmap_1\fire{t_1} \dots \mmap_n\fire{t_n} \dots  \)
	be an infinite run of \(N_{\ap}\) where $\mmap_0=\mmap_{\imath}$.
	\begin{itemize}
		\item $\rho$ is an infinite $\ap$-run if{}f $t_i\in T^d_{\ap}$ for infinitely many $i$'s;
		\item \(\rho\) is a fair infinite run if{}f
			\(
			\begin{cases}
				\rho \text{ is an infinite } \ap\text{-run, and}\\
				\text{for all } a\in\Sigma, \text{ if } t_i\in T^{d(a)}_{\ap} \text{ for finitely many }i\text{'s}\\
				\text{then }\mmap_{j}(a)=0 \text{ for infinitely many }j\text{'s}
			\end{cases}
			\);
		\item $\rho$ fairly starves $b (\in\Sigma)$ if{}f
			\(\begin{cases} \rho \text{ is a fair infinite run, and}\\
				\text{there is a } J\geq 0 \text{ such that for each } j\geq J\\
				\mmap_j(b)\geq 1\land (t_j\in T^{d(b)}_{\ap}\rightarrow \mmap_j(b)\geq 2)
				\end{cases}\).
	\end{itemize}
	\label{def:infinite-pnruns}
\end{definition}

\begin{lemma}
	Let $\ap$ be an asynchronous program and let \((N_{\ap},\mmap_{\imath})\) be
	an initialized \(\pn\) as given in Constr.~\ref{constr:pn}.
\begin{itemize}
	\item $\ap$ has an infinite run if{}f \((N_{\ap},\mmap_{\imath})\) has an infinite $\ap$-run;
	\item $\ap$ has a fair infinite run if{}f \((N_{\ap},\mmap_{\imath})\) has a fair infinite run;
	\item $\ap$ fairly starves $a$ if{}f \((N_{\ap},\mmap_{\imath})\) fairly starves $a$.
\end{itemize}
Moreover, \( (N_{\ap}(\pnfamily),\mmap_{\imath}) \) can be computed in
polynomial time from \(\ap\).%
\label{lem:equivpn_infinite}
\end{lemma}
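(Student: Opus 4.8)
My plan is to lift the single-dispatch correspondence of Lem.~\ref{lem:construction_final} (i.e.\ condition~\eqref{eq:correctmodel}) to whole infinite runs, after first pinning down the global shape of any firing sequence of \(N_{\ap}\). The key structural fact is that \(N_{\ap}\) \emph{serializes} dispatches, which I would establish by a token-conservation argument: assign weight \(1\) to every place of \(D\) and to each \((\mathit{begin},c)\) and \((\mathit{end},c)\), weight \(1/(k_c+1)\) (with \(k_c=\card{\mathcal{X}^c}\)) to each place of \(\mathcal{X}^c\cup\set{(\$,c)}\), and weight \(0\) to each \(\sigma\in\Sigma\). One checks that every transition \(t_c^{<}\), \(t_i\), \(X\to ZY\), \(X\to\sigma\), \(t_e\), \(t_c^{>}\) preserves the total weight, which equals \(1\) at \(\mmap_{\imath}\); hence every reachable marking has weight exactly \(1\). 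This forces a dichotomy: either some \(D\)-place holds the unique token and all widget places are empty (a \emph{clean} marking), or no \(D\)-place is marked and exactly one widget \(c\) is active, holding the full weight. Starting from a clean marking the only enabled transitions are the \(t_c^{<}\); once \(t_c^{<}\) fires, only transitions of widget \(c\) and finally \(t_c^{>}\) can fire (recall \(t_e\) fires only once \(\mathcal{X}^c\) is empty, after which \(t_c^{>}\) restores a clean marking). Thus every run of \(N_{\ap}\) factors into blocks \(t_{c_i}^{<}\,w_i\,t_{c_i}^{>}\) separated by clean markings, with the sole exception of a last, never-completing block in which widget \(c\) loops forever — exactly the abstracted non-terminating handler executions of Rmk.~\ref{rmk:finiterunsonly}. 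Identifying a clean marking \(\multi{d}\oplus\mmap\) with the configuration \((d,\mmap)\), condition~\eqref{eq:correctmodel} says each completed block realizes precisely one transition \((d,\mmap\oplus\multi{a})\overset{a}\rightarrow(d',\mmap\oplus\mmap')\) of \(\ap\), and conversely.

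With this decomposition the first item is immediate. Given an infinite run of \(\ap\), I expand each step into a block using the ``if'' direction of~\eqref{eq:correctmodel}; the concatenation is an infinite run of \(N_{\ap}\) whose every block ends with a firing in \(T^d_{\ap}\), so infinitely many \(T^d_{\ap}\)-transitions fire and it is an infinite \(\ap\)-run. Conversely, an infinite \(\ap\)-run fires \(T^d_{\ap}\) infinitely often, hence by the factorization completes infinitely many blocks; reading off the dispatched handler and the clean markings and applying the ``only if'' direction of~\eqref{eq:correctmodel} yields an infinite run of \(\ap\). The polynomial-time claim is exactly that of Lem.~\ref{lem:equivpn}.

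For the fair-infinite-run item I would add one monotonicity observation to the block correspondence: no transition of \(N_{\ap}\) removes a token from a place \(\sigma\in\Sigma\) except \(t_{c}^{<}\), which removes a single token of the dispatched handler. Hence for a handler \(a\) that fires a \(T^{d(a)}_{\ap}\)-transition only finitely often, the \(a\)-count is non-decreasing past the last such firing, so ``\(\mmap_j(a)=0\) for infinitely many \(j\)'' (over all markings of the run) is equivalent to ``the \(a\)-count is eventually \(0\)'', which is in turn equivalent to ``\(a\) is not pending at infinitely many clean markings''. Since a \(T^{d(a)}_{\ap}\)-transition fires in block \(i\) exactly when that block dispatches \(a\), and clean markings are the configurations of the \(\ap\)-run, the Petri-net fairness predicate of Def.~\ref{def:infinite-pnruns} matches, block for block, the fairness predicate on \(\ap\)-runs, and the equivalence follows from the two translations above.

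Fair starvation is the delicate case and I expect it to be the main obstacle. Translating runs block-for-block, the condition \(\mmap_j(a)\geq 1\) at \emph{every} marking, evaluated at the dip immediately after an \(a\)-dispatch's \(t_c^{<}\), already subsumes the pre-count bound \(c_j.\mmap(a)\geq 2\) of the \(\ap\)-definition, while at clean markings it gives \(c_j.\mmap(a)\geq 1\); so the Petri-net condition implies the \(\ap\) one on the same run, which settles the direction ``\(N_{\ap}\) starves \(a\) \(\Rightarrow\) \(\ap\) starves \(a\)''. The subtle point is the converse, because the Petri-net condition tests \(\mmap_j(a)\geq 2\) at the block-completing transition \(t_c^{>}\in T^{d(a)}_{\ap}\), where the \(a\)-count equals the \emph{post}-dispatch value \(c_{j+1}.\mmap(a)\), whereas the \(\ap\)-definition only guarantees the \emph{pre}-dispatch value \(c_j.\mmap(a)\geq 2\); these differ by the handler's net effect \(-1+\gamma\) (with \(\gamma\) the number of copies of \(a\) it posts), so the literal image of an \(\ap\)-witness may have post-count \(1\) at some \(a\)-completions. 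I would resolve this at the level of the existential statement: a starved \(a\) is permanently pending and, by fairness, dispatched infinitely often, hence replenished infinitely often along the run; using the monotonicity (well-structuredness) of asynchronous programs together with a pumping argument on the replenishing cycle, I would exhibit a fair run that keeps one extra reserve instance of \(a\) present throughout, thereby maintaining \(a\)-count \(\geq 2\) across each \(a\)-completion and meeting the Petri-net condition. Making this pumping precise, and verifying it preserves both fairness and the starvation of the distinguished instance, is where the real work lies.
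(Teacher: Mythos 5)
Your treatment of the first two items is sound and, modulo the extra detail of the serialization argument, follows the same route as the paper (whose proof is the one-line observation that Def.~\ref{def:infinite-aaruns} and Def.~\ref{def:infinite-pnruns} correspond block-for-block via \eqref{eq:correctmodel}); one small caveat is that the weight invariant alone does not rule out two widgets each holding part of the unit weight, so the ``clean/active'' dichotomy also needs the observation that activating a widget requires the whole unit concentrated in a single \(D\)- or \((\mathit{begin},c)\)-place, which follows by induction along the run. The genuine gap is the one you flagged yourself: the direction ``\(\ap\) starves \(a\) implies \(N_{\ap}\) starves \(a\)'' of the third item. Your proposed repair --- pump a replenishing cycle so as to keep a permanent reserve instance of \(a\) and thereby force the \(a\)-count to be at least \(2\) at every block-closing \(t_c^{>}\) --- cannot be made to work in general, because the two starvation predicates are genuinely inequivalent as literally written. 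Concretely, take handlers \(a\) (posts nothing, sets the global state to \(d_0\)), \(h\) (in state \(d_0\): posts one \(a\) and moves to \(d_1\); in state \(d_1\): does nothing), and \(g\) (posts \(h\) and \(g\)), with initial configuration \((d_1,\multi{a,a,g})\). The run with dispatch sequence \((g,h,a,g,h)^{\omega}\) is fair and satisfies Def.~\ref{def:infinite-aaruns} (every \(a\)-dispatch has pre-count \(2\)); but the invariant \(c.\mmap(a)+[c.d=d_0]\le 2\) shows that in \emph{every} run the post-dispatch count of \(a\) is at most \(1\), so no run of \(N_{\ap}\) can have \(\mmap_j(a)\ge 2\) at the markings from which the \(T^{d(a)}_{\ap}\)-transitions fire, and \(N_{\ap}\) does not fairly starve \(a\) in the sense of Def.~\ref{def:infinite-pnruns}. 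No pumping escapes this, since the obstruction is enforced by the global state.

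The mismatch you located is therefore not an artifact of your translation but an off-by-one in Def.~\ref{def:infinite-pnruns}: the test \(\mmap_j(a)\ge 2\) is attached to the block-closing transitions \(t_c^{>}\), where the marking records the \emph{post}-dispatch count, whereas the \(\ap\)-side condition constrains the \emph{pre}-dispatch count. The reading that makes the lemma true --- and the one the rest of the paper actually relies on, since the starvation widget of Constr.~\ref{constr:pnfs} implements the \(\ge 2\) test at the block-opening transition via \(t_c^{<\infty}\), which consumes two tokens of \(a\) and returns one --- is to evaluate the \(\ge 2\) condition at the marking from which the \(a\)-block is entered. Under that reading your block-for-block translation already closes both directions of the third item with no pumping at all (as you observe, \(\mmap_j(a)\ge 1\) at the dip after \(t_c^{<}\) is exactly pre-count \(\ge 2\)). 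So the missing piece is a definitional adjustment, not a new argument; as it stands, leaving the pumping as an open obligation leaves the proof incomplete, and the obligation itself is undischargeable.
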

\begin{proof}
	It suffices to observe that the Def.~\ref{def:infinite-pnruns} and
	Def.~\ref{def:infinite-aaruns} are equivalent using \eqref{eq:correctmodel}.
	The polynomial time construction was proven in Lem.~\ref{lem:equivpn}.
\end{proof}

We now give an {\sc expspace}-complete decision procedure for termination.
%
\begin{remark}
In what follows we assume a fixed linear order on the set of transitions $T$ (resp. places $S$)
which allow us to identify a multiset with a vector of $\nats^T$ (resp. $\nats^S$).
\end{remark}

We recall a class of path formulas
for which the model checking problem is decidable.
This class was originally defined in \cite{yen:paths}, but the model
checking procedure in that paper had an
error which was subsequently fixed in \cite{RP09}.
For simplicity, our definition below captures only a 
subset of the path formulas defined in \cite{yen:paths},
but this subset is sufficient to specify termination.  

Fix a
$\pn$ $N=(S,T,F,\mmap_{\imath})$.  
Let
$\mu_1,\mu_2,\dots$ be a family of \emph{marking variables} ranging over
\(\nats^{S}\) and $\sigma_1,\sigma_2,\dots$ a family of \emph{transition variables}
ranging over $T^*$.

\emph{Terms} are defined recursively as follows:
\begin{itemize}
	\item every $\cmap\in\nats^{S}$ is a term;
	\item for all $j>i$, and marking variables $\mu_j$ and $\mu_i$, we have $\mu_j-\mu_i$ is a term. 
	\item $\mathcal{T}_1 +\mathcal{T}_2$ and $\mathcal{T}_1-\mathcal{T}_2$ are
		terms if $\mathcal{T}_1$ and $\mathcal{T}_2$ are terms.
		(Consequently, every mapping $\cmap\in\mathbb{Z}^{S}$ is also a term)
\end{itemize}

\emph{Atomic predicates} are of two types: marking predicates and transition predicates.
\begin{description}
	\item[Marking predicates]
There are two types of marking predicates.
The first type consists in the forms 
$\mathcal{T}_1(p_1)=\mathcal{T}_2(p_2)$,
$\mathcal{T}_1(p_1)<\mathcal{T}_2(p_2)$,
and
$\mathcal{T}_1(p_1)>\mathcal{T}_2(p_2)$,
where $\mathcal{T}_1$ and $\mathcal{T}_2$ are terms and $p_1,p_2\in S$ are two places
of $N$. 
The second type consists in the forms $\mu(p)\geq z$ and $\mu(p)> z$, where $\mu$ is a marking variable,
$p\in S$, and $z\in\mathbb{Z}$.
\item[Transition predicates]
Define the \emph{inner product} $\otimes: \mathbb{Z}^T \times \mathbb{Z}^T \rightarrow \mathbb{Z}^T$ as
$\cmap_1\otimes\cmap_2 = \sum_{t\in T}\cmap_1(t)\cdot \cmap_2(t)$.
for $\cmap_1,\cmap_2\in\mathbb{Z}^{T}$.
A transition predicate is either of the form
$\parikh(\sigma_1)(t)\leq c$, where $c\in\nats$ and $t\in T$, or 
of the forms
	$\mathbf{y}\otimes\parikh(\sigma_i)\geq c$ 
and $\mathbf{y}\otimes\parikh(\sigma_i)\leq c$, where $i>1$, $c\in\nats$, 
$\mathbf{y}\in\mathbb{Z}^T$, and $\otimes$ denotes the inner
product as defined above.
\end{description}

A \emph{predicate} is a finite positive boolean combination of atomic
predicates.  A \emph{path formula} $\Lambda$ is a formula of the form:
\[
\exists\mu_1,\dots,\mu_m\exists \sigma_1,\dots,\sigma_m\colon
\bigl(\mmap_{\imath}\fire{\sigma_1}\mu_1\fire{\sigma_2}\dots\fire{\sigma_m}\mu_m\bigr)
\land \Phi(\mu_1,\dots,\mu_m,\sigma_1,\dots,\sigma_m)
\]
where $\Phi$ is a predicate.  A \emph{path formula} $\Lambda$ is increasing if
$\Phi$ implies $\mu_1\leq\mu_m$ (where $\mu_i\leq\mu_j$ for $i<j$ is an
abbreviation for $\bigwedge_{p\in S} (\mu_j-\mu_i)(p)> (-1^{S})(p)$) and
contains no transition predicate.
The size of a path formula is the number of symbols in the description of the formula,
where constants are encoded in binary.

The \emph{satisfiability problem} for a path formula $\Lambda$ asks if there exists
a run of $N$ of the form
$\mmap_{\imath}\fire{w_1}\mmap_1\fire{w_2}\dots\mmap_{m-1}\fire{w_m}\mmap_m$
for markings $\mmap_1,\ldots,\mmap_m$ and transition sequences
$w_1,\ldots,w_m\in T^*$, such that $\Phi(\mmap_1,\dots,\mmap_m,w_1,\dots,w_m)$
is true.  
If $\Lambda$ is satisfiable, we write $N\models\Lambda$.

\begin{theorem}{(from \cite{RP09})}
\begin{itemize}
	\item The satisfiability problem for a path formula is reducible in polynomial
		time to the reachability problem for Petri nets. Hence, the satisfiability
	problem is {\sc expspace}-hard. 
	\item The satisfiability problem for an increasing path formula is {\sc expspace}-complete.
\end{itemize}
	\label{thm-rp09}
\end{theorem}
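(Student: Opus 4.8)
The plan is to treat the two items separately, since they rest on genuinely different machinery: the first is a polynomial reduction to (full) reachability together with an easy hardness embedding, while the second additionally needs a Rackoff-style pumping argument to obtain an \textsc{expspace} upper bound. For the first item I would build, from a path formula $\Lambda$ over $N=(S,T,F,\mmap_{\imath})$ with $m$ segments, a single Petri net $N_\Lambda$ together with a target marking $\cmap_f$ such that $\cmap_f$ is reachable in $N_\Lambda$ iff $N\models\Lambda$. The net $N_\Lambda$ runs $N$ in $m$ consecutive phases, firing $\sigma_1,\dots,\sigma_m$ in turn, with a phase-control place advanced by a dedicated phase-switch transition. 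To make $\Phi$ checkable at the end, I record the data the predicate talks about while the simulation runs: for each intermediate marking $\mu_i$ I add \emph{snapshot} places that receive a frozen copy of the current marking at the $i$-th phase boundary, and for each transition variable $\sigma_i$ and each $t\in T$ I add a \emph{counter} place incremented on every firing of $t$ during phase $i$, so that at the end this place holds $\parikh(\sigma_i)(t)$.

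Given these auxiliary places, each atomic predicate becomes a linear constraint over the final contents of snapshot/counter places. Since reachability demands an \emph{exact} final marking, I would turn each inequality into an equality by adding \emph{slack} transitions: a predicate $\mu(p)\ge z$ is enforced by a transition removing $z$ tokens from the corresponding snapshot place and draining an arbitrary remainder into a sink, with $\le$ handled symmetrically; difference terms $\mu_j-\mu_i$ and signed inner products $\mathbf{y}\otimes\parikh(\sigma_i)$ are handled by keeping separate places for positive and negative contributions and comparing them with slack. The positive Boolean combination in $\Phi$ is realised by letting $N_\Lambda$ guess, for each disjunction, which disjunct to satisfy and then enforcing only the chosen conjuncts. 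This is clearly polynomial in $\tmenc{N}$ and in the size of $\Lambda$, giving the stated reduction. For the matching hardness it suffices to observe that plain reachability ``is $\cmap_f$ reachable from $\mmap_{\imath}$'' is itself a path formula: take $m=1$ and let $\Phi$ be $\bigwedge_{p\in S}\mu_1(p)=\cmap_f(p)$ using type-1 marking predicates with the constant term $\cmap_f$; since reachability is \textsc{expspace}-hard, so is satisfiability.

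For the second item, \textsc{expspace}-hardness is obtained the same way, now embedding \emph{coverability} (already \textsc{expspace}-hard) as an increasing formula: $\exists\mu_1\colon(\mmap_{\imath}\fire{\sigma_1}\mu_1)\wedge\bigwedge_{p\in S}\mu_1(p)\ge \cmap_f(p)$ contains no transition predicate and its predicate trivially forces $\mu_1\le\mu_1$, so it is increasing. The upper bound is where the work lies: exploiting that an increasing formula forbids transition predicates and guarantees $\mu_1\le\mu_m$, I would adapt Rackoff's technique to show that if $\Lambda$ is satisfiable then it admits a witnessing run whose length and whose intermediate token counts are bounded by a value of doubly-exponential magnitude, hence describable in exponential space. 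The monotonicity $\mu_1\le\mu_m$ is exactly what lets one pump away long stretches segment-by-segment without falsifying the (upward-closed, by increasingness) constraints, just as coverability witnesses can be shortened; bookkeeping over the $m$ segments then yields a nondeterministic exponential-space procedure, and by Savitch this is \textsc{expspace}.

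The main obstacle I expect is the faithful encoding of $\Phi$ into exact reachability in the first item: the difference terms $\mu_j-\mu_i$ and the signed inner products force one to keep, and later compare, positive and negative parts with slack transitions, and it is easy to introduce spurious solutions if slack can fire out of phase or if a snapshot can be corrupted after it is taken. This is precisely the step where the argument of \cite{yen:paths} was flawed and had to be repaired in \cite{RP09}; establishing the phase-control and no-corruption invariants is the crux. For the increasing case the delicate point is instead the pumping bound --- arguing that the increasing condition restores enough monotonicity to recover Rackoff's exponential-space bound despite the $m$ recorded intermediate markings.
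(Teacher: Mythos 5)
The paper does not prove this statement at all: Theorem~\ref{thm-rp09} is imported verbatim from \cite{RP09} (Atig and Habermehl's repair of Yen's path logic) and is used as a black box in Propositions~\ref{prop-expdecisionproceduretermi} and~\ref{prop-decisionprocedurefairtermi}. So there is no in-paper argument to compare against; what you have written is a sketch of the external proof. Your overall architecture --- phase-by-phase simulation with recorded data, linear constraints discharged at the end, embedding reachability (resp.\ coverability) to get hardness, and a Rackoff-style bound for the increasing fragment --- does match the published argument in spirit, and your identification of the marking-predicate encoding as the historically flawed step is accurate.

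That said, one step in your first item would fail as literally described. A Petri net transition cannot ``receive a frozen copy of the current marking at the $i$-th phase boundary'': duplicating the contents of a place without destroying it requires transferring the tokens out and back, and knowing when the transfer is complete requires a zero-test, which is exactly what Petri nets lack. The standard repair is to maintain, for each place $p$ and each index $i$, a companion place $p^{(i)}$ that is updated in lockstep with $p$ by every transition fired during phases $1,\dots,i$ and is simply never touched afterwards; the snapshot is thus built incrementally rather than copied at the boundary. This changes the construction (each simulated transition must simultaneously update all still-live companions) but keeps it polynomial. A second, smaller issue: exact reachability constrains \emph{every} place, so your ``drain an arbitrary remainder into a sink'' leaves the sink's final count unspecified; you need either the polynomial reduction from submarking reachability to reachability, or explicit unconstrained-place drain transitions with a zero target marking, to make the slack mechanism sound. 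With those two repairs your sketch is consistent with the argument of \cite{yen:paths} as corrected in \cite{RP09}.
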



We know define our reduction of the termination problem to the satisfiability
problem for an increasing path formula.

\begin{remark}
	Without loss of generality, we assume that in $\ap$, the set 
$\mmap_{0}$ of initial pending handler instances 
is given by the singleton
$\multi{a_0}$ for some $a_0\in\Sigma$ and $a_0$ is never
posted.
\label{rmk-wlog}
\end{remark}

\begin{lemma}
	Let $\ap$ be an asynchronous program and
	let \((N_{\ap},\mmap_{\imath})\) be
	an initialized \(\pn\) as given in Constr.~\ref{constr:pn}.
	Let $\Lambda_t$ be the path formula given by
\[
\exists\mu_1,\mu_2\colon\exists \sigma_1,\sigma_2\colon
\bigl(\mmap_{\imath}\fire{\sigma_1}\mu_1\fire{\sigma_2}\mu_2\bigr)
\land \mu_1\leq \mu_2\land T^{d}_{\ap}\otimes\parikh(\sigma_2)\geq 1 \enspace .  \]
	We have
	\vspace{0pt}\hspace{\stretch{1}}
	$(N_{\ap},\mmap_{\imath})\models\Lambda_t$ if{}f \((N_{\ap},\mmap_{\imath})\) has
	an infinite $\ap$-run.
	\hfill\vspace{0pt}
	\label{lem:pntermination}
\end{lemma}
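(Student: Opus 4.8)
The plan is to prove the biconditional by a standard lasso/pumping argument, using the well-quasi-ordering of \((\nats^S,\preceq)\) (Dickson's lemma) for one direction and the monotonicity of Petri net firing for the other. Throughout I read the atomic predicate \(T^d_{\ap}\otimes\parikh(\sigma_2)\ge 1\) as saying that the transition word \(\sigma_2\) contains at least one occurrence of some transition of \(T^d_{\ap}\) (its characteristic vector dotted with the occurrence count), and \(\mu_1\le\mu_2\) as the componentwise order \(\mu_1\preceq\mu_2\).

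For the direction from an infinite \(\ap\)-run to satisfiability of \(\Lambda_t\), suppose \(\mmap_{\imath}=\mmap_0\fire{t_0}\mmap_1\fire{t_1}\cdots\) is an infinite \(\ap\)-run, so that \(t_n\in T^d_{\ap}\) for infinitely many \(n\). First I would enumerate exactly those positions as \(n_1<n_2<\cdots\) and apply Dickson's lemma to the infinite sequence \(\mmap_{n_1},\mmap_{n_2},\dots\in\nats^S\): since \((\nats^S,\preceq)\) is a well-quasi-order there exist \(k<l\) with \(\mmap_{n_k}\preceq\mmap_{n_l}\). Setting \(\mu_1=\mmap_{n_k}\), \(\mu_2=\mmap_{n_l}\), \(\sigma_1=t_0\cdots t_{n_k-1}\) and \(\sigma_2=t_{n_k}\cdots t_{n_l-1}\) yields \(\mmap_{\imath}\fire{\sigma_1}\mu_1\fire{\sigma_2}\mu_2\) with \(\mu_1\preceq\mu_2\), and the very first transition \(t_{n_k}\) of \(\sigma_2\) lies in \(T^d_{\ap}\), so \(T^d_{\ap}\otimes\parikh(\sigma_2)\ge 1\). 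Hence these markings and transition words witness \(\Lambda_t\).

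For the converse, assume \(\Lambda_t\) is satisfiable, witnessed by a run \(\mmap_{\imath}\fire{w_1}\mu_1\fire{w_2}\mu_2\) with \(\mu_1\preceq\mu_2\) and \(w_2\) containing a \(T^d_{\ap}\)-transition. Here I would set \(\delta=\mu_2-\mu_1\in\nats^S\) (well-defined since \(\mu_1\preceq\mu_2\)) and invoke monotonicity of firing: because \(\mu_2\succeq\mu_1\) enables \(w_2\), firing \(w_2\) from \(\mu_2\) is again possible and produces \(\mu_2\oplus\delta\succeq\mu_2\). Iterating gives the infinite run \(\mmap_{\imath}\fire{w_1}\mu_1\fire{w_2}\mu_2\fire{w_2}\mu_2\oplus\delta\fire{w_2}\mu_2\oplus\delta\oplus\delta\fire{w_2}\cdots\), along which the markings only grow (so every copy of \(w_2\) stays enabled) and each copy of \(w_2\) fires at least one \(T^d_{\ap}\)-transition. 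Thus \(T^d_{\ap}\)-transitions fire infinitely often, i.e.\ this is an infinite \(\ap\)-run.

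The routine parts are the two syntactic readings of the atomic predicates and the index bookkeeping when splitting the run. The main subtlety I would be careful about is guaranteeing that the pumped segment \(\sigma_2\) genuinely contains a \(T^d_{\ap}\)-transition in the first direction: this is precisely why I select the two dominating markings from among the positions \(n_k\) where \(T^d_{\ap}\)-transitions fire, rather than applying Dickson's lemma to the whole marking sequence, since arbitrary dominating positions could produce a \(\sigma_2\) with no \(T^d_{\ap}\)-transition and so falsify the conjunct \(T^d_{\ap}\otimes\parikh(\sigma_2)\ge 1\). Symmetrically, in the converse direction it is exactly the assumption that \(w_2\) already carries such a transition that upgrades the pumped infinite run to an infinite \emph{\(\ap\)-run} rather than a mere infinite run.
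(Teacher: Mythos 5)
Your proof is correct and follows essentially the same route as the paper's: Dickson's lemma applied to a subsequence of markings aligned with dispatch transitions for the ``run implies formula'' direction, and monotonicity-based pumping of $\sigma_2$ for the converse. The only (cosmetic) difference is that you apply Dickson's lemma directly to the markings at the positions where $T^d_{\ap}$-transitions fire so that $\sigma_2$ begins with such a transition, whereas the paper first decomposes the run into blocks each ending in a $T^d_{\ap}$-transition and applies Dickson's lemma to the block boundaries; both devices serve the same purpose of guaranteeing the conjunct $T^d_{\ap}\otimes\parikh(\sigma_2)\geq 1$.
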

\begin{proof}
	Let us first give a few facts about $\Lambda_{t}$:
\begin{itemize}
        \item {\it Fact 0:} $\Lambda_t$ is polynomial in the size of $(N_{\ap}, \mmap_{\imath})$.
	\item {\it Fact 1:} \(T^{d}_{\ap}\otimes\parikh(\sigma_2)\geq 1\) implies that
		\(\sigma_2\in T_{\ap}^*\cdot T^{d}_{\ap}\cdot T_{\ap}^*\)
		because it requires that some transition of \(T^d_{\ap}\) is fired along
		\(\sigma_2\);
	\item {\it Fact 2:} $\mu_1\leq \mu_2$ implies
		the sequence of transition given by $\sigma_2$ can be fired
		over and over.
\end{itemize}
Let us now turn to the proof.

{\bf Only if}:
Let $\mmap_1$, $\mmap_2$, $w_1$ and $w_2$ be a valuation of $\mu_1$, $\mu_2$,
$\sigma_1$ and $\sigma_2$ respectively such that $\Lambda_t$ is satisfied.
Fact 1 shows that $w_2\neq\varepsilon$ and $\parikh(w_2)(t)>0$ for some $t\in
T^d_{\ap}$.  Then Fact 2 shows that
$\mmap_{\imath}\fire{w_1}\mmap_1\fire{w_2^{\omega}}$ is an infinite
$\ap$-run of $N_{\ap}$ and we are done.

{\bf If}:
Let $\rho$ be an infinite $\ap$-run of $(N_{\ap},\mmap_{\imath})$. 
By definition  of infinite $\ap$-run, $\rho$ can be written as 
\(\mmap_{0}\fire{w_0}\mmap_1 \dots \mmap_n\fire{w_n}\ldots \)
where $\mmap_0=\mmap_{\imath}$ and for each $k\geq 0$, we have $w_k\in T_{\ap}^*\cdot T_{\ap}^{d}$.
By Dickson's Lemma \cite{dic13}, 
there exists two indices $i<j$ in the above infinite run such that
$\mmap_i\preceq \mmap_j$.
Let $\sigma_1=w_0 \dots w_{i-1}$,
$\sigma_2=w_{i}\dots w_j$, $\mu_1=\mmap_i$ and $\mu_2=\mmap_{j+1}$.
Clearly \(\mu_1\leq \mu_2\). 
Also we have that
$\sigma_2\neq\varepsilon$ because some transition of $T^{d}_{\ap}$
is in each $w_{k}$, and hence $T^d_{\ap}\otimes \parikh(\sigma_2) \geq 1$.
Thus, every conjunction of $\Lambda_t$ is satisfied.
\end{proof}

\begin{proposition}
	Given an asynchronous program $\ap$, determining the existence of
	an infinite run is {\sc expspace}-complete.
\label{prop-expdecisionproceduretermi}
\end{proposition}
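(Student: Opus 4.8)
The plan is to establish both the upper and lower bounds separately, using the machinery already developed in the excerpt. For the upper bound, I would combine Lem.~\ref{lem:pntermination} and Thm.~\ref{thm-rp09}. First, by Lem.~\ref{lem:equivpn_infinite}, the asynchronous program $\ap$ has an infinite run if{}f the associated Petri net $(N_{\ap},\mmap_{\imath})$ has an infinite $\ap$-run, and this net is computable in polynomial time from $\ap$. Next, by Lem.~\ref{lem:pntermination}, the net has an infinite $\ap$-run if{}f $(N_{\ap},\mmap_{\imath})\models\Lambda_t$ for the specific path formula $\Lambda_t$ defined there. The crucial observation (Fact 0 in that lemma) is that $\Lambda_t$ is \emph{increasing}: its predicate consists of the conjunct $\mu_1\leq\mu_2$, which is exactly of the required monotone shape, together with the single transition predicate $T^d_{\ap}\otimes\parikh(\sigma_2)\geq 1$. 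I would therefore need to confirm that $\Lambda_t$ falls within the increasing fragment of Thm.~\ref{thm-rp09}; the second bullet of that theorem then gives an {\sc expspace} decision procedure. Composing the two polynomial-time reductions with the {\sc expspace} satisfiability check yields an {\sc expspace} upper bound for non-termination of $\ap$.

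For the lower bound, I would reduce from a known {\sc expspace}-hard Petri net problem. The natural candidate is Petri net coverability, which is {\sc expspace}-hard by Thm.~\ref{prop-cpltxy} and, by Lem.~\ref{lem:binaryisnotaproblem}, remains hard even for Boolean nets. The idea is to take a Boolean net instance and augment it so that the existence of an infinite run in the corresponding asynchronous program witnesses coverability. Concretely, I would build on the encoding of Fig.~\ref{fig-pnbound2aabound}: given a Boolean net $N$ and a marking to cover, I would force the simulating asynchronous program to be able to run forever precisely when the target marking is coverable—for instance by attaching a gadget that, once the covering marking is reached, can re-enable a distinguished transition indefinitely, producing an infinite $\ap$-run, whereas if the marking is never covered every run eventually gets stuck. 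Since the construction of Fig.~\ref{fig-pnbound2aabound} is already polynomial, the augmented reduction remains polynomial.

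The main obstacle I expect is on the lower-bound side: designing the gadget so that non-termination corresponds \emph{exactly} to reachability/coverability of the target, without introducing spurious infinite runs. The simulation of Fig.~\ref{fig-pnbound2aabound} allows the scheduler to fire transitions that are not actually enabled, after which the state $\mathtt{st}$ gets stuck in $(T\cup\set{\varepsilon})\times\set{\varepsilon}$ only upon genuine completion; I must ensure that an unsuccessful (partial) transition simulation does \emph{not} by itself yield an infinite run, and that the only source of unboundedly many productive dispatches is the repeated firing made possible once the goal is met. A cleaner alternative, which I would pursue if the direct gadget proves delicate, is to reduce from the termination/non-termination relationship already implicit in the reachability reduction sketched just before the theorem statement: there the asynchronous program built from $N^{\flat}$ reaches a specific configuration if{}f $\varnothing$ is reachable in $N^{\flat}$, and I can splice in a self-posting loop guarded by $\mathtt{st}=(t_r,\varepsilon)$ so that reaching that control state enables an infinite run. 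Verifying the correspondence in both directions—no infinite run without reaching the guard, and a guaranteed infinite run once it is reached—is the step requiring the most care.
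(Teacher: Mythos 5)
There are two genuine gaps here, one on each side of the bound.

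\textbf{Upper bound.} Your claim that $\Lambda_t$ is an increasing path formula is false under the paper's definition: an increasing path formula must imply $\mu_1\leq\mu_m$ \emph{and contain no transition predicate}, and $\Lambda_t$ contains the transition predicate $T^{d}_{\ap}\otimes\parikh(\sigma_2)\geq 1$. (Also, ``Fact 0'' of Lem.~\ref{lem:pntermination} only states that $\Lambda_t$ is polynomial in size, not that it is increasing.) As stated, the second bullet of Thm.~\ref{thm-rp09} does not apply, and the first bullet only gives a reduction to $\pn$ reachability, which is not known to be in {\sc expspace}. The missing step is a polynomial-time transformation of the instance: add a fresh place $p_w$ that receives a token each time a transition of $T^{d}_{\ap}$ fires, and replace the transition predicate by the marking predicate $(\mu_2-\mu_1)(p_w)>0$. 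The resulting formula is increasing and Thm.~\ref{thm-rp09} then yields the {\sc expspace} upper bound.

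\textbf{Lower bound.} The reduction from coverability you sketch does not go through, and the obstacle is more serious than ``requiring care.'' In the program of Fig.~\ref{fig-pnbound2aabound}, {\tt runPN} unconditionally reposts itself, so the asynchronous program \emph{always} has an infinite run regardless of whether the target marking is coverable; the same defect kills your alternative of splicing a self-posting loop guarded by ${\tt st}=(t_r,\varepsilon)$. The fair-termination reduction (Fig.~\ref{fig-asyncFTisPNreach}) escapes this only because fairness forces the {\tt guess} handler to be dispatched eventually; for plain non-termination there is no fairness to exploit, and the scheduler can simply dispatch {\tt runPN} forever. To make ``no infinite run unless the target is reached'' true you would have to bound the number of productive dispatches, which for coverability requires a doubly-exponential budget --- at which point you are essentially reconstructing Lipton's argument. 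The correct route (and the one the paper takes) is to reduce directly from the termination problem for Lipton's simple programs: a deterministic $2^{2^n}$-bounded counter machine of size $O(n)$ is simulated by a Petri net (and hence an asynchronous program) of size $O(n^2)$ that has an infinite execution if{}f the counter machine does.
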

\begin{proof}
	As expected our decision procedure relies on reductions to equivalent \(\pn\)
	problems.  We start by observing that the \(\pn\) \(N_{\ap}\) can
	be computed in time polynomial in the size of \(\ap\). 
	Lem.~\ref{lem:equivpn_infinite} shows that
	\(\ap\) has an infinite run if{}f \((N_{\ap},\mmap_{\imath})\) has an
	infinite \(\ap\)-run.  Next, Lem.~\ref{lem:pntermination} shows that
	determining whether \((N_{\ap},\mmap_{\imath})\) has an infinite \(\ap\)-run
	is equivalent to determining the satisfiability of
	\((N_{\ap},\mmap_{\imath})\models \Lambda_t\) where \(\Lambda_t\) can be computed
	in time polynomial in the size of \(N_{\ap}\). 
	The formula \(\Lambda_t\) is not an increasing path formula
	because it contains a transition predicate
	(\(T^{d}_{\ap}\otimes\parikh(\sigma_2)\geq 1\)). However the problem instance
	\((N_{\ap},\mmap_{\imath},\Lambda_t)\) can easily be turned into an
	equivalent instance \((N'_{\ap},\mmap'_{\imath},\Lambda'_t)\) that is
	computable in polynomial time and such that \(\Lambda'_t\) is a increasing
	path formula. This is accomplished by adding a place \(p_w\) to which a token
	is added each time some transitions of \(T^d_{\ap}\) is fired. Then it
	suffices to replace \(T^{d}_{\ap}\otimes\parikh(\sigma_2)\geq 1\) by
	\((\mu_2-\mu_1)(p_w)>0^{S}(p_w)\). It is routine to check that \(\Lambda'_t\)
	is a increasing path formula.

	Finally, the result of Thm.~\ref{thm-rp09} together with the fact that
	\(\Lambda'_t\) is an increasing path formula shows that the satisfiability of
	\((N'_{\ap},\mmap'_{\imath})\models \Lambda'_{t}\) can be determined in space
	exponential in the size of the input. Therefore we conclude that determining
	the existence of an infinite run in a given \(\ap\) has an {\sc expspace} upper bound.
	The {\sc expspace} lower bound follows by reduction from the termination of
	simple programs \cite{Lipton}.  Indeed, the construction of
	\cite{Lipton} (see also \cite{esparza-course}) shows how a deterministic
	$2^{2^n}$-bounded counter machine of size $O(n)$ can be simulated by a Petri
	net of size $O(n^2)$ such that the counter machine has an infinite
	computation if{}f the Petri net has an infinite execution and this construction
        is easily adapted to use asynchronous programs.
\end{proof}

\subsection{Fair Termination}

We now turn to fair termination.
\begin{lemma}
	Let $\ap$ be an asynchronous program and let \((N_{\ap},\mmap_{\imath})\) be
	an initialized \(\pn\) as given in Constr.~\ref{constr:pn}.
	Let $\Lambda_{\mathit{ft}}$ be the path formula given by
	\begin{gather*}
	\exists \mu_1,\mu_2,\mu_3\colon \exists \sigma_1,\sigma_2,\sigma_3\colon
	\mmap_{\imath}\fire{\sigma_1}\mu_1\fire{\sigma_2}\mu_2\fire{\sigma_3}\mu_3\\
		T_{\ap}\otimes\parikh(\sigma_1)\leq 0 \land \mu_2\leq\mu_3\land
		T^{d}_{\ap}\otimes\parikh(\sigma_3)\geq 1\\
		{\textstyle \bigwedge_{a\in\Sigma}}\Bigl(\mathbf{c}_a\otimes\parikh(\sigma_3)=0
	\rightarrow
	\bigl((\mathbf{p}_a-\mathbf{c}_a)\otimes\parikh(\sigma_2)=0 \land \mathbf{p}_a\otimes\parikh(\sigma_3)=0 \bigr)\Bigr)\notag
\end{gather*}
where $\mathbf{c}_a,\mathbf{p}_a\in\multiset{T_{\ap}}$
are s.t.
\(\mathbf{c}_a(t)=I(t)(a)\) and \(\mathbf{p}_a(t)=O(t)(a)\) for every \(t\in T_{\ap}\).
We have
\[
	(N_{\ap},\mmap_{\imath})\models\Lambda_{\mathit{ft}} \quad
	\mbox{if{}f} \quad (N_{\ap},\mmap_{\imath}) \mbox{ has a fair infinite run.}
\]
\label{lem:pnfairtermi}
\end{lemma}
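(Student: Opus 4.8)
The plan is to prove the two implications separately, but first to isolate the structural invariant of $N_{\ap}$ that reconciles the \emph{consumption}-based formula with the \emph{completion}-based fairness of Def.~\ref{def:infinite-pnruns}; I expect this to be the main obstacle. Since $\mmap_{\imath}=\multi{d_0}\oplus\mmap_0$ puts a single token on the global-state places $D$, and since in Constr.~\ref{constr:pn} every $t_c^{<}$ removes the unique $D$-token (locking control inside a widget) while the matching $t_c^{>}$ restores one, every reachable marking carries exactly one ``control'' token, either free on some $d\in D$ or locked in a single widget. Hence dispatches never overlap and starts/completions strictly alternate. Moreover, whenever $\mu\fire{\sigma}\mu'$ with $\mu\preceq\mu'$, the control token must occupy the \emph{same} $D$-place at $\mu$ and at $\mu'$ (the $D$-tokens sum to $0$ or $1$ and $\mu\preceq\mu'$ forbids losing the free token), so every dispatch started inside $\sigma$ also completes inside $\sigma$. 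In particular, for an increasing cycle $\sigma_3$ we obtain $\mathbf{c}_a\otimes\parikh(\sigma_3)=T^{d(a)}_{\ap}\otimes\parikh(\sigma_3)$ for every $a$, i.e.\ ``$a$ is not consumed in $\sigma_3$'' coincides with ``$a$ is not completed in $\sigma_3$'', and ``$a$ consumed infinitely often'' coincides with ``$a$ completed infinitely often''. Without this observation an increasing cycle could in principle pump a started-but-unfinished dispatch, breaking the correspondence.

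With this in hand I would record, in the style of the facts for $\Lambda_t$ in Lem.~\ref{lem:pntermination}: (i) $\mu_2\preceq\mu_3$ together with $\pn$ monotonicity lets $\sigma_3$ be iterated from $\mu_2$ forever; (ii) $T^{d}_{\ap}\otimes\parikh(\sigma_3)\geq1$ makes each iteration fire a completion, so the $\omega$-iteration is an infinite $\ap$-run; and (iii) the only transitions touching a place $a\in\Sigma$ are the starts $t_c^{<}$ (consuming it, counted by $\mathbf{c}_a$) and the widget transitions that post $a$ (producing it, counted by $\mathbf{p}_a$), so $\mathbf{c}_a\otimes\parikh(\sigma)$ and $\mathbf{p}_a\otimes\parikh(\sigma)$ are exactly the numbers of consumptions and productions of $a$ along $\sigma$.

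For the \textbf{if} direction I would take a valuation $\mmap_1,\mmap_2,\mmap_3,w_1,w_2,w_3$ of $\Lambda_{\mathit{ft}}$ and the run $\mmap_{\imath}\fire{w_1w_2}\mmap_2\fire{w_3^{\omega}}$, which is an infinite $\ap$-run by (i)--(ii). To check fairness, fix $a$. If $\mathbf{c}_a\otimes\parikh(w_3)\geq1$ then by the invariant $a$ is completed once per iteration, hence infinitely often, so its clause in Def.~\ref{def:infinite-pnruns} is vacuous. Otherwise $\mathbf{c}_a\otimes\parikh(w_3)=0$ and the matrix conjunct supplies $\mathbf{p}_a\otimes\parikh(w_3)=0$ and $(\mathbf{p}_a-\mathbf{c}_a)\otimes\parikh(w_2)=0$: place $a$ is then neither produced nor consumed along $w_3$ (constant during each iteration) and has zero net change along $w_2$, so $\mmap_2(a)=\mmap_1(a)$. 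Since the transient $w_1$ (whose role is precisely to drain the finitely-completed handlers, including the distinguished initial handler $a_0$ of Rem.~\ref{rmk-wlog}) leaves $\mmap_1(a)=0$ for every such $a$, we get $\mmap_2(a)=0$, so place $a$ is identically $0$ along the $\omega$-tail and empty infinitely often.

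For the \textbf{only if} direction I would start from a fair infinite run $\rho=\mmap_0\fire{t_0}\mmap_1\cdots$, split $\Sigma$ into the handlers $I$ completed infinitely often and $F$ completed finitely often, and first locate a stabilization index $N_0$ past the last completion of any $F$-handler. Past $N_0$ no $F$-handler is ever consumed, so its guarantee ``$\mmap_j(a)=0$ infinitely often'' forces it to be produced only finitely often and its place to become permanently $0$; thus beyond some $N_0$ every $F$-place is $0$ and frozen. I set $\sigma_1$ to the prefix up to $N_0$ and $\mu_1=\mmap_{N_0}$. On the suffix only $I$-handlers are completed, and control returns to $D$ infinitely often; choosing a window that contains a completion of \emph{every} handler of $I$ and applying Dickson's Lemma \cite{dic13} inside it yields $N_0\leq i<j$ with $\mmap_i\preceq\mmap_j$, control free at both, and all of $I$ completed between $i$ and $j$. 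With $\sigma_2$ the segment from $N_0$ to $i$, $\mu_2=\mmap_i$, $\sigma_3$ the segment from $i$ to $j$, $\mu_3=\mmap_j$, each conjunct is checked: $\mu_2\preceq\mu_3$ and $T^{d}_{\ap}\otimes\parikh(\sigma_3)\geq1$ are immediate; for $a\in I$ the cycle consumes $a$, so the antecedent fails; and for $a\in F$ the frozen-at-$0$ property gives $\mathbf{c}_a\otimes\parikh(\sigma_3)=\mathbf{p}_a\otimes\parikh(\sigma_3)=0$ and net-zero change along $\sigma_2$. The two delicate points are exactly this extraction---selecting the window so that the cycle is simultaneously increasing \emph{and} completes every $I$-handler (so no spurious net-zero constraint is imposed on handlers that do fire infinitely often)---and the stabilization argument turning ``empty infinitely often'' into ``eventually permanently empty'' for $F$-handlers; both rest on the single-control-token invariant established at the outset.
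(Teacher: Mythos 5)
Your overall architecture coincides with the paper's: the same reading of the conjuncts ($\mu_2\leq\mu_3$ gives a pumpable cycle, $T^{d}_{\ap}\otimes\parikh(\sigma_3)\geq 1$ makes the $\omega$-iteration an infinite $\ap$-run, the final conjunction encodes fairness), the same Dickson's-Lemma extraction of an increasing cycle from a window chosen so that every infinitely-dispatched handler completes inside it, and the same split of $\Sigma$ into finitely- and infinitely-dispatched handlers. Your explicit single-control-token invariant, used to reconcile the \emph{consumption}-based predicates ($\mathbf{c}_a$ counts firings of the $t_c^{<}$) with the \emph{completion}-based fairness of Def.~\ref{def:infinite-pnruns} ($T^{d(a)}_{\ap}$ consists of the $t_c^{>}$), is a point the paper silently elides, and making it explicit is a genuine improvement.

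There is, however, a concrete error in your reading of the first conjunct. Since $T_{\ap}$ is the all-ones vector over the transitions, $T_{\ap}\otimes\parikh(\sigma_1)\leq 0$ says $|\sigma_1|\leq 0$, i.e.\ it \emph{forces} $\sigma_1=\varepsilon$ and $\mu_1=\mmap_{\imath}$; the paper records this as Fact~(3) and explains that $\sigma_1$ is deliberately emptied because the logic of \cite{RP09} admits the inner-product transition predicates only on $\sigma_i$ with $i>1$. You instead treat $\sigma_1$ as a nonempty ``transient that drains the finitely-completed handlers.'' This breaks both directions. In your only-if direction the witness sets $\sigma_1$ to the prefix up to the stabilization index $N_0$, so $T_{\ap}\otimes\parikh(\sigma_1)=|\sigma_1|>0$ and the assignment simply does not satisfy $\Lambda_{\mathit{ft}}$; the paper instead folds the entire prefix $t_0\cdots t_m w_{i_0}\cdots w_{i_{k-1}}$ into $\sigma_2$ and discharges $(\mathbf{p}_a-\mathbf{c}_a)\otimes\parikh(\sigma_2)=0$ for a finitely-dispatched $a$ by computing the net change of place $a$ from $\mmap_{\imath}$ (using Rmk.~\ref{rmk-wlog} to get $\mmap_{\imath}(a)=0$), not from $\mmap_{N_0}$. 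In your if direction, since $w_1=\varepsilon$ the claim that ``$w_1$ leaves $\mmap_1(a)=0$, including for $a_0$'' is false: $\mmap_1=\mmap_{\imath}=\multi{d_0,a_0}$, so $\mmap_1(a_0)=1$. For $a\neq a_0$ your conclusion $\mu_2(a)=0$ still follows, but from the initial marking rather than from any draining; the initial handler $a_0$ genuinely needs a separate argument (it is consumed exactly once and never posted, so one must check it cannot land in the branch where $\mathbf{c}_{a_0}\otimes\parikh(\sigma_3)=0$ together with the consequent). The repair is mechanical, but the mechanism you invoke is not one the formula permits.
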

\begin{proof}
As for termination (see Lem.~\ref{lem:pntermination}) we start with a 
few facts about $\Lambda_{\mathrm{ft}}$:
\begin{enumerate}
	\item For the sake of clarity we used an implication in $\Lambda_{\mathrm{ft}}$. However the equivalences
		$A\rightarrow B\equiv \neg A \lor B$ and $\mathbf{c}_a\otimes\parikh(\sigma_3)\neq0 \equiv
		\mathbf{c}_a\otimes\parikh(\sigma_3)>0$ shows that the above predicate is indeed a positive
		boolean combination of atomic predicates, hence $\Lambda_{\mathrm{ft}}$ is indeed
		a path formula.
        \item $\Lambda_{\mathrm{ft}}$ is polynomial in the size of the \(\pn\).
	\item \(T_{\ap}\otimes\parikh(\sigma_1)\leq 0\) ensures that $\sigma_1=\varepsilon$,
		hence that $\mu_1=\mmap_{\imath}$. The reason for this is to be able to use
		the more expressive transition predicate starting right from the initial marking.
	\item $\mu_2\leq \mu_3$ implies
		the sequence of transition given by $\sigma_3$ can be fired
		over and over (by monotonicity).
	\item \(T^{d}_{\ap}\otimes\parikh(\sigma_3)\geq 1\) ensures that
		\(\sigma_3\in T_{\ap}^*\cdot T^{d}_{\ap}\cdot T_{\ap}^*\) as for
		termination.
	\item The last conjunction ensures that each $a\in\Sigma$ is treated fairly.
		Intuitively, it says that if $\sigma_3$ does not dispatch $a\in\Sigma$
		(given by $\mathbf{c}_a\otimes\parikh(\sigma_3)=0$) then it must hold that $(i)$
		$a$ has been posted as many times as it has been dispatched along
		$\sigma_2$ (given by
		$(\mathbf{p}_a-\mathbf{c}_a)\otimes\parikh(\sigma_2)=0$),
                and $(ii)$
		$\sigma_3$ is not posting any call to $a$ (given by
		$\mathbf{p}_a\otimes\parikh(\sigma_3)=0$).
                Together, this means that there is no pending call to
                $a$ along the execution.
\end{enumerate}
We now turn to the proof.

{\bf Only if}:
Let $\mmap_{\mu_2}$, $\mmap_{\mu_3}$, $w_2$ and $w_3$ be a valuation of $\mu_2$, $\mu_3$,
$\sigma_2$ and $\sigma_3$ respectively such that $\Lambda_{\mathit{ft}}$ is
satisfied.  Note that by Fact (3) we know that since $\Lambda_{\mathit{ft}}$
holds we have $\sigma_1=\varepsilon$. Hence we find that
$\mmap_{\imath}\fire{w_2}\mmap_{\mu_2}\fire{w_3}\mmap_{\mu_3}$ where
$w_3\in T_{\ap}^*\cdot T^d_{\ap}\cdot
T_{\ap}^*$ by Fact (5).
Then Fact (4) shows that the run \(\rho\) given by
\(\mmap_{\imath}\fire{w_2}\mmap_{\mu_2}\fire{w_3^{\omega}}\) is an infinite
\(\ap\)-run of \((N_{\ap},\mmap_{\imath})\).

Let us now show $\rho$ is also a fair infinite run. We first rewrite $\rho$ as
\(\mmap_{0}\fire{t_0}\mmap_1\fire{t_1} \dots
\fire{t_{i-1}}\mmap_i\fire{t_i}\dots\) where \(\mmap_{0}=\mmap_{\imath}\),
\(w_2=t_0\dots t_{i-1}\) and \(w_3^{\omega}=t_i t_{i+1}\dots\) So we have that
$\mmap_i=\mmap_{\mu_2}$.

Our final step is to show that \(\rho\) matches a fair infinite
 run in \((N_{\ap},\mmap_{\imath})\). By hypothesis,
$\Lambda_{\mathrm{ft}}$ holds, so each implication holds.  
Fix $a\in\Sigma$.
We examine what the satisfaction of the implication entails.

(a) Assume that the left hand side of the implication does not hold.
This means that $w_3$ fires
some $t\in T^{d(a)}_{\ap}$, that is, some \(T^{d(a)}_{\ap}\)
occurs infinitely often along $w_3^{\omega}$, and the run is fair
w.r.t.\ $a$.

(b) If the left hand side of the implication holds, it means that
no \(T^{d(a)}_{\ap}\) is fired along $w_3$, hence
\(t_i\in T^{d(a)}_{\ap}\) 
holds for finitely many $i$'s in \(\rho\).
Because the implication is satisfied, Fact~(6) shows that, along $w_2$,
$a$ is posted as many times as it is dispatched.

We conclude from Remark~\ref{rmk-wlog} and $\mmap_{\imath}(a)=0$, that
\(\mmap_i(a)=\mmap_{\mu_2}(a)=0\), hence that for every position \(j\geq i\) we
have \(\mmap_{j}(a)=0\), namely \(\mmap_{j}(a)=0\) holds for infinitely many
$j$'s.

We conclude from the above cases that for every $a\in\Sigma$, we have
that if $t_i\in T^{d(a)}_{\ap}$ for finitely many $i$'s then
$\mmap_j(a)=0$ for infinitely many \(j\)'s, namely $\rho$ is a fair infinite 
run and we are done.

{\bf If}: Let \(\rho=\mmap_{0}\fire{t_0}\mmap_1\fire{t_1} \dots
\fire{t_{i-1}}\mmap_i\fire{t_i}\dots\) where \(\mmap_{0}=\mmap_{\imath}\) be a
infinite fair run of $(N_{\ap},\mmap_{\imath})$.
By definition we find
that $\rho$ is an infinite $\ap$-run and that for all $a\in\Sigma$,
if $t_i\in T^{d(a)}_{\ap}$ for finitely many $i$'s then $\mmap_j(a)=0$
for infinitely many $j$'s.
Define $S$ to be the set \(\set{a\in\Sigma\mid t_i\in
T^{d(a)}_{\ap} \text{ for finitely many $i$'s}}\).
Let $m$ denote a positive integer such that for all $n\geq m$
we have 
\(t_{n}\in T_{\ap}\setminus \bigcup_{a\in S}
T^{d(a)}_{\ap}\).  
Observe that, because the run is fair, for every
\(a\in S\) and for all \(n\geq m\), we have \(\mmap_{n}(a)=0\).

Let us now rewrite \(\rho\) as
\(\mmap_0\fire{t_0}\mmap_1\dots
\mmap_{m}\fire{t_m}\mmap_{i_0}\fire{w_{i_0}}\mmap_{i_1}\fire{w_{i_1}}\dots\)
such that \(\mmap_0=\mmap_{\imath}\) and for all $a\in\Sigma\setminus S$
some \(T^{d(a)}_{\ap}\) occurs in $w_{i_j}$ for
all \(j\geq 0\).

Now using Dickson's Lemma \cite{dic13} over the infinite sequence
$\mmap_{i_0},\mmap_{i_1},\dots,\mmap_{i_n},\dots$ of markings defined above we find
that there exists $\ell>k$ such that $\mmap_{i_k}\preceq\mmap_{i_{\ell}}$.

Define $\sigma_1=\varepsilon$, $\sigma_2=t_0\dots t_{m} w_{i_0}\dots
w_{i_{k-1}}$, $\sigma_3=w_{i_k}\dots w_{i_{\ell-1}}$, $\mu_1=\mmap_{\imath}$,
$\mu_2=\mmap_{i_k}$ and $\mu_3=\mmap_{i_{\ell}}$. Clearly \(\mu_2\leq \mu_3\)
and \(T_{\ap}\otimes\parikh(\sigma_1)\leq 0\).
Also, some transition of $T^d_{\ap}$ occurs in \(\sigma_3\) by
definition of \(w_{i_j}\), hence we find that
\(T^d_{\ap}\otimes\parikh(\sigma_3)\geq 1\).  

Let \(a\in\Sigma\). 
The implication
$\mathbf{c}_a\otimes\parikh(\sigma_3)=0
\rightarrow \bigl((\mathbf{p}_a-\mathbf{c}_a)\otimes\parikh(\sigma_2)=0 \land
\mathbf{p}_a\otimes\parikh(\sigma_3)=0 \bigr)$
is divided into two cases.

First, if \(a\in S\) then we find that no \(T^{d(a)}_{\ap}\) occurs
after \(t_m\). In particular no \(T^{d(a)}_{\ap}\) occurs in \(\sigma_3\)
and the left hand side of the implication holds.
We now show that so does the right hand side.
We showed above that \(\mmap_{n}(a)=0\) for every \(n\geq m\).  
By Rmk.~\ref{rmk-wlog}, initially
\(\mmap_{\imath}=\multi{a_0}\) and \(a_0\) never reappears in the
task buffer.
So, we find that
\((\mathbf{p}_a-\mathbf{c}_a)\otimes\parikh(\sigma_2)=0 \) holds.  Also
\(\mathbf{p}_a\otimes\parikh(\sigma_3)=0 \) holds because \(\mmap_n(a)=0\) for
each \(n\geq m\) and no \(T^{d(a)}_{\ap}\) occurs in \(\sigma_3\), hence no
post of \(a\) can occur in \(\sigma_3\).

Second, if \(a\in\Sigma\setminus S\) then
we find that some \(T^{d(a)}_{\ap}\) occurs along
\(\sigma_3\) by definition of the \(w_{i_j}\)'s.
Therefore the implication evaluates to true because its left hand side evaluates
to false.

This concludes the proof since every conjunction of
$\Lambda_{\mathrm{ft}}$ is satisfied.
\end{proof}

\begin{remark}
$\Lambda_{\mathit{ft}}$ is not an increasing path formula because we cannot
conclude it implies $\mu_1\leq\mu_3$. Since $\sigma_1=\varepsilon$, for
$\mu_1\leq\mu_3$ to hold we must have $\mmap_{\imath}\leq\mu_3$. 
Because of Rmk.~\ref{rmk-wlog} it is clearly the case that \(\mmap_{\imath}\nleq\mu_3\) since
\(\mmap_{\imath}=\multi{a_0}\) and \(a_0\) is first dispatched and never posted eventually.
\end{remark}

We now show a lower bound on the fair termination problem.
Given an initialized Boolean $\pn$ $(N=(S,T,F),\mmap_0)$ and a place
$p\in P$,
we reduce the problem of checking if there exists a
reachable marking with no token in place $p$ (which is recursively equivalent
to the reachability problem of a marking \cite{hack76}) if{}f an asynchronous
program constructed from the \(\pn\)
has a fair infinite run.
For the sake of clarity, let us index $S=\set{p_1,\dots, p_{\card{S}}}$ and 
assume that $p_1$ plays the role
of place $p$ in the above definition.

Fig.~\ref{fig-asyncFTisPNreach} shows an outline of the reduction from 
the reachability problem for $\pn$ to
the fair termination problem for asynchronous programs. 
The reduction is similar to the simulation shown in 
Fig.~\ref{fig-pnbound2aabound}. In particular,
we again define a global state ${\tt st}$, a procedure ${\tt runPN}$
to fire transitions, and $\card{S}$ procedures, one for
each $p_i\in S$.

The program has three global variables, two booleans ${\tt terminate}$ and
${\tt p\_1\_is\_null}$ and the variable ${\tt st}$ which ranges over a finite
subset of $(T\cup\set{\varepsilon})\times S^*$. 
The program has $\card{S}+3$ procedures: one procedure
for each $p_i\in S$, ${\tt main}$, ${\tt guess}$ and ${\tt runPN}$.  The role
of ${\tt main}$ is to initialize the global variables, and to post
${\tt runPN}$ and ${\tt guess}$.  
As before, the role of ${\tt runPN}$ is to simulate the transitions of
the \(\pn\).
The role of ${\tt guess}$ is related to 
checking whether there exists some marking
$\mmap\in\fire{\mmap_0}$ such that $\mmap(p_1)=0$, and is explained below.

The program of Fig.~\ref{fig-asyncFTisPNreach} preserves the same invariant as
the program of Fig.~\ref{fig-pnbound2aabound} and is as follows. 
Whenever the program state is such that ${\tt st}$ coincides
with $(t,\varepsilon)$ for some $t\in T\cup\set{\epsilon}$ we have that the
multiset $\mmap$ given by the pending instances to handler $p\in S$
is such that $\mmap\in\fire{\mmap_0}$ and there exists $w\in T^*$ such that
$\mmap_0\fire{w\cdot t}\mmap$.

We now explain the role played by procedure ${\tt guess}$ and the
variables ${\tt p\_1\_is\_null}$ and ${\tt terminate}$.  
After the dispatch of ${\tt main}$, ${\tt guess}$ is pending.  
As long as ${\tt guess}$ does not run the program behaves exactly like
the program of Fig.~\ref{fig-pnbound2aabound}.
That is, ${\tt runPN}$ selects a transition which, if enabled, fires. Once the
firing is complete ${\tt runPN}$ selects a transition, and so on. 
Now consider the dispatch of ${\tt guess}$ which must eventually occur
by fairness. 
It sets ${\tt p\_1\_is\_null}$ to true.  
This prevents ${\tt runPN}$ to repost itself, hence to select a
transition to fire. 
So the dispatch of ${\tt guess}$ stops the simulation. 
Now we will see that if the program has an infinite run then
the dispatch of ${\tt guess}$ has to occur in a configuration where $(i)$
\({\tt st}\in (T\cup\set{\varepsilon})\times \set{\varepsilon}\) and $(ii)$ the
marking \(\mmap\) corresponding to the current configuration is such that
$\mmap(p_1)=0$.  For $(i)$, we see that if the precondition of ${\tt st}$ does
not equal \(\varepsilon\) then ${\tt terminate}$ is set to true in \({\tt guess}\), hence every
dispatch that follows does not post, and the program eventually
terminates.  For $(ii)$, suppose that {\tt guess} runs and that in the current
configuration there is a pending instance to \(p_1\).  By fairness we find
that eventually \(p_1\) has to be dispatched.  Since ${\tt guess}$ has set
${\tt p\_1\_is\_null}$ to true we have that the dispatch of \(p_1\) sets
{\tt terminate} to true and the program will eventually terminate following
the same reasoning as above. So if the program has a fair infinite run then it
cannot have any pending instance of handler \(p_1\) after the dispatch of ${\tt
guess}$. The rest of the infinite run looks like this.  After the dispatch of
${\tt guess}$ we have that ${\tt runPN}$ is dispatched at most once.  Every
dispatch of a \(p_i\) for $i\in\set{2,\dots,\card{S}}$ will simply repost
itself since ${\tt st}$ has an empty precondition and the value of ${\tt
terminate}$ is false. This way we have a run \(\rho\) with infinitely many dispatches and no 
effect: \(\rho\) leaves the program in the exact same configuration that
corresponds to a marking $\mmap\in\fire{\mmap_0}$ such that $\mmap(p_1)=0$.
Notice that if current configuration of the program corresponds to the marking
$\mmap=\varnothing$ we have that $\mmap(p_1)=0$ but the program terminates. We
can avoid this undesirable situation by adding one more place $p^g$ to the
$\pn$ such that it is marked initially and no transition is connected to $p^g$.

Let us now turn to the other direction. 
 Suppose there exists $w\in T^*$ such that
$\mmap_{\imath}\fire{w}\mmap$ with $\mmap(p_1)=0$. The infinite fair run of the
asynchronous program has the following form. The invariant shows that the
program can simulate the firing of $w$ and ends up in a configuration with no
pending instance to handler \(p_1\) and such that the precondition of ${\tt st}$ is
$\varepsilon$.  Then ${\tt guess}$ is dispatched followed by a fair infinite 
sequence of dispatch for \(p_i\) where $i\in\set{2,\dots,\card{S}}$.
Because of ${\tt st}$ the dispatch of \(p_i\) has no effect but reposting
\(p_i\). So we have a fair infinite run.

This shows that the fair termination problem is polynomial-time equivalent to
the Petri net reachability problem.

The reduction also suggests that finding an increasing path formula
for fair termination will be non-trivial, 
since it would imply that Petri net reachability is in {\sc expspace}.

\begin{figure*}[h]
  \centering
  \hspace{\stretch{1}}
    \begin{minipage}{.5\textwidth}
      \begin{tabbing}
\ \ \ \ \=\ \ \ \=\ \ \ \=\\
{\it global} {\tt st}, {\tt p\_1\_is\_null}, {\tt terminate};\\
\\
{\tt main}() \{\\
\>  {\tt st}\(=(\varepsilon,\varepsilon)\);\\
\>  {\tt p\_1\_is\_null}={\tt false};\\
\>  {\tt terminate}={\tt false};\\
\\
\>  {\tt post} {\tt runPN}();\\
\>  {\tt post} {\tt guess}();\\
\}\\
\\
{\tt guess}() \{\\
\> {\tt p\_1\_is\_null}={\tt true};\\
\> if (\({\tt st}\notin (T\cup\set{\varepsilon})\times \set{\varepsilon} \))\{\\
\>\>    {\tt terminate}={\tt true};\\
\>  \}\\
\}\\
\\
{\tt runPN}() \{\\
\>  if {\tt p\_1\_is\_null}=={\tt false} \{\\
\>\>if (\texttt{st} \(\in (T\cup\set{\varepsilon})\times \set{\varepsilon}\)) \{\\
\>\>\>      pick \(t'\in T\) non det.;\\
\>\>\>			\texttt{st}=(\(t',\hat{I}(t')\));\\
\>\>    \}\\
\>\>		{\tt post} \texttt{runPN}();\\
\>  \}\\
\}\\
\\
Initially: \(\mmap_{0} \oplus \multi{{\tt main}}\)
      \end{tabbing}
    \end{minipage}%
  \hspace{\stretch{1}}
    \begin{minipage}{.45\textwidth}
			     \begin{tabbing}
\ \ \ \ \=\ \ \ \=\ \ \ \=\ \ \ \=\ \ \ \=\ \ \ \=\\
\(p_1\)() \{\\
\>if \texttt{p\_1\_is\_null}==\texttt{true} \{\\
\>\>\texttt{terminate}=\texttt{true};\\
\>  \} else \{\\
\>\>	if \texttt{st}==(\(t, p_1\cdot w'\)) \{\\
\>\>\>	\texttt{st}=(\(t, w'\));\\
\>\>\>  if \(w'\)==\(\varepsilon\) \{\\
\>\>\>\>      for each \(j \in \set{1,...,\card{S}}\) do \{ \\
\>\>\>\>\>       if \(O(t)(p_j)>0\) \{\\
\>\>\>\>\>\>        {\tt post} $p_j$();\\
\>\>\>\>\>       \}\\
\>\>\>\>       \}\\
\>\>\>    \} \\
\>\>   \} else \{\\
\>\>\>		if \texttt{terminate} == \texttt{false} \{\\
\>\>\>\>        {\tt post} \(p_1\)();\\
\>\>\>      \}\\
\>\>    \}\\
\>  \}\\
\}\\
\\
\(p_i\)() \{ // for \(i \in \set{2,...,\card{S}}\)\\
\>if \texttt{st}==(\(t, p_i\cdot w'\)) \{\\
\>\>  \texttt{st}=(\(t, w'\));\\
\>\>    if \(w'\)==\(\varepsilon\) \{\\
\>\>\>      for each \(j \in \set{1,...,\card{S}}\) do \{ \\
\>\>\>\>       if \(O(t)(p_j)>0\) \{\\
\>\>\>\>\>        {\tt post} $p_j$();\\
\>\>\>\>       \}\\
\>\>\>       \}\\
\>\>    \}\\
\>  \} else \{\\
\>\>	if \texttt{terminate} == \texttt{false} \{\\
\>\>\>      {\tt post} \(p_i\)();\\
\>\>    \}\\
\>  \}\\
\}
\end{tabbing}
    \end{minipage}%
  \hspace{\stretch{1}}
	\caption{Let \((N=(S,T,F),\mmap_0)\) be an initialized Boolean $\pn$ such that
	$p_1\in S$ and \(\forall t\in T\colon\card{I(t)}>0\).
	\(\exists \mmap\in\fire{\mmap_0}\colon \mmap(p_1)=0\) if{}f
	the asynchronous program has a fair infinite execution.}
	\label{fig-asyncFTisPNreach}
\end{figure*}

\begin{proposition}
Given an asynchronous program $\ap$, determining the existence of a
fair infinite run is polynomial-time equivalent to
the reachability problem for \(\pn\).
Hence, it is {\sc expspace}-hard
and can be solved in 
non-primitive recursive space. 
\label{prop-decisionprocedurefairtermi}
\end{proposition}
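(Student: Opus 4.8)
The plan is to prove the two directions of the polynomial-time equivalence separately and then read off the stated complexity bounds from the known complexity of Petri net reachability. Almost all of the technical content is already in hand, so the proof is largely a matter of composing polynomial-time reductions and checking that each one runs in polynomial time.

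For the direction reducing fair termination to reachability, I would simply chain the results already established. By Lemma~\ref{lem:equivpn_infinite} one computes from $\ap$, in polynomial time, an initialized net $(N_{\ap},\mmap_{\imath})$ such that $\ap$ has a fair infinite run if{}f $(N_{\ap},\mmap_{\imath})$ has a fair infinite run. By Lemma~\ref{lem:pnfairtermi} the latter holds if{}f $(N_{\ap},\mmap_{\imath})\models\Lambda_{\mathit{ft}}$, and by Fact~(2) of that lemma $\Lambda_{\mathit{ft}}$ is a path formula of size polynomial in $N_{\ap}$. Finally, the first item of Theorem~\ref{thm-rp09} reduces satisfiability of a path formula to $\pn$ reachability in polynomial time. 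Composing these three reductions yields a single polynomial-time reduction from fair termination to reachability.

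For the reverse direction I would invoke the construction of Figure~\ref{fig-asyncFTisPNreach}, whose correctness is argued in the text preceding the statement: from a Boolean initialized net $(N,\mmap_0)$ with $p_1\in S$ and $\card{I(t)}>0$ for every $t$, it builds in polynomial time an asynchronous program that has a fair infinite run if{}f there is a marking $\mmap\in\fire{\mmap_0}$ with $\mmap(p_1)=0$. To turn this into a reduction from the \emph{general} reachability problem I would (i) pass from an arbitrary reachability instance to a Boolean one in polynomial time via Lemma~\ref{lem:binaryisnotaproblem}(2); (ii) normalize so that every transition has a nonempty precondition (add one auxiliary place carrying a self-loop on each transition); and (iii) recast reachability of a target marking as the question ``is some reachable marking empty on a distinguished place,'' using the (polynomial-time) reduction underlying Hack's result \cite{hack76}. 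Feeding the resulting net into Figure~\ref{fig-asyncFTisPNreach} completes the reduction, so that reachability reduces in polynomial time to fair termination.

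Together the two reductions give the polynomial-time equivalence, and the complexity bounds are then immediate: {\sc expspace}-hardness follows because $\pn$ reachability is already {\sc expspace}-hard (Theorem~\ref{prop-cpltxy}, item~2) and the equivalence is polynomial-time, while the non-primitive-recursive space upper bound follows from the best known upper bounds for reachability \cite{kosaraju82,Meyer,Mayr81,Lambert92} transported back through the first reduction. I expect the only genuinely delicate point to be the correctness of the reverse construction --- in particular the interplay of the \texttt{guess} handler, the \texttt{terminate} flag, and the fairness condition that together force \texttt{guess} to be dispatched exactly at a configuration where $p_1$ is empty, and the fact that step~(iii) must encode \emph{exact} reachability rather than mere coverability (which is precisely where Hack's reduction does the real work). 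Since the correctness of Figure~\ref{fig-asyncFTisPNreach} is already argued in the accompanying discussion, the residual work for the proposition itself is the routine bookkeeping of verifying that each reduction is polynomial and composing them.
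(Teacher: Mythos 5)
Your proposal is correct and follows essentially the same route as the paper: the upper-bound direction is the identical chain Lemma~\ref{lem:equivpn_infinite} $\to$ Lemma~\ref{lem:pnfairtermi} $\to$ Theorem~\ref{thm-rp09}, and the lower-bound direction is the construction of Figure~\ref{fig-asyncFTisPNreach} combined with Hack's equivalence between reachability and ``some reachable marking has zero tokens in a distinguished place.'' The only difference is that you spell out the normalization steps (Booleanization via Lemma~\ref{lem:binaryisnotaproblem}, nonempty preconditions, and the explicit appeal to \cite{hack76}) slightly more explicitly than the paper's proof text does, which is a welcome clarification rather than a departure.
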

\begin{proof}
	As in Prop.~\ref{prop-expdecisionproceduretermi} our decision procedure
	relies on reductions to equivalent \(\pn\) problems.
	Define \(N_{\ap}\) to be the \(\pn\) given by \(N_{\ap}(\pnfamily)\).
	Lem.~\ref{lem:equivpn_infinite} shows that \(\ap\) has a fair infinite run
	if{}f so does \((N_{\ap},\mmap_{\imath})\).
	Next, Lem.~\ref{lem:pnfairtermi} shows that determining whether \( (N_{\ap},\mmap_{\imath})\)
	has a fair infinite run is equivalent to determining
	the satisfiability of \((N_{\ap},\mmap_{\imath})\models\Lambda_{\mathit{ft}}\) where
	\(\Lambda_{\mathit{ft}}\) is computable in time polynomial in the size of \(N_{\ap}\).
  Finally, Th.~\ref{thm-rp09} shows that the satisfiability
  of \((N_{\ap},\mmap_{\imath})\models\Lambda_{\mathit{ft}}\) is reducible
	to a reachability problem for $\pn$. The best known upper bounds for the
	reachability problem in $\pn$ take non-primitive recursive space. 
	Therefore, we conclude that determining the existence
	of a fair infinite run in a given \(\ap\) can be solved in non-primitive recursive space.

	The lower bound is a consequence of (1) the reduction from the reachability
	problem for \(\pn\) to the fair termination problem for asynchronous program
	given at Fig.~\ref{fig-asyncFTisPNreach} and (2) the {\sc expspace} lower
	bound for the reachability problem for \(\pn\). 
\end{proof}

\subsection{Fair starvation}

Recall that the fair starvation property states that there is no pending
handler instance that is starved (i.e. never leaves the task buffer) along any
fair infinite run.

In order to solve the fair starvation problem, we first define
Constr.~\ref{constr:pnfs} which modifies Constr.~\ref{constr:pn} by introducing
constructs specific to the starvation problem.  In what follows, we assume that
the assumption of Rmk.~\ref{rmk-wlog} holds.

We first give some intuition. A particular pending instance of handler \(a\) starves
if there exists a fair infinite execution such that from some point in time ---
call it \(\dag\) --- there exists an instance of handler \(a\) in the task
buffer and it never leaves it.
Because the run is fair and there exists at least one instance of handler \(a\) in the task buffer, we find that
\(a\) is going to be dispatched infinitely often. In this case, a particular instance of handler \(a\) never
leaves the task buffer if{}f each time a dispatch to \(a\) occurs the task buffer contains two or more instances of \(a\).

In order to capture infinite fair runs of an asynchronous program that starves
a specific handler \(a\), we modify the Petri net construction as follows.  The
\(\pn\) has two parts: the first part simulates the asynchronous program as
before, and the second part which also simulated the asynchronous program
ensures that an instance of handler \(a\) never leaves the task buffer. 
In order to ensure that condition, the Petri net simply requires that any dispatch of \(a\)
requires at least {\em two} pending instances of \(a\) rather than just one (as
in normal simulation), and the dispatch transition consumes one instance of
\(a\) and puts back the second instance. The Petri net non-deterministically
transitions from the first part of the simulation to the second. The transition
point serves as a guess of time point \(\dag\) from which the task buffer
always contains at least pending instance of handler \(a\).  We now formalize
the intuition.

\begin{construction}[Petri net for fair starvation]\label{constr:pnfs}
Let \(\ap=(D,\Sigma,G,R,d_0,\mmap_0)\) be an asynchronous program.  
Let \(\pnfamily^{\spadesuit}=\set{N^{\spadesuit}_c}_{c\in\mathfrak{C}}\) and
\(N^{\spadesuit}_c=(S^{\spadesuit}_c,T^{\spadesuit}_c,F^{\spadesuit}_c)\) be an adequate
family of \emph{widgets}. 

Let $a\in\Sigma$. 
Define \(\mathfrak{C}^{a}\) to be the set \(\mathfrak{C}\cap (D\times\set{a}\times D)\)
and \((N^{a}_{\ap}(\pnfamily^{\spadesuit}),\mmap'_{\imath})\)
to be an initialized \(\pn\) where
(1) $N^{a}_{\ap}(\pnfamily^{\spadesuit})=(S_{\ap}, T_{\ap}, F_{\ap})$ is given as follows:
		\begin{itemize}
			\item $S_{\ap}=D\cup\Sigma\cup \bigcup_{c\in\mathfrak{C}} S^{\spadesuit}_c\cup \set{p_f,p_{\infty}}$
			\item \(T_{\ap}=\set{t^{f/\infty}}\cup \bigcup_{c\in\mathfrak{C}} T^{\spadesuit}_c
				\cup \set{t_c^{<}}_{c\in \mathfrak{C}\setminus\mathfrak{C}^{a}} \cup\set{t_c^{<f}, t_c^{<\infty}}_{c\in \mathfrak{C}^{a}}\cup \set{t_c^{>}}_{c\in\mathfrak{C}} \)
			\item $F_{\ap}$ is given by
				\begin{align*}
						F_{\ap}(t^{f/\infty})&=\tuple{\multi{p_f},\multi{p_{\infty}}}\\
						F_{\ap}(t_c^{<})&=\tuple{\multi{d_1,b},\multi{(\mathit{begin},c)}} &c=(d_1,b,d_2)\in\mathfrak{C}\setminus\mathfrak{C}^{a}\\
						F_{\ap}(t_c^{<f})&=\tuple{\multi{d_1,a,p_f},\multi{(\mathit{begin},c),p_f}} & c=(d_1,a,d_2)\in\mathfrak{C}^{a}\\
						F_{\ap}(t_c^{<\infty})&=\tuple{\multi{d_1,a,a,p_{\infty}},\multi{(\mathit{begin},c),a,p_{\infty}}} & c=(d_1,a,d_2)\in\mathfrak{C}^{a}\\
						F_{\ap}(t)&=F^{\spadesuit}_c(t)
                                                & t\in T^{\spadesuit}_c\\
						F_{\ap}(t_c^{>})&=\tuple{\multi{(\mathit{end},c)}, \multi{d_2}} & c=(d_1,b,d_2)\in\mathfrak{C}
				\end{align*}
		\end{itemize}
		and (2) $\mmap'_{\imath}=\multi{d_0,p_f}\oplus \mmap_0$.
\end{construction}

In an execution of the \(\pn\), the occurence of transition \(t^{f/\infty}\)
corresponds to the Petri net's transition from the first mode of
simulation to the second, i.e., the guess of the point \(\dag\) in time 
from which an instance of \(a\) never leaves the task buffer.

In what follows we use the notation \(N^{a}_{\ap}\) to denote
an adequate family \(N^{a}_{\ap}(\pnfamily^{\spadesuit})\).

\begin{lemma}
	Let $\ap$ be an asynchronous program and let
	\(\mathcal{N}=\set{N_c}_{c\in\mathfrak{C}}\) be an adequate family.
	Define \((N_{\ap},\mmap_{\imath})\) to be the initialized \(\pn\) \( (N_{\ap}(\mathcal{N}),\mmap_{\imath})\) as in Constr.~\ref{constr:pn} and
	given \(a\in\Sigma\) define \( (N^{a}_{\ap},\mmap'_{\imath})\) to be the
	initialized \(\pn\) \( (N^{a}_{\ap}(\mathcal{N}),\mmap'_{\imath})\) 
	as in Constr.~\ref{constr:pnfs}
	Let the path formula
$\Lambda^{a}_{\mathit{fs}}$ given by 
\begin{gather*}
\exists \mu_1,\mu_2,\mu_3 \exists \sigma_1,\sigma_2,\sigma_3\colon \mmap_{\imath}\fire{\sigma_1}\mu_1\fire{\sigma_2}\mu_2\fire{\sigma_3}\mu_3\\
	T_{\ap}\otimes\parikh(\sigma_1)\leq 0 \land \mu_2\leq\mu_3\land
	T^{d(a)}_{\ap}\otimes\parikh(\sigma_3)\geq 1 \land
	\multi{t^{f/\infty}}\otimes\parikh(\sigma_2)>0\\
	{\textstyle \bigwedge_{b\in\Sigma}}\Bigl(\mathbf{c}_b\otimes\parikh(\sigma_3)=0
	\rightarrow
	\bigl((\mathbf{p}_b-\mathbf{c}_b)\otimes\parikh(\sigma_2)=0 \land \mathbf{p}_b\otimes\parikh(\sigma_3)=0 \bigr)\Bigr)
\end{gather*}
where \(\mathbf{c}_b(t)=I(t)(b)\) and \(\mathbf{p}_b(t)=O(t)(b)\) for every \(t\in T_{\ap}\).\\
We have
\[
(N_{\ap}^{a},\mmap'_{\imath})\models\Lambda^{a}_{\mathit{fs}}
\quad \mbox{if{}f} \quad  (N_{\ap},\mmap_{\imath} ) \mbox{ fairly starves }a
\]
	\label{lem:pnstarve}
\end{lemma}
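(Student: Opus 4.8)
The plan is to follow closely the proof of Lemma~\ref{lem:pnfairtermi}, adapting it to the two-mode net $N^a_{\ap}$ of Constr.~\ref{constr:pnfs}. First I would record the analogues of the facts used there. The formula $\Lambda^a_{\mathit{fs}}$ is a genuine path formula (a positive boolean combination of atomic predicates, after rewriting each implication $A\to B$ as $\neg A\lor B$ and each $\neq$ as $>$) and is polynomial in $\tmenc{N_{\ap}}$. The conjunct $T_{\ap}\otimes\parikh(\sigma_1)\leq 0$ forces $\sigma_1=\varepsilon$ and hence $\mu_1=\mmap'_{\imath}$; the conjunct $\mu_2\leq\mu_3$ makes the block $\sigma_3$ infinitely pumpable by monotonicity; $T^{d(a)}_{\ap}\otimes\parikh(\sigma_3)\geq 1$ forces some end-transition $t^{>}_c$ with $c\in\mathfrak{C}^a$ to occur in $\sigma_3$, so $a$ is dispatched in $\sigma_3$ and therefore infinitely often in $\sigma_3^{\omega}$; and $\multi{t^{f/\infty}}\otimes\parikh(\sigma_2)>0$ forces the mode-switch transition $t^{f/\infty}$ to fire inside $\sigma_2$, so that $\sigma_3$ runs entirely in the ``$p_{\infty}$'' mode where every dispatch of $a$ must use $t^{<\infty}_c$. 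The last conjunction is the fairness constraint, handled exactly as in Lemma~\ref{lem:pnfairtermi}.

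The bridge between $N^a_{\ap}$, on which $\Lambda^a_{\mathit{fs}}$ is evaluated, and $N_{\ap}$, whose fairly starving runs I must exhibit, is a projection $\pi$ that identifies both $t^{<f}_c$ and $t^{<\infty}_c$ with $t^{<}_c$, erases $t^{f/\infty}$, and forgets the bookkeeping places $p_f,p_{\infty}$. The point is that $\pi$ preserves the marking of every shared place of $D\cup\Sigma\cup\bigcup_c S^{\spadesuit}_c$: the net effect of $t^{<\infty}_c$ on place $a$ is $-1$, identical to that of $t^{<}_c$, since it consumes two tokens of $a$ and returns one; only the enabling guard differs, as $t^{<\infty}_c$ requires two tokens of $a$ to be present. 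Hence $\pi$ maps firing sequences of $N^a_{\ap}$ to firing sequences of $N_{\ap}$ over the same marking sequence on the shared places, and conversely a firing sequence of $N_{\ap}$ lifts to $N^a_{\ap}$ as soon as, after the inserted $t^{f/\infty}$, every start of an $a$-dispatch occurs at a marking carrying at least two tokens in place $a$.

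For the \textbf{only if} direction, take a satisfying valuation $\mmap_2,\mmap_3,w_2,w_3$ with $\sigma_1=\varepsilon$. The pumped run $\rho'=\mmap'_{\imath}\fire{w_2}\mmap_2\fire{w_3^{\omega}}$ is an infinite run of $N^a_{\ap}$ in which $t^{f/\infty}$ fires within $w_2$ and some $t^{>}_c$ dispatching $a$ recurs in every copy of $w_3$. Once $t^{f/\infty}$ has fired we are in $p_{\infty}$-mode, where the only $a$-consuming transition is $t^{<\infty}_c$ (which needs two and leaves one) and widgets only produce into place $a$; consequently the token count of place $a$ never returns to $0$ from the first $a$-dispatch onward. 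Projecting by $\pi$ yields an infinite $\ap$-run $\rho=\pi(\rho')$ of $N_{\ap}$. I would then check fairness exactly as in Lemma~\ref{lem:pnfairtermi} (each $b\in\Sigma$ is either dispatched infinitely often along $w_3^{\omega}$, or, by the fairness conjunct, posted as often as dispatched along $w_2$ and never posted along $w_3$, hence absent from the task buffer along the tail, invoking Rmk.~\ref{rmk-wlog}), and verify the Def.~\ref{def:infinite-pnruns} starvation conditions: choosing $J$ at the first $a$-dispatch start, the guard of $t^{<\infty}_c$ gives $\mmap_j(a)\geq 1$ throughout the tail and $\mmap_j(a)\geq 2$ whenever $t_j\in T^{d(a)}_{\ap}$.

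For the \textbf{if} direction I would start from a fair infinite run $\rho$ of $N_{\ap}$ that fairly starves $a$ with witness index $J$. Beyond $J$ the handler $a$ stays pending and, by fairness, is dispatched infinitely often, and the starvation conditions of Def.~\ref{def:infinite-pnruns} supply the two tokens of $a$ needed at each $a$-dispatch start; this is exactly what the previous paragraph requires to lift the tail of $\rho$ into $N^a_{\ap}$ by inserting $t^{f/\infty}$ at step $J$ and replacing each $a$-dispatch start by $t^{<\infty}_c$. I would then slice the lifted run into blocks, each containing one dispatch of $a$, apply Dickson's Lemma~\cite{dic13} to the markings at block boundaries to obtain indices $k<\ell$ with $\mmap_{i_k}\preceq\mmap_{i_{\ell}}$, and set $\sigma_1=\varepsilon$, $\mu_1=\mmap'_{\imath}$, $\sigma_2$ the prefix through $t^{f/\infty}$ and up to block $k$, $\sigma_3$ the segment from block $k$ to block $\ell$, $\mu_2=\mmap_{i_k}$, $\mu_3=\mmap_{i_{\ell}}$; the fairness conjunction is then established blockwise by splitting $\Sigma$ into handlers dispatched finitely versus infinitely often in the tail, exactly as in Lemma~\ref{lem:pnfairtermi}. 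I expect the main obstacle to be the faithful match between the guard ``at least two tokens of $a$ at the start of each $a$-dispatch'' enforced by $t^{<\infty}_c$ and the marking conditions of Def.~\ref{def:infinite-pnruns}, which are phrased at the end-transitions $T^{d(a)}_{\ap}$: tracking the token in place $a$ across the start--widget--end of every dispatch, and arguing that the single persistent instance is never consumed, is the delicate bookkeeping on which correctness rests.
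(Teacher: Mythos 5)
Your proposal follows essentially the same route as the paper's proof: the same projection/lifting correspondence between $N^a_{\ap}$ and $N_{\ap}$ (replacing $t^{<}_c$ by $t^{<f}_c$ before and $t^{<\infty}_c$ after the mode switch, and erasing $p_f,p_{\infty},t^{f/\infty}$ in the other direction), the same use of the $t^{<\infty}_c$ guard to witness the starvation condition, and the same Dickson's-Lemma argument on block boundaries with the fairness conjunct discharged as in Lem.~\ref{lem:pnfairtermi}. The only detail to align is that the mode switch must be inserted not at $J$ but at $m=\max(\set{J}\cup\set{m_b\mid b\in\Sigma})$, past the last dispatch of every finitely-dispatched handler, and each block $w_{i_j}$ must contain a dispatch of \emph{every} infinitely-dispatched handler (not just $a$) so that the implication for each $b\in\Sigma$ comes out right — which is exactly what the Lem.~\ref{lem:pnfairtermi} machinery you invoke already does.
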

\begin{proof}
{\bf If}:
\( (N_{\ap},\mmap_{\imath}) \) fairly starves \(a\) implies the existence of
a fair infinite run \(\rho=\mmap_{0}\fire{t_0}\mmap_1\fire{t_1}\dots\) and
an index \(J\geq 0\) such that for each \(j\geq J\) we have
\(\mmap_j(a)\geq 1\land (t_j\in T_{\ap}^{d(a)}\rightarrow \mmap_j(a)\geq 2)\).

To show \(\rho\) yields the existence of a run \(\rho'\) in \(
(N^a_{\ap},\mmap'_{\imath})\) which satisfies \(\Lambda^a_{\mathit{fs}}\), we
first define a set of positions in \(\rho\) as we did in
Lem.~\ref{lem:pnfairtermi} for fair termination.  Let \(b\in\Sigma\), we define
\(m_b\) such that if every transition in \(T^{d(b)}_{\ap}\) occur finitely
often then \(m_b\) is greater than the last such occurrence; else (some \(t\in
T^{d(b)}_{\ap}\) occur infinitely often) \(m_b=0\).  Define \(m\) to be the
maximum over \(\set{J}\cup\set{m_b\mid b\in\Sigma}\).

Let us now rewrite \(\rho\) as the following infinite run
\begin{equation}\label{rho-starvation}
\mmap_0\fire{t_0}\mmap_1\dots\mmap_{m}\fire{t_m}\mmap_{i_0}\fire{w_{i_0}}\mmap_{i_1}\fire{w_{i_1}}\dots
\end{equation}
such that for every \(b\in\Sigma\) if some \(t\in T^{d(b)}_{\ap}\) occurs infinitely
often then that \(t\) occurs in each \(w_{i_j}\) for \(j\geq 0\).

Our next step is to associate to \(\rho\) a counterpart \(\rho'\) in
\( (N_{\ap}^a,\mmap'_{\imath}) \).
The run $\rho$ from Eqn.~\ref{rho-starvation}  is associated with the trace \(\rho'\) given by
\begin{gather*}
	\mmap_{0}\!\oplus\!\multi{p_f}\fire{t'_0}\dots\mmap_{m}\!\oplus\!\multi{p_f}\fire{t'_m}\mmap_{i_0}\!\oplus\!\multi{p_f}\fire{t^{f/\infty}}\mmap_{i_0}\!\oplus\!\multi{p_{\infty}}\fire{w'_{i_0}}\mmap_{i_1}\!\oplus\!\multi{p_{\infty}}\dots
\end{gather*}
where \(\mmap_{\imath}=\mmap_{0}\),
\(\mmap'_{\imath}=\mmap_0\oplus\multi{p_f}\).  \(\rho'\) is such that before
the occurrence of \(t^{f/\infty}\), if \(t_i=t_c^{<}\) where \(c\in
\mathfrak{C}^a\) then \(t'_i=t_c^{<f}\); else \((c\in\mathfrak{C}\setminus\mathfrak{C}^{a})\) \(t'_i=t_i\).  Moreover after the
occurrence of \(t^{f/\infty}\), if \(t_i=t_c^{<}\) where \(c\in
\mathfrak{C}^a\) then \(t'_i=t_c^{<\infty}\); else \(t'_i=t_i\).

Since \(m\geq J\) and \(\rho\) fairly  starves \(a\), we deduce that for every
\(j\geq m\) we have \(\mmap_{j}(a)\geq 1\) and \(t_j\in
T^{d(a)}_{\ap}\rightarrow \mmap_j(a)\geq 2\). This implies that the transitions
of the form \(t_c^{<\infty}\) which occur after \(t^{f/\infty}\) only, hence
after \(m\), are enabled because their counterpart \(t_c\) in \(N_{\ap}\) is
enabled in \(\rho\). Hence we conclude that \(\rho'\) is a run of
\((N^a_{\ap},\mmap'_{\imath})\),

Now using Dickson's Lemma \cite{dic13} over the infinite sequence
$\mmap_{i_0},\mmap_{i_1},\dots,\mmap_{i_n},\dots$ of markings defined above we
find that there exists $\ell>k$ such that $\mmap_{i_k}\preceq\mmap_{i_{\ell}}$.

Finally, let $\sigma_1=\varepsilon$, $\sigma_2=t'_0\dots t'_{m}
t^{f/\infty}
w'_{i_0}\dots w'_{i_{k-1}}$, $\sigma_3=w'_{i_k}\dots w'_{i_{\ell-1}}$,
$\mu_1=\mmap_{\imath}$, $\mu_2=\mmap_{i_k}$ and $\mu_3=\mmap_{i_{\ell}}$.
Clearly \(\mu_2\leq \mu_3\), $T_{\ap}\otimes\parikh(\sigma_1)\leq 0$
and \(\multi{t^{f/\infty}}\otimes\parikh(\sigma_2)>0\).
We conclude from \(\mmap_{\ell}(a)\geq 1\) for all \(\ell\geq m\) and because
\(\rho\) is fair that some \(T^{d(a)}_{\mathfrak{P}}\) must occur infinitely
often, hence that it occurs in \(w'_{i_j}\) for all \(j\geq 0\), and finally
that \(T^{d(a)}_{\ap}\otimes\parikh(\sigma_3)\geq 1\) by definition of
\(\sigma_3\).
Finally let \(b\in\Sigma\), the implication
\(\mathbf{c}_b\otimes\parikh(\sigma_3)=0
\rightarrow \bigl((\mathbf{p}_b-\mathbf{c}_b)\otimes\parikh(\sigma_2)=0 \land
\mathbf{p}_b\otimes\parikh(\sigma_3)=0 \bigr)\) holds using arguments
similar to the proof of Lem.~\ref{lem:pnfairtermi}.
This concludes this part of the proof since every conjunction of
$\Lambda^a_{\mathrm{fs}}$ is satisfied.

{\bf Only if}:
The arguments used here are close to the ones of Lem.~\ref{lem:pnfairtermi}.
Let $\mmap_{\mu_2}$, $\mmap_{\mu_3}$, $w_2$ and $w_3$ be a valuation of
$\mu_2$, $\mu_3$, $\sigma_2$ and $\sigma_3$ respectively such that
$\Lambda^a_{\mathit{fs}}$ is satisfied.  \(T_{\ap}\otimes\parikh(\sigma_1)\leq
0\) shows that $\sigma_1=\varepsilon$.  Hence we find that
$\mmap_{\imath}\fire{w_2}\mmap_{\mu_2}\fire{w_3}\mmap_{\mu_3}$ where $w_3\in
(T_{\ap})^*\cdot T^{d(a)}_{\ap}\cdot (T_{\ap})^*$ because
\(T_{\ap}^{d(a)}\oplus\parikh(\sigma_3)\geq 1\) holds.  Then \(\mu_2\leq
\mu_3\) shows that the run \(\rho\) given by
\(\mmap_{\imath}\fire{w_2}\mmap_{\mu_2}\fire{w_3^{\omega}}\) is an infinite run
of \((N^{a}_{\ap},\mmap'_{\imath})\).
\(\multi{t^{f/\infty}}\oplus\parikh(\sigma_2)>0\) where
\(F_{\ap}(t^{f/\infty})=\tuple{\multi{p_f},\multi{p_{\infty}}}\) shows that the
token initially in \(p_f\) moves to \(p_{\infty}\) while \(w_2\) executes.

Our next step is to show that \(\rho\) matches a run \(\rho'\) in
\((N_{\ap},\mmap_{\imath})\) which fairly starves \(a\). By hypothesis,
$\Lambda^a_{\mathrm{fs}}$ holds and so does each implication.  Let
$b\in\Sigma$, we examine the satisfiability of the implication.

(a) Assume that the left hand side does not hold which means that $w_3$ fires
some $t\in T^{d(b)}_{\ap}$, that is some \(T^{d(b)}_{\ap}\)
occurs infinitely often along $w_3^{\omega}$.

(b) If the left hand side of the implication holds we
find that no \(T^{d(b)}_{\ap}\) is fired along $w_3$, hence \(t_i\in
T^{d(b)}_{\ap}\) holds for finitely many $i$'s in \(\rho\).
Observe that \(b\neq a\) because we showed some \(T^{d(a)}_{\ap}\)
fires infinitely often in \(\rho\).
Because the implication is satisfied, along $w_2$, $b$ is posted as many times
as it is dispatched.

Hence, using similar arguments as those of Lem.~\ref{lem:pnfairtermi} that we
will not repeat here, we find that \(\rho'\) is a fair infinite run.

Also since \(\multi{t^{f/\infty}}\otimes\parikh(\sigma_2)>0\) holds, we find that
\(t^{f/\infty}\) occurs in \(w_2\).
This together with the fact that some transition of \(T^{d(a)}_{\ap}\) fires infinitely often in
\(w_3^{\omega}\) implies that each time a token is removed from \(a\) (through
some \(t^{<\infty}_c\) for some \(c\)) at least one token remains, hence
\(\mmap_i(a)\geq 2\) before a token is removed from \(a\), hence \(\rho\)
fairly starves \(a\).

Our last step shows that $\rho$ has a counterpart \(\rho'\) in
\((N_{\ap},\mmap_{\imath})\) and \(\rho'\) is fairly starving \(a\).
Let us define \(\rho'\) by abstracting away from \(\rho\) the places
\(\set{p_f,p_{\infty}}\) and the occurrence of \(t^{f/\infty}\). Clearly
\(\rho'\) is an infinite run of \( (N_{\ap},\mmap_{\imath}) \) fairly starving
\(a\).
\end{proof}

\begin{proposition}
	Given an asynchronous program $\ap$, determining the existence of a
	run that fairly starves some $a\in\Sigma$ is polynomial-time equivalent to \(\pn\) reachability.
        Fair starvation for asynchronous programs is {\sc
          expspace}-hard and can be solved in non-primitive recursive space.
\label{prop-decisionprocedurestarvation}
\end{proposition}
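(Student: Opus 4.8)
The plan is to follow the template of Prop.~\ref{prop-decisionprocedurefairtermi}, establishing the two directions of the polynomial-time equivalence separately and then reading off the complexity consequences from Thm.~\ref{prop-cpltxy} and Thm.~\ref{thm-rp09}. As with fair termination, I expect the upper-bound direction to be almost entirely discharged by the machinery already built (Constr.~\ref{constr:pnfs}, Lem.~\ref{lem:pnstarve}), while the lower-bound direction will require a fresh, carefully coupled simulation.

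For the direction from fair starvation to \(\pn\) reachability, I would first observe that $\ap$ fairly starves \emph{some} handler iff the disjunction $\bigvee_{a\in\Sigma}$``$\ap$ fairly starves $a$'' holds, a disjunction over the polynomially many $a\in\Sigma$. For a fixed $a$, Lem.~\ref{lem:equivpn_infinite} reduces ``$\ap$ fairly starves $a$'' to ``$(N_{\ap},\mmap_{\imath})$ fairly starves $a$'', and Lem.~\ref{lem:pnstarve} reduces the latter to the satisfiability of the path formula $\Lambda^{a}_{\mathit{fs}}$ on the net $(N^{a}_{\ap},\mmap'_{\imath})$ from Constr.~\ref{constr:pnfs}. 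Since both $N^{a}_{\ap}$ and $\Lambda^{a}_{\mathit{fs}}$ are computable in time polynomial in $\tmenc{\ap}$, Thm.~\ref{thm-rp09} reduces each such satisfiability question in polynomial time to \(\pn\) reachability. Solving the $\card{\Sigma}$ resulting reachability instances and taking their disjunction yields a polynomial-time (Turing) reduction to \(\pn\) reachability; as $\Lambda^{a}_{\mathit{fs}}$ is not an increasing path formula (it forces $\sigma_1=\varepsilon$ and carries transition predicates, just as $\Lambda_{\mathit{ft}}$ does), only the general reduction of Thm.~\ref{thm-rp09} applies, giving a non-primitive recursive space upper bound.

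For the converse direction I would give a polynomial-time many-one reduction from \(\pn\) reachability to fair starvation, reusing the simulation of Fig.~\ref{fig-asyncFTisPNreach}. As recalled there, deciding whether some reachable marking of a Boolean \(\pn\) has no token in a distinguished place $p_1$ is recursively equivalent to reachability. I would augment the program of Fig.~\ref{fig-asyncFTisPNreach} with one fresh handler $a$ that is posted \emph{twice} at the moment the simulation enters its idle loop (i.e.\ once ${\tt guess}$ has fired in a configuration where ${\tt st}$ has empty precondition and there is no pending instance of $p_1$) and that reposts itself exactly once on every dispatch. Along any such idle loop the task buffer then permanently holds at least two instances of $a$, and every dispatch of $a$ sees $\geq 2$ copies, so the run fairly starves $a$; conversely, when no reachable marking kills $p_1$ the program fairly terminates (exactly as argued for Fig.~\ref{fig-asyncFTisPNreach}), so no fair infinite run, and hence no starving run, exists. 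Thus the augmented program fairly starves some handler iff the \(\pn\) reaches a marking with $p_1=0$, which establishes the reduction and, via Thm.~\ref{prop-cpltxy}, the {\sc expspace} lower bound.

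The main obstacle is precisely this lower-bound construction: I must couple the persistence of $a$ tightly to the reachability condition. The delicate point is that a self-reposting handler, if posted unconditionally, would create a fair infinite run and therefore a starving run regardless of the net, collapsing the reduction; hence $a$ must be posted \emph{only} upon entering the genuine idle loop, which in Fig.~\ref{fig-asyncFTisPNreach} happens exactly when $p_1$ is empty. I would also re-verify, along the lines of Lem.~\ref{lem:pnfairtermi} and Lem.~\ref{lem:pnstarve}, that introducing $a$ preserves the fairness of the idle loop and enables no spurious run, and confirm that each $\Lambda^{a}_{\mathit{fs}}$ is a genuine path formula (a positive boolean combination of atomic predicates, after rewriting the implications as in the proof of Lem.~\ref{lem:pnfairtermi}) of polynomial size, so that Thm.~\ref{thm-rp09} indeed applies.
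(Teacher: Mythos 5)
Your upper-bound direction is essentially the paper's proof: compose Lem.~\ref{lem:equivpn_infinite}, Lem.~\ref{lem:pnstarve} and Thm.~\ref{thm-rp09}, note that everything is polynomial-time computable, and conclude a non-primitive recursive space bound because \(\Lambda^{a}_{\mathit{fs}}\) is not increasing. (Your explicit disjunction over the polynomially many \(a\in\Sigma\) is a reasonable tidying-up that the paper leaves implicit.)

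The lower bound is where there is a genuine gap. You post the fresh handler \(a\) ``at the moment the simulation enters its idle loop, i.e.\ once {\tt guess} has fired in a configuration where {\tt st} has empty precondition \emph{and there is no pending instance of \(p_1\)}.'' That last conjunct is not a condition the program can branch on: in this model a handler cannot test the task buffer for absence of pending instances (indeed, adding such a test is exactly the extension shown undecidable later via Thm.~\ref{thm:coverpntiundec}). The whole point of Fig.~\ref{fig-asyncFTisPNreach} is that emptiness of \(p_1\) is verified only \emph{a posteriori}, by fairness forcing a pending \(p_1\) to be dispatched and set {\tt terminate}. If you drop the untestable conjunct and let {\tt guess} post \(a\) twice whenever {\tt st} has an empty precondition, then with your unconditional ``repost once per dispatch'' rule the reduction collapses in exactly the way you yourself warn about: when the guess is wrong, \(p_1\) eventually fires and sets {\tt terminate}, all other handlers drain, but the two instances of \(a\) sustain each other forever, yielding a fair infinite run that starves \(a\) even though the net never reaches a marking with \(p_1=0\).

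The repair — and it is the paper's construction — is to guard the \emph{repost} of the starving handler by the testable global flag {\tt terminate} rather than to guard its initial \emph{post} by an untestable buffer condition. The paper places one instance of a handler \(p_{\spadesuit}\) in the initial task buffer; on dispatch, \(p_{\spadesuit}\) posts itself twice if {\tt terminate} is false and does nothing otherwise. If the guess is wrong, {\tt terminate} becomes true, the pending instances of \(p_{\spadesuit}\) drain, and the whole program terminates, so no starving run exists; if the guess is right, the count of \(p_{\spadesuit}\) grows without bound, so from some point on every dispatch of \(p_{\spadesuit}\) sees at least two pending instances and some instance is starved. Your variant (maintain exactly two instances by reposting once) works equally well, but only once the repost is made conditional on {\tt terminate} being false; as written, your gadget either relies on an inexpressible test or admits spurious starving runs.
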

\begin{proof}
	As in Prop.~\ref{prop-decisionprocedurefairtermi} our decision procedure
	relies on reductions to equivalent \(\pn\) problems.  Fix \(N_{\ap}^a\) to be
	the \(\pn\) given by \(N_{\ap}^a(\pnfamily)\).
	Lem.~\ref{lem:equivpn_infinite} shows that \(\ap\) has a run that fairly starves
	some \(a\in\Sigma\) if{}f so does \((N_{\ap},\mmap_{\imath})\).
	Next, Lem.~\ref{lem:pnstarve} shows that determining whether \( (N_{\ap},\mmap_{\imath})\)
	has a run that fairly starves a given \(a\in\Sigma\) is equivalent to
	determining the satisfiability of
	\((N^{a}_{\ap},\mmap'_{\imath})\models\Lambda^{a}_{\mathit{fs}}\) where
	\(N^{a}_{\ap}\) and \(\mmap'_{\imath}\) are given as in
	Constr.~\ref{constr:pnfs}. The reduction from the problem of determining if
	\(\ap\) fairly starves to the problem of checking whether
	\((N^{a}_{\ap},\mmap'_{\imath})\models\Lambda^{a}_{\mathit{fs}}\) holds can
	be carried out in polynomial time.

	Finally, Th.~\ref{thm-rp09} shows that the satisfiability of
	\((N^{a}_{\ap},\mmap'_{\imath})\models\Lambda^{a}_{\mathit{fs}}\) is reducible to
	a reachability problem for $\pn$ which can be solved using non-primitive
	recursive space. Therefore, we
	conclude that determining the existence of a run that fairly starves \(a\)
	for a given \(\ap\) and \(a\in\Sigma\) can be solved using non-primitive
	recursive space.

	The lower bound is established similarly to the reduction for 
	fair termination (see the asynchronous program \(\ap\) of
        Fig.~\ref{fig-asyncFTisPNreach}). 
        Let us recall some intuition. 
	After a finite amount of time, \(\ap\) guesses that the current state of the task buffer
	has no pending instance to \(p_1\). If the guess is wrong, \(\ap\) will eventually terminate.
	If the guess is correct then the program will enter into a fair infinite run
	\(\rho\).  We can massage \(\ap\) so that \(\rho\) is a fair infinite run
	starving a given handler \(p_{\spadesuit}\).
	Initially, the task buffer contains one pending instance to a special handler \(p_{\spadesuit}\).
	If {\tt terminate} is false, then \(p_{\spadesuit}\) posts itself twice; otherwise
	it does not do anything. This guarantess that if \(\ap\)
        incorrectly guesses when $p_1$ is empty, then
	the number of pending instance to \(p_{\spadesuit}\) will
        eventually be \(0\) and \(\ap\)
	will terminate as above. Otherwise, if \(\ap\) correctly
        guesses when $p_1$ is empty, the number
	of pending instances of \(p_{\spadesuit}\) will grow unboundedly, therefore preventing
	some pending \(p_{\spadesuit}\) to ever complete.
        The {\sc expspace}-hardness follows from the corresponding hardness for Petri net reachability.
\end{proof}

\section{Extensions: Asynchronous Programs with Cancellation}

The basic model for asynchronous programming considered so far allows {\em
posting} a handler, but not doing any other changes to the task buffer.
In practice, APIs or languages for asynchronous
programming provide additional capabilities, such as {\em canceling} one or all
pending instances of a given handler, and checking if there are pending instances
of a handler.
For example, the {\tt node.js} library for Javascript allows canceling all posted
handlers of a certain kind.
A model with cancellation can also be used to abstractly model asynchronous programs with
timeouts associated with handlers, i.e., where a handler should not be called after
a specific amount of time has passed since the post.

We now discuss extensions of asynchronous programs that model
cancellation of handlers.

\subsection{Formal model}

We now give a model for asynchronous programming in which the programmer can
perform asynchronous calls as before, but in addition can {\em cancel} pending
instances of a given handler. Informally, the command \({\tt cancel}~f()\)
immediately removes every pending handler instances for $f$ from the task
buffer.

To model this extension, we define an extension of asynchronous programs called
\emph{asynchronous programs with cancel}.  The first step is to associate to
every handler \(f\) an additional symbol $\bar{f}$, which intuitively represents
a cancellation of handler \(f\in\Sigma\).

Let $\Sigma$ be the set of handler names, we denote by $\Sigmabar$ a distinct
copy of $\Sigma$ such that for each $\sigma\in\Sigma$ we have
$\bar{\sigma}\in\Sigmabar$.  So in the settings with cancel, an asynchronous
program defines an extended alphabet \(\Gamma=\Sigma_i\cup\Sigma\cup\Sigmabar\)
which respectively model the statements, the posting and cancellation of
handler instances.  We thus have that an asynchronous program
with cancel \(\ap=(D,\Sigma\cup\Sigmabar,\Sigma_i,G,R,d_0,\mmap_0)\) 
consists of a finite set of
global states \(D\), an alphabet \(\Sigma\cup\Sigmabar\) of for handler calls and cancels, a \(\cfg\)
\(G=(\mathcal{X},\Gamma,\prod)\), a regular grammar \(R=(D,\Gamma,\delta)\), a
multiset \(\mmap_0\) of initial pending handler instances, and an initial state
\(d_0\in D\).

As with asynchronous programs without cancel, we model the (potentially
recursive) code of a handler using a context-free grammar. The code of a
handler does two things: first, it can change the global state (through $R$),
and second, it can add and remove pending handler instances from the task
buffer (through derivation of a word in $(\Sigma\cup\Sigmabar)^*$).
In fact, a symbol \(\sigma\in\Sigma\) is interpreted as a post of handler
\(\sigma\) and a symbol \(\bar{\sigma}\in\Sigmabar\) is interpreted as the
removal of all pending instances to handler \(\sigma\).

The set of configurations of \(\ap\) is given by  \(
D\times\multiset{\Sigma}\).  Observe it does not differ from asynchronous
programs without cancel.  
The transition relation $\rightarrow \subseteq (D\times\multiset{\Sigma})\times
(D\times\multiset{\Sigma})$ is defined as follows:
let \(\mmap,\mmap'\in\multiset{\Sigma}\), \(d,d'\in D\) and \(\sigma\in\Sigma\)
\begin{gather*}
	(d,\mmap\oplus\multi{\sigma})\overset{\sigma}\rightarrow (d',\mmap')\\
\text{ if{}f }\\
\exists w\in\Gamma^* \colon d \underset{R}\Rightarrow^* w \cdot d'
\land X_{\sigma} \underset{G}\Rightarrow^* w \land \forall
b\in\Sigma\colon \Psi_1(b) \lor \Psi_2(b) 
\end{gather*}
where $\Psi_1(b)$ is given by
\begin{gather*}
  \exists w_1\in\Gamma^* \exists w_2 \in (\Gamma\setminus
  \set{\bar{b}})^*\colon
	w=w_1\cdot \bar{b}\cdot w_2	
	\land \mmap'(b)=\parikh(w_2)(b)
\end{gather*}
and $\Psi_2(b)$ is given by
\begin{gather*}
	w\in(\Gamma\setminus\set{\bar{b}})^* \land \mmap'(b)=\mmap(b)+\parikh(w)(b)
\end{gather*}
The transition relation $\rightarrow$ states that there is a transition from
configuration $(d, \mmap\oplus\multi{\sigma})$ to $(d',\mmap')$ if there is an
execution of handler $\sigma$ that changes the global state from $d$ to $d'$
and operates a sequence of posts and cancel which leaves the task buffer in
state \(\mmap'\). 
A cancel immediately removes every pending instance of the handler
being canceled.
Note that contrary to the case without cancel the order in
which the handler instances are added to and removed from the task buffer does
matter.
%

Finally, let us observe that asynchronous programs with cancel \(
(D,\Sigma\cup\Sigmabar,\Sigma_i,G,R,d_0,\mmap_0)\) 
define a well-structured transition systems
\(( (D\times\multiset{\Sigma},\sqsubseteq),\rightarrow,c_0)\) where
\(\sqsubseteq\) is the ordering used for asynchronous programs:
\(\sqsubseteq\subseteq
(D\times\multiset{\Sigma})\times(D\times\multiset{\Sigma})\) is given by
\(c\sqsubseteq c'\) if{}f \(c.d=c'.d \land c.\mmap\preceq c'.\mmap\).

The {\em safety}, {\em boundedness}, {\em
configuration reachability} and {\em (fair) non termination} problems for
asynchronous programs with cancel are defined as for asynchronous programs
(without cancel).

\subsection{Construction of an equivalent asynchronous program}

Similarly to what we have done for Lem.~\ref{lem:clear} we now give a simpler
yet equivalent semantics to asynchronous programs with cancel. To compute the
task buffer content after the run \(\rho\) of a handler \(h\), the following information
is needed: \( (i) \) the current content of the task buffer, \( (ii) \) the set of
cancelled handlers along \(\rho\), and \( (iii) \) for each handler \(b\in\Sigma\)
the number of posts to \(b\) that are still pending after \(\rho\), that is the
number of posts to \(b\) that have not been subsequently neutralized by a
cancel to \(b\).



Intuitively, our construction uses the following steps.

First, using the construction of Def.~\ref{def:GR}, we eliminate the
need to carry around internal actions $\Sigma_i$ and the regular
grammar $R$.
We get a \(\cfg\) $G^R$ as a result
of this step, and for each context $c = (d, a, d')$, we get the
initialized \(\cfg\) $G^c$ using Def.~\ref{def:gc}.
Remember that in $G^R$ and $G^c$, the alphabet is $\Sigma \cup
\Sigmabar$,
that is, both posts and cancels are visible.

Now, consider a run of $G^c$.
For each handler $a$, we want to remember how many posts to $a$
were issued after the {\em last} call (if any) to cancel $a$,
and also to remember if a cancel to $a$ was issued in the handler along
the execution.
To update the task buffer, for each handler $a$ for which no
cancel was issued, we proceed as before and add all the new posts
of $a$ to the buffer.
For each handler $a$ for which a cancel was called, we first remove
all pending instances of $a$ from the task buffer, and then
add all instances of $a$ posted after the last issuance of a cancel. 
We now give a formal construction that takes any grammar $G$ and
computes a new grammar from which we can get these two pieces of information.


Let $G = (\mathcal{X}, \Sigma \cup\Sigmabar, \prod)$ be a \(\cfg\).
Define the {\em reverse} $r(G) = (\mathcal{X}, \Sigma\cup \Sigmabar,
\overline{\prod})$ as the \(\cfg\) where $\overline{\prod}$ is the
least set containing the production $X \rightarrow a$ for each
$X\rightarrow a$ in $\prod$ and the production $X \rightarrow BA$ for
each
production $X\rightarrow AB$ in $\prod$.
It is easy to see that for each $X\in\mathcal{X}$ and each $w\in
(\Sigma\cup\Sigmabar)^*$,
we have $X \underset{G}{\Rightarrow}^* w$ iff $X
\underset{r(G)}{\Rightarrow}^* w^r$,
where $w^r$ is the reverse of $w$.

Define the regular grammar $\mathcal{C} = (\mathcal{Y}, \Sigma \cup\Sigmabar,
\prod_{\mathcal{Y}})$, where
$\mathcal{Y} = \set{ Y_S \mid S \subseteq \Sigma}$,
and $\prod_{\mathcal{Y}}$ consists of production rules
$Y_S \rightarrow \bar{c} Y_{S \cup\set{c}}$ for each $S\subseteq
\Sigma$,
and $Y_S \rightarrow c Y_S$ for each $S \subseteq \Sigma$.
Intuitively, the regular grammar tracks the set of handlers for which a
cancel has been seen.
Formally, $Y_\emptyset \underset{\mathcal{C}}{\Rightarrow}^* w
Y_S$ implies that for each $\bar{b}\in \Sigmabar$, we have
$\parikh(w)(\bar{b}) > 0$ iff $b \in S$.

Now, we construct a grammar $r(G)\times \mathcal{C} = (\mathcal{Z},
\Sigma, \prod_{\mathcal{Z}})$, where
$\mathcal{Z} = \set{ [Y_{S_1} X Y_{S_2}] \mid Y_{S_1}, Y_{S_2} \in
  \mathcal{Y}, X \in \mathcal{X}}$,
and $\prod_{\mathcal{Z}}$ is the least set of rules such that
\begin{itemize}
	\item if \( (X\rightarrow \varepsilon)\in\prod\) then \([Y_S X Y_S]\rightarrow \varepsilon\) for all \(S\subseteq \Sigma\);
\item if $(X \rightarrow c) \in \prod$, \(c\in\Sigma\cup\Sigmabar\) and $(Y_S \rightarrow c Y_{S'})
  \in \prod_{\mathcal{Y}}$, then
  $([Y_{S} X Y_{S'}] \rightarrow \proj_{\Sigma\setminus S} (c)) \in
  \prod_{\mathcal{Z}}$;
\item if $(X \rightarrow A B) \in \prod$ and then $([Y_{S_0} X Y_{S_2}]
  \rightarrow [Y_{S_0} A Y_{S_1}] [Y_{S_1} B Y_{S_2}]) \in
  \prod_{\mathcal{Z}}$ for each $S_0 \subseteq S_1\subseteq S_2
  \subseteq \Sigma$.
\end{itemize}
Intuitively, a leftmost derivation of the grammar generates
derivations of words in $r(G)$ while tracking
which symbols from $\Sigmabar$ have been seen. 
Additionally, it suppresses all symbols in $\Sigmabar$ as well as all
symbols $c\in\Sigma$ such that $\bar{c}$ has been seen.
Formally, the grammar $r(G) \times \mathcal{C}$ has the following property.
The proof is by induction on the derivation of $w$, similar to Lem.~\ref{lem:gr_correctness}.

\begin{lemma}
For $w\in\Sigma^*$ and $S\subseteq \Sigma$, we have
$[Y_\emptyset X Y_S]
\underset{r(G)\times\mathcal{C}}{\Rightarrow}^* w$ 
iff
there exists $w'\in (\Sigma\cup \Sigmabar)^*$ such that
$X \underset{G}{\Rightarrow}^* w'$ and for each $b\in\Sigma$,
we have 
(1) either $w'\in (\Sigma\cup \Sigmabar \setminus \set{\bar{b}})^*$
and $\parikh(w)(b) = \parikh(w')(b)$ and $b\not \in S$, or
(2) there exists $w'_1\in (\Sigma\cup\Sigmabar)^*$, $w'_2 \in
(\Sigma\cup\Sigmabar\setminus\set{\bar{b}})^*$, $w' = w'_1 \bar{b}
w'_2$,
and $\parikh(w)(b) = \parikh(w'_2)(b)$ and $b\in S$.
\label{lem:grammar}
\end{lemma}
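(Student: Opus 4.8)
The plan is to prove a statement slightly more general than the lemma, replacing the fixed start symbol $Y_\emptyset$ by an arbitrary $Y_{S_1}$ so that the induction on binary productions closes. For $v\in(\Sigma\cup\Sigmabar)^*$ and $T\subseteq\Sigma$ I would write $\mathrm{cn}_T(v)=T\cup\{c\mid\bar c\text{ occurs in }v\}$ for the accumulated cancel-set, and $\mathrm{sv}_T(v)\in\Sigma^*$ for the word obtained by scanning $v$ left to right from running set $T$ (adding $c$ whenever $\bar c$ is read) and emitting a post $c$ exactly when $c$ is not in the current set. The generalized claim I would establish is: for all $X$, all $S_1\subseteq S_2$, and all $w\in\Sigma^*$, $[Y_{S_1}XY_{S_2}]\underset{r(G)\times\mathcal{C}}{\Rightarrow}^* w$ iff there is $w'$ with $X\underset{G}{\Rightarrow}^* w'$, $S_2=\mathrm{cn}_{S_1}((w')^r)$, and $w=\mathrm{sv}_{S_1}((w')^r)$. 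Specializing $S_1=\emptyset$ and reading $\mathrm{sv}$/$\mathrm{cn}$ per handler $b$ — a post of $b$ survives iff no $\bar b$ lies to its right in $w'$ (i.e.\ after the last $\bar b$), and $b\in S_2$ iff some $\bar b$ occurs in $w'$ — recovers exactly cases (1) and (2) of the statement. The construction is the $\cfg$-times-regular-grammar product of Lem.~\ref{lem:gr_correctness} applied to $r(G)$ and $\mathcal{C}$; using $r(G)$ is precisely what lets the single forward scan of $\mathcal{C}$ compute ``survives iff after the last cancel'', since scanning $(w')^r$ left to right is scanning $w'$ right to left.

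\noindent\textbf{The two inductions.} For the forward direction I would induct on the length of the derivation $[Y_{S_1}XY_{S_2}]\Rightarrow^* w$ in $r(G)\times\mathcal{C}$. The bases $X\to\varepsilon$ (forcing $S_1=S_2$ and $w=w'=\varepsilon$) and $X\to c$ (with $Y_{S_1}\to cY_{S_2}$ in $\mathcal{C}$, $w=\proj_{\Sigma\setminus S_1}(c)$, $w'=c$) are immediate from the definitions. For a binary first step coming from $X\to AB$ in $\prod$, the applicable product rule (induced by $r(G)$'s reversed production $X\to BA$) places the $B$-child first: $[Y_{S_1}XY_{S_2}]\to[Y_{S_1}BY_Q][Y_QAY_{S_2}]$ with $S_1\subseteq Q\subseteq S_2$. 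I would split $w=w_Bw_A$, apply the induction hypothesis to each child to obtain $w'_B,w'_A$, set $w'=w'_Aw'_B$ so that $(w')^r=(w'_B)^r(w'_A)^r$, and close using the two scan identities $\mathrm{cn}_T(uv)=\mathrm{cn}_{\mathrm{cn}_T(u)}(v)$ and $\mathrm{sv}_T(uv)=\mathrm{sv}_T(u)\,\mathrm{sv}_{\mathrm{cn}_T(u)}(v)$. The backward direction is the same induction run on $X\underset{G}{\Rightarrow}^* w'$: in the binary case I would set $Q=\mathrm{cn}_{S_1}((w'_B)^r)$, observe $S_1\subseteq Q\subseteq S_2$ since $\mathcal{C}$ only enlarges its set, deduce the rule $[Y_{S_1}XY_{S_2}]\to[Y_{S_1}BY_Q][Y_QAY_{S_2}]$ is present, and assemble the child derivations with the same identities.

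\noindent\textbf{Main obstacle.} The one genuinely delicate point is the reversal bookkeeping. The reversal fact $X\underset{G}{\Rightarrow}^* w'$ iff $X\underset{r(G)}{\Rightarrow}^* (w')^r$ and the tracking property of $\mathcal{C}$ are already available, so the crux is to combine the two sub-derivations in the order dictated by $r(G)$ — the $B$-subword before the $A$-subword — so that $\mathcal{C}$ effectively scans $(w')^r$ and threads the intermediate set $Q$ correctly; the naive ordering would instead compute ``survives iff before the first cancel'', which is exactly backwards. Once this is pinned down, verifying $S_1\subseteq Q\subseteq S_2$ against $\mathcal{C}$'s monotone accumulation of cancels and proving the two scan identities are routine, and the remainder parallels the proof of Lem.~\ref{lem:gr_correctness}.
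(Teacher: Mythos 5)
Your proposal is correct and follows the route the paper intends: the paper's own proof is only the one-line remark that the claim follows by induction on the derivation, in the style of Lem.~\ref{lem:gr_correctness}, and your strengthened invariant with an arbitrary left state $Y_{S_1}$ (together with the scan functions $\mathrm{cn}$, $\mathrm{sv}$ and their concatenation identities) is exactly the generalization needed to make that induction close at the binary productions. You also correctly pin down the one genuinely delicate point --- the product rule must place the $B$-child before the $A$-child, as dictated by $r(G)$'s reversed production $X \rightarrow BA$, so that $\mathcal{C}$ effectively scans $(w')^r$ and ``survives'' means ``after the last cancel'' rather than ``before the first''.
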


Lem.~\ref{lem:grammar}, when instantiated with the grammar $G^c$,
provides the following corollary.

\begin{corollary}
	Let $\ap$ be an asynchronous program with cancel, and let
        $\mmap, \mmap'\in \multiset{\Sigma}$.
	For \(c=(d_1,\sigma,d_2)\in\mathfrak{C}\), let $G^c$ be
        defined as in Def.~\ref{def:gc} (with \(\Sigma\) replaced by \(\Sigma\cup\Sigmabar\)). 
				The following statements are equivalent:
	\begin{enumerate}
		\item \((d_1,\mmap\oplus \multi{\sigma})\overset{\sigma}{\rightarrow} (d_2,\mmap')\)
                \item \( \exists w \in \Sigma^*\colon 
                  [Y_\emptyset [d_1  X_\sigma d_2] Y_S]
                  \underset{r(G^c) \times \mathcal{C}}{\Rightarrow}^*
                  w\) and for all $b\in \Sigma$, we have
                  \[ \mmap'(b) = \begin{cases} \mmap(b)
                    + \parikh(w)(b) & \mbox{if }b\not \in S\\
                    \parikh(w)(b) & \mbox{if }b \in S
                    \end{cases}
                    \]
%
	\end{enumerate}
	\label{lem:equivalentprogramwithcancel}
\end{corollary}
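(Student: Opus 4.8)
The plan is to read the Corollary off as the composition of two facts already available: the product-grammar reduction that folds away the regular grammar $R$ and the internal alphabet $\Sigma_i$, and the grammar characterization of Lem.~\ref{lem:grammar}. Throughout, $S$ in statement (2) is understood to be existentially quantified together with $w$ (it is in fact determined by the chosen derivation).

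First I would establish the cancel analog of Lem.~\ref{lem:clear}: namely that $(d_1,\mmap\oplus\multi{\sigma})\overset{\sigma}{\rightarrow}(d_2,\mmap')$ holds if{}f there is a word $u\in L(G^c)\subseteq(\Sigma\cup\Sigmabar)^*$ such that, for every $b\in\Sigma$, $\Psi_1(b)\lor\Psi_2(b)$ holds with $w$ replaced by $u$. This follows from Lem.~\ref{lem:gr_correctness}, applied with the terminal alphabet $\Sigma\cup\Sigmabar$ kept visible and only $\Sigma_i$ projected away: a witness $w\in\Gamma^*$ with $d_1\underset{R}{\Rightarrow}^* w\cdot d_2$ and $X_\sigma\underset{G}{\Rightarrow}^* w$ exists if{}f its projection $u=\proj_{\Sigma\cup\Sigmabar}(w)$ satisfies $[d_1X_\sigma d_2]\underset{G^R}{\Rightarrow}^* u$, i.e.\ $u\in L(G^c)$. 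The observation that makes the side conditions transfer is that $\Sigma_i$ is disjoint from both $\Sigma$ and $\Sigmabar$, so erasing $\Sigma_i$-symbols changes neither whether $\bar b$ occurs, nor the position of the \emph{last} $\bar b$, nor the number of $b$'s occurring after it; hence $\Psi_1(b)$ and $\Psi_2(b)$ have the same truth value for $w$ and for $u$, and the decomposition $w=w_1\bar b w_2$ projects to the corresponding decomposition of $u$.

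Second I would instantiate Lem.~\ref{lem:grammar} with $G:=G^c$ and $X:=[d_1X_\sigma d_2]$. For $w\in\Sigma^*$ and $S\subseteq\Sigma$, the lemma gives $[Y_\emptyset[d_1X_\sigma d_2]Y_S]\underset{r(G^c)\times\mathcal{C}}{\Rightarrow}^* w$ exactly when there is a witness $u=w'\in L(G^c)$ with $S=\set{b\in\Sigma\mid \bar b\text{ occurs in }u}$ and, for each $b$: if $b\notin S$ then $u$ contains no $\bar b$ and $\parikh(w)(b)=\parikh(u)(b)$, while if $b\in S$ then $u=u_1\bar b u_2$ with $u_2$ free of $\bar b$ and $\parikh(w)(b)=\parikh(u_2)(b)$. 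These are precisely the clauses of $\Psi_2(b)$ and $\Psi_1(b)$ over the alphabet $\Sigma\cup\Sigmabar$, the only difference being that the surviving post count is recorded as $\parikh(w)(b)$ rather than used directly to pin down $\mmap'(b)$.

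Finally I would read off the two cases. For $b\notin S$ (the $\Psi_2$ case), $\parikh(w)(b)=\parikh(u)(b)$ gives $\mmap'(b)=\mmap(b)+\parikh(u)(b)=\mmap(b)+\parikh(w)(b)$; for $b\in S$ (the $\Psi_1$ case), $\parikh(w)(b)=\parikh(u_2)(b)$ gives $\mmap'(b)=\parikh(u_2)(b)=\parikh(w)(b)$. Chaining the first-step equivalence with Lem.~\ref{lem:grammar} and existentially quantifying over $w$ and the induced $S$ then yields the equivalence of (1) and (2). The main obstacle is the alphabet bookkeeping in the first step: one must verify carefully that projecting $\Sigma_i$ away preserves the cancel structure (the last occurrence of each $\bar b$ and the posts of $b$ after it), so that the $\Gamma$-level conditions $\Psi_1/\Psi_2$ coincide with the $(\Sigma\cup\Sigmabar)$-level conditions that feed Lem.~\ref{lem:grammar}. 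Once this is secured, the reversal $r(\cdot)$ and the tracking grammar $\mathcal{C}$ already handle the location of the last cancel, and the Corollary is immediate.
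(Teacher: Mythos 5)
Your proposal is correct and follows essentially the same route as the paper's proof: a first equivalence reducing the transition relation to a condition on words of $L(G^c)$ over $\Sigma\cup\Sigmabar$ (via the product-grammar construction, with $\Sigma_i$ projected away), followed by an application of Lem.~\ref{lem:grammar} to pass to $r(G^c)\times\mathcal{C}$. You are in fact somewhat more careful than the paper, which compresses the first step into the justification ``def.\ of $\overset{\sigma}{\rightarrow}$ and $G^c$'' without spelling out that erasing $\Sigma_i$-symbols preserves the location of the last $\bar b$ and the count of subsequent $b$'s, i.e.\ the truth of $\Psi_1(b)$ and $\Psi_2(b)$.
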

\begin{proof}
We have
\begin{align*}
& (d_1,\mmap\oplus\multi{\sigma}) \overset{\sigma}{\rightarrow} (d_1,
\mmap') & \\
\mbox{if{}f } & \exists w\in (\Sigma\cup\Sigmabar)^*\colon [d_1X_\sigma d_2]
\underset{G^c}{\Rightarrow^*} w \land
\forall b\in \Sigma\colon \Psi_1(b) \vee \Psi_2(b) & \mbox{ def.\ of
  $\overset{\sigma}{\rightarrow}$ and $G^c$}\\
\mbox{if{}f } & 
\left(\begin{array}{c}
\exists w\in \Sigma^*\exists S \subseteq \Sigma\colon [Y_\emptyset [d_1 X_\sigma d_2] Y_S]
\underset{r(G^c)\times\mathcal{C}}{\Rightarrow^*} w\\
\mbox{and }\\
\forall b\in\Sigma\colon \mmap'(b) = \begin{cases}
\parikh(w)(b) & \mbox{ if $b\in S$ and } \\
\mmap(b) + \parikh(w)(b)   & \mbox{ otherwise}
\end{cases}
\end{array}\right)& \mbox{Lem.~\ref{lem:equivalentprogramwithcancel}}
\end{align*}
\end{proof}

\subsection{$\pn$ with reset arcs}\label{sec:pnrt}

Let us now introduce an extension of the \(\pn\) model which will serve to
model the semantics of asynchronous programs with cancel.

\begin{definition}
	A Petri net with reset arcs, \(\pnr\) for short, is a tuple $(S,T,F=\tuple{I,O,Z},\mmap_0)$ where $S$, $T$
	and $F$ are defined as for $\pn$ except that $F$ is extended with a mapping
	\(Z\) such that \(Z(t)\subseteq S\) for each \(t\in T\).
        As for
	$\pn$, $\mmap_0\in\multiset{S}$ defines the initial marking.
\end{definition}

\noindent %
{\bf Semantics.} %
Given a tuple $(S,T,F,\mmap_0)$, and a marking $\mmap\in\multiset{S}$, a
transition $t\in T$ is \emph{enabled} at $\mmap$,
written $\mmap\fire{t}$, if $I(t)\preceq \mmap$.  We write
$\mmap\fire{t}\mmap'$ if transition $t$ is enabled at $\mmap$ and its
\emph{firing} yields to marking $\mmap'$ defined as follows:
\begin{enumerate}
	\item Let $\mmap_1$ be such that $\mmap_1\oplus I(t)=\mmap$.
	\item Let $\mmap_2$ be such that
	$\displaystyle\mmap_2(p)=\begin{cases}0 &\text{if $p\in Z(t)$}\\\mmap_1(p)
						&\text{else.}\end{cases}$
	\item $\mmap'$ is such that $\mmap'=\mmap_2\oplus O(t)$.
\end{enumerate}
The semantics as well as the boundedness and coverability problems naturally
follows from their counterpart for $\pn$.
Note that if $Z(t)=\emptyset$ for each $t\in T$, then $N$ reduces to a $\pn$.

\begin{theorem}{\cite{DFS98}}
The coverability problem for $\pnr$ is decidable.
The boundedness problem and the reachability problem for $\pnr$ are both undecidable.
	\label{thm:pntcoverdecid}
\end{theorem}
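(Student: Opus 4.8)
The plan is to treat the three claims separately: the theory of well-structured transition systems gives the positive result, while the two negative results are reductions from deterministic two-counter Minsky machines \cite{Min67}. For \textbf{decidability of coverability}, I would show that every \(\pnr\) is a well-structured transition system in the sense of \cite{ACJT96,FS01} and then invoke the backward coverability algorithm. Concretely: (i) the componentwise order \(\preceq\) on markings in \(\nats^{S}\) is a well-quasi-order by Dickson's Lemma \cite{dic13}; (ii) firing is monotone, i.e.\ if \(\mmap\fire{t}\mmap_1\) and \(\mmap\preceq\mmap'\) then \(\mmap'\fire{t}\mmap_1'\) for some \(\mmap_1'\succeq\mmap_1\), which I verify directly from the three-step reset semantics (consuming \(I(t)\), zeroing \(Z(t)\), and adding \(O(t)\) each preserve \(\preceq\); the reset step because both markings become \(0\) on \(Z(t)\) and keep their order elsewhere); and (iii) a \(\pnr\) has an effective pred-basis, i.e.\ given a marking \(\mmap\) one can compute a finite basis of the minimal markings from which a single transition fires into \(\ucl{\mmap}\) (feasibility on reset places reduces to the fixed test \(O(t)(p)\ge\mmap(p)\), and on the remaining places the minimal predecessor is given by a single closed formula). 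Given (i)--(iii), the backward algorithm of \cite{ACJT96} computes a finite representation of the upward-closed set of markings that can cover the target and terminates by the wqo; then \(\mmap\) is coverable iff \(\mmap_0\) lies in this set. I would stress that the forward Karp--Miller procedure \cite{KarpM69} must be avoided, since reset arcs invalidate its acceleration step — and this is precisely the phenomenon underlying the two undecidability results.

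For both \textbf{undecidability results} the common engine is a simulation of a deterministic two-counter machine \(M\) by a \(\pnr\): places \(c_1,c_2\) hold the counter values, one control place per instruction encodes the program counter, increments and decrements are modelled by ordinary arcs, and the problematic zero-test ``if \(c=0\) goto \(\ell\)'' is modelled \emph{optimistically} by a transition that resets \(c\) and moves to \(\ell\). This yields a net whose runs contain all faithful runs of \(M\) together with spurious ``cheating'' runs in which a reset discards a nonzero counter. Since reset arcs are always enabled, cheating transitions cannot be structurally forbidden, so the entire difficulty — and the main obstacle of the proof — is to arrange each construction so that the decision question of interest is answered correctly \emph{despite} the presence of cheating runs.

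For \textbf{boundedness} I would introduce a dedicated step-counter place that is incremented at every simulated instruction and reduce from the (undecidable) non-termination problem for \(M\), designing the net so that it is unbounded iff \(M\) has an infinite faithful computation. The delicate part is exactly that a cheating reset alters control flow and could in principle spawn its own infinite (hence unbounded) run; the construction must guarantee that cheating runs cannot manufacture unboundedness that a faithful computation would not already witness. Since a reset can only \emph{drop} tokens, I would argue by monotonicity that no spurious source of unboundedness is created, reducing the claim to the controlled behaviour of the optimistic zero-test gadget.

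For \textbf{reachability} I would reduce from the halting problem of \(M\), choosing the target marking to be reachable iff \(M\) has a halting computation, and using the target itself to \emph{forbid} cheating. The idea is to pair each working counter with a conserved ``ledger'' so that the only way to satisfy the exact target (halt control place with all working places empty) is to have taken every zero-test branch when the counter was genuinely empty: any illegitimate reset destroys tokens that the ledger would require to still be present, rendering the exact target unreachable. Establishing this cheat-freeness invariant — that exact reachability of the target is equivalent to the existence of a faithful halting run of \(M\) — is the crux and the step I expect to be most delicate, whereas the encoding of increments, decrements, and control flow into \(\pn\) transitions is routine.
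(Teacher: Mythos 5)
You should first be aware that the paper contains no proof of this statement: Thm.~\ref{thm:pntcoverdecid} is imported as a black box from \cite{DFS98}, so there is nothing in-paper to compare against line by line. Judged on its own, the coverability half of your sketch is the standard, correct argument: the reset semantics is monotone with respect to \(\preceq\), Dickson's Lemma gives the well-quasi-order, and your pred-basis computation is exactly right (on a reset place \(p\in Z(t)\) the post-marking is \(O(t)(p)\) no matter what, so feasibility is the fixed test \(O(t)(p)\geq \mmap(p)\) and the minimal predecessor there is \(I(t)(p)\); elsewhere it is a max of two terms), so the backward algorithm of \cite{ACJT96} applies. Your reachability reduction is also the standard one, and is essentially the same discrepancy-ledger device this paper itself uses in the proof of Thm.~\ref{thm:coverpntiundec}: maintain a place whose token count equals \(c_1+c_2\) along faithful runs, note that a cheating reset makes that place strictly exceed \(c_1+c_2\) forever after, and pick a target marking that forces it to be empty, thereby certifying a cheat-free halting run.

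The boundedness reduction, however, has a genuine gap. You add a step counter incremented at every simulated instruction and dismiss the danger of cheating runs on the grounds that ``a reset can only drop tokens,'' arguing ``by monotonicity'' that no spurious unboundedness can arise. That argument does not go through: a cheating reset does not merely lose tokens, it diverts the \emph{control flow} into the zero-branch of the test, after which the run simulates \(M\) from a configuration \(M\) never actually reaches. Nothing prevents that spurious continuation from entering an infinite loop and pumping the step counter forever. Monotonicity of \(\preceq\) is of no help because the cheating run and the faithful run sit in different control places, so neither marking dominates the other. Concretely, if \(M\) halts your net must be bounded for the reduction to be correct, yet a single early cheat can launch an unbounded run, and the reduction answers wrongly. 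This is precisely the difficulty the proof in \cite{DFS98} has to engineer around; the missing ingredient is a mechanism that makes boundedness of the net insensitive to where cheating sends the control, and your sketch does not supply one.
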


\subsection{$\pnr$ semantics of asynchronous programs with cancel}

\begin{definition}
Let $c=(d_1,a,d_2)\in\mathfrak{C}$, and let $r(G^c) \times \mathcal{C}
= (\mathcal{Z},\Sigma, \prod_{\mathcal{Z}})$.
Define $k=|\mathcal{Z}|$ and the
$\pnr$ $N^{\nsim}_c=(S^{\nsim}_c,T^{\nsim}_c,F^{\nsim}_c)$ such that:
		\begin{itemize}
			\item \(S^{\nsim}_c=\set{(\mathit{begin},c),(\mathit{end},c)}\cup \mathcal{Z}\cup \set{(\$,c)}
				\cup\Sigma\);
			\item the sets $T^{\nsim}_c$ and $F^{\nsim}_c$ are such that $t\in T^{\nsim}_c$ if{}f one of the following holds
				\begin{align*}
					F^{\nsim}_c(t)&=\tuple{\multi{(\mathit{begin},c)},\multi{[Y_{\emptyset} [d_1X_ad_2] Y_{S_1}]}\oplus\multi{(\$,c)^{k}}, S_1} &\text{for each \(S_1\subseteq \Sigma\)}\\
					F^{\nsim}_c(t)&=\tuple{\multi{X,(\$,c)},\multi{Z,Y},\emptyset}&\text{\((X \rightarrow Z\cdot Y)\in\prod_{\mathcal{Z}}\)}\\
					F^{\nsim}_c(t)&=\tuple{\multi{X},\parikh(\sigma)\oplus\multi{(\$,c)},\emptyset}
					&\text{\((X \rightarrow \sigma)\in\prod_{\mathcal{Z}}\)}\\
				F^{\nsim}_c(t)&=\tuple{\multi{(\$,c)^{k+1}},\multi{(\mathit{end},c)},\emptyset}
				\end{align*}
		\end{itemize}
		Finally, define \(\mathcal{N}^{\nsim}=\set{N_c^{\nsim}}_{c\in\mathfrak{C}}\).
\end{definition}

The following lemma is proved similar to Lem.~\ref{lem:construction_final}.

\begin{lemma}
	Let $\ap$ be an asynchronous program with cancel
	and let \(d,d'\in D\) and \(\mmap,\mmap'\in\multiset{\Sigma}\).
	Define \(c=(d,\sigma,d')\in\mathfrak{C}\), we have:
	\[(d,\mmap)\overset{\sigma}{\rightarrow} (d',\mmap') \text{ if{}f } \exists
	w\in (T_c^{\nsim})^* \colon
	\bigl(\multi{(\mathit{begin},c)}\oplus\mmap\bigr)\fire{w}_{N_c^{\nsim}}\bigl(\multi{(\mathit{end},c)}\oplus\mmap'
	\bigr)\enspace .\]
	\label{lem:constructioncancel}
\end{lemma}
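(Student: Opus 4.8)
The plan is to mirror the proof of Lem.~\ref{lem:construction_final} step for step, replacing each ingredient used there by its cancellation-aware analogue. Recall that Lem.~\ref{lem:construction_final} proceeds in three moves: Lem.~\ref{lem:clear} rewrites the transition relation as membership in the Parikh image of a context-free language, Lem.~\ref{lem:ipl} restricts attention to derivations of bounded index, and Lem.~\ref{lem:smartencoding} together with Def.~\ref{def:the_construction} turns bounded-index derivations into firing sequences of the budget-augmented net. I would carry out the same three moves, but starting from Cor.~\ref{lem:equivalentprogramwithcancel} in place of Lem.~\ref{lem:clear}.

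First I would apply Cor.~\ref{lem:equivalentprogramwithcancel} to the context \(c=(d,\sigma,d')\) to replace the left-hand side \((d,\mmap)\overset{\sigma}{\rightarrow}(d',\mmap')\) by the statement that there exist a set \(S\subseteq\Sigma\) and a word \(w\in\Sigma^*\) with \([Y_\emptyset [d X_\sigma d'] Y_S]\underset{r(G^c)\times\mathcal{C}}{\Rightarrow}^* w\) such that \(\mmap'(b)=\parikh(w)(b)\) for \(b\in S\) and \(\mmap'(b)=\mmap(b)+\parikh(w)(b)\) for \(b\notin S\). Since every condition on \(\mmap'\) depends only on \(\parikh(w)\), I would next invoke Lem.~\ref{lem:ipl} on the initialized grammar obtained from \(r(G^c)\times\mathcal{C}\) by fixing the start symbol \([Y_\emptyset [d X_\sigma d'] Y_S]\): with \(k=|\mathcal{Z}|\) its number of variables, the Parikh images of its language and of its index-\((k+1)\) sublanguage coincide, so the existential may be restricted to derivations of index \(k+1\) without changing the set of achievable \(\mmap'\).

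The heart of the argument is the analogue of Lem.~\ref{lem:smartencoding} for \(N^{\nsim}_c\), and here the initial transition does two jobs at once. It loads the start nonterminal \([Y_\emptyset [d X_\sigma d'] Y_{S_1}]\) together with \(k\) budget tokens on \((\$,c)\), so that, exactly as in Lem.~\ref{lem:smartencoding}, every reachable marking keeps \(k+1\) tokens on \(\mathcal{Z}\cup\set{(\$,c)}\) and the enabled transition sequences are precisely the index-\((k+1)\) derivations; and, through its reset set \(S_1\), it empties every place \(b\in S_1\). I would then argue that after this initial transition the simulation of the productions of \(r(G^c)\times\mathcal{C}\) deposits exactly \(\parikh(w)(b)\) tokens on each place \(b\in\Sigma\), while the final transition \(t_e\) consumes the remaining \(k+1\) budget tokens and marks \((\mathit{end},c)\). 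Composing these observations, a place \(b\in S_1\) ends with \(0+\parikh(w)(b)\) tokens (it was reset, then fed), whereas a place \(b\notin S_1\) ends with \(\mmap(b)+\parikh(w)(b)\) (its initial content was untouched), which is exactly the case split of Cor.~\ref{lem:equivalentprogramwithcancel} with \(S=S_1\). Running the equivalence in both directions — reading \(S\) off the chosen initial transition in the net-to-program direction, and using the \(S\) supplied by the corollary to select the initial transition in the program-to-net direction — yields the claimed iff.

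The step I expect to be the main obstacle is the faithful correspondence at the reset: I must check that clearing the places \(S_1\) once, at the very start of the widget, genuinely models the cancellation semantics in which all pending instances of a cancelled handler are removed and only posts issued after the last cancel survive. This is precisely where the reversal \(r(G^c)\) and the tracking grammar \(\mathcal{C}\) do the real work already established in Lem.~\ref{lem:grammar}: because \(r(G^c)\) reads the handler body backwards, a cancel of \(b\) occurring last in \(G^c\) is encountered first in \(r(G^c)\), so all earlier posts of \(b\) are suppressed by the projection \(\proj_{\Sigma\setminus S}\) and never reach the \(\Sigma\)-places, while the pre-existing multiset \(\mmap(b)\) is removed by the single reset. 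I would therefore spend the bulk of the write-up verifying that the tokens generated on each place \(b\) coincide with \(\parikh(w)(b)\) of Lem.~\ref{lem:grammar}, so that no spurious or missing instances arise from the interaction of the one-shot reset with the budgeted derivation simulation.
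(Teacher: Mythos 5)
Your proposal is correct and follows exactly the route the paper intends: the paper's own ``proof'' is the single remark that the lemma ``is proved similar to Lem.~\ref{lem:construction_final}'', and your three-step mirroring (Cor.~\ref{lem:equivalentprogramwithcancel} in place of Lem.~\ref{lem:clear}, then Lem.~\ref{lem:ipl} on the initialized grammar with start symbol \([Y_\emptyset[dX_\sigma d']Y_S]\), then the budget-place argument of Lem.~\ref{lem:smartencoding} with the one-shot reset \(S_1\) on the initial transition) is precisely that similarity spelled out. Your identification of the delicate point --- that a single reset at widget entry matches the ``posts after the last cancel'' semantics because \(r(G^c)\times\mathcal{C}\) already suppresses the non-surviving posts, and that \(\Sigma\)-places are never consumed inside the widget --- is exactly the verification the paper leaves implicit.
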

%

\begin{construction}\label{constr:parikhiireset}
	Let \(\ap=(D,\Sigma\cup\Sigmabar,\Sigma_i,G,R,d_0,\mmap_0)\) be an asynchronous program with cancel.
	Define $(N_{\ap},\mmap_{\imath})$ to be an initialized $\pnr$  where
	(1) $N_{\ap}=(S_{\ap}, T_{\ap}, F_{\ap})$ is given as follows:
		\begin{itemize}
			\item the set $S_{\ap}$ is given by $D\cup\Sigma\cup \bigcup_{c\in\mathfrak{C}} S^{\nsim}_c$

		\item the set $T_{\ap}$ of transitions is given by
			\( \bigcup_{c\in\mathfrak{C}} \bigl(\set{t_c^{<}}\cup T^{\nsim}_c\cup \set{t_c^{>}}\bigr)\)

		\item $F_{\ap}$ is such that for each $c=(d_1,a,d_2)\in\mathfrak{C}$ we have
				\begin{align*}
					F_{\ap}(t_c^{<})&=\tuple{\multi{d_1, a},\multi{(\mathit{begin},c)},\emptyset}\\
					F_{\ap}(T^{\nsim}_c)&=F^{\nsim}_{c}(T^{\nsim}_c)\\
					F_{\ap}(t_c^{>})&=\tuple{\multi{(\mathit{end},c)}, \multi{d_2},\emptyset}
				\end{align*}
		\end{itemize}
		and (2) $\mmap_{\imath}=\multi{d_0}\oplus \mmap_0$.
\end{construction}

From the previous lemma, it follows that.
\begin{lemma}
	Let $\ap$ be an asynchronous program with cancel and let \((N_{\ap},\mmap_{\imath})\) be
	an initialized $\pn$ as given in Constr.~\ref{constr:parikhiireset}. We have
	\((d,\mmap)\) is reachable in $\ap$ if{}f $\multi{d}\oplus\mmap$ is reachable in $N_{\ap}$ from \(\mmap_{\imath}\).
	\label{lem:equivpncancel}
\end{lemma}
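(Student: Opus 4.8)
The plan is to reuse the stitching argument of Lem.~\ref{lem:equivpn} essentially verbatim, with Lem.~\ref{lem:constructioncancel} playing the role that the adequacy condition \eqref{eq:correctmodel} played in the cancel-free case. Structurally, Constr.~\ref{constr:parikhiireset} differs from Constr.~\ref{constr:pn} only in that each widget $N^{\nsim}_c$ is now a \(\pnr\) (its reset arcs realizing the cancellation of all pending instances of a handler), while the wrapping transitions $t_c^{<}$ and $t_c^{>}$, and the initial marking $\mmap_{\imath}=\multi{d_0}\oplus\mmap_0$, are exactly as before. So I would first record the single-dispatch correspondence and then lift it to runs by induction, exactly as for asynchronous programs without cancel.

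For the single-step correspondence, fix a context $c=(d_1,a,d_2)\in\mathfrak{C}$. Firing $t_c^{<}$ consumes one token from place $d_1$ and one from place $a$ (the dispatched instance) and marks $(\mathit{begin},c)$; the widget $N^{\nsim}_c$ then executes, reading the remaining $\Sigma$-tokens so that its reset arcs can implement cancellation, and by Lem.~\ref{lem:constructioncancel} it can reach $\multi{(\mathit{end},c)}\oplus\mmap'$ precisely when $\ap$ admits the matching dispatch taking the buffer to $\mmap'$ and the global state from $d_1$ to $d_2$; finally $t_c^{>}$ rewrites $(\mathit{end},c)$ to $d_2$. Thus one dispatch of $\ap$ corresponds exactly to one maximal segment of the shape $t_c^{<}\,w\,t_c^{>}$ with $w\in(T^{\nsim}_c)^*$ in $N_{\ap}$, and the endpoints of such segments are precisely the \emph{configuration markings} $\multi{d}\oplus\mmap$ with $d\in D$ and $\mmap\in\multiset{\Sigma}$. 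For the forward direction of the lemma I would then induct on the length of the $\ap$-run, appending one segment per dispatch starting from $\mmap_{\imath}$, which encodes $(d_0,\mmap_0)$. For the converse I would induct on the number of occurrences of transitions $t_c^{>}$ along the $N_{\ap}$-run, showing each complete segment is a legal dispatch by Lem.~\ref{lem:constructioncancel}.

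The hard part will be making airtight the control invariant needed for the converse: that the only reachable markings of $N_{\ap}$ that place a token on some $D$-place while leaving every widget-internal place (the places $(\mathit{begin},c)$, $(\mathit{end},c)$, the $\mathcal{Z}$-places, and $(\$,c)$) empty are exactly the configuration markings $\multi{d}\oplus\mmap$. Since $t_c^{<}$ is the unique way to move a token off a $D$-place and $t_c^{>}$ the unique way to put one back, and since widget-internal places carry tokens precisely while a widget is mid-execution, at most one widget is ever active; hence any run reaching a configuration marking decomposes canonically into complete segments $t_c^{<}\,w\,t_c^{>}$ separated by idle periods on a $D$-place. The subtlety specific to the cancel case is the reset arcs: I must check that they fire only inside an active widget and never zero out a $D$-place or leave the net in a spurious state where a $D$-place is marked while budget or nonterminal tokens still linger. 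This is exactly what the $(\$,c)$-budget bookkeeping in $N^{\nsim}_c$ enforces (the exit transition consumes $k{+}1$ budget tokens, which is possible only once a full index-bounded derivation has completed), so the invariant follows as in the cancel-free widget. Once the invariant is established, both directions reduce to routine token bookkeeping, and I would keep the written proof short, deferring to Lem.~\ref{lem:constructioncancel} just as the proof of Lem.~\ref{lem:equivpn} deferred to \eqref{eq:correctmodel}.
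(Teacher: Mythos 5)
Your proposal is correct and matches the paper's (extremely terse) argument: the paper derives Lem.~\ref{lem:equivpncancel} directly from Lem.~\ref{lem:constructioncancel} by the same stitching-and-induction argument used for Lem.~\ref{lem:equivpn}, which is exactly what you spell out, including the key observation that the cancel widgets must now interact with the ambient $\Sigma$-tokens via reset arcs. The control invariant you flag (a single active widget, delimited by $t_c^{<}$ and $t_c^{>}$, with reset sets confined to $\Sigma$) is the right thing to make explicit and is indeed what the paper leaves implicit.
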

%

\subsection{Model checking}

We now summarize the status of model checking asynchronous programs with cancel.

\begin{theorem}{}
\label{thm:canceldecid}
\begin{enumerate}
\item
The safety (global state reachability) problem for asynchronous programs with cancel is decidable.

\item The configuration reachability problem for asynchronous programs with cancel is undecidable.

\item
The boundedness problem for asynchronous programs with cancel is undecidable.
\end{enumerate}
\end{theorem}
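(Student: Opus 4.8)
The plan is to route everything through the reset-arc Petri net ($\pnr$) machinery of Sect.~\ref{sec:pnrt}: decidability of safety will follow from a reduction to $\pnr$ coverability, while the two undecidability claims will follow from encoding an \emph{arbitrary} $\pnr$ as an asynchronous program with cancel, using the cancel operation to realize reset arcs. For claim (1), I would mimic the cancel-free argument of Lem.~\ref{lem:equivpn}. Given $\ap$, build the initialized $\pnr$ $(N_{\ap},\mmap_{\imath})$ of Constr.~\ref{constr:parikhiireset}. This construction keeps the invariant of Constr.~\ref{constr:pn} that the places of $D$ are all empty while a widget is active and hold a single token --- recording the current global state --- exactly at the boundaries between dispatches (since $t_c^{<}$ consumes the unique $D$-token and only $t_c^{>}$ restores one). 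Hence any reachable marking of $N_{\ap}$ puts a token on $d_f$ iff it has the boundary form $\multi{d_f}\oplus\mmap$; combined with Lem.~\ref{lem:equivpncancel}, the global state $d_f$ is reachable in $\ap$ iff $\multi{d_f}$ is coverable in $N_{\ap}$. Since coverability of $\pnr$ is decidable by Thm.~\ref{thm:pntcoverdecid}, safety is decidable.

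For claims (2) and (3) the reductions go the other way. I would encode an arbitrary $\pnr$ $(N,\mmap_0)$ as an asynchronous program with cancel by adapting the simulation of Fig.~\ref{fig-pnbound2aabound}: each place $p$ becomes a handler $p$ (the task buffer tracking the marking), ${\tt runPN}$ nondeterministically selects a transition $t$ and locks ${\tt st}$ to $(t,\hat{I}(t))$, and the suffix-consuming dispatches remove one pending instance per input place; while ${\tt st}\notin (T\cup\set{\varepsilon})\times\set{\varepsilon}$ no other transition can start, so the firing of $t$ is atomic. The single new ingredient is the completion step ($w=\varepsilon$): there, \emph{before} posting the handlers of $O(t)$, the running handler issues ${\tt cancel}\ q$ for every $q\in Z(t)$. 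Because a cancel empties the task buffer of exactly that handler, this realizes precisely the $\pnr$ firing order of Sect.~\ref{sec:pnrt} (remove $I(t)$, zero $Z(t)$, add $O(t)$), including the corner cases $p\in I(t)\cap Z(t)$ and $p\in Z(t)\cap O(t)$. Arbitrary multiplicities in $I(t),O(t),Z(t)$ are harmless, since we only need a \emph{computable} (not polynomial) reduction, so $\hat{I}(t)$ may be taken in unary and the output posts repeated accordingly. Consequently $\mmap_0\fire{w}\mmap$ in $N$ iff the corresponding boundary configuration is reachable in $\ap$. For (2), composing this with the standard equivalence that reduces ``reach $\mmap_f$'' to ``reach $\varnothing$'' (add a fresh place consumed by a finalizing transition $t_r$ with $I(t_r)=\mmap_f$ and no output, guarding all original transitions by that place), exactly as in the cancel-free configuration-reachability lower bound, lets us pin the target to the configuration with empty task buffer and ${\tt st}=(t_r,\varepsilon)$; this reduces $\pnr$ reachability to configuration reachability. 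For (3), since the task-buffer size equals the marking size up to the bounded additive contribution of the fixed auxiliary handler ${\tt runPN}$ (at most one pending instance circulates), and during a firing the buffer lies between the two boundary markings, $\ap$ is bounded iff $N$ is bounded, reducing $\pnr$ boundedness to boundedness. Both $\pnr$ reachability and $\pnr$ boundedness are undecidable by Thm.~\ref{thm:pntcoverdecid}, so both async-program problems are undecidable.

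The main obstacle is establishing that the cancel-augmented simulation is step-for-step faithful to the $\pnr$ semantics. Concretely, I must verify that the cancel block fires after \emph{all} inputs of $t$ have been consumed and before \emph{any} output is posted (matching $\mmap_1,\mmap_2,\mmap'$ in the $\pnr$ firing rule), that the ${\tt st}$-locking really prevents any spurious transition from interleaving during the multi-dispatch consumption phase, and --- for (3) --- that the auxiliary handlers never accumulate unboundedly, so that boundedness is preserved exactly. The invariant argument accompanying Fig.~\ref{fig-pnbound2aabound} should carry over verbatim, with the new cancel step handled as above.
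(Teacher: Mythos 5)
Your proposal is correct and follows essentially the same route as the paper: part (1) is obtained from Constr.~\ref{constr:parikhiireset}, Lem.~\ref{lem:equivpncancel} and the decidability of $\pnr$ coverability (Thm.~\ref{thm:pntcoverdecid}), while parts (2) and (3) are obtained by simulating an arbitrary $\pnr$ as an asynchronous program with cancel, using {\tt cancel} to implement reset arcs exactly as in Fig.~\ref{fig-asynccancelboundednessisPNRboundedness}, and invoking the undecidability of $\pnr$ reachability and boundedness. The details you flag as needing verification (cancels fire after all inputs are consumed and before outputs are posted, and the {\tt st}-locking prevents interleaving) are precisely the points the paper leaves implicit by appeal to the invariant of Fig.~\ref{fig-pnbound2aabound}.
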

\begin{proof}
Part (1) of Theorem.~\ref{thm:canceldecid} follows from Thm.~\ref{thm:pntcoverdecid} and
Lem.~\ref{lem:equivpncancel}.

To show configuration reachability and boundedness are
undecidable, we use a reduction similar to what we have previously seen at
Fig.~\ref{fig-pnbound2aabound} for $\pn$.
We reduce the reachability and boundedness problems for $\pnr$ to the configuration
reachability and boundedness problems for asynchronous programs with cancel, respectively.
The reachability and the boundedness problems for
$\pnr$ are both undecidable \cite{DFS98}.
Our reduction from the boundedness of \(\pnr\) is given at
Fig.~\ref{fig-asynccancelboundednessisPNRboundedness}. We omit the details, which are
similar to the construction for \(\pn\). The reduction for configuration reachability 
is similar.
\end{proof}

\begin{figure*}[t]
  \centering
  \hspace{\stretch{1}}
    \begin{minipage}{.5\textwidth}
\begin{tabbing}
\ \ \ \ \=\ \ \ \=\ \ \ \=\\
{\it global} {\tt st} = $(\varepsilon,\varepsilon)$;\\
\\
{\tt runPN} () \{\\
\>	if {\tt st} \(\in (T\cup\set{\varepsilon})\times\set{\varepsilon}\) \{\\
\>\>    pick $t\in T$ non det.;\\
\>\>    {\tt st} {\tt =} $(t,\hat{I}(t))$;\\
\>  \}\\
\>  {\tt post} {\tt runPN}();\\
\}\\
\\
Initially: $\mmap_{\imath} \oplus\multi{\mathtt{runPN}}$
\end{tabbing}
    \end{minipage}%
  \hspace{\stretch{1}}
    \begin{minipage}{.45\textwidth}
\begin{tabbing}
\ \ \ \ \=\ \ \ \=\ \ \ \=\ \ \ \=\ \ \ \=\\
$p'$() \{ // for \(p'\in S\) \\
\>  if {\tt st} {\tt ==} $(t, p'\cdot w)$ \{\\
\>\>    {\tt st} {\tt =} $(t, w)$;\\
\>\>  if $w$ {\tt ==} $\varepsilon$ \{\\
\>\>\>      for each $p\in S$ do \{\\
\>\>\>\>       if \(p\in Z(t)\) \{\\
\>\>\>\>\>        {\tt cancel} $p$();\\
\>\>\>\>       \}\\
\>\>\>\>       if \(O(t)(p)>0\) \{\\
\>\>\>\>\>        {\tt post} $p$();\\
\>\>\>\>       \}\\
\>\>\>     \}\\
\>\>    \}\\
\>  \} else \{\\
\>\>    {\tt post} $p'$();\\
\>  \}\\
\}
\end{tabbing}
    \end{minipage}%
  \hspace{\stretch{1}}
	\caption{Let $N=(S,T,F=\tuple{I,O,Z},\mmap_{0})$ be an initialized \(\pnr\)
	such that \(\forall t\in T\colon\card{I(t)}>0\).
	$N$ is unbounded (that is $\fire{\mmap_{\imath}}$ is infinite) if{}f
	the asynchronous program is unbounded.}
  \label{fig-asynccancelboundednessisPNRboundedness}
\end{figure*}

We now show undecidability results when it comes to determine properties
related to infinite runs.
Our proofs use undecidability results for counter machines, which we now introduce.

\begin{definition}
	A \emph{$n$-counter machine} $C$ ($\ncm$ for short), is a tuple
	$\tuple{\set{c_i}_{1\leq i\leq n},L,\instr}$ where:
	\begin{itemize}
		\item each $c_i$ takes its values in $\mathbb{N}$;
		\item $L=\set{l_1,\ldots, l_u}$ is a finite non-empty
                  set of locations;
		\item $\instr$ is a function that labels each location $l\in L$ with an instruction that has one of the following forms:
			\begin{itemize}
				\item $l\colon c_j\assign c_j+1 ;
                                  \texttt{ goto } l'$ where $1\leq
                                  j\leq n$ and $l'\in L$, this is
                                  called an {\em increment}, and we
					define
                                        $\typeinst(l)=\tuple{\inc_j, l'}$;
				\item $l\colon c_j\assign c_j-1 ;
                                  \texttt{ goto } l'$ where $1\leq
                                  j\leq n$ and $l'\in L$, this is
                                  called a {\em decrement}, and we
					define $\typeinst(l)=\tuple{\dec_j,l'}$;
				\item $l\colon \texttt{if } c_j=0
                                  \texttt{ then goto } l' \texttt{
                                    else goto } l''$ where $1\leq
                                  j\leq n$ and $l',l''\in L$, this is
                                  called a {\em zero-test}, and we define $\typeinst(l)=\tuple{\zerotest_j,l',l''}$;
			\end{itemize}
	\end{itemize}
\end{definition}

We define $\twocm$ and $\threecm$ as the class of $2$-counter and $3$-counter
machines, respectively.

\noindent %
{\bf Semantics.} 
The instructions have their usual obvious semantics, in particular,
decrement can only be done if the value of the counter is strictly greater
than zero.

A \emph{configuration} of an $\ncm$ $\tuple{\set{c_1,\ldots,c_n},L,\instr}$ is a
tuple $\tuple{\loc,v_1,v_2,\ldots,v_n}$ where $\loc\in L$ is the value of the program
counter and, $v_1, \ldots, v_n$ are positive integers that gives the values of
counters $c_1,\ldots,c_n$, respectively.
We adopt the convention that every \(\ncm\) is such that \(L\) contains
a special location \(l_1\) called the \emph{initial location}.

A \emph{computation} $\gamma$ of an $\ncm$ is
a finite sequence of configurations $\tuple{\loc^1,v^1_1,\ldots,v^1_n},
\tuple{\loc^2,v^2_1,\ldots,v^2_n},\dots,\tuple{\loc^r,v^r_1,\ldots,v^r_n}$
such that the following conditions hold.
$(i)$ ``Initialization'': $\loc^1=l_1$ and for each $i\in\set{1,\ldots,n}$,
we have $v^1_i=0$. That is, a computation
starts in $l_1$ and all counters are initialized to $0$.
$(ii)$ ``Consecution'': for
each $i\in\nats$ such that $1\leq i\leq \card{\gamma}$ we have that
$\tuple{\loc^{i+1},v^{i+1}_1,\ldots,v^{i+1}_n}$ is the configuration obtained from
$\tuple{\loc^i,v^i_1,\ldots,v^i_n}$ by applying instruction $\instr(\loc_i)$.
A configuration $c$ is \emph{reachable} if there
exists a finite computation $\gamma$ whose last configuration $c$.
A location $\ell\in L$ is reachable if there exists a reachable
configuration
$\tuple{\ell, v_1,\ldots,v_n}$ for some $v_1,\ldots, v_n\in\nats$.

Given an $\ncm$ $C$ and $F\subseteq L$, the
\emph{reachability problem} asks if some $\ell\in F$ is reachable.
If so, we say \(C\) \emph{reaches} \(F\).

\begin{theorem}{\cite{Min67}}
	The reachability problem for $\ncm$ is undecidable for \(n\geq 2\).
	\label{thm:2cmundec}
\end{theorem}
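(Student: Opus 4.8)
The plan is to reduce a known undecidable problem to $\ncm$ reachability. Since a $\twocm$ is, up to padding with unused counters, a special case of an $\ncm$ for every $n\geq 2$, it suffices to prove undecidability for $n=2$: any decision procedure for $n$-counter reachability would in particular decide $\twocm$ reachability. I would therefore establish the undecidability of reachability for $\twocm$ by simulating Turing machines, whose halting problem is undecidable.

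First I would show that a Turing machine $M$ can be simulated by a counter machine with some fixed number $k$ of counters. The idea is to keep the current control state of $M$ together with the scanned symbol in the finite location set $L$, and to encode the two half-tapes to the left and right of the head as two natural numbers read in base $b$, where $b=\card{\Gamma}$ is the size of $M$'s tape alphabet. A single step of $M$ (rewrite the scanned cell, move the head) then becomes a short sequence of counter operations: reading/writing the least significant base-$b$ digit of a half-tape amounts to dividing/multiplying the corresponding counter by $b$ together with adding or subtracting a digit, and these are realised by $\inc$, $\dec$, and $\zerotest$ instructions using a couple of auxiliary scratch counters. This yields a $k\mathsf{CM}$ $C_M$ and a target location $\ell_{\mathrm{halt}}$ such that $M$ halts iff $C_M$ reaches $\set{\ell_{\mathrm{halt}}}$, giving undecidability for $k\geq 3$.

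The technical heart is to reduce the number of counters from $k$ down to $2$ by a G\"odel encoding. I would represent a configuration $\tuple{v_1,\dots,v_k}$ of the counters of $C_M$ by the single integer $N=p_1^{v_1}\cdots p_k^{v_k}$, where $p_i$ is the $i$-th prime, stored in the first counter; the second counter serves purely as scratch. Under this encoding each primitive instruction translates into an arithmetic operation on $N$: incrementing $c_i$ is multiplication of $N$ by $p_i$; decrementing $c_i$ is division of $N$ by $p_i$, which is enabled exactly when $p_i\mid N$ (equivalently $v_i>0$); and the test $c_i=0$ is the test $p_i\nmid N$. Each of multiply-by-constant, divide-by-constant, and divisibility is implementable with exactly two counters by the standard transfer gadget: one repeatedly moves tokens from counter $1$ to counter $2$ (and back), letting the finite control count residues modulo $p_i$ and thereby compute $N\bmod p_i$ and $\lfloor N/p_i\rfloor$. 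The resulting $\twocm$ $C'_M$ reaches the location corresponding to $\ell_{\mathrm{halt}}$ iff $C_M$ does, so $M$ halts iff $C'_M$ reaches its accept location.

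Composing the two reductions, Turing-machine halting reduces to $\twocm$ reachability, which is therefore undecidable, and hence so is $\ncm$ reachability for every $n\geq 2$. The main obstacle is the two-counter simulation gadget: implementing multiplication by a constant, division by a constant, and the divisibility test using only two counters while faithfully restoring the encoding $N$ after each step, and in particular implementing the zero-test exactly (it is precisely the $\zerotest$ instruction that lifts counter machines to Turing power---two-counter machines lacking zero-tests are strictly weaker). Care is needed to ensure each gadget returns the scratch counter to $0$ and reconstructs $N$ correctly, and that reachability is preserved in both directions so that no spurious accepting runs are introduced.
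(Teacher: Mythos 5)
Your proposal is correct and is precisely the classical argument of Minsky that the paper invokes via the citation to \cite{Min67} (the paper gives no proof of its own): simulate a Turing machine by a multi-counter machine via base-$b$ half-tape encodings, then collapse to two counters by the prime-power G\"odel encoding with transfer gadgets for multiplication, division, and divisibility testing. Nothing further is needed.
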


\begin{figure*}[t]
  \centering
  \hspace{\stretch{1}}
    \begin{minipage}{.65\textwidth}
\begin{tabbing}
\ \ \ \ \=\ \ \ \=\ \ \ \=\ \ \ \=\\
global {\tt loc}=\(l_1\);\\
\\
{\tt main}() \{\\
\>  while(*)\\
\>\> {\tt post} \(I\)();\\
\}\\
\\
\(c_j\)() \{ // for \(j\in\set{1,2,3}\)\\
\> if \(\typeinst({\tt loc})==\tuple{\dec_j,l'} \) \{\\
\>\> {\tt loc}=\(l'\);\\
\>\> {\tt post} \(I\)();\\
\> \} else if \\
\>\>\(\typeinst({\tt loc})==\tuple{\zerotest_j,l',l''}\) \{\\
\>\> {\tt loc}=\(l''\);\\
\>\> {\tt post} \(c_j\)();\\
\> \} else\\
\>\> {\tt loc}=\(\bot\);\\
\}\\
\\
Initially: \(\multi{{\tt main}}\)
\end{tabbing}
    \end{minipage}%
  \hspace{\stretch{1}}
    \begin{minipage}{.35\textwidth}
\begin{tabbing}
\ \ \ \ \=\ \ \ \=\\
\(I\)() \{\\
\>  if \(\typeinst({\tt loc})==\tuple{\inc_j,l'}\) \{\\
\>\> {\tt loc}=\(l'\);\\
\>\> {\tt post} \(c_j\)();\\
\>  \} else if \\
\>\>\(\typeinst({\tt loc})==\tuple{\zerotest_j,l',l''}\) \{\\
\>\> {\tt loc}=\(l'\);\\
\>\> {\tt cancel} \(c_j\)();\\
\>\> {\tt post} \(I\)();\\
\>  \} else\\
\>\>  {\tt loc}=\(\bot\);\\
\}
\end{tabbing}
    \end{minipage}%
  \hspace{\stretch{1}}
	\caption{Let \(C'=(\set{c_1,c_2,c_3}, L,\instr)\) be the \(\threecm\) defined
	upon a reachability problem instance for \(\twocm\),
	the above asynchronous program with cancel has an infinite computation if{}f
	\(C'\) has an infinite bounded computation.
	In the above program, whenever \texttt{loc} equals \(\bot\) then every conditional fails.}
  \label{fig:simpleaacanceltermi}
\end{figure*}

\begin{theorem}{}
	Determining if an asynchronous program with cancel has an infinite run is
	undecidable.
	\label{thm:fairetermi}
\end{theorem}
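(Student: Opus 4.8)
The plan is to reduce the reachability problem for two-counter machines, which is undecidable by Thm.~\ref{thm:2cmundec}, to the existence of an infinite run of an asynchronous program with cancel. Following Fig.~\ref{fig:simpleaacanceltermi}, I would factor the reduction through an intermediate problem on three-counter machines, namely whether a $\threecm$ admits an \emph{infinite bounded computation} (an infinite computation along which every counter stays below some fixed bound). Thus the chain is ``$C$ reaches $l_f$'' $\iff$ ``$C'$ has an infinite bounded computation'' $\iff$ ``$\ap$ has an infinite run,'' where $C$ is the given $\twocm$, $C'$ the constructed $\threecm$, and $\ap$ the asynchronous program with cancel of Fig.~\ref{fig:simpleaacanceltermi}.

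For the first equivalence I would build $C'$ so that it deterministically simulates $C$ while incrementing the spare counter $c_3$ at each simulated step, and so that on entering $l_f$ it switches to a self-loop that freezes all counters. If $C$ reaches $l_f$ in $k$ steps then $C'$ performs those $k$ bounded steps (with $c_3$ capped at $k$) and then loops forever with frozen counters, an infinite bounded computation; if $C$ never reaches $l_f$ then $C'$ either halts or runs forever while $c_3$ grows without bound, so no infinite computation of $C'$ is bounded. Determinism of counter machines makes ``$C'$ has an infinite bounded computation'' equivalent to ``$C$ reaches $l_f$.''

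For the second equivalence I would verify the claim of Fig.~\ref{fig:simpleaacanceltermi}. The value of counter $c_j$ is the number of pending instances of handler $c_j$, the location is stored in the global state, an $\inc_j$ posts a fresh $c_j$, a $\dec_j$ is realized by \emph{dispatching} a $c_j$ (which reposts an $I$), the nonzero branch of a $\zerotest_j$ is realized by dispatching a $c_j$ and reposting it (so at least one instance must be pending), and the zero branch is realized by handler $I$ performing \texttt{cancel} $c_j$. Handler \texttt{main} seeds the task buffer with a finite supply of $I$ instances acting as ``fuel'' for the increments. The easy ($\Leftarrow$) direction replays the bounded computation of $C'$ faithfully, taking the zero branch only when $c_j$ is genuinely empty so that the \texttt{cancel} is a no-op; a faithful step never changes the total number of pending instances, so seeding \texttt{main} with strictly more fuel than the (finite) bound on the counter contents guarantees that an $I$ is always available for the next increment, yielding an infinite run.

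The main obstacle is the ($\Rightarrow$) direction, because \texttt{cancel} of a \emph{non-empty} $c_j$ lets the program cheat on a zero test and take the zero branch spuriously. The key device is the invariant that the total number of pending instances is \emph{non-increasing}: a case analysis over the handlers shows increments, decrements and the nonzero branch each leave the total unchanged, a faithful zero branch leaves it unchanged (the \texttt{cancel} removes nothing), and a cheating zero branch strictly decreases it by the number of cancelled instances. Hence an infinite run performs only finitely many cheats and keeps the counter contents bounded; after the last cheat it is a faithful, bounded, infinite simulation of $C'$. Since every step of $C'$ outside the $l_f$-loop increments $c_3$, such a faithful bounded infinite tail must stay within the $l_f$-loop, so the simulation reaches $l_f$. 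The delicate point I would have to nail down is that the finitely many early cheats cannot, on their own, fabricate a bounded infinite run reaching the $l_f$-loop without $C$ genuinely reaching $l_f$; this is precisely where the spare-counter gadget of $C'$ (which forces $c_3$ to diverge off the loop) must be combined with the fuel/budget accounting, so that any cheating prefix either exhausts the fuel needed for the $c_3$-increments or drives $c_3$ beyond every bound. Establishing this cleanly, and thereby closing the chain of equivalences, is the heart of the proof.
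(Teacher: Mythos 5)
Your overall strategy is the paper's: reduce $\twocm$ reachability to the infinite-run problem via a $\threecm$ $C'$ with a spare counter $c_3$ incremented at each simulation step, simulate $C'$ weakly by the program of Fig.~\ref{fig:simpleaacanceltermi}, and use the fact that the total number of pending instances is non-increasing (and drops exactly at cheats) to conclude that an infinite run has only finitely many cheats. But the one point you flag as ``delicate'' is a genuine gap, and your design of $C'$ makes it unfixable. With a \emph{freeze loop} at $l_f$, a run of the asynchronous program can cheat once --- dispatch $I$ at a zero-test whose counter is actually nonzero, let the \texttt{cancel} wipe the pending $c_j$ instances, and jump to a configuration of $C'$ that is not reachable in any genuine computation --- and from that corrupted configuration reach $l_f$ and sit in the freeze loop forever. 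The freeze loop performs no $c_3$-increments, so neither of your proposed remedies applies: the cheat does not exhaust fuel (it releases instances), and $c_3$ never diverges once the loop is entered. Hence the program can have a bounded infinite run even when $C$ never reaches $l_f$, and the $(\Rightarrow)$ direction of your reduction fails.

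The paper closes exactly this hole by a different design of $C'$: upon reaching $F$, instead of freezing, $C'$ \emph{resets all three counters to zero and returns to $\tuple{l_1,0,0,0}$}, then re-simulates $C$ forever. Because $c_3$ is incremented at every simulated step of $C$, the \emph{only} way for an infinite run of $C'$ to stay bounded is to pass through the reset, hence through the genuine all-zero initial configuration, infinitely often. Consequently the cheat-free infinite tail of the program's run --- which a priori starts from a corrupted configuration --- must eventually reach a point where the task buffer genuinely contains no pending $c_1$, $c_2$, or $c_3$ instances and the location is $l_1$; from that point onward it is an honest infinite bounded computation of $C'$ from its initial configuration, which by construction exists iff $C$ reaches $F$. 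The reset is therefore not an optimization but the mechanism that washes out the corruption introduced by the finitely many cheats; you would need to replace your freeze loop by this reset-and-restart gadget (or an equivalent re-anchoring device) for the argument to go through.
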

\begin{proof}
Our proof follows the proof of \cite{EFM99} which reduces the termination
of broadcast protocols to the reachability problem for \(\ncm\).

We first start with some additional notions on counter machines.
A configuration $\tuple{\loc,v_1,v_2,\dots,v_n}$ of an $\ncm$ is 
$k$\emph{-bounded} if $\sum_{i=1}^{n}v_i\leq k$. A computation $\gamma$ is
$k$-bounded if all its configurations are $k$-bounded,
and \emph{bounded} if it is $k$-bounded for some positive integer $k$.

Consider an instance of the reachability problem of a $\twocm$ given 
by \(C=\tuple{\set{c_1,c_2},L,\instr}\) and \(F\subseteq L\).
Without loss of generality, we assume that $l_1$ does not have an
``incoming edge'' in $C$.
Define $C'$ to be a \(\threecm\) that behaves as follows. $C'$ simulates $C$ on
counters $c_1$ and $c_2$ and increases $c_3$ by \(1\) after each step of
simulation.  If $C$ reaches some location in $F$, then $C'$ goes back to its initial
configuration $\tuple{l_1,0,0,0}$.  We make the following two observations about
$C'$:

\begin{itemize}
	\item $C'$ has an infinite bounded computation if{}f $C$ reaches $F$. Because
		after each step $C'$ increments counter $c_3$, the only infinite bounded
		computation of $C'$, if any, corresponds to the infinite iteration of a
		run of $C$ that reaches $F$.
	\item In every infinite bounded computation of $C'$, the initial configuration
		$\tuple{l_1,0,0,0}$ occurs infinitely often.
\end{itemize}

We can simulate $C'=\tuple{\set{c_1,c_2,c_3},L,\instr}$ in a weak sense by
using an asynchronous program with cancel \(\ap\) given at
Fig.~\ref{fig:simpleaacanceltermi}.
The simulation uses procedures $c_1$, $c_2$, and $c_3$ to simulate
decrements of counters as well as zero-tests where the ``else
branch'' is taken.
It additionally uses a procedure $I$ to simulate increments to
variables,
as well as the ``then branch'' for a zero-test.
The location $\bot$ is a special ``halt'' location with no
instructions (so the simulation eventually terminates once the
location is set to $\bot$).

We call a simulation \emph{faithful} if whenever the then-branch of a zero
test is executed, there are no pending instances of handler $c_j$ (and
thus the cancel is a no-op).
A simulation may not be faithful because the dispatch of handler $I$ amounts to guess that the then-branch is
taken, and cancels any pending instances of handler \(c_j\).
If there were pending instances of $c_j$, this guess is wrong, but
these instances get removed anyway by the cancel.
In that case we say that $\ap$ \emph{cheats}.


We prove that if $C$ reaches $F$, then by the above observation $\ap$ has an infinite run.  
If $C$ reaches $F$, then $C'$ has an infinite bounded computation \(\gamma\), which
		iterates infinitely often a computation of $C$ that reaches $F$.  By
		definition of bounded computation, there exists \(b\geq 0\) such that
		\(\gamma\) is \(b\)-bounded.  Let \(\rho\) be a run of $\ap$ that initially
		executes ``{\tt post} \(I\)()'' $b$ times and then
		 faithfully simulates \(\gamma\).
		Since this is a faithful simulation, each time a ``{\tt cancel} \(c_i\)'' (for \(i\in\set{1,2}\))
		statement is executed, there is no pending instance of handler \(c_i\) to
		remove. 
		Since \(\rho\) can simulate every step of \(\gamma\), it is infinite.

We now prove that if \(\ap\) has an infinite run, then $C$ reaches
$F$. 
Here, we have to take into account possible cheating in the
simulation.
Let \(\rho\) be an
infinite run of \(\ap\). 
Notice that in this run, the variable $\mathtt{loc}$ can never be set
to $\bot$ (since any run of $\ap$ where $\mathtt{loc} = \bot$
eventually terminates.
Suppose in this run, the statement ``{\tt
  post}\(I\)()'' was executed $b$ times in {\tt main}.
After the execution of {\tt main}, the number of pending handlers is
always at most $b$, and thus the execution encodes a $b$-bounded run
of the counter machine.
Moreover, the number of pending handlers only decreases if there is a cheat (that
is, some pending handler $c_j$ is canceled).
Thus, the infinite execution \(\rho\) can have only finitely many cheats.
Take a suffix of $\rho$ containing no cheats. 
It corresponds to a bounded infinite simulation $\gamma$ of $C'$. 
Now recall that every infinite
bounded run of $C'$ contains infinitely many initial configurations. 
So some suffix $\gamma'$ of $\gamma$ is an infinite computation of $C'$. 
Thus, $C$ reaches $F$.
\end{proof}

It can also be shown that the fair non termination and fair starvation problem
for asynchronous program with cancel are also undecidable.  Let us sketch the
main intuitions here. For the fair non termination problem, it suffices to
modify the \(\threecm\) \(C'\) as follows. In the initial configuration
\(\tuple{l_1,0,0,0}\), instead of simulating \(C\), \(C'\) first increments and then
decrement each counter \(c_i\) for \(i\in\set{1,2,3}\). Then \(C'\) simulates
\(C\) as given above. Observe that this modification preserves the correctness of the above
proof. Let us now turn to the asynchronous program with cancel \(\ap\)
simulating this updated \(C'\). We conclude from the above modification that if
\(\ap\) simulates the bounded infinite run of \(C'\) faithfully then the run is
fair because a faithful simulation requires the dispatch of every handler
(i.e. \(c_1(), c_2(), c_3()\) and \(I()\)). Therefore the infinite run is fair.

For the fair starvation problem, let \(k\) denote the value such that there is a
\(k\)-bounded infinite computation in \(C'\).
We will now show there exists a fair infinite run that starves handler \(I()\).
In this run, {\tt main} posts at least \(k+2\) instances of handler \(I()\). 
This will ensure that after executing the {\tt main} procedure there are at least
\(2\) pending instances of \(I()\) along the fair infinite run and we are done.

\begin{theorem}
	Determining if an asynchronous program  with cancel \(\ap\) has a fair infinite
	run or determining if \(\ap\) fairly starves some \(a\in\Sigma\) is
	undecidable.
\end{theorem}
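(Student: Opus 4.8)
The plan is to reduce from $\twocm$ reachability, reusing the weak simulation of Fig.~\ref{fig:simpleaacanceltermi} and the analysis already carried out for Thm.~\ref{thm:fairetermi}, and to arrange the construction so that the \emph{same} asynchronous program with cancel $\ap$ admits a fair infinite run, and fairly starves a designated handler, exactly when the underlying $\twocm$ $C$ reaches $F$. Since that problem is undecidable (Thm.~\ref{thm:2cmundec}), both questions follow at once.

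First I would replace $C'$ by the modified $\threecm$ sketched above: before simulating $C$ from the initial configuration $\tuple{l_1,0,0,0}$, have the machine increment and then immediately decrement each of $c_1,c_2,c_3$. This keeps intact the two properties of $C'$ used in the proof of Thm.~\ref{thm:fairetermi} --- that $C'$ has an infinite bounded computation if{}f $C$ reaches $F$, and that the initial configuration recurs infinitely often on any infinite bounded computation --- so the argument used there, namely that an infinite run of $\ap$ can cheat only finitely often, that a cheat-free suffix is a bounded infinite simulation of $C'$, and that such a simulation forces $C$ to reach $F$, transfers verbatim. The purpose of the extra inc/dec cycling is fairness bookkeeping: because the initial configuration recurs infinitely often, in any faithful infinite simulation each of $c_1,c_2,c_3$ is posted (by $I$) and dispatched infinitely often, and $I$ itself is dispatched infinitely often since every simulated step increments $c_3$.

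For fair non-termination I would argue both directions. If $C$ reaches $F$, take the $b$-bounded infinite computation $\gamma$ of $C'$, let the \texttt{while(*)} loop in \texttt{main} post $I$ exactly $b$ times, and faithfully simulate $\gamma$; by the previous paragraph the only handlers pending infinitely often are $I,c_1,c_2,c_3$, each of which is dispatched infinitely often, while \texttt{main} is pending only finitely often, so the fairness condition of Def.~\ref{def:infinite-aaruns} holds. Conversely any fair infinite run is in particular an infinite run, whence $C$ reaches $F$ by the reused argument. For fair starvation I would keep the same $\ap$ but exhibit a run in which \texttt{main} posts $b+2$ copies of $I$. The token-conservation invariant of a faithful simulation gives that the number of pending instances of $I$ equals $(b+2)-\sum_i v_i \geq 2$ at every configuration, so $I$ is dispatched infinitely often yet one fixed pending instance never leaves the buffer; this is a fair infinite run that starves $I$ in the sense of Def.~\ref{def:infinite-aaruns}. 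Conversely a run witnessing fair starvation of \emph{any} handler is a fair infinite run, hence an infinite run, hence forces $C$ to reach $F$.

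The main obstacle is the fairness and counting bookkeeping rather than the counter-machine encoding, which is already in place. Concretely I must verify that the inc/dec cycling really forces every handler that appears in the buffer infinitely often to be dispatched infinitely often --- so that no spurious unfairness is introduced by a counter handler that is posted but, for this particular $C$, never consumed --- and that the budget of $b+2$ initial posts of $I$ keeps the number of pending $I$ at least two at \emph{every} step of the starving run, including the configurations traversed inside a single simulated instruction. The delicate point is the interaction with \texttt{cancel}: I would check that on a faithful run every \texttt{cancel} $c_j$ is a no-op, so that the token count is exactly conserved and the number of pending $I$ stays equal to $(b+2)-\sum_i v_i$, and that the cheating runs, which do lower the token count, are precisely the ones excluded by the ``finitely many cheats'' argument and therefore cannot yield a fair witness when $C$ does not reach $F$.
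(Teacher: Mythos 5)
Your proposal is correct and follows essentially the same route as the paper's own argument: modify $C'$ to increment and then decrement each counter at the recurring initial configuration so that a faithful infinite simulation dispatches every handler infinitely often (giving fairness), and seed the buffer with $k+2$ instances of $I()$ so that the budget invariant keeps at least two pending copies of $I()$ at every step (giving starvation), with the converse directions falling back on the cheat-counting argument of the termination proof. The extra bookkeeping you spell out (the $(b+2)-\sum_i v_i$ invariant and the no-op status of cancels on faithful runs) is consistent with, and slightly more explicit than, the paper's sketch.
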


\subsection{Asynchronous Programs with Cancel and Test}

Our final results investigate the decidability of natural extensions to
asynchronous programs with cancel, where additionally, the program can test for the absence of
pending instances to a particular handler \(p\). 
We model an additional instruction 
{\tt assertnopending} \(p()\) that succeeds if there is no pending
instance of $p$.
Here, we show that safety verification becomes undecidable as well.
Our proof reduces the coverability problem for
an extension of $\pnr$ where we additionally allow one transition whose enabling
condition is augmented by requiring the absence of token in a given place.
We call this transition a transition with \emph{inhibitor arc}.

We first introduce an extension of \(\pnr\) with one transition with inhibitor arc.
\begin{definition}
	A reset net with one inhibitor arc $N$ ($\pnri$ for short) is a tuple
	$\tuple{S,T,F=\tuple{I,O,Z},!,\mmap_0}$ where
	$\tuple{S,T,F=\tuple{I,O,Z},\mmap_0}$ is a \(\pnr\) and $!\in (T\times S)$.
\end{definition}

We know define the semantics for $\pnri$ by extending the one for $\pnr$.

\noindent %
{\bf Semantics.} %
Given a $\pnri$ $N=\tuple{S,T,F,!,\mmap_0}$, and a marking $\mmap$ of $N$, a
transition $t\in T$ is \emph{enabled} at $\mmap$, written $\mmap\fire{t}$, if
(1) $I(t)\preceq\mmap$ and (2) \(!=(t,p)\) implies \(\mmap(p)=0\). We write
$\mmap\fire{t}\mmap'$ if transition $t$ is enabled at $\mmap$ and its
\emph{firing} yields to marking $\mmap'$ defined as in Sect.~\ref{sec:pnrt}.

The coverability problem for $\pnri$ naturally follows from the definition
for $\pnr$.
The following result, due to Laurent Van Begin,
shows that coverability is undecidable in this model.


\begin{theorem}{}
	The coverability problem for $\pnri$ is undecidable.
	\label{thm:coverpntiundec}
\end{theorem}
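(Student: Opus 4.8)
The plan is to reduce the reachability problem for reset nets ($\pnr$), which is undecidable by Thm~\ref{thm:pntcoverdecid}, to the coverability problem for $\pnri$. Given a $\pnr$ $N$ with initial marking $\mmap_0$ and a target marking $\mmap_f$, I would build a $\pnri$ $N'$ together with a marking $\mmap_c$ such that $\mmap_c$ is coverable in $N'$ if and only if $\mmap_f$ is reachable in $N$. The single inhibitor arc is the device that lets the weaker coverability question certify the stronger exact-reachability question: coverability only guarantees reaching some marking \emph{above} a target, whereas reachability of $\mmap_f$ additionally forbids surplus tokens, and a zero-test is exactly what is needed to rule out surplus.

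Concretely, $N'$ would run in two phases, sequenced by a control place. In the \emph{simulation} phase $N'$ faithfully copies every transition of $N$ (ordinary and reset arcs alike), so that the markings of the $N$-places reachable in $N'$ are exactly those reachable in $N$. A phase-switch transition then hands control to a \emph{verification} phase, during which $N'$ consumes $\mmap_f(p)$ tokens from each place $p$; this subtraction is what checks coverability of $\mmap_f$ from below. What remains is to verify that, after the subtraction, \emph{every} $N$-place is empty---equivalently, that the simulation reached exactly $\mmap_f$. Since only one inhibitor arc is available, I would funnel this global emptiness test onto a single designated place $p_!$: the unique inhibitor transition $t_!$ fires only when $p_!$ is empty and is the only transition able to produce the token of $\mmap_c$. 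The construction must arrange that $p_!$ is empty precisely when all $N$-places are, so that coverability of $\mmap_c$ in $N'$ coincides with reachability of $\mmap_f$ in $N$.

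The main obstacle is soundness in the presence of reset arcs. Because a reset arc can delete an arbitrary number of tokens, a naive verification would let the simulation silently discard surplus tokens before the zero-test and thereby pass it without having genuinely reached $\mmap_f$, creating false positives. The heart of the proof is therefore to engineer the funnel so that every token that could witness a surplus is irrevocably routed into $p_!$ and cannot be erased before $t_!$ fires---for instance by maintaining a token-conservation invariant tying $p_!$ to the residual contents of the $N$-places, and by ordering the verification steps so that no reset can intervene once verification has begun. Establishing this invariant, and hence that the single zero-test faithfully certifies exactness, is where the real work lies; the completeness direction (a genuine run reaching $\mmap_f$ induces a computation of $N'$ covering $\mmap_c$) is then routine. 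An alternative route reduces from $\twocm$ reachability (Thm~\ref{thm:2cmundec}), realizing one counter's zero-test with the inhibitor arc and the other's weakly with reset arcs, again using the inhibitor to detect cheating; the reset-net reduction is cleaner and matches the cited source.
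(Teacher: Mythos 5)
You have the right target structure---simulate the net, subtract the target marking, and let the single inhibitor arc certify global emptiness via a ``conservation counter''---and you correctly locate the crux in the interaction between reset arcs and that counter. But the proposal stops exactly where the proof has to happen, and both natural readings of your sketch fail. If the counter $p_!$ is maintained so that it always dominates $\sum_p \mmap(p)$ and is never decremented when a reset of the simulated $\pnr$ deletes tokens, soundness holds but \emph{completeness} fails: a genuine run of $N$ that reaches $\mmap_f$ by actually erasing tokens with resets leaves an indelible surplus in $p_!$, so the inhibitor-guarded transition never fires and $\mmap_c$ is not coverable even though $\mmap_f$ is reachable. If instead you let $p_!$ be drained to compensate, every drained token must be paired with a token genuinely removed from some $N$-place, which a reset---deleting an unbounded, marking-dependent number of tokens in one step---does not let you do. Closing this gap needs a concrete gadget you do not supply, e.g.\ replacing each reset transition by a ``drain mode'' that repeatedly consumes one token from the reset place together with one token from $p_!$, followed by the original reset to absorb whatever was not drained: full draining gives completeness, and partial draining leaves a permanent surplus that the final zero-test detects. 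Note also that your worry about ``no reset intervening once verification has begun'' addresses the wrong phase; the damage is done by resets during the \emph{simulation}.

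For contrast, the paper sidesteps the whole issue by reducing from $\twocm$ reachability of $\tuple{l_f,0,0}$ rather than from $\pnr$ reachability. It simulates both counters' zero-tests weakly with reset arcs and keeps a place $\cnt$ equal to $c_1+c_2$ on faithful runs; a reset deletes tokens only when the simulation cheats, in which case $\cnt$ becomes and forever remains a strict over-approximation, and the single inhibitor transition, fireable from $l_f$ only when $\cnt=0$, certifies faithfulness and $c_1=c_2=0$ in one shot. What makes that work for free is precisely what your route lacks: in a faithful 2CM simulation the resets never actually delete anything, so the conservation counter stays exact without any draining gadget. Your closing ``alternative route'' is also not quite coherent as stated: with a single inhibitor arc (one transition--place pair) you cannot both implement one counter's zero-tests exactly throughout the run and still have an inhibitor left over to detect cheating on the other counter at the end.
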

\begin{proof}
	Our proof reduces the reachability problem for $\twocm$ to the coverability
	problem for $\pnri$.  We consider here a particular case of the reachability
	problem which asks whether a particular control location, e.g. \(l_f\), with
	null counter values is reachable (Is \(\tuple{l_f,0,0}\) reachable?).  This
	problem is known to be undecidable. 

	Fix an instance \((C=\tuple{\set{c_1,c_2},L,\instr},l_f)\) 
	of that problem where \(C\) is the \(\twocm\) and \(l_f\in L\) is a control
	location of \(C\).

	We define the $\pnri$ $N=(S,T,F=\tuple{I,O,Z},!,\mmap_0)$ such that
	\(N\) simulates \(C\) in a weak sense we define below.
	\begin{itemize}
		\item $S= L\cup\set{c_1,c_2}\cup\set{\cnt,\tocov}$
		\item \(T\) and \(F\) are such that \(t\in T\) if{}f one of the following holds:
			\begin{itemize}
				\item 
					$F(t)=\tuple{\multi{l},\multi{c_j,l',\cnt},\emptyset}$
					where 
					\(\typeinst(l)=\tuple{\inc_j,l'}\);
				\item 
					$F(t)=\tuple{\multi{c_j,l,\cnt},\multi{l'},\emptyset}$
					where 
					\(\typeinst(l)=\tuple{\dec_j,l'}\);
				\item 
					$F(t)=\tuple{\multi{l},\multi{l'},\set{c_j}}$
					where 
					\(\typeinst(l)=\tuple{\zerotest_j,l',l''}\);
				\item $F(t)=\tuple{\multi{l,c_j},\multi{l'',c_j},\emptyset}$
					where 
					\(\typeinst(l)=\tuple{\zerotest_j,l',l''}\);
				\item \(F(t)=\tuple{\multi{l_f},\multi{\tocov},\emptyset}\).
			\end{itemize}
	\item $!=(t,\cnt)$ such that
		\(F(t)=\tuple{\multi{l_f},\multi{\tocov},\emptyset}\) namely a token
		is produced in \(\tocov\) provided \(l_f\) contains some token and $\cnt$
		does not;
	\item $\mmap_0=\multi{l_1}$.
	\end{itemize}

	Define $(N,\multi{\tocov})$ to be an instance of the coverability problem
	for $\pnri$. The rest of the proof shows that \(\multi{\tocov}\) is coverable
	if{}f \(C\) reaches the configuration \(\tuple{l_f,0,0}\).

	Intuitively, the following property is maintained by \(N\): 
	as long as \(N\) simulates faithfully \(C\) the place \(\cnt\) 
	holds as many tokens as the sum of tokens in \(c_1\) and \(c_2\); 
	once \(N\) does not
	faithfully simulate \(C\) we have that \(\cnt\) holds strictly more
	tokens than \(c_1\) and \(c_2\). 

	The definition of \(\mmap_0\) shows that initially
	\(\mmap_0(\cnt)=\mmap_0(c_1)+\mmap_0(c_2)=0\), that is \(\cnt\) holds as many
	tokens as \(c_1\) and \(c_2\).  Moreover the definition of \(N\) shows that
	whenever a transition which resets \(c_j\) \(j=1,2\) is fired and removes at
	least one token from \(c_j\) then \(\cnt\) holds more tokens than \(c_1\) and
	\(c_2\). This will reflect that \(N\) incorrectly simulated \(C\). In fact,
	if a transition resets \(c_j\) and removes at least one token from it then we
	find that some \(\zerotest\) instruction was inaccurately simulated because
	the ``then'' branch was taken while the counter tested for \(0\) contained a
	token.  Therefore a token was removed from \(c_j\). Observe that once a reset
	transition of \(N\) has removed a token from \(c_1\) or \(c_2\) then from
	this point on \(\cnt\) holds strictly more than the sum of tokens in \(c_1\)
	and \(c_2\).

	Therefore, given a sequence of transitions $w\in T^*$, such that
	$\mmap_0\fire{w}\mmap$, we have $\mmap(\cnt)=\mmap(c_1)+\mmap(c_2)$ if{}f
	each occurrence of a transition \(t\) such that \(Z(t)=\set{c_j}\) along
	$w$ removes no token from \(c_j\). We thus interpret \(w\) as an accurate
	simulation of \(C\).


	Now suppose \(\tuple{l_f,0,0}\) is reachable in \(C\) through some
	computation \(\gamma\). By accurately simulating \(\gamma\) in \(N\) we find
	that a marking with some tokens in \(l_f\) and no tokens elsewhere is
	reachable, hence that \(\multi{\tocov}\) is coverable.  The other direction is proven by contradiction.

	Assume that \(\tuple{l_f,0,0}\) is not reachable in \(C\) but
	\(\multi{\tocov}\) is coverable in \(N\). Hence there exists \(w\in T^*\)
	such that \(\mmap_0\fire{w}\mmap\), \(\mmap(l_f)\geq 1\) and
	\(\mmap(\cnt)=0\). It follows that
	\(\mmap(c_1)+\mmap(c_2)=0=\mmap(\cnt)\). But we showed above that in this
	case \(w\) is a precise simulation of a computation in \(C\), hence a
	contradiction. 
	
	In fact, whenever \(N\) does not faithfully
	simulate \(C\), every marking \(\mmap\) reachable from this point is such that
	\(\mmap(c_1)+\mmap(c_2)<\mmap(\cnt)\), hence that \(\mmap(\cnt)>0\) since the
	minimum value for \(\mmap(c_1)+\mmap(c_2)\) is 0. This means \(\cnt\) can
	never be emptied, hence that the enabling condition expressed by \(!\) can
	never be satisfied, and finally that \(\multi{\tocov}\) can never be marked.
\end{proof}

We finally obtain the following negative result for the safety problem of
asynchronous programs with cancel and a test for the absence of pending instances
to a particular hander \(p\). Recall that boundedness, configuration reachability, and liveness properties are
undecidable already for the more restricted class without testing for
the absence of a handler.

\begin{figure*}[h]
  \centering
    \begin{minipage}{.5\textwidth}
			\begin{tabbing}
			\ \ \ \ \=\ \ \ \=\ \ \ \=\\
			{\it global} {\tt st} = $(\varepsilon,\varepsilon)$;\\
			\\
			{\tt runPN} () \{\\
			\>	if {\tt st} \(\in (T\cup\set{\varepsilon})\times\set{\varepsilon}\) \{\\
			\>\>    pick $t\in T$ non det.;\\
			\>\>    {\tt st} {\tt =} $(t,\hat{I}(t))$;\\
			\>\>    if \(!=(t,p)\) \{\\
			\>\>\>    {\tt assertnopending} \(p()\);\\
			\>\>    \}\\
			\>  \}\\
			\>  {\tt post} {\tt runPN}();\\
			\}\\
			\\
			Initially: $\mmap_{\imath} \oplus\multi{\mathtt{runPN}}$
			\end{tabbing}
\end{minipage}%
	\caption{Let $N=(S,T,F=\tuple{I,O,Z},!,\mmap_{0})$ be an initialized \(\pnri\)
	such that \(\forall t\in T\colon\card{I(t)}>0\).
	$N$ enables some given \(t_f\) if{}f
	\({\tt st}=(t_c,\varepsilon)\) is reachable in \(\ap\).}
  \label{fig-apcanceltestsafetyISpnticover}
\end{figure*}

\begin{lemma}
The safety problem for asynchronous programs with cancel and test for absence of pending instances is undecidable.
	\label{lem:apcanceltestundecic}
\end{lemma}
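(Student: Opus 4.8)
The plan is to reduce the coverability problem for $\pnri$, which is undecidable by Thm.~\ref{thm:coverpntiundec}, to the safety (global state reachability) problem for asynchronous programs with cancel and test for the absence of pending instances. The reduction is the one outlined in Fig.~\ref{fig-apcanceltestsafetyISpnticover}; it is a direct refinement of the simulations already used to encode $\pn$ (Fig.~\ref{fig-pnbound2aabound}) and $\pnr$ (Fig.~\ref{fig-asynccancelboundednessisPNRboundedness}) as asynchronous programs, so the bulk of the correctness argument can be inherited from those constructions.

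First I would recast coverability as a transition-enabling question, which is the form the figure handles: given a $\pnri$ $N$ and a marking $\mmap$ to cover, add a fresh transition $t_f$ with $I(t_f)=\mmap$, empty output, empty reset set, and no inhibitor arc; then $\mmap$ is coverable from the initial marking iff $t_f$ is enabled at some reachable marking of $N$. I would also ensure $\card{I(t)}>0$ for every transition (the hypothesis of Fig.~\ref{fig-apcanceltestsafetyISpnticover}) by the usual padding with an always-marked place, and discard the trivial case $\mmap=\varnothing$, which is coverable outright.

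Next I would instantiate the asynchronous program $\ap$ of Fig.~\ref{fig-apcanceltestsafetyISpnticover}: one handler $p'()$ for each place $p'\in S$, taken verbatim from the $\pnr$ simulation of Fig.~\ref{fig-asynccancelboundednessisPNRboundedness} (each such handler consumes one token of the precondition recorded in the global variable $\mathtt{st}$, and once the precondition is exhausted it issues $\mathtt{cancel}\ p()$ for every $p\in Z(t)$ and $\mathtt{post}\ p()$ for every $p\in O(t)$), together with $\mathtt{runPN}$, which nondeterministically selects a transition $t$, records $\mathtt{st}=(t,\hat I(t))$, and --- the only new ingredient --- executes $\mathtt{assertnopending}\ p()$ when $!=(t,p)$. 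As before, the content of the task buffer encodes the marking of $N$, the number of pending instances of $p'$ being the number of tokens in place $p'$. The correctness rests on the same invariant established for Fig.~\ref{fig-pnbound2aabound}: whenever $\mathtt{st}=(\mathit{tr},\varepsilon)$ with $\mathit{tr}\in T\cup\set{\varepsilon}$, the multiset $\mmap'$ of pending handler instances satisfies $\mmap_0\fire{w\cdot\mathit{tr}}\mmap'$ in $N$ for some $w\in T^*$. I would prove it by induction on the number of dispatches exactly as for the $\pn$ case; the two additional cases are routine: $\mathtt{cancel}\ p()$ zeroes all pending instances of $p$, matching the reset semantics of Sect.~\ref{sec:pnrt}, and the guard $\mathtt{assertnopending}\ p()$ allows $\mathtt{runPN}$ to commit to $t$ precisely when place $p$ carries no token, matching the enabling condition imposed by the inhibitor arc $!=(t,p)$. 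With the invariant in hand, $\mathtt{st}=(t_f,\varepsilon)$ is reachable in $\ap$ iff $t_f$ is enabled at some reachable marking of $N$, iff $\mmap$ is coverable; since the target is a single global state, this is a safety query, and undecidability of $\pnri$ coverability transfers to safety of asynchronous programs with cancel and test.

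The step I expect to be the main obstacle is arguing that the emptiness test faithfully implements the inhibitor arc. I must show that testing $p$ for emptiness at the moment $t$ is selected --- before any precondition token is consumed --- coincides with the $\pnri$ enabling condition, and that no interleaving of the simulating dispatches lets $\ap$ ``cheat'', e.g.\ by momentarily emptying $p$, by reposting a cancelled instance, or by consuming a token of $p$ between the test and the commitment to $t$. The soundness hinges on the fact that $\mathtt{runPN}$ performs the test atomically within a single dispatch, after $\mathtt{st}$ has been fixed to $(t,\hat I(t))$, while every other handler either reposts itself unchanged or advances $\mathtt{st}$ only when it is named as the head of the remaining precondition; carefully checking that this atomicity rules out all spurious runs is the delicate part of the proof.
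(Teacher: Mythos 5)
Your proposal is correct and follows essentially the same route as the paper: a polynomial reduction from $\pnri$ coverability (undecidable by Thm.~\ref{thm:coverpntiundec}), recast as a transition-enabling question, to global state reachability via the simulation of Fig.~\ref{fig-apcanceltestsafetyISpnticover}, inheriting the invariant from the $\pn$ and $\pnr$ encodings. The paper's own proof is terser but identical in substance; your extra care about the timing of the emptiness test relative to the $\pnri$ enabling condition is a reasonable elaboration of a point the paper leaves implicit.
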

\begin{proof}
	We reduce from coverability problem for $\pnri$ which has been shown to be
	undecidable at Thm.~\ref{thm:coverpntiundec}.  The reduction is similar to
	the one given at Fig.~\ref{fig-asynccancelboundednessisPNRboundedness} only
	that {\tt runPN} has to be slightly modified in order take the augmented
	enabling condition of \(\pnri\) into account. As in
	Sect.~\ref{sec:safetyandboundedness} we assume w.l.o.g.\ that instead of asking if
	some given marking \(\mmap\) is such that
	\(\ucl{\mmap}\in\fire{\mmap_{\imath}}_{N}\) where \(N\) is a \(\pnri\), we
	equivalently asks if there exists a marking
	\(\mmap\in\fire{\mmap_{\imath}}_{N}\) for a \(\pnri\) \(N\) such that
	\(\mmap\) enables some given transition \(t_c\), namely \(\mmap\fire{t_c}\).
	We thus obtain that there exists \(\mmap\in\fire{\mmap_{\imath}}\) such that
	\(\mmap\fire{t_f}\)	if{}f \({\tt st}=(t_c,\varepsilon)\) is reachable in
	\(\ap\).  The resulting code for {\tt runPN} is given at
	Fig.~\ref{fig-apcanceltestsafetyISpnticover}.
\end{proof}

\section{Conclusion}\label{sec:conclusion}

Asynchronous programming is ubiquitous in computing systems.
The results in this paper provide a fairly complete theoretical characterization of the safety and liveness
verification problems for this model.
Initial implementations for safety verification of asynchronous programs were reported in \cite{jmpopl07}.
One interesting direction will be to apply tools for coverability analysis of \(\pn\) to this problem,
using the reduction outlined in this paper.
For liveness verification, the \(\pn\) reachability lower bound is somewhat disappointing.
It will be interesting to see what heuristic approximations can work well in practice.

Since our initial work \cite{gmr09}, there have been
several other related results.  
The problem of whether an asynchronous program
is simulated by or simulates a finite state machine is shown to be decidable in
\cite{cv-tcs09}. The authors also show how to solve the control state
maintainability problem which asks whether an asynchronous program has
an infinite (or terminating) run such that each of its state belongs to a given
upward closed set of configurations.  Safety verification was shown to be
decidable for a model augmenting asynchronous programs with priorities (and
letting higher priority handlers interrupt lower priority ones) in
\cite{ABT08}.  Safety verification was shown to be undecidable for a natural
extension of asynchronous programs with timing \cite{gm09}.  A model of
asynchronous programs in which emptiness of a fixed subset of handlers can be
checked has been proposed in the Linux kernel (see
\url{http://lwn.net/Articles/314808/}).
For this model, safety and boundedness are decidable.
This follows from recent results in \cite{AbdullaM09} (for safety) and \cite{FinkelS10} (for boundedness).
As far as we known, the decidability of termination is still open.
When extended with cancellation of handlers, safety verification becomes undecidable, using Thm.~\ref{thm:coverpntiundec}.

\appendix
\section{APPENDIX: Construction of the grammar $G^R$}\label{sec:synsemproduct} 

\begin{definition}\label{def:Gr}
	Given a \(\cfg\) $G=(\mathcal{X},\Sigma\dotcup\Sigma_i,\prod)$ 
	and a regular grammar $R=(D,\Sigma\dotcup\Sigma_i,\delta)$, define $G^r=(\mathcal{X}^{r},\Sigma\dotcup\Sigma_i,\prod^{r})$
	where $\mathcal{X}^{r}=\set{ [d X d'] \mid X\in\mathcal{X}, d,d' \in D}$, and
	\(\prod^r\) is the least set such that each of the following holds:
	\begin{enumerate}
	\item if \( (X\rightarrow \varepsilon)\in\prod\) and \(d\in D\) then \(([dXd]\rightarrow \varepsilon)\in\prod^r\).
	\item if \( (X\rightarrow a)\in\prod\) and \( (d\rightarrow a\cdot d')\in\delta\) then \( ([dXd']\rightarrow a)\in\prod^r\).
	\item if \([d_0 A d_1],[d_1 B d_2]\in\mathcal{X}^r\) and \( (X\rightarrow AB)\in\prod\)
		then \( ([d_0 X d_2]\rightarrow [d_0 Ad_1][d_1 Bd_2])\in\prod^r \).
	\end{enumerate}
\end{definition}

\begin{lemma}
  Let \(\sigma\in (\Sigma\cup\Sigma_i\cup\set{\varepsilon})\), \(d,d'\in D\) and \(X\in\mathcal{X}\).
  \[ \text{if}\quad d\underset{R}\Rightarrow^* \sigma \cdot d' \land X\underset{G}\Rightarrow^* \sigma\quad \text{then}\quad
	[d X d']\underset{G^{r}}\Rightarrow^* \sigma\enspace .\] %
  \label{lem:length0or1}
\end{lemma}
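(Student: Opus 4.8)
The plan is to induct on the length of the \(G\)-derivation \(X\underset{G}\Rightarrow^*\sigma\), exploiting the normal form of Assumption~\ref{ass:normalform} so that the first production applied to \(X\) has one of the three shapes \(X\to\varepsilon\), \(X\to a\) with \(a\in\Sigma\cup\Sigma_i\), or \(X\to AB\). Throughout, the regular grammar \(R\) plays only a passive role, and I would first isolate the two elementary structural facts it supplies. First, \(d\underset{R}\Rightarrow^*\varepsilon\cdot d'\) forces \(d=d'\): no production of \(R\) rewrites a nonterminal into a bare nonterminal (every production either emits a terminal or erases the nonterminal), so a sentential form consisting of a single nonterminal and no terminal is reachable from \(d\) only in zero steps. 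Second, \(d\underset{R}\Rightarrow^* a\cdot d'\) for a single terminal \(a\) forces \(d\to a\cdot d'\in\delta\): the first step must emit \(a\) (it cannot be \(d\to\varepsilon\), which would leave no nonterminal), and the remainder emits no further terminal, hence by the previous fact leaves the nonterminal unchanged at \(d'\).

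With these in hand, the base case is a one-step derivation. If the production is \(X\to\varepsilon\), then \(\sigma=\varepsilon\), the first fact gives \(d=d'\), and item (1) of Def.~\ref{def:Gr} yields \(([dXd]\to\varepsilon)\in\prod^{r}\), so \([dXd']\underset{G^{r}}\Rightarrow^*\varepsilon\). If the production is \(X\to a\), then \(\sigma=a\), the second fact gives \(d\to a\cdot d'\in\delta\), and item (2) of Def.~\ref{def:Gr} gives \(([dXd']\to a)\in\prod^{r}\), as required.

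For the inductive step the first production must be \(X\to AB\), and the derivation splits into strictly shorter sub-derivations \(A\underset{G}\Rightarrow^*\sigma_A\) and \(B\underset{G}\Rightarrow^*\sigma_B\) with \(\sigma_A\sigma_B=\sigma\). Since \(\card{\sigma}\le 1\), one of \(\sigma_A,\sigma_B\) is \(\varepsilon\), and the crux is to split the \(R\)-derivation consistently by choosing the intermediate state to match the empty half. If \(\sigma_A=\varepsilon\), I take the intermediate state to be \(d\) itself, using \(d\underset{R}\Rightarrow^* d\) (zero steps) for the left half and the hypothesis \(d\underset{R}\Rightarrow^*\sigma\cdot d'\) for the right half; the induction hypothesis then gives \([dAd]\underset{G^{r}}\Rightarrow^*\varepsilon\) and \([dBd']\underset{G^{r}}\Rightarrow^*\sigma\), and item (3) with \(d_0=d_1=d\) and \(d_2=d'\) produces \([dXd']\to[dAd][dBd']\), whence \([dXd']\underset{G^{r}}\Rightarrow^*\sigma\). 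The symmetric case \(\sigma_B=\varepsilon\) uses the intermediate state \(d'\) (forced for the right half by the first structural fact) and item (3) with \(d_1=d_2=d'\). Note that the subcase \(\sigma=\varepsilon\) is absorbed here, so no separate treatment is needed beyond the base case.

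The step I expect to be the main obstacle is the bookkeeping that links the two grammars: ensuring that whenever a sub-derivation of \(G\) yields \(\varepsilon\), the corresponding segment of the \(R\)-derivation can be taken to keep the global state fixed, so that the variable names \([d\,X\,d'']\) manufactured by item (3) line up exactly with the states appearing in the hypotheses of the induction. Everything rests on the two structural facts about \(R\) isolated at the outset; once those are available the induction itself is routine.
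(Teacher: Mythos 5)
Your proposal is correct and follows essentially the same route as the paper's proof: induction on the length of the $G$-derivation, a base case splitting on $X\to\varepsilon$ versus $X\to a$ using the observation that $d\underset{R}\Rightarrow^*\sigma\cdot d'$ forces either $d=d'$ (when $\sigma=\varepsilon$) or $(d\to\sigma\cdot d')\in\delta$, and an inductive step that decomposes $X\to AB$ into two sub-derivations, one of which yields $\varepsilon$, choosing the intermediate state ($d$ or $d'$) accordingly so that item (3) of Def.~\ref{def:Gr} applies. Your explicit isolation of the two structural facts about $R$ is a slightly tidier presentation of exactly the reasoning the paper uses.
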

\begin{proof}
  The proof is by induction on the length of the derivation  $X\underset{G}{\Rightarrow}^* \sigma$.

 {\(\mathbf{i=1}.\)} Then \(X\Rightarrow \sigma\). Moreover 
 \(d\underset{R}\Rightarrow^* \sigma\cdot d'\) shows that either
 \(d\underset{R}\Rightarrow \sigma\cdot d'\) or \(d=d'\) and \(\sigma=\varepsilon\)
 (i.e. \(d\underset{R}\Rightarrow^0 \sigma\cdot d'\)).

 In any case we have that \( ([dXd']\rightarrow \sigma)\in\prod^r\) by
 definition of \(G^r\), hence we find that \([dXd']\underset{G^r}{\Rightarrow} \sigma\).

 {\(\mathbf{i>1}.\)} We have  \(X\Rightarrow^{i} \sigma\).
 Then we necessarily have \(X\Rightarrow Z Y \Rightarrow^{j} w_1 Y\Rightarrow^{k} w_1 w_2=\sigma \) where \(j+k=i-1\).
 Two cases may arise: \(w_1=\sigma\) and \(w_2=\varepsilon\) or \(w_1=\varepsilon\) and
 \(w_2=\sigma\). Let us prove the case \(w_1=\sigma\) and \(w_2=\varepsilon\).
 The other one is treated similarly.

 We have \(Y\Rightarrow^{k} w_2(=\varepsilon)\) with \(k\leq i-1\). Moreover for
 each \(d\in D\), we have \(d\underset{R}{\Rightarrow}^* w_2\cdot d\).  Next,
 because \(k\leq i-1\) we can apply the induction hypothesis to conclude that
 \( [dYd] \underset{G^r}{\Rightarrow}^* \varepsilon\) for all \(d\in D\).

 Also \(Z\Rightarrow^{j} w_1(=\sigma)\) with \(j\leq i-1\).
 Moreover \(d\underset{R}{\Rightarrow}^* \sigma\cdot d'\) shows by induction
 that \([dZd'] \underset{G^r}{\Rightarrow}^* \sigma\).
 Finally, \( (X\rightarrow ZY)\in\prod \) and the definition of \(G^{r}\) shows that 
 \( ([dXd']\rightarrow [dYd][dZd'])\in\prod^r \), hence that
 \([dXd']\underset{G^r}{\Rightarrow}^* \sigma\) and we are done.
\end{proof}

\begin{lemma}
  Let \(X_0\underset{G}{\Rightarrow}^* w\) where \(|w|>1\). There exist
  \(X,X_1,X_2\in\mathcal{X}\) and \(w_1, w_2\in (\Sigma\cup\Sigma_i)^*\setminus\set{\varepsilon}\) such that each of the following holds:
  \begin{itemize}
    \item \(X\Rightarrow X_1 X_2 \Rightarrow^* w_1 X_2 \Rightarrow^* w_1 w_2 = w\) 
    \item \(X_0\Rightarrow^* X\)
  \end{itemize}
  \label{lem:findinghi}
\end{lemma}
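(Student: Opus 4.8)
The plan is to induct on the length $n$ of a fixed derivation witnessing $X_0\underset{G}\Rightarrow^* w$, using crucially that by Assumption~\ref{ass:normalform} the grammar has no unit productions: every production is of the form $X\rightarrow YZ$, $X\rightarrow a$, or $X\rightarrow\varepsilon$. The intuitive content is that, since $|w|>1$, the parse tree of $w$ must contain a genuine binary branching both of whose subtrees have a nonempty yield; I locate the topmost such branching, and the chain of $\varepsilon$-collapsing siblings leading down to it realizes $X_0\Rightarrow^* X$.

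First I would observe that the first step of the derivation rewrites $X_0$ by some production $X_0\rightarrow\gamma$ with $\gamma\in\{AB, a, \varepsilon\}$. The choices $\gamma=a$ and $\gamma=\varepsilon$ force $w=a$ or $w=\varepsilon$ (no nonterminal remains to rewrite), contradicting $|w|>1$; hence $\gamma=AB$ is binary. By the standard factorization property of context-free derivations, $AB\Rightarrow^* w$ entails $w=uv$ with $A\Rightarrow^* u$ and $B\Rightarrow^* v$ along subderivations that are strictly shorter than the original.

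Then I would split on which of $u,v$ is empty. If both are nonempty, take $X=X_0$, $X_1=A$, $X_2=B$, $w_1=u$, $w_2=v$; here $X_0\Rightarrow^* X$ holds in zero steps, and $X\Rightarrow X_1X_2\Rightarrow^* w_1X_2\Rightarrow^* w_1w_2=w$ as required. If $u=\varepsilon$ (so $v=w$ with $|w|>1$), then $A\Rightarrow^*\varepsilon$ gives $X_0\Rightarrow AB\Rightarrow^* B$, i.e.\ $X_0\Rightarrow^* B$; I apply the induction hypothesis to the strictly shorter derivation $B\Rightarrow^* w$ to obtain $X,X_1,X_2,w_1,w_2$ with $B\Rightarrow^* X$, and compose $X_0\Rightarrow^* B\Rightarrow^* X$. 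The case $v=\varepsilon$ is symmetric, using $X_0\Rightarrow^* A$ and the induction hypothesis on $A\Rightarrow^* w$; the case $u=v=\varepsilon$ cannot occur since $|w|>1$.

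The routine part is the context-free factorization and the bookkeeping of derivation lengths to justify the recursion. The only real subtlety---the ``obstacle''---is the requirement that both $w_1$ and $w_2$ be nonempty: one cannot simply return the top production, since a binary production may have a child deriving $\varepsilon$. The fix is exactly the descent that absorbs each $\varepsilon$-yielding sibling into the relation $X_0\Rightarrow^* X$, together with the reliance on Assumption~\ref{ass:normalform} (absence of unit productions $X\rightarrow Y$) which guarantees that the top of every nontrivial derivation is a genuine binary branching rather than a chain of renamings.
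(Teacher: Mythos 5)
Your proof is correct and follows essentially the same route as the paper's: induct on the length of the derivation, note that the first production applied must be binary (since $|w|>1$), factor $w$ across the two children, handle the case where both yields are nonempty by taking $X=X_0$, and otherwise absorb the $\varepsilon$-yielding sibling into $X_0\Rightarrow^* X$ and recurse on the strictly shorter subderivation. The only cosmetic difference is that you make explicit the normal-form argument ruling out a terminal or $\varepsilon$ first step, which the paper leaves implicit.
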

\begin{proof}
  The proof is by induction of the length of the derivation \(X_0\underset{G}{\Rightarrow}^* w\). Since \(|w|>1\), the smallest derivation for \(w\) needs no less than three steps.  

  {\(\mathbf{i=3}.\)} Then \(X_0\underset{G}{\Rightarrow}^3 w\) is necessarily
  of the form \(X_0\Rightarrow X_1 X_2 \Rightarrow \sigma_1 X_2 \Rightarrow
  \sigma_1 \sigma_2 = w\) where \(\sigma_1\neq\varepsilon\neq\sigma_2\). By choosing \(X=X_0\) we have \(X_0\Rightarrow^* X\) which  concludes the proof of this case.

  {\(\mathbf{i>3}.\)} Then \(X_0\underset{G}{\Rightarrow}^i w\) is necessarily
  of the form \(X_0\Rightarrow X_1 X_2 \Rightarrow^j w_1 X_2 \Rightarrow^k
  w_1 w_2 = w\) with \(j+k=i-1\).

  Three cases may arise:
  \begin{description}
    \item[\(w_1=\varepsilon\) and \(w_2=w\)] Therefore we have that
      \(X_1\Rightarrow^* w_1=\varepsilon\) and \(X_2\Rightarrow^k w_2=w\)
      with \(k\leq i-1\). The induction hypothesis shows that there exists
      \(X',X'_1, X'_2\) and \(w'_1, w'_2\in (\Sigma\cup\Sigma_i)^*\setminus\set{\varepsilon}\) such that \(X'\Rightarrow X'_1 X'_2 \Rightarrow^* w'_1 X'_2 \Rightarrow^* w'_1 w'_2 (=w_2=w)\) and \(X_2 \Rightarrow^* X'\). Finally
      we find that \(X_0\Rightarrow^* X' \Rightarrow X'_1 X'_2 \Rightarrow^* w'_1 X'_2 \Rightarrow^* w'_1 w'_2 =w \) and we are done.
    \item[\(w_1=\varepsilon\) and \(w_2=w\)] This case is similar to the previous one.
    \item[\(w_1\neq\varepsilon\) and \(w_2\neq\varepsilon\)]
	By choosing \(X=X_0\) we have \(X_0\Rightarrow^* X\) which  concludes the proof of this case.
  \end{description}
\end{proof}

\begin{lemma}
  If  \(X_0\underset{G}{\Rightarrow}^* X\underset{G}{\Rightarrow}^* w\) 
  and \([d X d'] \underset{G^r}{\Rightarrow}^* w\) then
  \([d X_0 d'] \underset{G^r}{\Rightarrow}^* w\).
  \label{lem:bringingitup}
\end{lemma}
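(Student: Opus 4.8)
The plan is to proceed by induction on the length of the $G$-derivation $X_0 \underset{G}\Rightarrow^* X$, keeping $d,d',X,w$ fixed throughout so that the hypothesis $[dXd']\underset{G^r}\Rightarrow^* w$ is simply carried along. For the base case, a derivation of length $0$ forces $X_0=X$, and the conclusion is exactly the hypothesis.

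For the inductive step I would examine the first production applied in $X_0\underset{G}\Rightarrow^* X$, say $X_0\Rightarrow\gamma$. By the normal form of Assumption~\ref{ass:normalform}, $\gamma$ is a terminal, $\varepsilon$, or a product $AB$ of two variables. Since $\gamma\underset{G}\Rightarrow^* X$ must produce the single variable $X$, and neither a terminal string nor $\varepsilon$ can derive a nonterminal, the only possibility is $\gamma=AB$ with $(X_0\rightarrow AB)\in\prod$. A context-free derivation of the one-symbol string $X$ from $AB$ splits into independent derivations of the two factors whose yields concatenate to $X$; hence either (i) $A\underset{G}\Rightarrow^*\varepsilon$ and $B\underset{G}\Rightarrow^* X$, or (ii) $A\underset{G}\Rightarrow^* X$ and $B\underset{G}\Rightarrow^*\varepsilon$.

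In case (i) the sub-derivation $B\underset{G}\Rightarrow^* X$ is strictly shorter, so the induction hypothesis applied with $B$ in the role of $X_0$ (using $B\underset{G}\Rightarrow^* X\underset{G}\Rightarrow^* w$ together with $[dXd']\underset{G^r}\Rightarrow^* w$) yields $[dBd']\underset{G^r}\Rightarrow^* w$. For the vanishing factor, $A\underset{G}\Rightarrow^*\varepsilon$ and the trivial $d\underset{R}\Rightarrow^*\varepsilon\cdot d$ let me invoke Lemma~\ref{lem:length0or1} with $\sigma=\varepsilon$ to obtain $[dAd]\underset{G^r}\Rightarrow^*\varepsilon$. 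Rule~(3) of Def.~\ref{def:Gr} then supplies $([dX_0d']\rightarrow[dAd][dBd'])\in\prod^r$, so $[dX_0d']\Rightarrow[dAd][dBd']\underset{G^r}\Rightarrow^*\varepsilon\,w=w$. Case (ii) is symmetric, with the intermediate state chosen to be $d'$: the induction hypothesis gives $[dAd']\underset{G^r}\Rightarrow^* w$, Lemma~\ref{lem:length0or1} gives $[d'Bd']\underset{G^r}\Rightarrow^*\varepsilon$, and rule~(3) produces $([dX_0d']\rightarrow[dAd'][d'Bd'])$, assembling $[dX_0d']\underset{G^r}\Rightarrow^* w\,\varepsilon=w$.

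The step I expect to be the crux is recognizing that, under the assumed normal form, any $G$-derivation ending in a single variable must pass through binary productions in which exactly one branch is erased to $\varepsilon$. This structural fact is precisely what allows Lemma~\ref{lem:length0or1} to absorb the vanishing branch while the induction hypothesis dispatches the surviving one; the only remaining bookkeeping is to pick the intermediate state ($d$ in case (i), $d'$ in case (ii)) so that the two $G^r$-variables compose correctly under rule~(3).
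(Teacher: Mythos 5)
Your proof is correct and follows essentially the same route as the paper's: induction on the length of $X_0\underset{G}{\Rightarrow}^*X$, splitting the first binary production $X_0\rightarrow AB$ into the case where one factor vanishes (handled by Lemma~\ref{lem:length0or1} with $\sigma=\varepsilon$) and the other carries the derivation to $X$ (handled by the induction hypothesis), then recombining via rule~(3) of Def.~\ref{def:Gr}. The only difference is that you explicitly justify, via the normal form of Assumption~\ref{ass:normalform}, why the first production must be binary — a detail the paper leaves implicit.
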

\begin{proof}
  The proof is by induction on the length of the derivation \(X_0\underset{G}{\Rightarrow}^* X\) 

  {\(\mathbf{i=0}\)}. So we have \(X_0=X\) and the result trivially holds.

  {\(\mathbf{i>0}\)}. We have \(X_0\Rightarrow^{i} X\Rightarrow^* w\). It follows
  that \(X_0\Rightarrow Y Z \Rightarrow^{i-1} X\Rightarrow^* w\).

  Two cases may arise: \(Y\Rightarrow^* \varepsilon\) and \(Z\Rightarrow^* X\)
  or \(Y\Rightarrow^* X\) and \(Z\Rightarrow^* \varepsilon\). We solve the former, the
  proof of the latter being similar.

  Applying Lem.~\ref{lem:length0or1} to \(Y\underset{G}{\Rightarrow}^* \varepsilon\)
  and \(d\underset{R}{\Rightarrow}^* d\) we find that \([dYd]\underset{G^r}{\Rightarrow}^*
  \varepsilon\).  Next since \(Z\Rightarrow^{k} X\) with \(k<i-1\) we find by induction
  hypothesis that \([d Z d']\underset{G^r}{\Rightarrow}^* w\), hence that \([d
  X_0 d']\underset{G^r}{\Rightarrow}^* w\) since \( ([d X_0 d']\rightarrow [d Y
  d][d Z d'])\in\prod^{r}\) and we are done.
\end{proof}

\begin{lemma}
	Let \(w\in(\Sigma\cup\Sigma_i)^*\), \(d,d'\in D\) and $X\in\mathcal{X}$.
	\[ [d X d']\underset{G^{r}}\Rightarrow^* w\quad \text{if{}f}\quad
	d\underset{R}\Rightarrow^* w \cdot d' \land X\underset{G}\Rightarrow^* w\enspace .\] %
	\label{lem:grammarconstruction}%
\end{lemma}%
\begin{proof}
   The proof for the only if direction is by induction on the length of the
   derivation of $[d X d']\Rightarrow^* w$.

  {$\mathbf{i=1}$.} So we conclude from $[d X d']\Rightarrow
  \sigma$ that \(([d X d']\rightarrow \sigma)\in\prod^r\), hence that
   $(X\rightarrow \sigma)\in\prod$ and $(d\rightarrow \sigma\cdot d')\in\delta$ or \(d=d'\) by definition of $G^r$, 
  and finally that $X\Rightarrow \sigma$
  and \(d\Rightarrow \sigma \cdot d'\) and we are done.

  {$\mathbf{i>1}$.} If the derivation of \(G^{r}\) has \(i\) steps with
  \(i>1\), it must be the case that:

  $[d X d'] \Rightarrow [d Z d_{\ell}][d_{\ell} Y d']\Rightarrow^{j} w_1\cdot
  [d_{\ell} Y d']\Rightarrow^{k} w_1w_2$ where \(w=w_1 w_2\) and \(j+k=i-1\).  By
  induction hypothesis, we have $d \Rightarrow^* w_1 \cdot d_{\ell}$ and
  $Z\Rightarrow^* w_1$. Also $d_{\ell}\Rightarrow^* w_2  \cdot d'$ and
  $Y\Rightarrow^* w_2$.  Hence we find that $d \Rightarrow^* w_1 w_2  \cdot d'$
  and $X\Rightarrow^* w_1 w_2$ since $(X\rightarrow Z Y)\in \prod$ and we are
  done since $w=w_1 w_2$.

  \medskip 
  For the if direction, let $w\in\Sigma^*$ such that $X\underset{G}{\Rightarrow}^* w$ and $d
  \underset{R}{\Rightarrow}^* w \cdot d'$. Then the proof goes by induction on the length
  \(i\) of \(w\).

	{\(\mathbf{i=0,1}\).} We have \(d\underset{R}\Rightarrow^* \sigma \cdot d' \land X\underset{G}\Rightarrow^* \sigma\) with \(\sigma\in(\Sigma\cup\Sigma_i\cup\set{\varepsilon})\).
  This coincides with the result of Lem.~\ref{lem:length0or1}.

  {\(\mathbf{i>1}\).}
  Lem.~\ref{lem:findinghi} shows that there exist \(X',X_1,X_2\in\mathcal{X}\)
  and \(w_1,w_2\in(\Sigma\cup\Sigma_i)^*\setminus\set{\varepsilon}\) such
  that \(X\Rightarrow^* X'\Rightarrow X_1X_2\Rightarrow^* w_1 X_2\Rightarrow w_1 w_2=w\).

  Since \(d\Rightarrow^* w\cdot d'\) and \(w_1w_2=w\), the definition of \(R\)
  shows that there exists \(d_{\ell}\in D\) such that
  \(d\Rightarrow^* w_1\cdot d_{\ell} \Rightarrow^* w_1 w_2\cdot d'\).

  Hence we can use that induction hypothesis for \(w_1\) and \(w_2\) which
  shows that \([d X_1 d_{\ell}]\Rightarrow^* w_1\) and \([d_{\ell} X_2
  d']\Rightarrow^* w_2\). Next, we conclude from \( (X'\rightarrow X_1
  X_2)\in\prod \) that \( ([dX'd']\rightarrow [dX_1
  d_{\ell}][d_{\ell}X_2d'])\in\prod^r\), hence that \([dX'd']\Rightarrow^*
  w_1w_2=w\).

  Finally \(X\Rightarrow^* X'\) and the result of
  Lem.~\ref{lem:bringingitup} shows that \([dXd']\Rightarrow^* w\).
\end{proof}

\begin{definition}
	Given $G^r=(\mathcal{X}^r,\Sigma\dotcup\Sigma_i,\prod^r)$ as given in Def.~\ref{def:GR}.
	Define $G^R=(\mathcal{X}^R,\Sigma,\prod^R)$ where $\mathcal{X}^R=\mathcal{X}^r$; and \(\prod^R\) is the smallest set such that if
	\( (X\rightarrow \alpha)\in\prod^r \) then \( (X\rightarrow \proj_{\Sigma\cup\mathcal{X}^R}(\alpha)\in\prod^R) \).
	\label{def:GRalt}
\end{definition}
It is routine to check that Def.~\ref{def:GRalt} is equivalent to
Def.~\ref{def:GR} p.~\pageref{def:GR}.  Finally, we conclude from
Lem.~\ref{lem:grammarconstruction} and Def.~\ref{def:GRalt} that for every
\(d,d'\in D\) and \(X\in\mathcal{X}\) we have:
\( (i) \) let \(w_1\in\Sigma^*\) such that 
\([dXd']{\underset{G^R}\Rightarrow^*}w_1\)
then 
there exists \(w_2\in (\Sigma\dotcup\Sigma_i)^*\)
such that 
\(d{\underset{R}\Rightarrow^*}w_2\cdot d'\), 
\(X{\underset{G}\Rightarrow^*} w_2\), and \(\proj_{\Sigma}(w_2)=w_1\);
\( (ii)\)
let \(w\in (\Sigma\dotcup\Sigma_i)^*\) such that
\(d{\underset{R}\Rightarrow^*}w\cdot d'\), 
\(X{\underset{G}\Rightarrow^*} w\) then
\([dXd']{\underset{G^R}\Rightarrow^*}\proj_{\Sigma}(w)\).
Hence Lem.~\ref{lem:gr_correctness} holds.

\subsection{Reduction from Petri Nets to Boolean Petri Nets}

\begin{lemma}
	(1) Let \((N,\mmap_{\imath})\) be an initialized \(\pn\).  There exists
        a Boolean initialized \(\pn\) 
	\((N',\mmap'_{\imath})\) computable in polynomial time in the size of $(N,\mmap_{\imath})$
        such that $(N,\mmap_{\imath})$ is bounded
        iff $(N',\mmap'_{\imath})$ is bounded. 

	(2) Let \((N,\mmap_{\imath},\mmap_f)\) be an instance of the
	reachability (respectively, coverability) problem.  There exists a Boolean initialized Petri net 
	\((N',\mmap'_{\imath})\) and a Boolean marking \(\mmap'_f\) computable in polynomial
	time such that $\mmap_f$ is reachable (respectively, coverable) in \((N,\mmap_{\imath})\) iff 
        $\mmap'_f$ is reachable (respectively, coverable) in \((N',\mmap'_{\imath})\).
	\label{lem:binaryisnotaproblemappendix}
\end{lemma}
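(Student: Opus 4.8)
The plan is to give one uniform gadget construction that \emph{serializes} each transition of $N$, so that all arc multiplicities and all token counts of the initial and target markings (written in binary) are replaced by Boolean data at the cost of only a logarithmic-per-number blowup. The one genuine difficulty is that a plain $\pn$ cannot test a place for zero, whereas counting \emph{exactly} $m$ tokens (with $m$ a binary-encoded weight) appears to require detecting when a counter reaches a fixed bound. I would circumvent this with the standard \emph{complementary-place} encoding of a bounded binary counter.

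\textbf{Counter gadget.} For a bound $m$ with $k=\lceil\log_2(m+1)\rceil$ bits I introduce bit places $c_0,\dots,c_{k-1}$ together with complement places $\bar c_0,\dots,\bar c_{k-1}$, maintaining the invariant $c_i+\bar c_i=1$; every such place is thus Boolean. Incrementing is a ripple-carry realized by the weight-one transitions $\bar c_i\to c_i$ (absorb the carry) and $c_i\to\bar c_i$ (propagate a carry token to bit $i+1$); ``bit $i$ is $0$'' is witnessed \emph{positively} by a token in $\bar c_i$, so no zero test is needed. Testing ``counter $=m$'' is a single Boolean transition inspecting $c_i$ or $\bar c_i$ according to the $i$-th bit of $m$, and the counter can be reset by the symmetric ripple. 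All of this is Boolean and of size $O(\log m)$.

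\textbf{Serialization.} Using a global \emph{lock} place held throughout a simulated firing (so no interleaving with other transitions is possible), I replace each transition $t$ of $N$ by a gadget that, for every input place $p$, runs a \emph{consume} sub-gadget removing tokens from $p$ one at a time while incrementing a counter (reset before use) until it equals $I(t)(p)$, and then, for every output place $p'$, runs a \emph{produce} sub-gadget adding single tokens to $p'$ until a counter equals $O(t)(p')$; the lock is released only on completion. A consume sub-gadget simply blocks when $p$ empties before the counter reaches $I(t)(p)$, faithfully modelling the disabledness of $t$; since the lock is held, any such stuck partial firing deadlocks the whole net and hence can never reach a lock-free marking. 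Consequently every ``clean'' marking of $N'$ (lock free, all counters reset) is reached from a clean marking by exactly one firing of the corresponding transition of $N$, and the clean markings correspond bijectively to the markings of $N$ via the copies of the places of $N$. The binary marking $\mmap_0$ is handled by a one-shot \emph{initialization} gadget that produces $\mmap_0(p)$ tokens in each $p$ before the simulation starts, so $\mmap'_{\imath}$ is the single Boolean token that triggers it.

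\textbf{The three statements.} For boundedness (part (1)) the new places (counters, lock, controls) are all bounded, and at clean markings the place copies carry exactly the token counts of $N$; hence $(N',\mmap'_{\imath})$ is bounded iff $(N,\mmap_{\imath})$ is. For coverability I append a finalization gadget that runs consume sub-gadgets checking that $\mmap_f(p)$ tokens are present in each $p$ (this checks ``$\geq$'', which suffices by monotonicity) and then marks a fresh Boolean place $\mathit{done}$; setting $\mmap'_f=\multi{\mathit{done}}$ makes $\mmap'_f$ coverable iff $\mmap_f$ is coverable in $N$. For exact reachability the same finalization consumes \emph{exactly} $\mmap_f(p)$ tokens from each $p$, and I set $\mmap'_f$ to require $\mathit{done}=1$, every place copy $=0$, and every auxiliary place in its rest state; since the target legitimately demands $0$ in each place copy, a reached simulation marking matches iff it had exactly $\mmap_f(p)$ tokens in each $p$, i.e.\ iff $\mmap_f$ is reachable in $N$. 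Each gadget is logarithmic in the numbers it encodes, so $N'$, $\mmap'_{\imath}$ and $\mmap'_f$ are computable in time polynomial in $\tmenc{N}+\tmenc{\mmap_{\imath}}+\tmenc{\mmap_f}$. The main obstacle throughout is the absence of zero tests, which the complementary-place counter (for comparing against a fixed bound) and the ``target may demand $0$'' observation (for exact reachability) are precisely designed to avoid.
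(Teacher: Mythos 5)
Your construction follows essentially the same two-phase route as the paper's proof: first make the initial and target markings Boolean with one-shot initialization and finalization gadgets, then make every weighted arc Boolean with an $O(\log m)$ binary-counter widget whose micro-steps are serialized by a control structure. Your lock place plays the role of the paper's synchronized product with a regular language, and your complement-place up-counter is the mirror image of the paper's deterministic down-counter widget; the cosmetic difference (your counter can overshoot $m$ and deadlock while holding the lock) is harmless for all three problems for exactly the reason you give.

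The one step that needs justification is the reachability case. Your finalization does not disable the simulation, so a run of $N'$ can reach the all-zero target by firing the finalization at some reachable $\mmap$ with $\mmap\succeq\mmap_f$ and then \emph{continuing} to simulate $N$ from the leftover marking $\mmap'$ (where $\mmap'\oplus\mmap_f=\mmap$) until the place copies empty; your claim that ``a reached simulation marking matches iff it had exactly $\mmap_f(p)$ tokens in each $p$'' silently assumes the finalization is the last event. The reduction is nevertheless sound, but only because of monotonicity: if $\mmap'\fire{w}\varnothing$ then $\mmap\fire{w}\mmap_f$, so $\mmap_f$ is reachable in $N$ after all. Either add that one-line argument, or make the finalization terminal as the paper does, by threading a run place $p_r$ through every simulated transition of $N$ and letting the final check consume $p_r$ without returning it.
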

\begin{proof}
	We prove the result in two steps. First, we transform the instances so that the initial
	marking and (in case of coverability and reachability) the target markings are Boolean.
	Second, we transform the instances so that $I(t)$ and $O(t)$ are Boolean for each transition $t$.

	Consider a boundedness problem instance \( (N=(S,T,F),\mmap_{\imath})\).
	In the first step, we define an equivalent instance \(
	(N^{\flat},\mmap^{\flat}_{\imath})\) where the marking $\mmap^{\flat}_{\imath}$ is Boolean
        (but transitions in $N^{\flat}$ need not be Boolean).
        We perform the transformation by adding a new place $p_i$ and a new transition $t_i$
        that consumes a token from $p_i$ and puts $\mmap_{\imath}$ tokens in the other places.
        Initially, $\mmap^{\flat}_{\imath}$ has one token in $p_i$ and zero tokens in all other places.
	Formally, \( N^{\flat}=(S\cup\set{p_i},T\cup\set{t_i}, F^{\flat} = \tuple{I^{\flat},O^{\flat}})\), where  
	\(I^{\flat}(t)=I(t)\) and \(O^{\flat}(t) = O(t)\) for all $t\in T$ and
	\(I^{\flat}(t_i)=\multi{p_i}\) and \(O^{\flat}(t_i) = \mmap_{\imath}\).

	Consider now a coverability problem instance
	\( (N,\mmap_{\imath},\mmap)\). 
        To replace $\mmap_{\imath}$ and $\mmap$ by Boolean markings, intuitively, we add two new
        places $p_i$ and $p_c$ to $N$. As in the case of boundedness, there is a single transition 
        out of $p_i$ that consumes one token and produces $\mmap_{\imath}$.
        Additionally, there is one transition that consumes $\mmap$ and produces a single token in
        $p_c$.
        Formally, define \(N^{\flat}=(S\cup\set{p_i,p_c},T\cup\set{t_i,t_c}, F^{\flat})\) with 
	\(F^{\flat}(T)=F(T)\), \(F^{\flat}(t_i)=\tuple{\multi{p_i},\mmap_{\imath}}\)
	and \(F^{\flat}(t_c)=\tuple{\mmap,\multi{p_c}}\).  The initial and target
	marking are respectively given by \(\multi{p_i}\) and \(\multi{p_c}\) each of
	which is Boolean. 

	Let us turn to a reachability problem instance  
        \((N,\mmap_{\imath},\mmap)\).
        The initial marking is made Boolean using the same trick: add a new place $p_i$ and
        add a transition that consumes one token from $p_i$ and produces $\mmap_{\imath}$ tokens.
        To get rid of $\mmap$, we use a construction from \cite{hack76} and additionally, we add a new place $p_r$.
        Then, we change each transition of $N$ to additionally consume a token from $p_r$ and produce
        a token back in $p_r$.
        Finally, we add a new transition that consumes $\mmap \oplus \multi{p_r}$ tokens and produces no tokens.
        The initial marking puts one token each at $p_i$ and $p_r$, and we ask if the marking where every place
        has zero tokens is reachable.
        Formally, define \(
	N^{\flat}=(S\cup\set{p_i,p_r},T\cup\set{t_i,t_r}, F^{\flat})\) such that
	\(F^{\flat}(t)=\tuple{\multi{p_r}\oplus I(t), \multi{p_r}\oplus O(t)}\) where
	\(F(t)=\tuple{I(t),O(t)}\),
	\(F^{\flat}(t_i)=\tuple{\multi{p_i},\mmap_{\imath}}\) and
	\(F^{\flat}(t_r)=\tuple{\mmap\oplus\multi{p_r},\varnothing}\).  The initial
	and target marking are respectively given by \(\multi{p_i,p_r}\) and
	\(\varnothing\) the empty marking each of those marking being a set.

	We now move to the second step of the construction.
	Given a \(\pn\) \(N=(S,T,F)\), we show how to compute in polynomial time a \(\pn\) \(N'=(S',T',F')\)
	such that for every transition \(t\in T'\) the multisets \(I(t)\) and
	\(O(t)\) are Boolean. 
        The construction is independent of the decision problem (boundedness, coverability, or reachability).

	\begin{figure}[t]
		\centering
			\includegraphics[scale=.6]{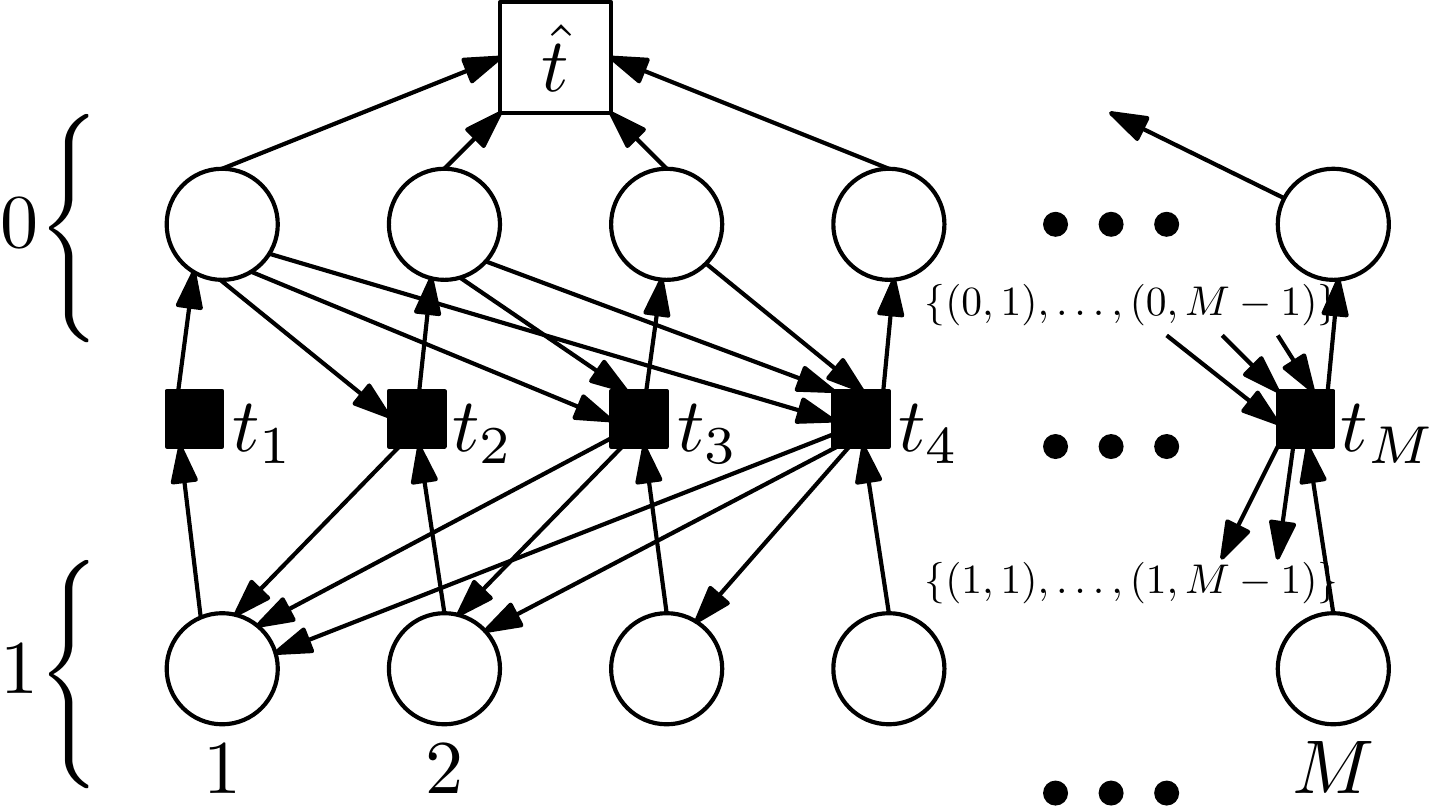}
		\caption{A Petri net widget, left to right is from the least significant
		bit to the most significant bit. 
                }
		\label{fig:pnwidget}
	\end{figure}

	Assume that \(S\) is given by \(\set{s_1,\ldots,s_n}\) and \(T\) is given by \(\set{t_1,\ldots,t_k}\).

        We convert \(N\) to a Boolean Petri net in five steps.
	First, we define the \(\pn\) \(N_1 = (S_1, T_1, F_1)\).
 	The set of places \(S_1 = S\).
	For each \(t\in T\), we define the transitions
 	\(t^{I}_1, t^{I}_2, \ldots, t^{I}_{n}\), \(t^{O}_1, t^{O}_2, \ldots, t^{O}_{n}\) in \(T_1\) such that:
	\begin{itemize}
		\item \(F_1(t^{I}_i)=\tuple{\proj_{\set{s_i}}(I(t)),\varnothing}\) and \(F_1(t^{O}_i)=\tuple{\varnothing,\proj_{\set{s_i}}(O(t))}\) for \(i\in\set{1,\ldots,n}\).
	\end{itemize}
	Intuitively, to each pair \((s_i,t)\) (\(i\in\set{1,\ldots,n}\), \(t\in T\)) 
	we associate two transitions \(t^I_{i}\) and \(t^O_{i}\) of \(T_1\) which we will use to simulate the 
        effect of \(t\) on \(s_i\).
	
	Second, we define the \(\pn\) \(N_2\) which is given by the synchronized product of \(N_1\) with the following
	regular language over alphabet \(T_1\):

	\noindent %
	\hspace*{\stretch{1}}\(L\stackrel{\mathit{def}}{=}(w_1 + \cdots + w_k)^* \) \hfill\vspace{0pt}\\
        where each \(w_i = t^I_{i1} t^O_{i2} \ldots t^I_{in} t^O_{in}\) is a 
        finite word that simulates the firing of transition \(t_i\in T\) for \(i\in\set{1,\ldots,k}\).
	Clearly, since each \(w_i\) corresponds to the firing of transition \(t_i\in T\) we find that \(N_2\) simulates \(N\) 
        (i.e., 
	\(\mmap\fire{t}\) does not hold in \(T\) if{}f \(\mmap\fire{t^I_i}\) does not hold from some \(i\in\set{1,\ldots,n}\);
	and \(\mmap\fire{t}\mmap'\) if{}f \(\mmap\fire{w}\mmap'\)).

	Observe that \(N_2\) is still not a Boolean \(\pn\). 
        In the third step, we 
	replace each transition \(t^{O}_i\) (resp. \(t^{I}_i\)) which 
	produce (resp. consume) \(\proj_{\set{s_i}}(O(t))\) (resp. \(\proj_{\set{s_i}}(I(t))\)) tokens to (resp. from) place \(s_i\)
	by a Boolean \(\pn\) \(N_{t^{O}_i}\) (resp. \(N_{t^{I}_i}\)).
        We do this by defining the following class of widgets.

	Let us consider a transition \(t^{O}_i\) which produces \(m\) tokens
	into \(s_i\), and let $M = \lceil \log_2{m}\rceil$. 
        We will substitute \(t^{O}_i\) with a Boolean \(\pn\) \(N_{t^{O}_i}\). 
        We call such a \(\pn\) a widget. A generic description of a
	widget is given in Fig.~\ref{fig:pnwidget}. 
	
	Intuitively, the widget behaves like a binary decrementer.  
        To begin with, we shall put a $(0,1)$-marking on the widget, where
        for each ``column'' labeled $1,\ldots, M$, we put a single token
        in either the 0th row or the 1st row.
	Each $(0,1)$-marking coincides with the binary representation of a number in the range
        $[0, 2^M - 1]$, obatined by $\sum_{i=1}^M \delta_i 2^i$, where $\delta_i = 1$ if the
        $(0,1)$-marking places a token in the 1st row of column $i$ and $\delta_i = 0$ if
        the $(0,1)$-marking places a token in the 0th row of column $i$.
	Conversely, every number in the range \([0,2^M-1]\) corresponds to
	exactly one $(0,1)$-marking of the widget. 
        Let \(f\) be the function which
	takes as input a number in the range \([0, 2^{M} -1]\) and returns the corresponding
	$(0,1)$-marking.

	One can check that the widget defines a Boolean \(\pn\).  
        Moreover, from every $(0,1)$-marking there exists exactly one enabled
	transition in the widget.
        Hence the widget behaves as follows: 
        starting from marking \(f(m)\) there exists a unique maximal
	sequence of enabled transitions which consists of \(m\) transitions in
	\(\set{t_1,\ldots,t_n}\) followed by \(\hat{t}\) enabled at the marking which
	represents \(0\) in binary (i.e., the $(0,1)$-marking that puts a single token each
        in the 0th row of each column).  
        Next, we add transition \(\check{t}\) whose
	role is to initialize the widget with marking \(f(m)\). 
        Therefore we have
	\(F_{t^O_i}(\check{t})=\tuple{\varnothing,f(m)}\).  Finally let us add an arc from every
	transition of the widget except \(\hat{t}\) and \(\check{t}\) into place \(s_i\).

	From the above construction, we observe that the firing of any sequence in
	the language \(\check{t} \cdot (\set{t_1,\ldots,t_n})^* \cdot \hat{t}\) has the 
	effect of producing exactly \(m\) tokens in place \(s_i\).

	Using a similar reasoning one can define a widget for \(t_i^{I}\). 

	In the fourth step, let us define \(N_3\) as the \(\pn\) which is given by the union of all
	the widgets (therefore \(S\) is contained in the places of \(N_3\)).  Given
	\(i\in \set{1,\ldots,n}\), let us denote by \(T_{t_i^I}\) and \(T_{t_i^{O}}\)
	the set of transitions of the widget corresponding to \(t_i^{I}\) and
	\(t_i^{O}\), respectively.  Also we have transitions
	\(\check{t}_i^{I},\hat{t}_i^{I},\check{t}_i^{O},\hat{t}_i^{O}\).  Observe
	that \(N_3\) is a Boolean \(\pn\).

	Finally, to conclude the construction of the Boolean \(\pn\) \(N'\), we define \(N'\)
	as the synchronized product of \(N_3\) with the language \(\tau(L)\) where
	\(\tau\) is a substitution which maps \(t_i^I\) onto the language \( (\check{t}_i^{I} \cdot
	(T_{t_i^I})^* \cdot \hat{t}_i^{I}) \) and \(t_i^O\) onto the language \( (\check{t}_i^{O} \cdot
	(T_{t_i^O})^* \cdot \hat{t}_i^{O}) \).

	It is routine to check that the obtained \(\pn\) is Boolean and it can be
	computed in polynomial time in the size of \(N\).
\end{proof}

\received{April 2011}{xxx XXXX}{xxx XXXX}

\begin{acks}
We thank Mohamed-Faouzi Atig, Andrey Rybalchenko, Bishesh Adhikari and Laurent
Van Begin for useful discussions, and the anonymous referees for many useful
comments.  We thank Laurent Van Begin for the proof of
Th.~\ref{thm:coverpntiundec}.
\end{acks}

\bibliographystyle{acmsmall}
\bibliography{ref}

\end{document}